\renewcommand{\subsubsection}[1]{\noindent {\em #1. }}
\pgfplotsset{
        my stylecompare/.style={
			width=.5\linewidth,
            height=4.5cm,
            label style={font=\Large},
            title style={font=\Large},
            x tick label style={font =\small, /pgf/number format/1000 sep=},
	    axis lines=left,
	    major x tick style = transparent,
	    major y tick style = transparent,
	    every y tick label/.style={
 		   xshift=-.7cm, yshift=-2pt,anchor=south west,inner sep=0pt,font=\small
	    },
        },
        my legend style compare/.style={
            legend entries={
            		ReProCS ($37$),
            		GRASTA ($1.1$),
            		ORPCA ($3.0$),
            		Offline ReProCS ($85.0$),
            		PCP ($89.0$),
            		Alt Proj ($130.0$),
            		RPCA-GD ($470.0$),            		
            },
            legend style={
                at={(-.05,1.65)},
                anchor=north west,
            },
            legend columns=7,
	    legend style={font=\tiny},
        },
        cycle multi list={
        {red, line width=0.6pt, mark=o,mark size=2pt}, 
        {black, line width=0.6pt, mark=square,mark size=1.5pt}, 
        {blue, line width=0.6pt, mark=triangle,mark size=1.5pt},
        {red, solid, line width=0.5pt, mark=*,mark size=1.8pt}, 
        {olive, line width=0.6pt, mark=10-pointed star,mark size=1.5pt},  
        {cyan, line width=0.6pt, mark=Mercedes star,mark size=2pt}, 
        {teal, line width=0.6pt, mark=oplus,mark size=1.5pt}, 
		},
}
\def\MarkLt{4pt}
\def\MarkSep{2pt}
\tikzset{
  TwoMarks/.style={
    postaction={decorate,
      decoration={
        markings,
        mark=at position #1 with
          {
              \begin{scope}[xslant=0.2]
              \draw[line width=\MarkSep,white,-] (0pt,-\MarkLt) -- (0pt,\MarkLt) ;
              \draw[-] (-0.5*\MarkSep,-\MarkLt) -- (-0.5*\MarkSep,\MarkLt) ;
              \draw[-] (0.5*\MarkSep,-\MarkLt) -- (0.5*\MarkSep,\MarkLt) ;
              \end{scope}
          }
       }
    }
  },
  TwoMarks/.default={0.5},
}
\tikzstyle{block}  = [rectangle, draw, rounded corners, text width=5cm, text centered, minimum height=1em]
\tikzstyle{smallblock}  = [rectangle, draw, rounded corners,text width=1.8cm, text centered, minimum height=1em]
\tikzstyle{input}  = [rectangle, draw, text width=1.2cm, text centered, minimum height=1em]
\tikzstyle{output}  = [rectangle, draw, text width=1.2cm, text centered, minimum height=1em]
\tikzstyle{block1}  = [rectangle, draw, rounded corners,text width=5cm, text centered, minimum height=1em]
\tikzstyle{blockl1}  = [rectangle, draw, rounded corners,text width=4cm, text centered, minimum height=1em]
\newcommand{\norm}[1]{\left\|#1\right\|}
\newtheorem{theorem}{Theorem}
\newtheorem{lem}[theorem]{Lemma}
\newtheorem{claim}[theorem]{Claim}
\newtheorem{fact}[theorem]{Fact}
\newtheorem{sigmodel}[theorem]{Model}
\newtheorem{corollary}[theorem]{Corollary}
\newtheorem{definition}[theorem]{Definition}
\newtheorem{remark}[theorem]{Remark}
\newcommand{\bi}{\begin{itemize}}
\newcommand{\ei}{\end{itemize}}
\newcommand{\ben}{\begin{enumerate}}
\newcommand{\een}{\end{enumerate}}
\newcommand{\bean}{\begin{eqnarray*} }
\newcommand{\eean}{\end{eqnarray*} }
\newcommand{\bea}{\begin{eqnarray} }
\newcommand{\eea}{\end{eqnarray} }
\newcommand{\ba}{\begin{align*} }
\newcommand{\ea}{\end{align*} }
\newcommand{\nn}{\nonumber}
\newcommand{\rest}{\mathrm{rest}}
\newcommand{\xhat}{\bm{\hat{x}}}
\newcommand{\bl}{\begin{frame}}
\newcommand{\el} {\end{frame}}
\newcommand{\cred}{\color{red}} 
\newcommand{\svdeq}{\overset{\mathrm{SVD}}=}
\newcommand{\qreq}{\overset{\mathrm{QR}}=}
\renewcommand\thetheorem{\arabic{section}.\arabic{theorem}}
\newcommand{\tmax}{d} 
\newcommand{\bbeta}{\beta}
\newcommand{\wt}{\bm{w}_t}
\newcommand{\xt}{\bm{x}_t}
\newcommand{\x}{\bm{x}}
\newcommand{\xhatt}{\hat{\bm{x}}_t}
\renewcommand{\l}{\bm{\ell}}
\newcommand{\lt}{\bm{\ell}_t}
\newcommand{\lhat}{\hat{\bm{\ell}}}
\newcommand{\lhatt}{\hat{\l}_t}   
\newcommand{\yt}{\bm{y}_t}
\newcommand{\y}{\bm{y}}
\newcommand{\tty}{\tilde{\bm{y}}}
\newcommand{\w}{\bm{w}}
\newcommand{\vt}{\bm{v}_t}
\renewcommand{\a}{\bm{a}}
\newcommand{\e}{\bm{e}} 
\newcommand{\et}{\bm{e}_t}
\newcommand{\at}{\bm{a}_t}
\newcommand{\I}{\bm{I}}
\newcommand{\Lam}{\bm{\Lambda}}
\newcommand{\T}{\mathcal{T}}
\newcommand{\J}{\mathcal{J}}
\newcommand{\D}{\bm{D}}
\newcommand{\A}{\bm{A}}
\newcommand{\Lhat}{\hat{\bm{L}}}
\renewcommand{\P}{\bm{P}}
\newcommand{\U}{\bm{U}}
\newcommand{\V}{\bm{V}}
\newcommand{\R}{\bm{R}}
\newcommand{\fx}{{\mathrm{fix}}}
\newcommand{\ch}{{\mathrm{ch}}}
\newcommand{\new}{\mathrm{new}}
\newcommand{\rot}{{\mathrm{rot}}}
\newcommand{\add}{\rot}
\newcommand{\tP}{\bm{P}}
\newcommand{\tPhat}{\hat{\tP}}
\renewcommand{\b}{\bm{b}}
\newcommand{\Phat}{\hat{\P}}
\newcommand{\Span}{\operatorname{span}} 
\newcommand{\del}{\mathrm{del}}
\newcommand{\basis}{\operatorname{basis}}
\newcommand{\rank}{\operatorname{rank}}
\newcommand{\E}{\mathbb{E}}
\newcommand{\train}{\mathrm{train}}
\newcommand{\That}{\hat{\mathcal{T}}}
\newcommand{\SE}{\mathrm{SE}}
\newcommand{\that}{{\hat{t}}}
\newcommand{\M}{\bm{M}}
\renewcommand{\L}{\bm{L}}
\newcommand{\X}{\bm{X}}
\newcommand{\Y}{\bm{Y}}
\newcommand{\bE}{\bm{E}}
\newcommand{\outfracrow}{\text{\small{max-outlier-frac-row}}}
\newcommand{\outfraccol}{\text{\small{max-outlier-frac-col}}}
\newcommand{\one}{\mathbbm{1}}
\newcommand{\matr}[2]{ \left[\begin{array}{cc}
     #1 \\
     #2
   \end{array}
  \right]
  }
\newcommand{\ed}{\mathrm{end}}
\newcommand{\termoneone}{{\mathrm{term11}}}
\newcommand{\SVD}{{SVD}}
\newcommand{\B}{\bm{B}}
\newcommand{\alphadel}{\alpha}
\renewcommand{\Re}{\mathbb{R}}
\renewcommand{\forall}{\text{ for all }}
\newcommand{\rmat}{r_{\scriptscriptstyle{L}}}
\newcommand{\xmin}{x_{\min}}
\newcommand{\xmint}{x_{\min,t}}
\newcommand{\rrow}{\rho_{\mathrm{row}}}
\newcommand{\rcol}{\rho_{\mathrm{col}}}
\newcolumntype{C}[1]{>{\centering\let\newline\\\arraybackslash\hspace{0pt}}m{#1}}
\newcommand{\st}{\bm{x}_t}
\newcommand{\Mt}{\bm{M}_t}
\newcommand{\mot}{\bm{M}_{1,t}}
\newcommand{\mtt}{\bm{M}_{2,t}}
\newcommand{\ep}{\mathbb{E}}
\newcommand{\atf}{\bm{a}_{t,\fx}}
\newcommand{\atr}{\bm{a}_{t,\ch}}
\newcommand{\enew}{\bm{E}_{\rot}}
\newcommand{\enperp}{\bm{E}_{\rot,\perp}}
\newcommand{\anew}{\bm{A}}
\newcommand{\anperp}{\A_\rest}
\newcommand{\Rnew}{\bm{R}_{\rot}}
\newcommand{\bt}{\bm{b}_t}
\newcommand{\zt}{\bm{Z}_t}
\newcommand{\pa}{\pt_{\rot}}
\newcommand{\qfix}{q_{0}}
\newcommand{\qch}{q_{0}}
\newcommand{\qa}{q_{\rot}}
\newcommand{\rch}{r_{\ch}}
\newcommand{\rfix}{r}
\newcommand{\lfix}{\bm{\Lambda}_{\fx}}
\newcommand{\lch}{\bm{\Lambda}_{\ch}}
\newcommand{\lfp}{\lambda^{+}}
\newcommand{\lcp}{\lambda_{\ch}}
\newcommand{\lcm}{\lambda_{\ch}}
\newcommand{\zz} {\tilde{\varepsilon}} 
\newcommand{\Tt}{\mathcal{T}_t}
\newcommand{\Thatt}{\hat{\mathcal{T}}_t}
\newcommand{\znkop}{\zeta_{\rot,k - 1}^+}
\newcommand{\phata}{\Phat_\rot}
\newcommand{\phatz}{\Phat_{*}}
\newcommand{\shatcs}{\xhat_{t,cs}}
\newcommand{\ezero}{\mathcal{E}_0}
\newcommand{\estar}{\mathcal{E}_*}
\newcommand{\pfix}{\bm{P}_{\fx}}
\newcommand{\pch}{\bm{P}_{\ch}}
\newcommand{\pnew}{\bm{P}_{\new}}
\newcommand{\pt}{\P}
\newcommand{\shatt}{\hat{\bm{x}}_t}
\newcommand{\phatt}{\hat{\pt}}
\newcommand{\itt}{\bm{I}_{\Tt}}
\newcommand{\bphi}{\bm{\Phi}}
\newcommand{\bpsi}{\bm{\Psi}}
\newcommand{\lthres}{\omega_{evals}} 
\newcommand{\tildej}{j}
\begin{document}
\title{Provable Dynamic Robust PCA  \\ or Robust Subspace Tracking}
\author{Praneeth Narayanamurthy,~\IEEEmembership{Student Member,~IEEE,} and Namrata Vaswani,~\IEEEmembership{Senior Member,~IEEE,}%
\thanks{A short version of this paper was presented at the IEEE International Symposium on Information Theory, 2018 \cite{rrpcp_isit18}}%
\thanks{The authors are with Department of Electrical and Computer Engineering, Iowa State University, Ames,
IA, 50010 USA (e-mail: \texttt{\{pkurpadn, namrata\} @iastate.edu}).}%
}
\maketitle
\noindent

%
%
%
%

\begin{abstract}
Dynamic robust PCA refers to the dynamic (time-varying) extension of robust PCA (RPCA). It assumes that the true (uncorrupted) data lies in a low-dimensional subspace that can change with time, albeit slowly. The goal is to track this changing subspace over time in the presence of sparse outliers. We develop and study a novel algorithm, that we call simple-ReProCS, based on the recently introduced Recursive Projected Compressive Sensing (ReProCS) framework. Our work provides the first guarantee for dynamic RPCA that holds under weakened versions of standard RPCA assumptions, slow subspace change and a lower bound assumption on most outlier magnitudes. Our result is significant because (i) it removes the strong assumptions needed by the two previous complete guarantees for ReProCS-based algorithms; (ii) it shows that it is possible to achieve significantly improved outlier tolerance, compared with all existing RPCA or dynamic RPCA solutions by exploiting the above two simple extra assumptions; and (iii) it proves that simple-ReProCS is online (after initialization), fast, and, has near-optimal memory complexity.
\end{abstract}

\begin{IEEEkeywords}
Robust PCA, Subspace Tracking, Sparse Recovery, Compressive Sensing
\end{IEEEkeywords}

\section{Introduction}
Principal Components Analysis (PCA) is a widely used dimension reduction technique in a variety of scientific applications.  Given a set of  data vectors, PCA tries to finds a smaller dimensional subspace that best approximates a given dataset. According to its modern definition \cite{rpca}, robust PCA (RPCA) is the problem of decomposing a given data matrix into the sum of a low-rank matrix (true data) and a sparse matrix (outliers). The column space of the low-rank matrix then gives the desired principal subspace (PCA solution).
In recent years, the RPCA problem has been extensively studied, e.g., \cite{rpca,rpca2,rpca_zhang,rrpcp_perf,robpca_nonconvex,rrpcp_aistats,rpca_gd}. A common application of RPCA is in video analytics in separating video into a slow-changing background image sequence (modeled as a low-rank matrix) and a foreground image sequence consisting of moving objects or people (sparse) \cite{rpca}. 
Dynamic RPCA refers to the dynamic (time-varying) extension of RPCA \cite{rrpcp_perf,rrpcp_tsp,rrpcp_aistats}. It assumes that the true (uncorrupted) data lies in a low-dimensional subspace that can change with time, albeit slowly. This is a more appropriate model for long data sequences, e.g., surveillance videos. The goal is to track this changing subspace over time in the presence of sparse outliers. Hence this problem can also be referred to as {\em robust subspace tracking.}
%

\subsection{Notation and Problem Setting}
\subsubsection{Notation}
We use bold lower case letters to denote vectors, bold upper case letters to denote matrices, and calligraphic letters to denote sets or events. We use the interval notation $[a, b]$ to mean all of the integers between $a$ and $b$, inclusive, and $[a,b):= [a,b-1]$.
We will often use $\J$ to denote a time interval and $\J^\alpha$ to denote a time interval of length $\alpha$.
We use $\one_{S}$ to denote the indicator function for statement $S$, i.e. $\one_{S}=1$ if $S$ holds and $\one_{S}=0$ otherwise.
We use $\|\cdot \|$ without a subscript to denote the $l_2$ norm of a vector or the induced $l_2$ norm of a matrix.
For other $l_p$ norms, we use $\|\cdot \|_p$.
For a set $\T$, we use $\I_{\T}$ to refer to an $n \times |\T|$ matrix of columns of the identity matrix indexed by entries in $\T$. 
For a matrix $\bm{A}$, $\bm{A}'$ denotes its transpose and $\bm{A}_{\T} := \bm{AI}_{\T}$ is the sub-matrix of $\bm{A}$ that contains the columns of $\A$ indexed by entries in $\T$.
Also, we use $\bm{A}^i$ to denote its $i$-th row.
We use $\lambda_{\min}(.)$ ($\sigma_{\min}(.)$) to denote the minimum eigen (singular) value of a matrix. Similarly for $\lambda_{\max}(.)$ and $\sigma_{\max}(.)$.
We use $\delta_s(\bm{A})$ to denote the $s$-restricted isometry constant (RIC) \cite{candes_rip} of $\bm{A}$.

A matrix with mutually orthonormal columns is referred to as a {\em basis} matrix and is used to represent the subspace spanned by its columns. For basis matrices $\Phat$, $\P$, we use
\[
\SE(\Phat,\P) := \|(\I - \Phat \Phat')\P\|
\]
to quantify the {\em subspace error (SE)} between their respective column spans. This measures the sine of the maximum principal angle between the subspaces.  When $\Phat$ and $\P$ are of the same size, then $\SE(.)$ is symmetric, i.e., $\SE(\Phat,\P) = \SE(\P,\Phat)$. We use $\P_\perp$ to denote a basis matrix for the orthogonal complement of $\Span(\P)$.

For a matrix $\M$, we use $\basis(\M)$ to denote a basis matrix whose columns span the same subspace as the columns of $\M$.

The letters $c$ and $C$ denote different numerical constants in each use; $c$ is used for constants less than one and $C$ for those equal to or greater than one.

\subsubsection{Dynamic RPCA or Robust Subspace Tracking Problem Statement}
At each time $t$, we observe $\yt \in \Re^n$ that satisfies
\bea
\yt := \lt + \x_t + \vt, \text{ for } t = 1, 2, \dots, \tmax
\label{orpca_eq}
\eea
where $\xt$ is the sparse outlier vector, $\lt$ is the true data vector that lies in a fixed or slowly changing low-dimensional subspace of $\Re^n$, and $\vt$ is small unstructured noise or modeling error. To be precise, $\lt = \tP_{(t)} \a_t$ where $\tP_{(t)}$ is an $n \times r$ {\em basis matrix} with $r \ll n$ and with $\|(\I - \tP_{(t-1)}\tP_{(t-1)}{}')\tP_{(t)}\|$ small compared to $\|\tP_{(t)}\|=1$ (slow subspace change). 
We use $\T_t$ to denote the support set of $\xt$ and we let $s:=\max_t|\T_t|$.
Given an initial subspace estimate, $\Phat_{0}$, the goal is to track $\Span(\tP_{(t)})$ within a short delay of each subspace change. The initial estimate can be obtained by applying any static (batch) RPCA technique, e.g., PCP \cite{rpca} or AltProj \cite{robpca_nonconvex}, to the first $t_\train$ data frames, $\Y_{[1,t_\train]}$. 
A by-product of our solution approach is that the true data vectors $\lt$, the sparse outliers $\xt$, and their support sets $\T_t$ can also be tracked on-the-fly. In many practical applications, in fact, $\xt$ or $\T_t$ is often the quantity of interest.

We also assume that (i) $|\T_t|/n$ is upper bounded, (ii) $\T_t$ changes enough over time so that any one index is not part of the outlier support for too long, (iii) the columns of $\tP_{(t)}$ are dense (non-sparse), and (iv) the subspace coefficients $\at$ are element-wise bounded, mutually independent, zero mean, have identical and diagonal covariance matrices, and are independent of the outlier supports $\T_t$.
We quantify everything in Sec. \ref{main_res}.

\subsubsection{Subspace Change Assumption}
To ensure that the number of unknowns is not too many (see the discussion in Sec. \ref{why_pwconst}), we will further assume that the subspace $\Span(\tP_{(t)})$ is {\em piecewise constant} with time, i.e., 
\bea
\tP_{(t)} = \tP_{(t_j)} \ \forall t \in [t_j, t_{j+1}), \ j=0,1,\dots, J,
\label{pw_cons}
\eea
with $t_0=1$ and $t_{J+1}=\tmax$. Let $\P_j:= \tP_{(t_j)}$.
At each change time, $t_j$, the change is ``slow''.  This means two things:
\ben
\item First, at each $t_j$, only one direction can change with the rest of the subspace remaining fixed, i.e.,
\bea
\SE(\P_{j-1},\P_j) = \SE(\P_{j-1,\ch},\P_{j,\rot})
\label{one_change}
\eea
where  $\P_{j-1,\ch}$ is {\em a} direction from $\Span(\P_{j-1})$ that ``changes'' at $t_j$ and  $\P_{j,\rot}$ is its ``rotated'' version. Thus $\Span(\P_{j-1}) = \Span([\P_{j-1,\fx},\P_{j-1,\ch}])$ and $\Span(\P_{j}) = \Span([\P_{j-1,\fx},\P_{j,\rot}])$ where $\P_{j-1,\fx}$ is an $n \times (r-1)$ matrix that denotes the part of the subspace that remains ``fixed'' at $t_j$.

Of course  at different $t_j$'s, the changing directions could be different.

\item Second, the angle of change is small, i.e., for a $\Delta \ll 1$,
\bea
\SE(\P_{j-1},\P_j) = \SE(\P_{j-1,\ch},\P_{j,\rot}) \le \Delta.
\label{ss_ch_0}
\eea
\een

\subsubsection{Equivalent generative model}
With the above model,
\[
\P_{j,\new}:=  \frac{(\I - \P_{j-1,\ch} \P_{j-1,\ch}{}') \P_{j,\rot}}{\SE(\P_{j-1,\ch},\P_{j,\rot})}
\]
is the newly added direction at $t_j$, $\theta_j: = \cos^{-1}|\P_{j-1,\ch}{}' \P_{j,\rot}|$ is the angle by which $\P_{j-1,\ch}$ gets rotated out-of-plane (towards $\P_{j,\new}$ which lies in $\Span(\P_{j-1})^\perp$) to get $\P_{j,\rot}$. Without loss of generality, assume $0 \le \theta_j \le \pi/2$. Thus,
\bi
\item $|\sin \theta_j|= \sin \theta_j = \SE(\P_{j-1,\ch},\P_{j,\rot}) = \SE(\P_{j-1},\P_j) \le \Delta$, and
\item $\P_{j,\del}:=\P_{j-1,\ch}\sin\theta_j - \P_{j,\new} \cos \theta_j$ is the direction that got deleted at $t_j$.
\ei
We have the following equivalent generative model for getting $\P_j$ from $\P_{j-1}$: let $\U_j$ be an $r \times r$ rotation matrix,
\begin{align}
& \P_j = [\underbrace{(\P_{j-1} \U_j)_{[1,r-1]}}_{\P_{j-1,\fx}}, \P_{j,\rot}], \text{ where } \nn \\ 
& \P_{j,\rot}: = \underbrace{ (\P_{j-1} \U_j)_{r}}_{\P_{j-1,\ch}} \cos \theta_j +  \P_{j,\new} \sin \theta_j
\label{ss_ch}
\end{align}
For a simple example of this in 3D ($n=3$), see Fig. \ref{subs_ch_fig}. 

To make our notation easy to remember, we try to explain its meaning better. Consider the change at $t_j$. The direction from $\Span(\P_{j-1})$ that changes is denoted by $\P_{j-1,\ch}$. This changes by getting rotated (out-of-plane) by a small angle $\theta_j$ towards a new out-of-plane direction $\P_{j,\new}$ to get the changed/rotated direction $\P_{j,\rot}$.
Here ``plane'' refers to the hyperplane $\Span(\P_{j-1})$. The basis for the $r-1$-dimensional subspace of $\Span(\P_{j-1})$ that does not change at $t_j$ is $\P_{j-1,\fx}$. So $\P_j = [\P_{j-1,\fx}, \P_{j,\rot}]$.

The span of left singular vectors of $\L$ is contained in, or equal to, $\Span([\P_0,\P_{1,\new},\P_{2,\new},\dots,\P_{J,\new}])$. Equality holds if $\P_{j,\new}$ is orthogonal to $\Span([\P_0, \P_{1,\new}, \dots, \P_{j-1,\new}])$ for each $j$.

In this work we have assumed the simplest possible model on subspace change where, at a change time, only one direction can change. Observe though that, at different change times, the changing direction could be different and hence, over a long period of time, the entire subspace could change. This simple model can be generalized to $r_\ch>1$ directions changing; see the last appendix in the ArXiv posting of this work. It is also possible to study the most general case where $r_\ch=r$ and hence no model is assumed for subspace change (only a bound on the maximum principal angle of the change). This requires significant changes to both the algorithm and the guarantee; it is studied in follow-up work \cite{rrpcp_icml_trans_it}.



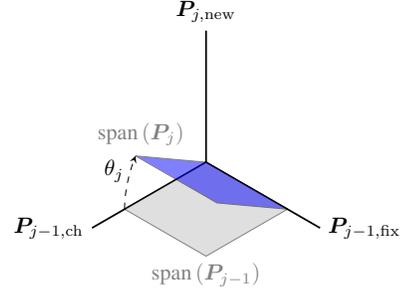
\begin{figure}[t!]
\begin{center}
\resizebox{0.3\textwidth}{!}{%
\begin{tikzpicture}[x={(0.866cm,-0.5cm)}, y={(-0.866cm,-0.5cm)}, z={(0cm,1cm)}, scale = 0.15,
                    color = {black}, axis/.style={thick},]

                    \coordinate (O) at (0, 0, 0);
    \draw[axis] (O) -- +(14, 0,   0) node [right] {{$\P_{j-1,\fx}$}};
    \draw[axis] (O) -- +(0,  14, 0) node [left] {{$\P_{j-1,\ch}$}};
    \draw[axis] (O) -- +(0,  0,   14) node [above] {{$\P_{j,\new}$}};

    \draw[fill=lightgray,draw=black,opacity=.5,very thin,line join=round]
 		(0,0,0) --
 		(0,10,0) --
 		(10,10,0) --
 		(10,0,0) --cycle node[above] at (0, 0,-14) {$\text{span}\left(\bm{P}_{j-1}\right)$};
 		
	\draw[fill=blue,draw=black,opacity=.5,very thin,line join=round]
 		(0,0,0) --
 		(0,{10 * cos(30)},{10* sin(30)}) --
 		(10,{10 * cos(30)},{10* sin(30)}) --
 		(10,0,0) --cycle node[above] at (0, 8, 5) {$\text{span}\left(\bm{P}_{j}\right)$}; 		
 		

%
\draw[->, >=stealth, domain=0:30,variable=\x,black, dashed] plot (0, {10 * cos(\x)}, {10 * sin(\x)});
\draw [] node[above] at (0,{12 * cos(20)},{8* sin(20)}) {$\theta_j$};
\end{tikzpicture}
}
\end{center}
\vspace{-0.2in}
\caption{\footnotesize{Subspace change example in 3D with $r=2$.}}
\vspace{-0.2in}
\label{subs_ch_fig}
\end{figure}

%

\subsubsection{Relation to original RPCA}
To connect with the original RPCA problem \cite{rpca,rpca_zhang,robpca_nonconvex}, define the $n \times \tmax$ data matrix $\Y:=[\y_1, \y_2, \dots \y_{\tmax}] := \L + \X + \V$ where $\L$, $\X$, $\V$ are similarly defined. Let $\rmat$ denote the rank of $\L$ and use $\outfraccol$ and $\outfracrow$ to denote the maximum fraction of outliers per column and per row of $\Y$. RPCA results bound $\max(\outfracrow,\outfraccol)$.
For dynamic RPCA, we will define $\outfracrow$ slightly differently. It will be the maximum fraction per row of any $n \times \alpha$ sub-matrix of $\Y$ with $\alpha$ consecutive columns. Here $\alpha$ denotes the number of frames used in each subspace update. We will denote this by $\outfracrow^\alpha$ to indicate the difference. Since $\alpha$ is large enough (see \eqref{def_alphas}), the two definitions are only a little different.
The dynamic RPCA assumption of a bound on $\max_t |\T_t|/n$ is equivalent to bounding $\outfraccol$ since $\outfraccol = \max_t |\T_t|/n$. The requirement of $\T_t$'s changing enough is equivalent to a bound on $\outfracrow^\alpha$. As we explain later, the denseness assumption on the $\tP_{(t)}$'s is similar to the denseness (incoherence) of left singular vectors of $\L$ assumed by all standard RPCA solutions, while the assumptions on $\at$'s replace the right singular vectors' incoherence assumption of standard RPCA.


\subsection{Related Work and our Contributions}

\subsubsection{Related Work} We briefly mention all related work here, but provide a detailed discussion later in Sec. \ref{rel_work}.
There is very little work on other solutions for provably correct dynamic RPCA. This includes our early work on a {\em partial guarantee} (guarantee required assumptions on intermediate algorithm estimates) \cite{rrpcp_perf} and later complete correctness results \cite{rrpcp_isit15,rrpcp_aistats} for more complicated ReProCS-based algorithms. We refer to all of these as ``original-ReProCS''.
It also includes older work on  modified-PCP, which is a batch solution for RPCA with partial subspace knowledge, and which can be shown to also provably solve dynamic RPCA in a piecewise batch fashion \cite{zhan_pcp_jp}.
The original-ReProCS guarantees require strong assumptions on how the outlier support changes (need a very specific model inspired by a video moving object); their subspace change assumptions are unrealistic; and their subspace tracking delay (equal to the required delay between subspace change times) is very large. On the other hand, the Modified-PCP guarantee \cite{zhan_pcp_jp} requires the outlier support to be uniformly randomly generated (strong assumption; for video, it means that the moving objects need to be single pixel wide and should be jumping around randomly from frame to frame); requires a different stronger assumption on subspace change; and cannot detect subspace change automatically.
Other than the above, there is some work on online algorithms for RPCA. The only work that comes with some guarantee, although it is a {\em partial guarantee}, is an online stochastic optimization based solver for the PCP convex program (ORPCA) \cite{xu_nips2013_1}. Its guarantee assumed that the basis matrix for the subspace estimate at each $t$ was full rank. To our best knowledge, there is no follow-up work on a complete correctness result for it.
There is also much work on empirical online solutions for RPCA, e.g., \cite{grass_undersampled}, and older work, e.g., \cite{Li03anintegrated,ipca_weightedand}. From a practical standpoint, any online algorithm will implicitly also provide a tracking solution. However, as shown in Sec. \ref{sims_detail}, the solution is not as good as that of ReProCS which explicitly exploits slow subspace change.

The standard RPCA problem has been extensively studied \cite{rpca,rpca2,rpca_zhang,robpca_nonconvex,rpca_gd,rmc_gd}.  We discuss these works in detail in Sec. \ref{rel_work}. A summary is provided in Table \ref{compare_assu}. Briefly,  these either need outlier fractions in each row and each column of the observed data matrix to be $O(1/\rmat)$ (AltProj \cite{robpca_nonconvex}, GD \cite{rpca_gd}, NO-RMC \cite{rmc_gd}, PCP result of \cite{rpca2,rpca_zhang}, denoted PCP(H)) or need the outlier support to be uniformly randomly generated (PCP result of \cite{rpca}, denoted PCP(C)). Moreover, all these are batch solutions with large memory complexity $O(n\tmax)$.

\begin{table*}[t!]
\caption{\small{Comparing s-ReProCS with other RPCA solutions with complete guarantees. For simplicity, we ignore all dependence on condition numbers. In this table $\rmat$ is the rank of the entire matrix $\L$, while $r$ is the maximum  rank of any sub-matrices of consecutive columns of $\L$ of the form $\L_{ [t_j, t_{j+1}) }$ and thus $r \le \rmat$. 
We show the unrealistic assumptions in {\color{red} red}. 
}}
\vspace{-0.12in}
\begin{center}
\renewcommand*{\arraystretch}{1.15}
\resizebox{\linewidth}{!}{
\begin{tabular}{lllll}
\toprule
Algorithm & Outlier tolerance & Assumptions  & Memory, Time, &  \# params.\\
\midrule
PCP (C) \cite{rpca}  & $\outfracrow \in \mathcal{O}(1)$&  {\cred outlier support: uniform random} & Memory: $\mathcal{O}(n \tmax)$   & zero \\
(offline)        & $\outfraccol \in \mathcal{O}(1) $ &  {$\rmat \leq  {c\min(n,\tmax)}/{\log^2 n}$ }     &             Time: $\mathcal{O}(n \tmax^2 \frac{1}{\epsilon})$   & \ \\
\midrule
PCP (H) \cite{rpca_zhang}  & $\outfracrow \in \mathcal{O}(1/\rmat)$&  {} & Memory: $\mathcal{O}(n \tmax)$   & $2$ \\
(offline)        & $\outfraccol \in \mathcal{O}(1/\rmat) $ &       &             Time: $\mathcal{O}(n \tmax^2 \frac{1}{\epsilon})$   & \ \\
\midrule

AltProj \cite{robpca_nonconvex},  & $\outfracrow = O\left(1/\rmat\right)$ & & Memory: $\mathcal{O}(n \tmax)$   & 2 \\
(offline)   & $\outfraccol \in O\left( 1/\rmat\right)$      &    & Time: $\mathcal{O}(n \tmax \rmat^2 \log \frac{1}{\epsilon})$   &   \\
\midrule
RPCA-GD \cite{rpca_gd} & $\outfracrow \in \mathcal{O}(1/\rmat^{1.5})$       &  & Memory: $\mathcal{O}(n \tmax)$   & 5 \\
  (offline)       & $\outfraccol \in \mathcal{O}(1/\rmat^{1.5})$ & & Time: $\mathcal{O}(n \tmax \rmat \log \frac{1}{\epsilon})$   &  \\
\midrule
NO-RMC   \cite{rmc_gd} & $\outfracrow \in O\left(1/\rmat\right)$ & {\cred $C n \ge \tmax  \ge c n$}  & Memory: $\mathcal{O}(n \tmax)$   & 3 \\
 (offline)      & $\outfraccol  \in \mathcal{O}(1/\rmat)$            &  & Time: $\mathcal{O}(n \rmat^3 \log^2 n \log^2 \frac{1}{\epsilon})$   &  \\
\midrule
{\bf s-ReProCS} &        {$\bm{\outfracrow^\alpha  \in \mathcal{O}(1)}$} & {\bf most outlier magnitudes lower bounded} & {\bf Memory: $\mathcal{O}(nr \log n )$} & {\bf 4}   \\
({\bf online}) &       {$\bm{\outfraccol \in \mathcal{O}(1/r)}$} &    {\bf slow subspace change}     & {\bf Time: $\mathcal{O}(n \tmax r \log \frac{1}{\epsilon})$}    \\
(this work) &  & {\bf first $Cr$ samples: AltProj assumptions} & {\bf Detect delay: $2 \alpha = C r \log n$ }  \\
 & & & {\bf Tracking Delay: $K\alpha = C r \log n \log(1/\epsilon)$} & \\
\bottomrule
\end{tabular}
}
\label{compare_assu}
\end{center}
\end{table*}

\subsubsection{Contributions}
We develop a simple algorithm, termed simple-ReProCS or s-ReProCS, for provably solving the robust subspace tracking or dynamic RPCA problem described earlier. We also develop its offline extension that can be directly compared with the standard RPCA results. Simple-ReProCS is based on the ReProCS framework \cite{rrpcp_perf}. Our main contribution is the {\em first correctness guarantee} for dynamic RPCA that {\em holds under weakened versions of standard RPCA assumptions, slow subspace change, and a lower bound on most outlier magnitudes (this lower bound is proportional to the rate of subspace change)}.
We say ``weakened'' because our guarantee implies that, after initialization, s-ReProCS can tolerate an order-wise larger fraction of outliers per row than all existing approaches, without requiring the outlier support to be uniformly randomly generated or without needing any other model on support change. It allows $\outfracrow^\alpha \in O(1)$ (instead of $O(1/\rmat)$).
For the video application, this implies that it tolerates slow moving and occasionally static foreground objects  much better than other approaches. This fact is also backed up by comparisons on real videos, see Sec. \ref{sims_detail} and also see \cite{rrpcp_review}.

A second key contribution is the algorithm itself. Unlike original-ReProCS \cite{rrpcp_isit15,rrpcp_aistats}, s-ReProCS ensures that the estimated subspace dimension is bounded by $(r+1)$ at all times without needing the complicated cluster-EVD step. More importantly, s-ReProCS is provably fast and memory-efficient: its time complexity is comparable to that of SVD for vanilla PCA, and its memory complexity is near-optimal and equal to $O(nr \log n \log(1/\zz))$ where $\zz$ is the desired subspace recovery accuracy.
This is near-optimal because $nr$ is the memory needed to output an $r$-dimensional subspace estimate in $\Re^n$, and the complexity is within log factors of the optimal.
To our best knowledge, s-ReProCS is the first provably correct RPCA or dynamic RPCA solution that is {\em as fast as the best RPCA solution in terms of computational complexity without requiring the data matrix to be nearly square} and {\em has near-optimal memory complexity}.
We provide a tabular comparison of guarantees of offline s-ReProCS with other provable RPCA solutions in Table \ref{compare_assu}. We compare s-ReProCS with other online or tracking solutions for RPCA or dynamic RPCA in Table \ref{compare_assu_reprocs} (original-ReProCS, modified-PCP, follow-up work on RePrOCS-NORST \cite{rrpcp_icml,rrpcp_icml_trans_it}, ORPCA and GRASTA). 

We give a significantly shorter and simpler proof than that for the earlier guarantees for ReProCS-based methods. We do this by first separately {\em proving a result for the problem of ``correlated-PCA" or ``PCA in data-dependent noise" \cite{corpca_nips,pca_dd} with partial subspace knowledge.} This result given in Theorem \ref{thm_corpca} of Sec. \ref{corpca_sec} may also be of independent interest.

\subsection{The need for a piecewise constant model on subspace change}\label{why_pwconst}
We explain  why the piecewise-constant subspace change model is needed.
Even if the observed data were perfect (no noise/outlier/missing-data, i.e., we observed $\lt$, and all measurements were linearly independent) and the previous subspace were exactly known, in order to obtain a correct $r$-dimensional estimate\footnote{requires finding both the newly added direction, $\P_{j,\new}$, {\em and} the deleted direction, $\P_{j,\del}$} for each $\tP_{(t)}$, one would need at least $r$ samples. Of course, to just find the newly added direction $\P_{j,\new}$ and use an $(r+1)$-dimensional estimate, one sample would suffice in this ideal setting (doing this will be especially problematic if the subspace changes at each time because it will mean the estimated subspace dimension will keep growing as $r+t$ at time $t$).
Our actual setting is not this ideal one: we know the previous subspace only up to $\epsilon$ error and we observe $\yt$ which is a noisy and outlier-corrupted version of $\lt$. This is why, in our setting, more than one data samples are needed even to accurately estimate the newly added direction. Since we get only one observed data vector $\yt$ at each time, the only way to have enough data samples for estimating each subspace is to assume that $\tP_{(t)}$ is piecewise constant with time, i.e., it satisfies \eqref{pw_cons}. In fact, our required lower bound on $t_{j+1}-t_j$ is only a little more than $r$ (see Theorem \ref{thm1}), thus making our model a good approximation to slow continuous subspace change.



Furthermore, the following point should be mentioned. In the entire literature on subspace tracking (both with and without outliers, and with and without even missing data), there is no model for subspace change for which there are any provable guarantees. There is no work on provable subspace tracking with outliers (robust subspace tracking) except our own previous work which also used the piecewise constant subspace change model. The subspace tracking (ST) problem (without outliers), and with or without missing data, has been extensively studied \cite{past,past_conv,adaptivesigproc_book,grouse,petrels,local_conv_grouse}; however, all existing guarantees are asymptotic results for the statistically stationary setting of data being generated from a {\em single unknown} subspace. Moreover, most of these also make assumptions on intermediate algorithm estimates. For a longer discussion of this, please see \cite{rrpcp_proc}.%

\subsection{Paper Organization}
The proposed algorithm, simple-ReProCS, and its performance guarantees, Theorem \ref{thm1}, are given in Sec. \ref{main_res}. We discuss the related work in detail in Sec. \ref{rel_work} and explain how our guarantee compares with other provable results on RPCA or dynamic RPCA from the literature.
Sec. \ref{proof_idea} provides the main ideas that lead to the proof of Theorem \ref{thm1}.  We prove Theorem \ref{thm1} under the assumption that the subspace change times are known in Sec. \ref{proof_thm1_section}. This proof helps illustrate all the ideas of the actual proof but with minimal notation. The general proof of Theorem \ref{thm1} is given in Appendix \ref{proof_auto_thm1}. Theorem \ref{thm1} relies on a guarantee for PCA in data-dependent noise \cite{corpca_nips,pca_dd} when partial subspace knowledge is available. This result is proved in Appendix \ref{proof_thm_corpca}. We provide detailed empirical evaluation evaluation of simple-ReProCS in Sec. \ref{sims_detail}. We conclude and discuss future directions in Sec. \ref{conclude}.


\begin{table*}[t!]
\caption{\small{Comparing  s-ReProCS with  online or tracking approaches for RPCA. We show the unrealistic assumptions in {\color{red} red}. 
Here, $f$ denotes the condition number of $\Lam$, $r$ is the maximum dimension of the subspace at any time, and $\rmat$ refers to the rank of matrix $\L$. Thus $r \le \rmat$. Here, s-ReProCS-no-delete refers to Algorithm \ref{simp_reprocs_auto} without the subpace deletion step.
}}
\vspace{-0.12in}
\begin{center}
\renewcommand*{\arraystretch}{1.15}
\resizebox{\linewidth}{!}{
\begin{tabular}{llll}
\toprule
Algorithm & Outlier tolerance & Assumptions  & Memory, Time \\ 
\midrule
{orig-ReProCS} \cite{rrpcp_aistats,rrpcp_isit15} &         {$\outfracrow^\alpha  \in \mathcal{O}(1/f^2) $} &  {\cred outlier support: moving object model,}      & {Memory: $O(nr^2/{\epsilon^2} )$}  \\
(online)  &         {$\outfraccol \in O(1/\rmat) $} & {\cred unrealistic subspace change model,}    &  {Time: $\mathcal{O}(n \tmax r \log \frac{1}{\epsilon})$ }   \\
 &         \ &  {\cred changed eigenvalues small for some time,}    & Detect Delay: $2\alpha = \frac{C r^2 \log n}{\epsilon^2}$  \\
  &         \ & outlier mag. lower bounded,      &  Tracking Delay: $ K \alpha = \frac{C r^2 \log n \log(1/\epsilon)}{\epsilon^2} $  \\
  & & $x_{\min} \geq 14 [c \gamma_{\new} + \sqrt{\zz}(\sqrt{r} + \sqrt{c})]$ \\
& & where, $\gamma_{\new}$ quantifies slow subspace change \\
        &  & init data: AltProj assumptions, &  \   \\
\                           &   & {\cred $\tmax \ge C r^2/{\epsilon^2}$}    &    \\
\midrule
Modified-PCP \cite{zhan_pcp_jp} & $\outfracrow^\alpha \in \mathcal{O}(1)$ & {\cred outlier support: uniform random} & Memory: $\mathcal{O}(nr \log^2 n)$ \\
(piecewise batch) & $\outfraccol \in \mathcal{O}(1)$ & {\cred unrealistic subspace change model} & Time: $\mathcal{O}( \frac{n \tmax r \log^2 n }{\epsilon} )$      \\
&  & {$\rmat \leq {c\min(n,\tmax)}/{\log^2 n}$} & Detect delay: $\infty$  \\
\midrule
ORPCA \cite{xu_nips2013_1} & \multicolumn{2}{c}{Has a partial guarantee -- assumes algorithm estimates at each time  $t$ are full rank} & \  \\ \midrule 

GRASTA \cite{grass_undersampled} & \multicolumn{2}{c}{Has no theoretical guarantees} & \  \\ \midrule
{\bf s-ReProCS} &        {$\bm{\outfracrow^\alpha  \in \mathcal{O}(1/f^2)}$} & {\bf most outlier magnitudes lower bounded} & {\bf Memory: $\mathcal{O}(nr \log n )$}    \\
({\bf online}) &       {$\bm{\outfraccol \in \mathcal{O}(1/r)}$} &   $\xmin \geq 15C (2\zz \sqrt{r \lambda^+} + \Delta \sqrt{\lambda_\ch} )$ & {\bf Time: $\mathcal{O}(n \tmax r \log \frac{1}{\epsilon})$}    \\
(this work) &  &  {\bf slow subspace change}     & {\bf Detect delay: $2 \alpha = C r \log n$ }  \\
 & & {\bf first $Cr$ samples: AltProj assumptions}  & {\bf Tracking Delay: $K\alpha = C r \log n \log(1/\epsilon)$}  \\  \midrule
 {\bf s-ReProCS-no-delete} &        {$\bm{\outfracrow^\alpha  \in \mathcal{O}(1)}$} & {\bf most outlier magnitudes lower bounded} & {\bf Memory: $\mathcal{O}(nr \log n )$}    \\
({\bf online}) &       {$\bm{\outfraccol \in \mathcal{O}(1/\rmat)}$} &   $\xmin \geq 15C (2\zz \sqrt{r \lambda^+} + \Delta \sqrt{\lambda_\ch} )$& {\bf Time: $\mathcal{O}(n \tmax r \log \frac{1}{\epsilon})$}    \\
(this work) &  &  {\bf slow subspace change}       & {\bf Detect delay: $2 \alpha = C r \log n$ }  \\
 & & {\bf first $Cr$ samples: AltProj assumptions} & {\bf Tracking Delay: $K\alpha = C r \log n \log(1/\epsilon)$}  \\ \midrule
 {ReProCS-NORST} & {$\outfracrow  = \mathcal{O}(1/f^2) $} & outlier mag. lower bounded       & {Memory: $\mathcal{O}(n r \log n \log \frac{1}{\epsilon} )$ } \\
 \cite{rrpcp_merop,rrpcp_icml_trans_it} ({\em online})    & $\outfraccol = \mathcal{O}(1/r) $ & $\xmin \geq C_1 \sqrt{r \lambda^+} (\Delta + 2\zz)$ &  Time: $\mathcal{O}(n \tmax r \log \frac{1}{\epsilon})$    \\
 {(follow-up}      &   & slow subspace change or fixed subspace  &  Detect delay: $ C r \log n$  \\
{to this work)}   & & first $Cr$ samples: AltProj assumptions     & Tracking Delay: $ C r \log n \log (1/\epsilon)$ \\
\bottomrule
\end{tabular}
}
\label{compare_assu_reprocs}
\end{center}
\end{table*}

\section{The simple-ReProCS Algorithm and its Guarantee}\label{main_res}

\subsection{Simple-ReProCS (s-ReProCS)}
S-ReProCS proceeds as follows. The initial subspace is assumed to be accurately known (obtained using AltProj or PCP).
At time $t$, if the previous subspace estimate, $\tPhat_{(t-1)}$, is accurate enough, because of slow subspace change, projecting $\yt = \xt + \lt + \vt$ onto its orthogonal complement will nullify most of $\lt$. Moreover, $\|\vt\|$ is small (by assumption). We compute $\tty_t:= \bm\Psi  \yt$ where $\bm{\Psi} := \I - \tPhat_{(t-1)}\tPhat_{(t-1)}{}'$. Thus, $\tty_t = \bm{\Psi} \xt + \b_t$ where $\b_t:= \bm\Psi (\lt+ \vt)$ and $\| \bm{b}_t \|$ is small. Recovering $\xt$ from $\tty_t$ is thus a traditional compressive sensing (CS) / sparse recovery problem in small noise \cite{candes_rip}. This is solvable because incoherence (denseness) of $\tP_{(t)}$'s and slow subspace change implies \cite{rrpcp_perf} that $\bpsi$ satisfies the restricted isometry property \cite{candes_rip}.
We compute $\xhat_{t,cs}$ using $l_1$ minimization followed by thresholding based support estimation to get $\That_t$.
A Least Squares (LS) based debiasing step on $\That_t$ returns the final $\xhat_t$. We then estimate $\lt$ as $\lhat_t = \yt - \xhatt$. We refer to the above step as {\em Projected Compressive Sensing (CS)}. As explained in \cite{rrpcp_review,rrpcp_proc}, this can also be understood as solving a {\em Robust Regression}  problem\footnote{The above step equivalently solves for $\tilde\a,\tilde\x$ that satisfy $\yt = \Phat_{t-1}\tilde\a + \tilde\x + \b_t$ with $\tilde\x$ being sparse and $\|\b_t\|$ being small. This is the approximate robust regression problem where columns of $\Phat_{t-1}$ are the regressors/predictors, $\tilde\x$ is the sparse outliers and $\b_t$ is the small ``noise'' or model inaccuracy.}.

The $\lhatt$'s are used for the {\em Subspace Update} step which involves (i) detecting subspace change; (ii) obtaining improved estimates of the changed direction(s) by $K$ steps of projection-SVD \cite{rrpcp_perf}, each done with a new set of $\alpha$ frames of $\lhatt$; and (iii) a simple SVD based subspace re-estimation step, done with another new set of $\alpha$ frames. This is done to remove the deleted direction and get an $r$-dimensional estimate of the new subspace.
We explain the subspace change detection strategy in Sec. \ref{det_works}. Suppose the change is detected at $\that_j$. The $k$-th projection-SVD step involves computing $\Phat_{j,\rot,k}$ as the top singular vector of $(\I - \Phat_{j-1}\Phat_{j-1}{}')[\lhat_{\that_j+(k-1)\alpha},\lhat_{\that_j+(k-1)\alpha+1},\dots,\lhat_{\that_j+k\alpha-1}]$ and setting $\Phat_{(t)} = \Phat_{j,k}:=[\Phat_{j-1},\Phat_{j,\rot,k}]$.
For ease of understanding, we summarize a basic version of s-ReProCS in Algorithm \ref{simp_reprocs_tj}. This assumes that the change times $t_j$ are known, i.e., that $\that_j=t_j$. The actual algorithm that detects changes automatically is longer and is given as Algorithm \ref{simp_reprocs_auto} in Sec. \ref{det_works}. {\em We both analyze and implement this one.}%


The above approach works because, every time the subspace changes, with high probability (whp), the change can be detected within a short delay, and after that, the $K$ projection-SVD steps help get progressively improved estimates of the changed/rotated direction $\P_{j,\rot}$. 
The final simple SVD step re-estimates the entire subspace in order to delete $\P_{j,\del}$, from the estimate.

The estimates of the subspace or of $\lt$'s are improved in offline mode as follows. At $t = \that_j + K \alpha$, the $K$ projection-SVD steps are complete and hence the subspace estimate at this time is accurate enough whp. At this time, {\em offline s-ReProCS} (last line of Algorithm \ref{auto_reprocs}) goes back and sets $\tPhat_{(t)} \leftarrow [\Phat_{j-1}, \Phat_{j,\rot,K}]$ for all $t \in [\that_{j-1}+K \alpha, \that_j + K \alpha)$. It also uses this to get improved estimates of $\xhat_t$ and $\lhat_t$ for all these times $t$.




\subsection{Assumptions and Main Result} \label{assu_main}

\subsubsection{Incoherence (denseness) of columns of $\P_j$'s}
In order to separate the $\lt$'s from the sparse outliers $\xt$, we need an assumption that ensures that the $\lt$'s are themselves not sparse. One way to ensure this is to assume $\mu$-incoherence \cite{rpca} of the basis matrix for the subspace spanned by the columns of $\P_{j-1}$ and $\P_j$, i.e., assume that
\bea
\max_{j=1,2,\dots, J} \max_{i=1,2,\dots,n} \| \basis([\P_{j-1},\P_{j}])^i\| \le \sqrt{\frac{\mu (r+1)}{n}}
\label{incoh_2}
\eea
for a $\mu \ge 1$ but not too large (assumed to be a numerical constant henceforth). Because of our subspace change model, the subspace spanned by the columns of $[\P_{j-1},\P_j]$ has dimension $r+1$. In fact, $ \basis([\P_{j-1},\P_{j}]) = [\P_{j-1}, \P_{j,\new}]$. 

It is easy to see that \eqref{incoh_2}, along with the bound on $\outfraccol$ assumed in Theorem \ref{thm1} given below ($\outfraccol \le {0.01}/{(2 \mu (r+1))}$), implies that \eqref{dense_bnd} given later holds\footnote{This is true because (i) for any basis matrix $\P$, $\max_{\T: |\T| \le 2s} \|\I_\T {}' \P\|^2 \le 2s \max_i \|\I_i{}' \P\|^2$ \cite{rrpcp_perf}, here $s = \outfraccol \cdot n$;  (ii) if $\tilde{\P}$ is such that $\Span(\tilde{\P}) \subseteq \Span(\P)$, then $\|\I_\T {}' \tilde\P\|^2 \le \|\I_\T {}' \P\|^2$; and (iii) both $\Span(\P_j)$  and $\Span(\P_{j,\new})$ are contained in the span of $\basis([\P_{j-1},\P_{j}])$. In fact $\basis([\P_{j-1},\P_{j}]) = [\P_{j-1}, \P_{j,\new}]$. Thus, using the $\outfraccol$ bound, $\max_{\T: |\T| \le 2s} \|\I_\T {}' \P_j\|^2 \le 2s \mu (r+1)/n = 2 \outfraccol  \mu (r+1) \le 0.01$ and the same also holds for $\max_{\T: |\T| \le 2s} \|\I_\T {}' \P_{j,\new}\|^2$.}.
Our result actually only needs \eqref{dense_bnd}, but that is complicated to state and explain. Hence we use the above stronger but well-understood assumption.



\subsubsection{Assumption on principal subspace coefficients $\at$}
We assume that the $\a_t$'s are zero mean, mutually independent, {\em element-wise bounded} random variables (r.v.), have identical and diagonal covariance matrix denoted $\bm\Lambda$, and are independent of the outlier supports $\T_t$.
Here element-wise bounded means that there exists a numerical constant $\eta$, such that
\[
\max_{j=1,2, \dots r}  \max_t \frac{( \at)_j^2}{\lambda_j(\Lam)} \le \eta. \nn
\]
For most bounded distributions, $\eta$ is a little more than one, e.g., if the entries of $\at$ are zero mean uniform, then $\eta=3$. 
As we explain later the above assumptions of $\at$ replace the right singular vectors' incoherence assumption used by all standard RPCA solutions.

\subsubsection{Outlier fractions bounded}
Similar to earlier RPCA works, we also need outlier fractions to be bounded. However, we need different bounds on this fraction per column and per row. The row bound can be much larger\footnote{One practical application where this is useful is for slow moving or occasionally static video foreground moving objects. For a stylized example of this, see Model \ref{mod:moving_object} given in Sec. \ref{sims_detail}.}.
Since the ReProCS subspace update step operates on mini-batches of data of size $\alpha$ (i.e. on $n \times \alpha$ sub-matrices of consecutive columns), we need to bound $\outfracrow$ for each such sub-matrix. We denote this by $\outfracrow^\alpha$. 


\begin{definition}
\ 
\ben
\item For a time interval, $\J$, define
\bea
\gamma(\J): = \max_{i=1,2,\dots,n} \frac{1}{|\J|} \sum_{t \in \J} \one_{ \{i \in \T_t \} }.
\label{def_gamma_J}
\eea
Thus $\gamma(\J)$ is the maximum outlier fraction in any row of the sub-matrix $\Y_\J$ of $\Y$. Let $\J^\alpha$ denote a time interval of duration $\alpha$. Define
\bea
\outfracrow^\alpha:= \max_{\J^\alpha \subseteq [t_1, \tmax]} \gamma(\J^\alpha).
\label{def_outfracrow}
\eea

\item Define $\outfraccol:=\max_t|\T_t|/n$. 

\item Let $\xmin:=\min_t \min_{i \in \T_t} |(\xt)_i|$ denote the minimum outlier magnitude. 

\item  Use $\lambda^-$ and $\lambda^+$ to denote the minimum and maximum eigenvalues of $\Lam$ and
$
f:=\frac{\lambda^+}{\lambda^-}
$
its condition number.%

\item Split $\at$ as $\at = \begin{bmatrix}\a_{t,\fx}{} \\ \a_{t,\ch}\end{bmatrix}$ where $\a_{t,\ch}$ is the scalar coefficient corresponding to the changed direction. Similarly split its diagonal covariance matrix as $\Lam = \begin{bmatrix} \Lam_\fx & \bm{0} \\ \bm{0} & \lambda_\ch \end{bmatrix}$.


\item Let $\zz$ denote the bound on initial subspace error, i.e., let $\SE(\Phat_0, \P_0) \le \zz.$

\item For numerical constants $C$ that are re-used to denote different numerical values, define
\begin{align}\label{def_alphas}
& K := \lceil C \log(\Delta / \zz) \rceil, \text{ and } \alpha \ge \alpha_* :=C  f^2 (r \log n).  
\end{align}

\een
\end{definition}

\subsubsection{Main Result} We can now state our main result. For ease of understanding, we provide a table explaining various symbols, and assumptions required for Theorem \ref{thm1} in Table \ref{tab:not1}.

\begin{theorem}
\label{thm1}
Consider simple-ReProCS given in Algorithm \ref{auto_reprocs}. 
Assume that $\SE(\Phat_0, \P_0) \le \zz$ with $\zz f \le  0.01 \SE(\P_{j-1},\P_j)$.
\ben


\item (statistical assumptions) assumptions on $\at$'s hold;%

\item (subspace change)
\ben
\item \eqref{pw_cons}, \eqref{one_change}, and \eqref{ss_ch_0} hold with  $t_{j+1}-t_j > (K+3) \alpha$ where $K$ and $\alpha$ are defined above in \eqref{def_alphas},
\item $\Delta$ satisfies $C (2\zz \sqrt{r \lambda^+} + \Delta \sqrt{\lambda_\ch} ) < \xmin/15$ with $C = \sqrt{\eta}$;  
\een
\item (outlier fractions and left incoherence)
\ben
\item  \eqref{incoh_2} holds and $\outfraccol \le \rcol := \frac{0.01}{2\mu (r+1)}$, 
\item $\outfracrow^\alpha \le \rrow := \frac{0.01}{f^2}$;
\een

\item (noise $\vt$) $\vt$'s are zero mean, mutually independent, independent of the $\xt$'s and $\lt$'s, and satisfy
$\|\vt\|^2 \le 0.1 \zz^2 r \lambda^+$ and $\|\E[\vt \vt{}']\| \le 0.1 \zz^2 \lambda^+$;

\item (algorithm parameters) set
 $K$ and $\alpha$ as in \eqref{def_alphas}, $\xi = \xmin/15$, $\omega_{supp} =\xmin/2$, $\omega_{evals} = 5 \zz^2 f \lambda^+$; 

\een
then, with probability at least $1 - 12 \tmax n^{-12} $,  at all times, $t$,
\ben
\item  $\That_t = \T_t$,
\item $t_j \le \that_j \le t_j+2\alpha$,
\item $\SE(\tPhat_{(t)}, \tP_{(t)}) \le$ \\ 
\[
\left\{
\begin{array}{ll}
2\zz + \Delta & \text{ if }  t \in [t_j, \that_j+\alpha) \\
1.2 \zz + (0.5)^{k-2} 0.06 \Delta & \text{ if }  t \in [\that_j+(k-1)\alpha, \that_j+ k\alpha) \\
2\zz   & \text{ if }  t \in [\that_j+ K\alpha, \that_j + K \alpha + \alphadel) \\
\zz   & \text{ if }  t \in [\that_j+ K\alpha+\alpha, t_{j+1})
\end{array}
\right.
\]
\item and $\|\xhat_t-\xt\| = \|\lhat_t-\lt\| \le C (\zz \sqrt{r \lambda^+} + \SE(\tPhat_{(t)}, \tP_{(t)}) \sqrt{\lambda_\ch}$ with $\SE(\tPhat_{(t)}, \tP_{(t)})$ bounded as above.
\een
Consider offline s-ReProCS (last line of Algorithm \ref{auto_reprocs}). At all $t$,
\[
 \SE(\tPhat_{(t)}^{\mathrm{offline}} , \tP_{(t)}) \le 2 \zz, \text{ and }
 \|\lhat_t^{\mathrm{offline}} - \lt \| \le  2.4 \zz   \|\lt\|.
\]
\end{theorem}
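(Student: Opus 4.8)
The plan is to prove Theorem~\ref{thm1} by induction over the mini-batches of $\alpha$ consecutive frames, carrying through the four phases named in the statement (the window $[t_j,\that_j+\alpha)$ just after a change, the $K$ projection-SVD windows, the deletion window, and the steady-state window). The induction hypothesis at the start of each window is the corresponding bound on $\SE(\tPhat_{(t)},\tP_{(t)})$; the base case is the initialization bound $\SE(\Phat_0,\P_0)\le\zz$ hypothesized in the theorem. Throughout, the known-change-time version (Sec.~\ref{proof_thm1_section}, i.e.\ $\that_j=t_j$) should be handled first and the automatic-detection argument (Sec.~\ref{det_works}) layered on top.

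First I would carry out the \emph{projected-CS step}, conditioned on the current estimate meeting its bound. Writing $\tty_t=\bpsi\xt+\bt$ with $\bpsi=\I-\tPhat_{(t-1)}\tPhat_{(t-1)}{}'$ and $\bt=\bpsi(\lt+\vt)$, the incoherence assumption~\eqref{incoh_2} together with $\outfraccol\le 0.01/(2\mu(r+1))$ forces the restricted isometry constant of $\bpsi$ over sets of size $2s$ below $1/2$ (exactly the footnoted computation yielding~\eqref{dense_bnd}), so $\ell_1$ minimization with threshold $\xi=\xmin/15$ returns $\xhat_{t,cs}$ with $\|\xhat_{t,cs}-\xt\|$ of order $\|\bt\|$. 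Since $\|\bt\|\lesssim \SE(\tPhat_{(t-1)},\tP_{(t)})\sqrt{\eta r\lambda^+}+\|\vt\|$ and the hypotheses on $\Delta$ and $\vt$ make this strictly below $\omega_{supp}=\xmin/2$, thresholding recovers $\That_t=\T_t$ exactly; the LS debiasing step on $\T_t$ then gives $\xhatt-\xt=\I_{\T_t}(\bpsi_{\T_t}{}'\bpsi_{\T_t})^{-1}\bpsi_{\T_t}{}'\bt$, hence $\et:=\lhatt-\lt=\vt-(\xhatt-\xt)$, and I would record not just $\|\et\|$ but the full correlation structure of $\et$ with $\lt$ (it has the data-dependent-noise form, with $\et$ supported on $\T_t$) for use in the next step.

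Second, the \emph{subspace-update step}. The sequence $\lhatt=\lt+\et$ is fed to the projection-SVD and SVD steps, and $\et$ is data-dependent noise correlated with $\lt$ through $\bpsi$ --- precisely the setting of Theorem~\ref{thm_corpca} (PCA in data-dependent noise with partial subspace knowledge). For the $k$-th projection-SVD window I would verify that theorem's hypotheses: the noise-to-signal factor is controlled by the current $\SE$ and by $\outfracrow^\alpha\le 0.01/f^2$ (the row bound keeps the time-averaged quantities $\tfrac1\alpha\sum_t\E[\et\et{}']$ and $\tfrac1\alpha\sum_t\E[\lt\et{}']$ small, since each index lies in $\T_t$ only an $O(1/f^2)$ fraction of the window); the sample count $\alpha\ge Cf^2 r\log n$ gives the empirical-covariance concentration (Vershynin/matrix-Bernstein-type bounds for bounded, hence sub-Gaussian, vectors); and $\lambda_\ch$ versus $\lambda^-$ governs the eigen-gap. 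Theorem~\ref{thm_corpca} then bounds $\SE(\Phat_{j,\rot,k},\P_{j,\rot})$, hence $\SE([\Phat_{j-1},\Phat_{j,\rot,k}],\tP_{(t)})$, by roughly $0.5$ times the previous value plus a $\zz$-floor, which iterates to the stated $(0.5)^{k-2}\,0.06\Delta+1.2\zz$ bound and, after $K=\lceil C\log(\Delta/\zz)\rceil$ steps, to $\approx\zz$; a final plain-SVD step on a fresh $\alpha$-block then re-estimates the full $r$-dimensional span, deleting $\P_{j,\del}$ and bringing the error to $\zz$ (and keeping the estimated dimension $\le r+1$ throughout). Closing this geometric recursion is the technical crux: the perturbation supplied by Theorem~\ref{thm_corpca} at step $k$ must be dominated \emph{uniformly} by a constant fraction of the $\SE$ at step $k-1$ plus the floor, which forces one to track how $\|\bt\|$, $\|\et\|$ and $\cov(\et)$ all shrink as the estimate improves, exploiting the two-sided constraint on $\Delta$ ($\zz f\le 0.01\,\SE(\P_{j-1},\P_j)$ from below for detectability, $C(2\zz\sqrt{r\lambda^+}+\Delta\sqrt{\lambda_\ch})<\xmin/15$ from above for projected CS).

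Third, the \emph{detection step}, the ingredient absent from the known-change-time argument. I would show (a) no false detection: while the subspace is unchanged and the estimate is $\zz$-accurate, the top eigenvalue of the appropriate projected sample covariance of the $\lhatt$'s stays below $\omega_{evals}=5\zz^2 f\lambda^+$ whp; and (b) detection within $2\alpha$: once the subspace has rotated by $\SE(\P_{j-1},\P_j)=\sin\theta_j\le\Delta$, the same statistic exceeds $\omega_{evals}$ within at most two $\alpha$-windows, using $\zz f\le 0.01\,\SE(\P_{j-1},\P_j)$ so that the post-change energy along $\P_{j,\new}$ dominates the threshold. Interleaving (a)--(b) with the induction of the first two parts --- arguing jointly, without circularity, that ``estimate accurate'' and ``correctly (not-yet-)detected'' hold together --- yields $t_j\le\that_j\le t_j+2\alpha$ and the phase boundaries, and the claimed bound on $\|\xhat_t-\xt\|=\|\lhat_t-\lt\|$ follows from the part-one projected-CS bound with $\SE(\tPhat_{(t)},\tP_{(t)})$ plugged in. Each concentration/CS event fails with probability at most $n^{-12}$ per mini-batch, and a union bound over the at most $\tmax$ mini-batches and the $O(1)$ events per batch gives $1-12\tmax n^{-12}$. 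Finally the \emph{offline} claims are a corollary: at $t=\that_j+K\alpha$ the online estimate already satisfies $\SE\le 2\zz$, so reassigning $\tPhat_{(t)}^{\mathrm{offline}}\leftarrow[\Phat_{j-1},\Phat_{j,\rot,K}]$ backward over $[\that_{j-1}+K\alpha,\that_j+K\alpha)$ gives $\SE(\tPhat_{(t)}^{\mathrm{offline}},\tP_{(t)})\le 2\zz$ everywhere, and re-running projected CS with this improved subspace gives $\|\lhatt^{\mathrm{offline}}-\lt\|\le C\zz\sqrt{\eta r\lambda^+}\le 2.4\zz\|\lt\|$ after lower-bounding $\|\lt\|$ in terms of $\sqrt{\lambda^-}$ (up to $f$ and $\eta$).
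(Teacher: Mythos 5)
Your proposal is correct and follows essentially the same route as the paper: projected CS with exact support recovery giving the sparse, data-dependent error $\et$, then invoking the PCA-in-data-dependent-noise result (Theorem \ref{thm_corpca}) with the $\outfracrow^\alpha$ bound controlling the time-averaged noise terms, iterating the roughly-$0.5$ contraction over the $K$ projection-SVD windows, a simple-SVD deletion step, the detection/no-false-detect analysis layered on top of the known-$t_j$ argument, and a union bound over $\alpha$-windows. The paper organizes the same induction through the nested events $\Gamma_k$ (and $\Gamma_{j,\ed}$ for automatic detection), but there is no substantive difference in approach.
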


The upper bound on $\vt$ and the lower bound on $\xmin$ can be relaxed significantly to get a more complicated result which we state in the corollary below.
\begin{corollary}
Let $\xmint:= \min_{i \in \T_t}|(\xt)_i|$ denote the minimum outlier magnitude at time $t$ and define the time intervals
\bi
\item $\J_0 = [t_j, \that_j)$ (interval before the change gets detected),
\item $\J_k := [\that_j + (k-1)\alpha, \that_j + k\alpha)$ ($k$-th subspace update uses data from this interval) for $k=1,2,3,\dots,K$,
\item and $\J_{K+1}:=[\that_j + K\alpha, \that_j + K\alpha+\alpha)$ (final SVD-based re-estimation step uses data from this interval).
\ei
All conclusions of Theorem \ref{thm1} hold if the following hold instead of assumptions 1b, 4, and 5 of Theorem \ref{thm1}:

$\vt$'s are zero mean, mutually independent, independent of the $\xt$'s and $\lt$'s, $\|\vt\| \le b_{v,t}$, $\|\E[\vt \vt{}']\| \le b_{v,t}^2/r$, $\xmint$ and $b_{v,t}$ satisfy the following:
\ben
\item for $t \in \J_0 \cup \J_1$, $b_{v,t}= C (2\zz\sqrt{r \lambda^+} +  0.11 \Delta \sqrt{\lambda_\ch})$, and $\xmint \ge 30 b_{v,t}$,
\item for $t \in \J_k$, $b_{v,t}= C (2\zz\sqrt{r \lambda^+} + 0.5^{k-2} 0.06 \Delta \sqrt{\lambda_\ch})$, and $\xmint \ge 30 b_{v,t}$, for $k=2,\dots,K$,
\item for $t \in \J_{K+1}$, $b_{v,t}= C ( \zz \sqrt{r \lambda^+})$ and $\xmint \ge 30 b_{v,t}$
\een
with $C=\sqrt{\eta}$; and we set $\omega_{supp,t} = \xmint/2$, and $\xi_t = \xmint/15$ (alternatively, one can also set $\omega_{supp,t}$ and $\xi_t$ to be proportional to $b_{v,t}$ which itself is proportional to the bound on $\|\xt-\xhat_t\|$ is each interval).
\label{gen_xmin}
\end{corollary}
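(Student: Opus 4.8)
The plan is to re-run the proof of Theorem \ref{thm1} essentially verbatim, carrying the time-varying quantity $b_{v,t}$ in place of the uniform bound on $\|\vt\|$ and carrying $\xmint$ in place of the uniform $\xmin$. What makes this possible is that the proof of Theorem \ref{thm1} (both the known-change-times version in Sec. \ref{proof_thm1_section} and the automatic version in Appendix \ref{proof_auto_thm1}) only ever uses \emph{interval-wise} bounds: the bound on $\|\vt\|$ and on $\|\E[\vt\vt{}']\|$ is invoked inside a single mini-batch $\J_k$ at a time, and the lower bound on $\xmin$ is invoked at a single time $t$ at a time. The hypotheses on $\vt$, on $\xmin$, and on the parameters $\omega_{supp},\xi$ enter in exactly three places: (i) the Projected-CS / support-estimation step, which needs $\xmint$ large relative to the CS-residual $\b_t := \bpsi(\lt+\vt)$; (ii) the $K$ projection-SVD subspace-update steps, whose ``noise'' is $\lhat_t - \lt$ over the current $\J_k$; and (iii) the subspace-change detection step, which thresholds an eigenvalue statistic against $\omega_{evals}$. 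I would re-examine each.

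For (i), once the induction has supplied the subspace-error bound valid on the current interval (the four cases of the $\SE$ bound in Theorem \ref{thm1}), one gets $\|\b_t\| \le \|\bpsi\lt\| + \|\vt\| \le C\big(\SE(\tPhat_{(t-1)},\tP_{(t)})\sqrt{\lambda_\ch} + \zz\sqrt{r\lambda^+}\big) + b_{v,t}$, and the interval-wise choices of $b_{v,t}$ in cases 1--3 of the corollary make this at most a fixed constant times $b_{v,t}$. Since $\xi_t = \xmint/15$, $\omega_{supp,t} = \xmint/2$ and $\xmint \ge 30\,b_{v,t}$, the RIC-based $\ell_1$ error bound and the thresholding argument used in the proof of Theorem \ref{thm1} still give $\That_t = \T_t$; the LS debiasing step then yields $\|\xhat_t - \xt\| = \|\lhat_t - \lt\| \le C\big(\SE(\tPhat_{(t)},\tP_{(t)})\sqrt{\lambda_\ch} + \zz\sqrt{r\lambda^+}\big) + C' b_{v,t}$, and since $b_{v,t}$ is itself of this order this matches the reconstruction-error conclusion of Theorem \ref{thm1} after adjusting the constant. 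For (ii), because $b_{v,t}$ is constant over each $\J_k$, the hypotheses $\|\vt\| \le b_{v,t}$ and $\|\E[\vt\vt{}']\| \le b_{v,t}^2/r$ feed into the PCA-in-data-dependent-noise result (Theorem \ref{thm_corpca}) with an interval-specific noise level, in precisely the form that the proof of Theorem \ref{thm1} already uses; the projection-SVD concentration bounds, and hence the recursion $\SE \le 1.2\zz + (0.5)^{k-2}0.06\Delta$, are unchanged. For (iii), in the intervals where detection is actually tested the noise level is $C\zz\sqrt{r\lambda^+}$ (the $\Delta$-term is absent there), matching the regime of Theorem \ref{thm1}, so $\omega_{evals} = 5\zz^2 f\lambda^+$ still separates ``change'' from ``no change.'' Splicing these facts into the induction over $j$ (and over $k$ within a subspace update) of Sec. \ref{proof_thm1_section} / Appendix \ref{proof_auto_thm1}, and noting that the detection-delay claim and the offline claims depend only on these same ingredients, yields all conclusions of Theorem \ref{thm1}.

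The main obstacle is the bookkeeping at interval boundaries. The CS step at the first time of $\J_k$ uses the subspace estimate produced at the end of $\J_{k-1}$, so one must verify that the $b_{v,t}$ assigned to $\J_k$ genuinely dominates the CS-residual there given the $\SE$ bound the induction has proved for $\J_{k-1}$; the same must be checked at the $\J_0$/$\J_1$ junction, at the $\J_K$/$\J_{K+1}$ junction, and across a change time $t_j$ (where $\SE(\tPhat_{(t-1)},\tP_{(t)})$ jumps to roughly $2\zz+\Delta$). This reduces to a handful of inequalities among the numerical constants $0.11$, $0.06$, $(0.5)^{k-2}$ appearing in the case definitions and the $\SE$ bounds — routine, but the one place where an inconsistent choice of constants would break the argument. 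The rest — re-deriving the RIC bound and the matrix-concentration estimates with $b_{v,t}$ substituted for the uniform noise bound — is mechanical.
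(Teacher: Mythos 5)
Your proposal is correct and follows essentially the same route as the paper: the paper in fact proves Corollary \ref{gen_xmin} directly (Theorem \ref{thm1} then follows since its uniform noise bound is dominated by the corollary's), and its Lemmas \ref{CSlem}, \ref{p_evd_lem}, \ref{del_evd} together with the detection analysis in Appendix \ref{proof_auto_thm1} carry exactly the interval-wise quantities $b_{v,t}$, $\xmint$, $\xi_t=\xmint/15$, $\omega_{supp,t}=\xmint/2$ through the projected-CS, projection-SVD (via Theorem \ref{thm_corpca} with $b_z=b_{v,t}$), deletion, and detection steps that you identify. The boundary bookkeeping you flag is precisely what the paper disposes of via Fact \ref{algebra_lem} and the choice of $b_{v,t}$ to match the per-interval bound on $\|\b_t\|$ (e.g., $b_{b,t}=2b_{v,t}<\xmint/15$), so only constants change.
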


\begin{proof}We explain the ideas leading to the proof in Sec. \ref{proof_idea}. Instead of first proving Theorem \ref{thm1} and then Corollary \ref{gen_xmin}, we directly only prove the latter. The proof of the former is almost the same and is immediate once the latter proof can be understood.  For notational simplicity, we first prove the results under the assumption $\that_j=t_j$ in Sec. \ref{proof_thm1_section}. The proof without assuming $\that_j=t_j$ is given in Appendix \ref{proof_auto_thm1}.
\end{proof}

With the above corollary, the following remark is immediate.
\begin{remark}[Bi-level outliers]
The lower bound on outlier magnitudes can be relaxed to the following which only requires that {most} outlier magnitudes are lower bounded, while the others have small enough magnitudes so that their squared sum is upper bounded:
Assume that the outlier magnitudes are such that the following holds: $\xt$ can be split as $\xt = (\xt)_{small} + (\xt)_{large}$ with the two components having disjoint supports and being such that, $\|(\xt)_{small}\|  \le b_{v,t}$ and the smallest nonzero entry of $(\xt)_{large}$ is greater than $30 b_{v,t}$ with $b_{v,t}$ as defined in Corollary \ref{gen_xmin}. If the above is true, and if the vectors $(\xt)_{small}$ are zero mean, mutually independent, and independent of $\lt$'s and of the support of $(\xt)_{large}$, then all conclusions of Theorem \ref{thm1} hold except the exact support recovery conclusion (this gets replaced by exact recovery of the support of $(\xt)_{large}$).

This remark follows by replacing $\vt$ by $\vt + (\xt)_{small}$ and $\xt$ by $(\xt)_{large}$ in Corollary \ref{gen_xmin}.
\label{rem:r_change}
\end{remark}

\begin{remark}
The first condition (accurate initial estimate) can be satisfied by applying any standard RPCA solution, e.g., PCP, AltProj, or GD, on the first $t_\train = C r$ data frames. This requires assuming that $t_1 \ge C r$, and that $\Y_{[1,t_\train]}$ has outlier fractions in any row or column bounded by $c/r$.
Moreover, it is possible to significantly relax the initial estimate requirement to only requiring that $\SE(\Phat_0,\P_0) \le c/\sqrt{r}$ if we use $K$ iterations of the approach of follow-up work \cite{rrpcp_icml_trans_it} to improve the estimate of $\P_0$ until a $\zz$ accurate estimate is obtained, and then run s-ReProCS. For this to work, we will need a larger lower bound on $\xmin$ for the initial period.
\end{remark}

\begin{remark}[Connecting to the left incoherence of standard RPCA solutions]
With minor changes, our left incoherence assumption, \eqref{incoh_2}, can be replaced by something that is very close to the one used by all standard RPCA solutions. Instead of \eqref{incoh_2}, we can assume $\mu$-incoherence of $\basis([\P_0,\P_{1,\new},\P_{2,\new},\dots,\P_{J,\new}])$. This implies that\footnote{follows because the union of the spans of $\P_{j-1}$ and $\P_j$ is contained in the span of $[\P_0,\P_{1,\new},\P_{2,\new},\dots,\P_{J,\new}]$.} the RHS of \eqref{incoh_2} is bounded by $\sqrt{ \mu (r+J)/n}$.  With this, the only change to Theorem \ref{thm1} will be that we will need $\outfraccol \le 0.01/(2\mu (r+J))$.

If  $\P_{j,\new}$ is orthogonal to $\Span([\P_0, \P_{1,\new}, \dots, \P_{j-1,\new}])$ for each $j$, then the matrix $[\P_0,\P_{1,\new},\P_{2,\new},\dots,\P_{J,\new}]$ is itself a basis matrix, its span is {\em equal} to that of the left singular vectors of $\L$, and its rank $r+J = \rmat$. In this case, the above assumption and the corresponding required bound on $\outfraccol$ are exactly the same as those used by the standard RPCA solutions.
\end{remark}

\begin{algorithm}[t!]
\caption{\small{Simple-ReProCS (with $t_j$ known). We state this first for simplicity. The actual automatic version is given later in Algorithm \ref{simp_reprocs_auto}.
Let $\Lhat_{t;\alpha}:= [\lhat_{t-\alpha+1}, \lhat_{t-\alpha+2}, \dots, \lhat_t]$.
}}
\label{simp_reprocs_tj} \label{reprocs}
\begin{algorithmic}[1]
\algrenewcommand\algorithmicindent{1em}
\State \textbf{Input}:  $\Phat_0$, $\yt$,  \textbf{Output}:  $\shatt$, $\lhatt$,  $\Phat_{(t)}$,  \textbf{Parameters:} $\omega_{supp}$, $K$, $\alpha$, $\xi$, $r$, $t_j$'s  
\State $\tPhat_{(t_\train)} \leftarrow \hat{\pt}_{0}$;  $\tildej~\leftarrow~1$, $k~\leftarrow~1$
\For {$t > t_\train$}
\State ($\xhat_t$, $\That_t$) $\leftarrow$ \Call{ProjCS}{$\tPhat_{(t-1)}$, $\yt$} \Comment{Algorithm \ref{func_projCS} }
\State $\hat{\bm{\ell}}_t \leftarrow \yt - \hat{\bm{x}}_t$.
\State ($\tPhat_{(t)}$, $\Phat_j$, $j$, $k$) $\leftarrow$ \Call{SubUp}{$\Lhat_{t; \alpha}$, $\Phat_{j-1}$, $t$, $t_j$, $j$, $k$, $\tPhat_{(t-1)}$} \Comment{Algorithm \ref{func_sub_update}} 
\EndFor
\end{algorithmic}
\end{algorithm}

\begin{algorithm}[t!]
\caption{\small{Projected CS (ProjCS)}}
\label{func_projCS}
\begin{algorithmic}[0]
\algrenewcommand\algorithmicindent{1em}
\Function{ProjCS}{$\tPhat_{(t-1)}$, $\yt$}
\State $\bpsi \leftarrow \bm{I} - \tPhat_{(t-1)}\tPhat_{(t-1)}{}'$
\State $\tty_t \leftarrow \bpsi \yt$
\State $\xhat_{t,cs} \leftarrow \arg\min_{\tilde{\bm{x}}} \norm{\tilde{\bm{x}}}_1 \ \text{s.t}\ \norm{\tilde{\bm{y}}_t - \bpsi \tilde{\bm{x}}} \leq \xi$
\State $\That_t \leftarrow \{i:\ |\xhat_{t,cs}| > \omega_{supp} \}$
\State $\xhat_t \leftarrow \I_{\That_t} ( \bpsi_{\That_t}{}' \bpsi_{\That_t} )^{-1} \bpsi_{\That_t}{}'\tty_t$
\State \Return $\xhat_t$, $\That_t$
\EndFunction
\end{algorithmic}
\end{algorithm}

\begin{algorithm}[t!]
\caption{\small{Subspace Update (SubUpd).}} 
\label{func_sub_update}
\begin{algorithmic}[0]
\algrenewcommand\algorithmicindent{1em}
\Function{SubUp}{$\Lhat_{t;\alpha}$, $\Phat_{j-1}$, $t$, $t_j$, $j$, $k$, $\tPhat_{(t-1)}$} 
\If{$t = t_j + u \alpha$ for $u = 1,\ 2,\ \cdots, K + 1$}
\State $\bm{B} \leftarrow (\I - \Phat_{j-1} \Phat_{j-1} {}') \Lhat_{t; \alpha}$
\State $\Phat_{j, \add, k} \leftarrow  \SVD_1[\bm{B}]$         \Comment{subspace addition: via $K$ steps of projection-SVD} 
\State $\tPhat_{(t)} \leftarrow [\Phat_{j-1}, \Phat_{j,\add,k}]$, $k \leftarrow k + 1$.
\If{$k=K + 1 $}
\State $\Phat_j \leftarrow \SVD_r[\Lhat_{t; \alpha}]$ \Comment{subspace deletion: via subspace re-estimation using simple SVD} 
\State $\tPhat_{(t)} \leftarrow \Phat_j$, $j\leftarrow j+1$, $k\leftarrow 1$.
\EndIf
\Else
\State $\tPhat_{(t)} \leftarrow \tPhat_{(t-1)}$
\EndIf
\State \Return $\tPhat_{(t)}$, $\Phat_j$, $j$, $k$
\EndFunction
\end{algorithmic}
\end{algorithm}



\newcommand{\vv}{\bm{v}}
\subsection{Discussion}\label{discuss}
In this section, we discuss the various implications of our result, the speed and memory guarantees, explain how to set algorithm parameters, and finally discuss its limitations.%

\subsubsection{Subspace change detection and tracking with short delay}
Theorem \ref{thm1} shows that, whp, the subspace change gets detected within a delay of at most $2\alpha = C f^2 (r \log n)$ frames, and the subspace gets estimated accurately within at most $(K+3) \alpha = C f^2 (r \log n)\log(1/\zz)$ frames. Each column of the low rank matrix is recovered with a small time-invariant bound without any delay. If offline processing is allowed, with a delay of at most $(K+3) \alpha$, we can guarantee all recoveries within normalized error $\zz$, or, in fact, with minor modifications, within any $\epsilon = c \zz$ for $c<1$ (also see the limitations' discussion).
Notice that the required delay between subspace change times is more than $r$ by only logarithmic factors (assuming $f$ does not grow with $n$ or $r$).
Since the previous subspace is not exactly known (is known within error at most $\zz$), at each update step, we {\em do} need to estimate an $r$-dimensional subspace, and not a one-dimensional one. Hence it is not clear if the required delay can be reduced any further. Moreover, the delay required for the deletion step cannot be less than $r$ even in the ideal case when $\lt$ is directly observed.

\subsubsection{Bi-level outliers}
Consider the upper bound on $\Delta$ (amount of subspace change). Observe that the upper bound essentially depends on the ratio between $\xmin$ (minimum outlier magnitude) and $\sqrt{\lambda_\ch}$. Read another way, this means that $\xmin$ needs to be lower bounded. On first glance, this may seem counter-intuitive since sufficiently small magnitude corruptions should not be problematic. This is actually true. Sufficiently small magnitude corruptions get classified as the small noise $\vt$.
Moreover, as noted in Corollary \ref{gen_xmin} and Remark \ref{rem:r_change}, our result actually allows ``bi-level'' corruptions/outliers that need to satisfy a much weaker requirement than this: the large-outliers have magnitude that is ``large enough'', while the rest are such that the squared sum of their magnitudes is ``small enough''.
The threshold for both ``large enough'' and ``small enough'' decreases with each subspace update step.


\subsubsection{Order-wise improvement in allowed upper bound on maximum number of outliers per row}
As pointed out in \cite{robpca_nonconvex}, solutions for standard RPCA (that only assume incoherence of left and right singular vectors of $\L$ and nothing else, i.e., no outlier support model) cannot tolerate\footnote{The reason is this: let $\rrow = \outfracrow$, one can construct a matrix $\X$ with $\rrow$ outliers in some rows that has rank equal to $1/\rrow$. A simple way to do this would be to let the support and nonzero entries of $\X$ be constant for $\rrow \tmax$ columns before letting either of them change. Then the rank of $\X$ will be $\tmax/(\rrow \tmax)=1/\rrow$. If $1/\rrow < \rmat=\rank(\L)$, $\X$ will wrongly get classified as the low-rank component. This is why we need $\rrow=\outfracrow < 1/\rmat$.
A similar argument can be used for $\outfraccol$.} a bound on maximum outlier fractions in any row or any column that is larger than $1/\rmat$.
However observe that simple-ReProCS can tolerate $\outfracrow^\alpha \in O(1)$ (this assumes $f$ is a constant). This is a significant improvement over all existing RPCA results with important practical implications for video analytics.
This is possible is because s-ReProCS uses extra assumptions, we explain their next. 

\subsubsection{The need for extra assumptions}
s-ReProCS recovers the sparse outliers $\xt$ first and then the true data $\lt$, and does this at each time $t$.
Let $\bpsi :=\I - \tPhat_{(t-1)} \tPhat_{(t-1)}{}'$.
When recovering $\xt$, it exploits two facts: (a) the subspace of $\lt$, $\tP_{(t)}$, satisfies the denseness/incoherence property, and (b) ``good'' knowledge of the subspace of $\lt$ (either from initialization or from the previous subspace's estimate and slow subspace change) is available. Using these two facts one can show that $\bpsi$ satisfies the RIP property, and that the ``noise'' seen by the compressive sensing step, $\b_t:=\bpsi (\lt + \vt)$, is small.
This, along with a guarantee for CS, helps ensure that the error in recovering $\xt$ is upper bounded\footnote{Since the individual vector $\b_t$ does not have any structure that can be exploited, the error in recovering $\xt$ cannot be made lower than this. However the $\b_t$'s arranged into a matrix do form a low-rank matrix whose approximate rank can be shown to be one (under our current subspace change model). If we try to exploit this structure we end up with the modified-PCP approach studied in earlier work \cite{zhan_pcp_jp}. This needs the uniform random support assumption \cite{zhan_pcp_jp}.} by $C \|\b_t\|$.
This, in turn, means that, to correctly recover the support of $\xt$, the minimum large-outlier magnitude needs to be larger than $C\|\b_t\|$. This is where the $\xmin$ or $\xmint$ lower bound comes from\footnote{If there were a way to bound the element-wise error of the CS step (instead of the $l_2$ norm of the error), we could relax the $\xmin$ lower bound significantly. It is not clear if this is possible though.}.

Correct outlier support recovery is needed to ensure that the subspace estimate can be improved with each subspace update step. In particular, it helps ensure that the error vectors $\et:=\xt - \xhat_t$ in a given subspace update interval are mutually independent when conditioned on the $\yt$'s from all past intervals. This fact also relies on the mutual independence assumption on the $\at$'s. Moreover, mutual independence, along with the element-wise boundedness and identical covariances assumption, on the $\at$'s helps ensure that we can use matrix Bernstein \cite{tail_bound} and Vershynin's sub-Gaussian result (bounds singular values of matrices with independent sub-Gaussian rows) \cite{vershynin} for obtaining the desired concentration bounds on the subspace recovery error in each step. As explained below, the above assumptions on $\at$ replace the right incoherence assumption. Finally, because ReProCS is updating the subspace using just the past $\alpha$ estimates of $\lhat_t$'s, in order to show that each such step improves the estimate we need to bound $\outfracrow^\alpha$ (instead of just $\outfracrow$).

\subsubsection{Time and Memory Complexity}
Observe that the s-ReProCS algorithm needs memory of order $n \alpha$ in online mode and $K n \alpha$ in offline mode. Assuming $\alpha=\alpha_*$, even in offline mode, its memory complexity is near-optimal and equal to $O(nr \log n \log (1/\zz))$.
Also, observe that the time complexity of s-ReProCS is $O(n \tmax r \log(1/\zz))$. We explain this in Appendix \ref{time_comp_reprocs}.
These claims assume that $f$ is constant with $n,r$. If the dependence on $f$ is included both will be multiplied by $f^2$.

\subsubsection{Subspace and outlier assumptions' tradeoff} 
When there are fewer outliers in the data or when outliers are easy to detect, one would expect to need weaker assumptions on the true data subspace or its rate of change. This is indeed true.
For the original RPCA results, this is encoded in the condition $\max(\outfracrow,\outfraccol) \le c/ (\mu \rmat)$ where $\mu$ quantifies not-denseness of both left and right singular vectors. 
From Theorem \ref{thm1}, this is also how $\outfraccol$, $\mu$ (not-denseness of only left singular vectors) and $\rmat$ are related for dynamic RPCA.
%
On the other hand, for our result, $\outfracrow^\alpha$ and the lower bound on $\xmin$ govern the allowed rate of subspace change. The latter relation is easily evident from the bound on $\Delta$. If $\xmin$ is larger (outliers are large magnitude and hence easy to detect), a larger $\Delta$ can be tolerated.
The relation of $\outfracrow$ to rate of change is not evident from the way the guarantee is stated in Theorem \ref{thm1}. The reason is we have assumed $\outfracrow^\alpha \le \rrow = 0.01 /f^2$ and used that to get a simple expression for $K$. If we did not do this, we would need $K$ to satisfy
\[
c_1 \Delta (c_2 f \sqrt{\rrow} )^K + 0.2 \zz \le \zz.
\]
With this, $K$ needs to be $K = \lceil \frac{1}{-\log (c_2 f \sqrt{\rrow} )} \log (\frac{c_1 \Delta}{0.8\zz}) \rceil$.
Recall that we need $t_{j+1}-t_j \ge (K+3)\alpha$. Thus, a smaller $\rrow$ (smaller $\outfracrow^\alpha$) means one of two things: either a larger $\Delta$ (more change at each subspace change time) can be tolerated while keeping $K$, and hence the lower bound on the delay between change times, the same; or, for $\Delta$ fixed, a smaller lower bound is needed on the delay between change time.
The above can be understood by carefully checking the proof\footnote{The multiplier 0.4 of $q_\rot$ in its first claim is obtained by setting $\rrow=0.01/f^2$. If we do not do this, 0.4 will get replaced by $c_2 f \sqrt{\rrow}$.} of Theorem \ref{thm_corpca}.
%

\subsubsection{The need for detecting subspace change}
As pointed out by an anonymous reviewer, it may not be clear to a reader why we need to explicitly detect subspace change (instead of just always doing subspace update at regularly spaced intervals). The change detection is needed for two key reasons. First, in the projection-SVD step for subspace update, we use $\Phat_{j-1}$ as the best estimate of the previous subspace. We let $\Phat_{j-1}$ be the final $\zz$-accurate subspace estimate obtained after $K$ projection-SVD steps and then one subspace re-estimation step. To know when the $K$ updates are over, we need to know when the first update of the new subspace occurred, or in other words, we need an estimate of when the subspace change occurred.
Second, in the current algorithm, because we detect change, we can choose to use the $\lhat_t$'s from the next $\alpha$-frame interval, i.e. $[\lhat_{\that_j+1},\lhat_{\that_j+2}, \dots, \lhat_{\that_j+\alpha}]$, for the first subspace update. This ensures that the $\lhat_t$'s from the interval that contains $t_j$ (some of the $\lt$'s in this interval come from $\P_{j-1}$ while others come from $\P_j$) is never used further in any subspace update. The is essential because,  if these are used, one will get an incorrect subspace estimate (something in between $\P_{j-1}$ and $\P_j$) and one whose error cannot easily be bounded.
If subspace change is never detected, this cannot be ensured.


\subsubsection{Algorithm parameters}
%
Observe from Theorem \ref{thm1} that we need knowledge of only 4 model parameters - $r$, $\lambda^+$, $\lambda^-$ and $\xmin$ - to set our algorithm parameters.
The initial dataset used for estimating $\Phat_0$ (using PCP/AltProj) can also be used to get an accurate estimate of $r$, $\lambda^-$ and $\lambda^+$ using standard techniques (maximum likelihood applied to the AltProj estimate of $[\l_1,\l_2,\dots,\l_{t_\train}]$).
Thus one really only needs to set $\xmin$. If continuity over time is assumed, a simple heuristic is to let it be time-varying and use $\min_{i \in \That_{t-1}}|(\xhat_{t-1})_i|$  as its estimate at time $t$. This approach in fact allows us to estimate $\xmint$ and thus allows for larger unstructured noise, $\vt$, levels as allowed by Corollary \ref{gen_xmin}.

The most interesting point for practice though is that of Remark \ref{rem:r_change}. It indicates that when a subspace change is detected but not estimated, starting at the previous $2\alpha$ frames, one should use a larger value of the support estimation threshold $\omega_{supp}$. After each subspace update step, $\omega_{supp}$ should be decreased roughly exponentially.



\subsubsection{Dependence on $f$}
Observe that $f$ appears in our guarantee in the bound on $\outfracrow^\alpha$ and in the expression for $\alpha$.
The $\outfracrow^\alpha$ bound is stated that way only for simplicity. Actually, for all time instants except the $\alpha$-length period when the subspace re-estimation (for deletion) step is run, we only need $\outfracrow^\alpha \le 0.01$.
We need the tighter bound $\outfracrow^\alpha \le \rrow=0.01/f^2$ only for the simple SVD based subspace re-estimation (deletion) step to work (i.e., only for $t \in [\that_j+K\alpha,\that_j+K\alpha+\alpha)$). Thus, if offline ReProCS were being used to solve the standard RPCA type problem (where $\rmat$ is nicely bounded), one could choose to never run the subspace deletion step. This will mean that the resulting algorithm (s-ReProCS-no-delete) will need  $\outfraccol < c/\mu \rmat$, but then $\outfracrow < 0.01$ will suffice (the bound would not depend on $1/f^2$).
The $\alpha$ expression governs required delay between subspace change times, tracking delay, and time and memory complexity. If the deletion step is removed, the dependence of $\alpha$ on $f$ will not disappear, but will weaken (it will linearly depend on $f$ not on $f^2$).

We now try to relate $f$ to the condition number of $\L$. Observe that $f$ is the condition number of $\E[\L_j \L_j{}']$ for any $j$. The condition number of the entire matrix $\L$ can be much larger when slow subspace change holds ($\Delta$ is small). To see this, 
let $\kappa^2$ denote the condition number of $\E[\L \L']$, so that, whp, $\kappa$ is approximately the condition number of $\L$. It is not hard to to see that\footnote{To understand this, suppose that there is only one subspace change and suppose that the intervals are equal, i.e., $\tmax-t_1=t_1-t_0$. Then, $\E[\L \L']/\tmax =  \P_{0,\fx} \bm\Lambda_\fx \P_{0,\fx}{}' +  [\P_{0,\ch} \ \P_{1,\new}] \B [\P_{0,\ch}{}' \ \P_{1,\new}{}']'$ where $\B = \lambda_\ch \matr{(0.5+0.5\cos^2\theta_1) \ & \ -0.5\sin\theta_1\cos \theta_1}{-0.5\sin\theta_1\cos \theta_1 \  & \ 0.5 \sin^2\theta_1}$. The maximum eigenvalue of $\E[\L \L']/\tmax$ is $\lambda^+$. Its minimum eigenvalue is the minimum eigenvalue of $\B$ which can be computed as $(1-\cos\theta_1)\lambda_\ch$. In the worst case $ \lambda_\ch=\lambda^-$. 
When the intervals are not equal, this gets replaced by $(1 - \sqrt{1 - 2c \sin^2\theta_1})\lambda^-$ for a $c<1$. This is at most ${ (1 - \sqrt{1 - 2c \Delta^2)}}$.
}, in the worst case (if $\lambda^- = \lambda_\ch$), $\kappa^2 = \frac{f}{1 - \sqrt{1 - 2c \Delta^2}} \approx C \frac{f}{\Delta^2}$ when $\Delta$ is small. Thus, if $\Delta$ is small, $\kappa \approx C \sqrt{f}/\Delta$ can be much larger than $f$. The guarantees of many of the RPCA solutions such as RPCA-GD \cite{rpca_gd} depend on $\kappa$.


\subsubsection{Relating our assumptions to right incoherence of $\L_j := \L_{[t_j, t_{j+1})}$ \cite{rpca_zhang}}\label{rem:bounded}
We repeat this discussion from \cite{rrpcp_icml_trans_it}.
From our assumptions, $\L_j = \P_j \A_j$ with $\A_j:= [\a_{t_j},\a_{t_j+1},\dots \a_{t_{j+1}-1}]$, the columns of $\A_j$ are zero mean, mutually independent, have identical covariance $\Lam$, $\Lam$ is diagonal, and are element-wise bounded as specified by Theorem \ref{thm1}. Let $d_j := t_{j+1}-t_j$.
Define a diagonal matrix $\Sigma$ with $(i,i)$-th entry $\sigma_i$ and with $\sigma_i^2 := \sum_t (a_t)_i^2 / d_j$. Define a $d_j \times r$ matrix $\tilde\V$ with the $t$-th entry of the $i$-th column being $(\tilde\vv_i)_t: = (\a_t)_i/ (\sigma_i \sqrt{d_j})$. Then, $\L_j = \P_j \Sigma \tilde{\V}'$ and each column of $\tilde\V$ is unit 2-norm. Also, from the bounded-ness assumption, $(\tilde\vv_i)_t^2 \leq \eta \frac{\lambda_i}{\sigma_i^2} \cdot \frac{1}{d_j}$ where $\eta$ is a numerical constant.
Observe that $\P_j \Sigma \tilde\V'$ is not exactly the SVD of $\L_j$ since the columns of $\tilde\V$ are not necessarily exactly mutually orthogonal. However, if $d_j$ is large enough, using the assumptions on $\at$, one can argue using any law of large numbers' result (e.g., Hoeffding inequality), that (i) the columns of $\tilde\V$ are approximately mutually orthogonal whp, and (ii) $\sigma_i^2 \ge 0.99 \lambda_i$ whp. 
Thus, our assumptions imply that, whp, $\tilde\V$ is a basis matrix and $(\tilde\vv_i)_t^2 \leq C/{d_j}$.

With the above, one can interpret $\tilde\V$ as an ``approximation'' to the right singular vectors of $\L_j$ and then the above bound on $(\tilde\vv_i)_t^2$ is the same as  the right incoherence condition assumed by \cite{rpca_zhang}. It is slightly stronger than what is assumed by \cite{rpca,robpca_nonconvex} and others (these do not require a bound on each entry but on each row, they require that the squared norm of each row of the matrix of right singular vectors be bounded by $C r / d_j$). 

Ideally we would like to work with the exact SVD of $\L_j$, however this is much harder to analyze using our statistical assumptions on the $\at$'s. To see this,
%
suppose $\A_j \svdeq \bm{U} \Sigma \V'$, then $\L_j \svdeq (\P_j \bm{U})  \Sigma \V'$ is the exact SVD of $\L_j$. Here $\bm{U}$ is an $r \times r$ orthonormal matrix. Now it is not clear how to relate the element-wise bounded-ness assumption on $\at$'s to an assumption on entries of $\V$, since now there is no easy expression for each entry of $\V$ or of the entries of $\Sigma$ in terms of $\at$ ($\bm{U}$ is an unknown matrix that can have all nonzero entries in general).

\subsubsection{Limitations of our guarantees}
s-ReProCS needs a few extra assumptions beyond slow subspace change and what static RPCA solutions need:
(i) instead of a bound on outlier fractions per row of the entire data matrix (which is what standard RPCA methods assume), it needs such a bound for every sub-matrix of $\alpha$ consecutive columns; (ii) it makes statistical assumptions on the principal subspace coefficients $\at$ (with mutual independence being the strongest requirement);
(iii) it needs to lower bound $\xmin$; 
and (iv) it uses $\zz$ to denote both the initial subspace error as well as the final recovery error achieved after a subspace update is complete.
Here (i) is needed because ReProCS is an online algorithm that uses $\alpha$ frames at a time to update the subspace, and one needs to show that each update step provides an improved estimate compared to the previous one. However, since $\alpha$ is large enough, requiring a bound  $\outfracrow^\alpha$ is not too much stronger than requiring the same bound on the outlier fractions per row of the entire $n\times \tmax$ matrix $\Y$.
In fact, if we compare the various RPCA solutions with storage complexity fixed at $O(n\alpha) = O(nr \log n)$, i.e., if we implement the various static RPCA solutions for every new batch of $\alpha$ frames of data, then, the static RPCA solutions will also need to bound $\outfracrow^\alpha$ defined in \eqref{def_outfracrow}. As discussed earlier, these will require a much tighter bound of $c/r$ though.
%
(iv) is assumed for simplicity.  What we can actually prove is something slightly stronger: if the initial error is $\zz$, and if $\epsilon = c \zz$ for a constant $c$ which may be less than one, then, without any changes, we can guarantee the final subspace error to be below such an $\epsilon$. More generally, as long as the initial error $\zz \le \Delta$, it is possible to achieve final error $\epsilon$ for {\em any} $\epsilon>0$ if we assume that $t_1-t_\train > K \alpha$, assume a slightly larger lower bound on $\xmin$, and if we modify our initialization procedure to use the approach of follow-up work \cite{rrpcp_icml_trans_it}.

Limitations (ii) and (iii) are artifacts of our proof techniques. The mutual independence can be replaced by an autoregressive model on the $\at$'s by borrowing similar ideas from \cite{rrpcp_aistats}.
The mutual independence and zero mean assumption on the $\at$'s is valid for the video analytics' application if we let $\lt$ be the mean-subtracted background image at time $t$. Then, $\lt$ models independent zero-mean background image variations about a fixed mean image, e.g., variations due to lighting variations or due to moving curtains; see Fig. \ref{fig:video_res}. This type of mean subtraction (with an estimate of the mean background image computed from training data) is commonly done in practice in many practical applications where PCA is used; it is also done in our video experiments shown later.
%
(iii) is  needed because our proof first tries to show exact outlier support recovery by solving a CS problem to recover the outliers from the projected measurements, followed by thresholding. It should be possible to relax this  by relaxing the exact support recovery requirement which, in turn, will require other significant changes. For example, it may be possible to do this if one is able to do a deterministic analysis. 
It may be possible to also completely eliminate it if we replace the CS step by thresholding with carefully decreasing thresholds in each iteration (borrow the idea of AltProj); however, we may then require the same tight bound on $\outfracrow$ that AltProj needs. By borrowing the stagewise idea of AltProj, it may also be possible to remove all dependence on $f$.

\section{Discussion of Related Work} \label{rel_work}

\subsubsection{Limitations of earlier ReProCS-based guarantees \cite{rrpcp_perf,rrpcp_isit15,rrpcp_aistats}}
In \cite{rrpcp_perf}, we introduced the ReProCS idea and proved a {\em partial} guarantee for it. We call it a partial guarantee because it needed to assume something about the intermediate  subspace estimates returned by the algorithm. However, this work is important because it developed a nice framework for proving guarantees for dynamic RPCA solutions. Both our later complete guarantees \cite{rrpcp_isit15,rrpcp_aistats} as well as the current result build on this framework.

The current work is a significant improvement over the complete guarantees obtained in \cite{rrpcp_isit15,rrpcp_aistats} for two other ReProCS-based algorithms for three reasons.
(i) The earlier works needed very specific assumptions on how the outlier support could change (needed an outlier support model inspired by video moving objects). Our result removes such a requirement and instead only needs a bound on the fraction of outliers per column of the data matrix and on the fraction per row of an $\alpha$-consecutive-column sub-matrix of the data matrix (for $\alpha$ large enough).
(ii) The subspace change model assumed in these earlier papers can be interpreted as the current model (given in Sec. \ref{main_res}) with $\theta_j=90^\circ$ or equivalently with $\Delta=1$. This is an unrealistic model for slow subspace change, e.g., in 3D, it implies that the subspace changes from the x-y plane to the y-z plane. Instead, our current model  allows changes from x-y plane to a slightly tilted x-y plane as shown in Fig. \ref{subs_ch_fig}. This modification is more realistic {\em and} it allows us to replace the upper bound on $\lambda_\ch$ required by the earlier results by a similar bound on $\lambda_\ch \Delta^2$ (see assumption 1b of Theorem \ref{thm1}). Since $\Delta$ quantifies rate of subspace change, this new requirement is much weaker. It can be satisfied by assuming that $\Delta$ is small, without making any assumption on $\lambda_\ch$.
(iii) The required minimum delay between subspace change times in the earlier results depended on $1/\epsilon^2$ where $\epsilon$ is the desired final subspace error after a subspace update is complete. This is a strong requirement. Our current result removes this unnecessarily strong dependence. The delay now only depends on $(-\log \epsilon)$ which makes it much smaller. It also implies that the memory complexity of simple-ReProCS is near-optimal.
(iv) Unlike \cite{rrpcp_isit15,rrpcp_aistats}, we analyze a simple ReProCS-based algorithm that ensures that the estimated subspace dimension is bounded by $(r+1)$, without needing the complicated cluster-SVD algorithm. This is why our guarantee allows outlier fractions per column to be below $c/r$. The work of \cite{rrpcp_isit15} needed this to be below $c/\rmat$ while \cite{rrpcp_aistats} needed an extra assumption (clustered eigenvalues). For long data sequences, $c/r$ can be much larger than $c/\rmat$. We provide a detailed comparison of these assumptions in Table \ref{compare_assu_reprocs}.

\subsubsection{Complete guarantees for other dynamic RPCA or RPCA solutions}
Another approach that solves dynamic RPCA, but in a piecewise batch fashion, is modified-PCP (mod-PCP) \cite{zhan_pcp_jp}.  The guarantee for mod-PCP was proved using ideas borrowed from \cite{rpca} for PCP. Thus, like \cite{rpca}, it also needs uniformly randomly generated support sets which is an unrealistic requirement.  For the video application, this requires that foreground objects are single pixel wide and move around the entire image completely randomly over time. This is highly impractical.
In Table \ref{compare_assu}, we provide a comparison of our current guarantees for simple-ReProCS and its offline version with those for original-ReProCS \cite{rrpcp_perf,rrpcp_isit15,rrpcp_aistats}, modified-PCP \cite{zhan_pcp_jp}, as well as with those for solutions for standard RPCA -
 \cite{rpca} (referred to as PCP(C)), \cite{rpca_zhang} (PCP(H), this strictly improves upon \cite{rpca2}), AltProj \cite{robpca_nonconvex}, RPCA via gradient descent (GD) \cite{rpca_gd} and nearly-optimal robust matrix completion (NO-RMC) \cite{rmc_gd}. The table also contains a speed and memory complexity comparison.
 Offline s-ReProCS  can be interpreted as a solution for standard RPCA.
%
%
%
From Table \ref{compare_assu}, it is clear that for data that satisfies slow subspace change and the assumption that outlier magnitudes are either large or very small, and that is such that its first $t_\train$ frames, $\Y_{[1,t_\train]}$, satisfy AltProj (or PCP) assumptions, s-ReProCS and offline s-ReProCS have the following advantages over other methods.
\ben
\item  For the data matrix after $t_\train$, i.e., for $\Y_{[t_\train+1,\tmax]}$, ReProCS needs the weakest bound on $\outfracrow^\alpha$ without requiring  uniformly randomly generated outlier support sets. This is comparable to the bound needed by PCP(C) or mod-PCP but both assume uniform random outlier supports which is a very strong requirement. 


\item The memory complexity of s-ReProCS is significantly better than that of all other published methods for RPCA that provably work, and is nearly optimal.

\item Both in terms of time complexity order (Table \ref{compare_assu}) and experimentally (see Sec. \ref{sims_detail}), s-ReProCS and its offline counterpart are among the fastest, while having the best, or nearly the best, performance experimentally as well.
    Order-wise, only NO-RMC \cite{rmc_gd} is faster than s-ReProCS. However, NO-RMC needs the data matrix to be nearly square, i.e., it needs $c_1 n \ge \tmax \ge c_2 n$. This is a very strong requirement that often does not hold: for the video application it requires that the number of video frames $\tmax$ be roughly as large as $n$ (number of pixels in one image frame). The reason is that NO-RMC deliberately under-samples the data matrix $\Y$ by randomly throwing away some of its entries and using only the rest even when all are available. In other words, it always solves the robust matrix completion (RPCA with missing entries) problem and this is what results in a significant speed-up, but this is also why it needs $\tmax \approx n$.

\item s-ReProCS can automatically detect subspace change and then also track it with a short delay, while the other approaches (except original-ReProCS) cannot.
    Notice that s-ReProCS also needs a weaker upper bound of $c/r$ on $\outfraccol$ while the batch techniques (PCP, AltProj, GD, NO-RMC) applied to the entire matrix $\L$ need this to be below $c/\rmat$. Of course, if the batch techniques are applied on pieces of data $\Y_{j}:=\Y_{[t_{j},t_{j+1})}$, they also need the same looser bound of $c/r$ on $\outfraccol$ and their memory complexity improves too. However, the batch methods do not have a way to estimate the change times $t_j$, while s-ReProCS does.
    Moreover, since the other methods (except modified-PCP) do not have a way to use the previous subspace information, if the pieces chosen are are too small, e.g., if the methods are applied on $\alpha$-frames at a time, their performance is much worse then when the entire dataset is used jointly. 

\een


\subsubsection{Comparison with follow-up work on ReProCS-NORST \cite{rrpcp_icml_trans_it}}
As compared to ReProCS-NORST, which is the algorithm studied in our follow-up work \cite{rrpcp_icml_trans_it} (which allows all $r$ directions of the subspace to change at each $t_j$),  simple-ReProCS has three advantages: (i) it is faster, (ii) it needs a weaker lower bound on $\xmin$ (its required lower bound essentially does not depend on $r$ if $\zz$ is very small), and (iii) if it is used to solve the standard RPCA problem (estimate span of columns of the entire matrix $\L$), we can eliminate the $r$-SVD based subspace re-estimation (deletion) step. With this change, (a) the required upper bound on $\outfracrow^\alpha$ for s-ReProCS does not depend on the condition number $f$ (just  $\outfracrow^\alpha \le 0.01$ suffices), and (b) its time complexity improves by a factor of $r$ (if the initialization step is ignored) compared to ReProCS-NORST. Of course it will mean s-ReProCS-no-delete will need $\outfraccol < c/(r+J)$ which is slightly stronger.

Since ReProCS-NORST allows all $r$ directions of the subspace to change, it also has many advantages over s-ReProCS: its subspace tracking delay is near-optimal, and it allows for a weaker initialization assumption.


\begin{algorithm}[t!]
\caption{\small{The actual simple-ReProCS algorithm, this is the one that is studied in our guarantees and also implemented in our experiments. Let $\Lhat_{t;\alpha}:= [\lhat_{t-\alpha+1}, \lhat_{t-\alpha+2}, \dots, \lhat_t]$.
}}
\label{simp_reprocs_auto} \label{auto_reprocs}
\begin{algorithmic}[1]
\algrenewcommand\algorithmicindent{0.5em}
\State \textbf{Input}:  $\Phat_0$, $\yt$,  \textbf{Output}:  $\shatt$, $\lhatt$,  $\Phat_{(t)}$
\State \textbf{Parameters:} $\omega_{supp}$, $K$, $\alpha$, $\xi$, $r$, $\lthres$
\State Let $\Lhat_{t;\alpha}:= [\lhat_{t-\alpha+1}, \lhat_{t-\alpha+2}, \dots, \lhat_t]$.
\State $\tPhat_{(t_\train)} \leftarrow \hat{\pt}_{0}$;  $\tildej~\leftarrow~1$, $k~\leftarrow~1$
\For {$t > t_\train$}
\State ($\xhat_t$, $\That_t$) $\leftarrow$ \Call{ProjCS}{$\tPhat_{(t-1)}$,$\yt$}  
\Comment{Algorithm \ref{func_projCS}}
\State $\hat{\bm{\ell}}_t \leftarrow \yt - \hat{\bm{x}}_t$.
\State ($\tPhat_{(t)}$, $\Phat_j$, $\hat{t}_j$, $k$, $j$, $\text{phase}$)  $\leftarrow$ \Call{AutoSubUpd}{$\Lhat_{t; \alpha}$, $\Phat_{j-1}$, $t$, $\hat{t}_{j-1}$, $j$, $k$, $\text{phase}$, $\tPhat_{(t-1)}$}  \Comment{Algorithm \ref{aut_sub_up}}
\EndFor
\State {\bf Offline ReProCS: }
At $t = \that_j + K \alpha$, for all $t \in [\that_{j-1}+ K \alpha,  \that_j + K \alpha-1]$,
\State $\tPhat_{(t)} ^{\mathrm{offline}} \leftarrow [\Phat_{j-1}, \Phat_{j,\rot,K}]$;
\State $\xhatt^{\mathrm{offline}} \leftarrow \I_{\That_t} (\bm\Psi_{\That_t}{}'\bm\Psi_{\That_t})^{-1} \bm{\Psi}_{\That_t}{}' \yt$ where $\bm\Psi:= \I - \Phat_{j-1}\Phat_{j-1}{}' - \Phat_{j,\add,K}\Phat_{j,\add,K}{}'$;
\State  $\lhatt^{\mathrm{offline}} \leftarrow \yt - \xhatt^{\mathrm{offline}}$.
\end{algorithmic}
\end{algorithm}

\begin{algorithm}[t!]
\caption{\small{Automatic Subspace Update}}
\label{aut_sub_up}
\begin{algorithmic}[0]
\Function{AutoSubUpd}{$\Lhat_{t;\alpha}$, $\Phat_{j-1}$, $t$, $\hat{t}_{j-1}$,  $j$, $k$, $\text{phase}$, $\tPhat_{(t-1)}$}
\State $\hat{t}_{j-1, fin} \leftarrow \that_{j-1} + K \alpha + \alphadel-1$
\If{$\text{phase} = \text{detect}$ and $t = \hat{t}_{j-1, fin} + u\alpha$}
\State $\bm{B} \leftarrow (\I - \Phat_{j-1}\Phat_{j-1}{}')\Lhat_{t, \alpha}$
\If {$\sigma_{\max}(\bm{B}) \geq \sqrt{\alpha \lthres}$}
\State $\text{phase} \leftarrow \text{update}$, $\hat{t}_j \leftarrow t$,
\EndIf
\State $\tPhat_{(t)} \leftarrow \tPhat_{(t-1)}$
\EndIf
\If{$\text{phase} = \text{update}$}
\State ($\tPhat_{(t)}$, $\Phat_j$, $k$) $\leftarrow$ \Call{SubUp}{$\Lhat_{t;\alpha}$, $\tPhat_{j-1}$, $t$, $\hat{t}_{j-1}$, $j$, $k$, $\text{phase}$, $\tPhat_{(t-1)}$}  \Comment{Algorithm \ref{func_sub_update}} 
\EndIf
\If{$k= K+1$}
\State $\text{phase} \leftarrow \text{detect}$
\EndIf
\State \Return $\tPhat_{(t)}$, $\Phat_j$, $\hat{t}_j$, $j$, $k$, $\text{phase}$
\EndFunction
\end{algorithmic}
\end{algorithm}

\section{Why s-ReProCS works: main ideas of our proof} \label{proof_idea}
In this section we explain the main ideas of our proof, first for the $t_j$ known case, and then explain why the subspace change detection step works.

\subsection{Why s-ReProCS with $t_j$ known works}
To understand things simply, first assume that $\that_j = t_j$, i.e., the subspace change times are known.
Consider Algorithm \ref{simp_reprocs_tj}. At each time $t$ this consists of three steps - projected Compressive Sensing (CS) to estimate $\xt$, estimating $\lt$ by subtraction, and subspace update. Consider projected CS. This is analyzed in Lemma \ref{CSlem}.
At time $t$, suppose that we have access to $\tPhat_{(t-1)}$ which is a good estimate of the previous subspace, $\Span(\tP_{(t-1)})$. Because of slow subspace change, this is also a good estimate of $\Span(\tP_{(t)})$.
Its first step projects $\yt$ orthogonal to $\tPhat_{(t-1)}$ to get $\tty_t$. Recall that $\tty_t=\bpsi \x_t + \b_t$ where $\b_t:=\bpsi (\lt + \vt)$ is small.
Using the incoherence (denseness) assumption and $\Span(\tP_{(t-1)})$ being a good estimate of $\Span(\tP_{(t)})$, it can be argued that the restricted isometry constant (RIC) \cite{candes_rip} of $\bm\Psi:= \I - \tPhat_{(t-1)}\tPhat_{(t-1)}{}'$ will be small. Using \cite[Theorem 1.2]{candes_rip}, this, along with $\|\b_t\|$ being small, ensures that $l_1$ minimization will produce an accurate estimate, $\xhat_{t,cs}$, of $\xt$. The support estimation step with a carefully chosen threshold, $\omega_{supp} = \xmin/2$, and a lower bound on $\xmin$ then ensures exact support recovery, i.e., $\That_t = \T_t$. With this, the LS step output, $\xhat_t$, satisfies $\xhat_t = \xt + \et$ with
\bea
\et &:= \I_{\T_t} (\bm\Psi_{\T_t}{}'\bm\Psi_{\T_t})^{-1} \bm\Psi_{\T_t}{}' (\lt + \vt) \nn \\
&= \I_{\T_t} (\bm\Psi_{\T_t}{}'\bm\Psi_{\T_t})^{-1} \I_{\Tt}{}' \bm\Psi (\lt + \vt)
\label{etdef0}
\eea
and with $\|\et\|$ being small. 
Computing $\lhat_t := \yt - \xhat_t$, then also gives a good estimate of $\lt$ that satisfies $\lhat_t = \lt  + \vt - \et$ with $\et$ as above.

The subspace update step uses $\lhat_t$'s to update the subspace. Since $\et$ satisfies \eqref{etdef0}, $\et$ depends on $\lt$; thus the error/noise, $\vt-\et$, in the ``observed data'' $\lhat_t$ used for the subspace update step depends on the true data $\lt$. Because of this, the subspace update does not involve a PCA or an incremental PCA problem in the traditionally studied setting (data and corrupting noise/error being independent or uncorrelated). It is, in fact, an instance of PCA when the noise/error, $\vt-\et$, in the observed data $\lhat_t$ depends on the true data $\lt$. This problem was studied in  \cite{corpca_nips,pca_dd} where it was referred to as ``correlated-PCA'' or ``PCA in data-dependent noise''.
Using this terminology, our subspace update problem (estimating $\P_j$ using $\Phat_{j-1}$) is a problem of PCA in data-dependent noise with partial subspace knowledge. To simplify our analysis, we first study this more general problem and obtain a guarantee for it in Theorem \ref{thm_corpca}  in Sec. \ref{corpca_sec}. This theorem along with Lemma \ref{CSlem} (that analyzes the projected-CS step discussed above) help obtain a guarantee for the $k$-th projection-SVD step in Lemma \ref{p_evd_lem}. The $k=1$ and $k>1$ cases are handled separately. The main assumption required for applying Theorem \ref{thm_corpca} holds because $\et$ is sparse with support $\T_t$ that changes enough ($\outfracrow^\alpha$ bound of Theorem \ref{thm1} holds).
The subspace deletion via simple SVD step of subspace update is studied in Lemma \ref{del_evd}. This step solves  a problem of PCA in data-dependent noise and so it directly uses the results from \cite{pca_dd}.

To understand the flow of the proof, consider the interval $[t_j, t_{j+1})$. Assume that, before $t_j$, the previous subspace has been estimated with error $\zz$, i.e., we have $\Phat_{j-1}$ with $\SE(\Phat_{j-1}, \P_{j-1}) \le \zz$. We explain below that this implies that, under the theorem's assumptions, we get $\SE(\Phat_{j}, \P_{j}) \le \zz$ before $t_{j+1}$.
We remove the subscripts $j$ in some of this discussion. Define the interval $\J_k:= [t_j + (k-1)\alpha, t_j +k \alpha)$. Suppose also that $\vt=0$.
\ben
\item Before the first projection-SVD step (which is done at $t=t_j+\alpha$), i.e., for $t \in \J_1$, we have no estimate of $\P_{\new}$, and hence only a crude estimate of $\P_{\rot}$. In particular, we can only get the bound $\SE(\tPhat_{(t)}, \P_\rot)= \SE(\Phat_{j-1},\P_\rot) \le \zz + |\sin \theta|$ for this interval. 
\bi
\item As a result, the bound on the ``noise'', $\bt$, seen by the projected-CS step is also the largest for this interval, we have $\|\b_t\| \le C (\zz \sqrt{r\lambda^+} +  |\sin \theta|\sqrt{\lambda_\ch})$. Using the CS guarantee, followed by ensuring exact support recovery (as explained above), this implies that $\et$ satisfies \eqref{etdef0} and that we get a similar bound on the final CS step error: $\|\e_t\| \le C (\zz \sqrt{r\lambda^+} +  0.11|\sin \theta|\sqrt{\lambda_\ch})$. The factor of $0.11$ in the second term of this bound is obtained because, for this interval, $\bpsi = \I - \Phat_{j-1}\Phat_{j-1}{}'$ and so
    $\bpsi \P_\new \approx \P_\new$ and $\P_\new$ is dense, see \eqref{dense_bnd}. Thus one can show that $\|\I_{\T_t}{}'\bpsi \P_\new\|_2 \le 0.11$. 
\item This bound on $\et$, along with using the critical fact that $\et$ satisfies \eqref{etdef0} (is sparse) and its support $\T_t$ changes enough (the $\outfracrow^\alpha$ bound of  Theorem \ref{thm1} holds), ensures that we get a better estimate of $\P_\rot$ after the first projection-SVD step. This is what allows us to apply Theorem \ref{thm_corpca}. Using it we can show that $\SE(\tPhat_{(t)}, \P_\rot)= \SE([\Phat_{j-1}, \Phat_{j,\rot,1}],\P_\rot) \le  0.1\zz + 0.06|\sin\theta|$ for $t \in \J_2$. See proof of $k=1$ case of Lemma \ref{p_evd_lem} and Fact \ref{algebra_lem}. 
\ei

\item Thus we have a much better estimate of $\P_\rot$ for $t \in \J_2$ than for $\J_1$. 
Because of this, $\|\b_t\|$ is smaller, and hence $\|\et\|$ is smaller for $t \in \J_2$. This, along with the sparsity and changing support, $\T_t$, of $\et$, ensures an even better estimate at the second projection-SVD step. We can show that $\SE(\tPhat_{(t)}, \P_\rot) = \SE([\Phat_{j-1}, \Phat_{j,\rot,2}],\P_\rot) \le  0.1\zz  + 0.5 \cdot 0.06|\sin\theta|$ for $t \in \J_3$. See proof of $k>1$ case of Lemma \ref{p_evd_lem} and Fact \ref{algebra_lem}.

\item Proceeding this way, we show that $\SE(\tPhat_{(t)}, \P_\rot) = \SE([\Phat_{j-1}, \Phat_{j,\rot,k}],\P_\rot) \le 0.1\zz + 0.5^{k-2} (0.06|\sin\theta|) $ after the $k$-th projection-SVD step. Picking $K$ appropriately, gives $\SE(\tPhat_{(t)},\P_\rot) \le  \zz$ after $K$ steps, i.e., at $t=t_j+K\alpha$.
In all the above intervals, $\SE(\tPhat_{(t)},\tP_{(t)}) \le \zz + \SE(\tPhat_{(t)}, \P_\rot)$. Thus at $t=t_j+K\alpha$, $\SE(\tPhat_{(t)},\tP_{(t)}) \le 2 \zz$.

\item At  $t=t_j+K\alpha$, $\tPhat_{(t)}$ contains $(r+1)$ columns. The subspace re-estimation via simple SVD step re-estimates $\P_j$ in order to delete the deleted direction, $\P_\del$, from $\tPhat_{(t)}$. The output of this step is $\Phat_j$ (the final estimate of $\Span(\P_j)$).  Thus, at $t=t_j+K\alpha + \alphadel$, $\tPhat_{(t)} = \Phat_j$ and we can show that it satisfies $\SE(\tPhat_{(t)},\tP_{(t)})=\SE(\Phat_j,\P_j) \le \zz$. See Lemma \ref{del_evd}.
The re-estimation is done at this point because, for times $t$ in this interval, $\|\lhat_t - \lt\| = \|\et\|  \le 2.4 \zz \|\lt\|$. For PCA in data-dependent noise, simple SVD needs $\alpha \ge (q/\epsilon)^2 f^2 (r \log n)$ where $q$ is the error/noise to signal ratio and $\epsilon$ is the final desired error level. For our problem, the ``noise" is $\et$ and thus $q=2.4 \zz$ and $\epsilon = \zz$. Since $q/\epsilon$ is a constant, $\alpha \ge \alpha_* = C f^2 r \log n$ suffices when simple SVD is applied at this time.
%

\een
When $\vt \neq 0$, almost all of the above discussion remains the same. The reason is this: in the main theorem, we assume $\|\vt\|^2 \le 0.1 r \zz^2 \lambda^+$ with $c < 1$ and so even though we have to deal with $\vt$ in $\b_t$ and $\e_t$ expressions, and in the $\alpha$ expression, the changes required are only to the numerical constants. In Corollary \ref{gen_xmin}, we have carefully chosen the bound on $\|\vt\|$ to equal the bound on $\|\e_t\|$ modulo constants. Thus, once again, only numerical constants change, everything else remains the same.

\subsection{Why automatic subspace change detection and Automatic Simple-ReProCS works} \label{det_works}
The subspace change detection approach is summarized in Algorithm \ref{simp_reprocs_auto}. This idea is motivated by a similar idea first used in our earlier works \cite{rrpcp_isit15,rrpcp_aistats}.
The algorithm toggles between the ``detect'' phase and the ``update'' phase. It starts in the ``detect'' phase. If the $j$-th subspace change is detected at time $t$, we set  $\that_j=t$. At this time, the algorithm enters the ``update'' (subspace update) phase. We then repeat the $K$ projection-SVD steps and the one subspace re-estimation via simple SVD step from Algorithm \ref{simp_reprocs_tj} with the following change: the $k$-th projection-SVD step is now done at $t = \that_j + k \alpha-1$ (instead of at $t=t_j+ k\alpha-1$) and the subspace re-estimation is done at $t= \that_j + K \alpha + \alphadel-1:=\that_{j,fin}$.
Thus, at $t =\that_{j,fin}$, the subspace update is complete. At this time, the algorithm enters the ``detect'' phase again.
%
To understand the change detection strategy, consider the $j$-th subspace change. Assume that the previous subspace $\P_{j-1}$ has been accurately estimated by $t= \that_{j-1,fin}$ and that $\that_{j-1,fin} < t_j$. Let $\Phat_*:= \Phat_{j-1}$ denote this estimate.
At this time, the algorithm enters the ``detect'' phase in order to detect the next ($j$-th) change.
Let $\B_t:=(\I-\Phat_*\Phat_*{}')  [\lhat_{t-\alpha+1}, \dots, \lhat_{t}]$. For every $t = \that_{j-1,fin} + u \alpha$, $u=1,2,\dots$, we detect change by checking if the maximum singular value of $\B_t$ is above a pre-set threshold, $ \sqrt{\lthres \alpha}$, or not.

We claim that, whp, under Theorem \ref{thm1} assumptions, this strategy has no false detects and correctly detects change within a delay of at most $2\alpha$ frames.
The former is true because, for any $t$ for which $[t-\alpha+1, t] \subseteq [ \that_{j-1,fin}, t_j)$, all singular values of the matrix $\B_t$ will be close to zero (will be of order $\sqrt{\zz}$) and hence its maximum singular value will be below $ \sqrt{\lthres \alpha}$. Thus, whp, $\that_j \ge t_j$.
%
%
To understand why the change {\em is} correctly detected within $2 \alpha$ frames, first consider $t =\that_{j-1,fin} + \left\lceil \frac{t_j - \that_{j-1,fin}}{\alpha} \right\rceil \alpha:= t_{j,*} $.
Since we assumed that $\that_{j-1,fin}< t_j$ (the previous subspace update is complete before the next change), $t_j$ lie in the interval $[t_{j,*}-\alpha+1, t_{j,*}]$. Thus, not all of the $\lt$'s in this interval will be generated from $\Span(\P_j)$. Thus, depending on where in the interval $t_j$ lies, the algorithm may or may not detect the change at this time.
 However, in the {\em next} interval, i.e., for $t \in [t_{j,*}+1, t_{j,*} + \alpha]$,  all of the $\lt$'s will be generated from $\Span(\P_j)$. We can prove that, whp, $\B_t$ for this time $t$ {\em will} have maximum singular value that is above the threshold.
 Thus, if the change is not detected at $t_{j,*}$, whp, it {\em will} get detected at $t_{j,*} + \alpha$.
Hence one can show that, whp, either $\that_j = t_{j,*}$, or $\that_j =t_{j,*} + \alpha$, i.e., $t_j \le \that_j \le t_j + 2\alpha$. To see the actual proof of these claims, please refer to Appendix \ref{proof_auto_thm1} where we prove our main result without assuming $t_j$ known.

\begin{table*}[t!]
\caption{List of Symbols and Assumptions used in the Main Result \ref{thm1}, and Corollary \ref{gen_xmin}. (Note: We show that whp, $\that_j \geq t_j$ and $\that_j + (K+1)\alpha \leq t_{j+1}$ and hence, whp, $\J_0, \J_{K+2}$ are non-empty intervals.}
\centering
\resizebox{\linewidth}{!}{
\renewcommand{\arraystretch}{1.5}
\begin{tabular}{cccc} \toprule
\multicolumn{4}{c}{{\large {\bf Observations:} $\yt = \lt + \xt + \vt$,  where, $\lt = \P_{(t)} \at = \begin{bmatrix}
\P_{j-1, \fx} & \P_{j, \add}
\end{bmatrix} \begin{bmatrix} \a_{t, \fx} \\ \a_{t, \ch} \end{bmatrix}$ for $t \in [t_j, t_{j+1})$, $\T_t$ is support of $\xt$}.} \\ \toprule
\multicolumn{2}{c}{{\bf Subspace Change}} & \multicolumn{2}{c}{{\bf Principal Subspace Coefficients,} $\at$'s} \\ \cmidrule(lr){1-2} \cmidrule(lr){3-4}
\multicolumn{2}{c}{$\tP_{(t)} = \tP_{(t_j)} = \P_j \ \forall t \in [t_j, t_{j+1}), \ j=0,1,\dots, J$} & \multicolumn{2}{c}{element-wise bounded, zero mean,}  \\
$\P_\ch \equiv \P_{j-1, \ch}$ & Changing direction from $\Span(\P_{j-1})$ at $t_j$ & \multicolumn{2}{c}{mutually independent with identical covariance (See Sec. \ref{assu_main})}   \\
$\P_\fx \equiv \P_{j-1, \fx}$ & Fixed directions from $\Span(\P_{j-1})$ at $t_j$ & \multicolumn{2}{c}{$\ep{[\at \at{}']} := \Lam = \begin{bmatrix} \Lam_\fx & \bm{0} \\ \bm{0} & \lambda_\ch\end{bmatrix}$}   \\
$\P_\new \equiv \P_{j-1, \new}$ & New direction from $\Span(\P_{j-1, \perp})$ added at $t_j$ & $\lambda^+$ &  $\lambda_{\max}(\Lam)$   \\
$\P_\rot \equiv \P_{j-1, \rot}$ & Rotated version of $\P_\ch$& $\lambda^-$ & $\lambda_{\min}(\Lam)$  \\
\multicolumn{2}{c}{{$\SE(\P_{j-1},\P_j) = \SE(\P_{j-1,\ch},\P_{j,\add}) \le \Delta$}}  & $f := \lambda^+/\lambda^-$ & Condition Number of $\Lam$  \\
\multicolumn{2}{c}{{$\P_{j,\new}:=  \frac{(\I - \P_{j-1,\ch} \P_{j-1,\ch}{}') \P_{j,\add}}{\SE(\P_{j-1,\ch},\P_{j,\add})}$}} \\
\multicolumn{2}{c}{See \eqref{ss_ch} for equivalent generative model.} &  \\
\multicolumn{2}{c}{$\max_{j} \max_{i} \| \basis([\P_{j-1},\P_{j}])^i\| \le \sqrt{\frac{\mu (r+1)}{n}}$ which implies \eqref{dense_bnd} holds.} \\ \midrule
\multicolumn{2}{c}{{\bf Outliers}} & \multicolumn{2}{c}{{\bf Intervals for $j$-th subspace change and tracking}} \\
\cmidrule(lr){1-2} \cmidrule(lr){3-4}
$\xmint:= \min_{i \in \T_t}|(\xt)_i|$ & Min. outlier magnitude at $t$  & $\J_0 := [t_j, \that_j)$ & interval before change detected \\
$\xmin:=\min_t \xmint$ & Min. outlier magnitude & $\J_k := [\that_j + (k-1)\alpha, \that_j + k\alpha)$ & $k$-th subspace update interval \\
$s := \outfraccol \cdot n$ & Cardinality of support set of $\xt$ & $\J_{K+1}:=[\that_j + K\alpha, \that_j + (K + 1)\alpha)$ & SVD-re-estimation interval\\
$\outfracrow^{\alpha} \leq \rrow$ & See \eqref{def_outfracrow} & $\J_{K+2}:=[\that_j + (K + 1)\alpha, t_{j+1})$ &  Final interval\\
$\rrow = 0.01/f^2$ \\
$\outfraccol \leq \rcol = \frac{0.01}{2\mu(r+1)}$ & See Theorem \ref{thm1} \\ \bottomrule
\end{tabular}
\label{tab:not1}}
\end{table*}

\begin{table*}
\caption{List of symbols and their associated meaning for understanding the proof of Theorems \ref{thm1} and \ref{thm_corpca}. The complete definitions can be found in Definitions \ref{def:thm_known} and \ref{def:thm_unknown}. We also provide the location of the proof for each of events/scalars where applicable in parenthesis.
}
\centering
\resizebox{\linewidth}{!}{
\renewcommand{\arraystretch}{1}
\begin{tabular}{ll} \toprule
{\bf Symbol } & {\bf Meaning }\\ \midrule
\multicolumn{2}{c}{{\bf \cred Preliminaries (Stated in Definitions \ref{def:thm_known} and \ref{def:thm_unknown})}} \\ \midrule
$\theta: = \theta_j$ & Angle of $j$-th subspace change \\
$\P_*:=\P_{j-1}$, $\P_\new:=\P_{j,\new}$, $\P_\ch := \P_{j-1,\ch}$, $\P_\fx := \P_{j-1,\fx}$ &  Parts of the $j$-th subspace \\
$\P_\rot: = \P_{j,\rot}:=(\P_{\ch} \cos \theta + \P_{\new} \sin \theta)$,  $\P:=\P_j$ &  \\
$\Phat_*:=\Phat_{j-1}$, $\Phat:=\Phat_j$ & Estimates of $j$-th subspace \\
$\Phat_{\rot,k}:=\Phat_{j,\rot,k}$ & $k$-th estimate of $\P_\rot$. \\
$\et := \I_{\T_t} (\bm\Psi_{\T_t}{}'\bm\Psi_{\T_t})^{-1} \I_{\Tt}{}' \bm\Psi (\lt + \vt)$ (Proved in Lemma \ref{CSlem}) & Expression for error, $\et = \xhat_t - \xt = \lt -\lhat + \vt$ \\ \midrule
\multicolumn{2}{c}{{\bf \cred Scalars (Derived in Fact \ref{algebra_lem})}} \\ \midrule
$\zeta_{0}^+:= \zz + |\sin \theta|$ & Bound on $\SE(\Phat, \P_\ast)$, i.e., before subspace update \\
$\zeta_{1}^+:= 0.4 \cdot 1.2  ((0.1 + \zz) |\sin \theta| + \zz) + 0.11 \zz$ & Bound on $\SE([\Phat_\ast, \Phat_{\rot, 1}], \P_\rot)$, i.e., after $1$st subspace update \\
$\zeta_{k}^+:=  0.4 \cdot (1.2 \zeta_{k-1}^+ ) + 0.11 \zz$ & Bound on $\SE([\Phat_\ast, \Phat_{\rot, k}], \P_\rot)$, i.e., after $k$-th subspace update \\ \midrule
\multicolumn{2}{c}{{\bf \cred Events -- $t_j$ known (Proved in Sec. \ref{proof_thm1})}} \\ \midrule
$\Gamma_0:=\{\SE(\Phat_*,\P_*) \le \zz\}$ & Previous subspace, $\P_\ast$ is $\zz$-accurately estimated \\
$\Gamma_k:= \Gamma_{k-1} \cap \{\SE([\Phat_*, \Phat_{\rot,k}],\P_\rot) \le \zeta_{k}^+ \}$ & All $k$ subspace update steps work \\
$\Gamma_{K+1}:= \Gamma_K \cap \{ \SE(\Phat,\P) \le \zz \}$ & Current subspace, $\P$ is $\zz$-accurately estimated \\ \midrule
\multicolumn{2}{c}{{\bf \cred Events -- $t_j$ unknown (This section is only used in Appendix \ref{proof_auto_thm1})}} \\  \midrule
$\that_{j-1,fin}: =  \that_{j-1} + (K + 1) \alpha -1$ & Time at which deletion step is complete \\
$t_{j,*}=\that_{j-1,fin} + \left\lceil \frac{t_j - \that_{j-1,fin}}{\alpha} \right\rceil \alpha$ & First possible time instant at which subspace change can be detected \\
$\mathrm{Det0}:= \{\that_j = t_{j,*} \}$ (Proved in Appendix \ref{proof_auto_thm1}) & Subspace Change detected within $\alpha$-frames \\
$\mathrm{Det1}:= \{\that_j = t_{j,*} + \alpha\}$ (Proved in Appendix \ref{proof_auto_thm1}) & Subspace Change detected after $\alpha$, but before $2\alpha$ frames \\
$\mathrm{Proj\SVD}_k:= \{\SE([\Phat_{\ast},\Phat_{\rot,k}]) \le \zeta_{\rot,k}^+\}$ (Proved in Lemma \ref{p_evd_lem}) & $k$-th Proj SVD works \\
$\mathrm{Proj\SVD}:=\cap_{k=1}^K  \mathrm{Proj\SVD}_k$ & All $K$ Proj SVD steps work \\
$\mathrm{Del}:=\{\SE(\Phat, \P) \le \zz \}$ (Proved in Lemma \ref{del_evd}) & Deletion Step works \\
$\mathrm{NoFalseDets}$ (Proved in Appendix \ref{proof_auto_thm1}) & No false detection of subspace change \\
$\Gamma_{0,\ed}:= \{\SE(\Phat_*, \P_*) \le \zz \}$ (Proved in Claim \ref{claim:equiv}) & Previous subspace estimated within $\zz$-accuracy\\
$\Gamma_{j,\ed}$ (Proved in Claim \ref{claim:equiv}) & All previous $j$ subspaces estimated within $\zz$-accuracy
 \\ \midrule
\multicolumn{2}{c}{{\bf \cred Notation for PCA in data-dependent noise: Theorem \ref{thm_corpca} (Proved in Appendix \ref{proof_thm_corpca})}} \\ \midrule
$\yt = \lt + \wt + \bm{z}_t$ & Observations: True data - $\lt = \P \at = [\P_\fx, \P_\rot] \at$; \\
 & Data-dep noise - $\wt$; Modeling error - $\bm{z}_t$ \\
$\wt = \bm{M}_t \lt = \mtt \mot \lt$ & Data-dependent noise, \\
$\|\mot \P_\ast\| \leq q_0$ and $\|\mot \P_\rot\| \leq q_\rot$ & Assumptions on Data-dependency matrices \\
$\|\mtt\| \leq 1$ and $\norm{\frac{1}{\alpha} \sum_t \mtt \mtt{}'} \leq b_0 = 0.01$ \\
$\|\bm{z}_t\| \leq b_z := q_0 \sqrt{r \lambda^+} + q_\rot \sqrt{\lcp}$, $\|\ep{[\bm{z}_t \bm{z}_t}\| \leq \lambda^+_z := b_z^2/r$ & Assumptions on modeling error. \\
 \bottomrule
\end{tabular}
\label{tab:not2}
}
\end{table*}

\section{Proving Theorem \ref{thm1} with assuming $\that_j = t_j$}\label{proof_thm1_section}


In  Table \ref{tab:not2}, we summarize all the new symbols and terms used in our proof. This, along with Table \ref{tab:not1} given earlier, should help follow the proof details without having to refer back to earlier sections.
To give a simpler proof first, we prove Theorem \ref{thm1} under the assumption that $\that_j= t_j$ below. The proof without this assumption is given in Appendix \ref{proof_auto_thm1}. With assuming $\that_j = t_j$, we are studying Algorithm \ref{simp_reprocs_tj}. Recall from Sec. \ref{proof_idea} that the subspace update step involves solving a problem of PCA in data-dependent noise when partial subspace knowledge is available. We provide a guarantee for this problem in Sec. \ref{corpca_sec} and use it in our proof in Sec. \ref{proof_thm1}.

For the entire proof we will use the equivalent subspace change model described in \eqref{ss_ch}. Clearly $|\sin \theta_j| \le \Delta$ by our assumption.

\color{black}
\subsection{PCA in data-dependent noise with partial subspace knowledge} \label{corpca_sec}
\renewcommand{\zt}{\bm{z}_t}
\subsubsection{The Problem}
We are given a set of $\alpha$ frames of observed data $\yt: = \lt + \wt + \zt$, with $\|\zt\|^2 \leq b_z^2$ and $\|\ep[\zt \zt{}']\| \leq \lambda_z^+ := c_1 b_z^2/ r$ ; $\wt = \M_t \lt$, $\lt = \P \at$ and with $\P$ satisfying
\[
\P =  [\P_{\fx}, \P_{\rot}], \text{ where } \P_\rot:=(\P_{\ch} \cos \theta  +  \P_{\new} \sin \theta),
\]
$\P_{\fx} = (\P_{*} \U_{0}) \I_{[1,r-1]}$, $\P_{\ch} = (\P_{*} \U_{0}) \I_{r}$, and $\U_0$ is an $r \times r$ rotation matrix.
Also, the $\at$'s are zero mean, mutually independent, element-wise bounded r.v.'s with identical and diagonal covariance $\Lam$, and independent of the matrices $\M_t$.
The matrices $\M_t$ are {\em unknown}.

Let $\J^\alpha$ denote the $\alpha$-frame time interval for which the $\yt$'s are available.
We also have access to a partial subspace estimate $\Phat_*$ that satisfies $\SE(\Phat_*, \P_*) \le \zz$, and that is computed using data that is independent of the $\lt$'s (and hence of the $\yt$'s) for $t \in \J^\alpha$.
The goal is to estimate $\Span(\P)$ using $\Phat_*$ and the $\yt$'s for $t \in \J^\alpha$.

\subsubsection{Projection-SVD / Projection-EVD}
Let $\bm\Phi: = \I - \Phat_* \Phat_*{}'$. A natural way to estimate $\P$ is to first compute $\Phat_\rot$ as the top eigenvector of
\[
\bm{D}_{obs}:= \frac{1}{\alpha} \sum_{t \in \J^\alpha} \bm\Phi \yt \yt{}' \bm\Phi.
\]
and set $\Phat = [\Phat_*, \Phat_\rot]$. We refer to this strategy as ``projection-EVD'' or ``projection-SVD''. 
In this paper, we are restricting ourselves to only one changed direction and hence we compute only the top eigenvector (or left singular vector) of $\bm{D}_{obs}$. In general if there were $r_\ch>1$ directions, we would compute all eigenvectors with eigenvalues above a threshold, see \cite{rrpcp_perf, rrpcp_aistats}.


\subsubsection{The Guarantee}
We can prove the following about projection-SVD.
\begin{theorem}\label{thm_corpca}
Consider the above setting for an $\alpha \ge \alpha_0$ where
\[
\alpha_0 := C \eta \max(f (r \log n),  \eta f^2 (r + \log n) ).
\]
%
Assume that the $\M_t$'s can be decomposed as $\M_t = \M_{2,t} \M_{1,t}$ where $\M_{2,t}$ is such that $\norm{\M_{2,t}} \le 1$ but
\begin{align}
\norm{\frac{1}{\alpha} \sum_{t \in \J^\alpha} \M_{2,t} \M_{2,t}{}'} \leq b_0 = 0.01.
\label{M2t_bnd}
\end{align}
Let $q_0$ denote a bound on $\max_t \|\M_{1,t} \P_*\|$ and let $q_\rot$ denote a bound on $\max_t \|\M_{1,t} \P_\rot\| $, i.e., we have
$
\|\M_{1,t} \P_*\| \le q_0 \ \text{ and } \ \|\M_{1,t} \P_\rot\| \le q_\rot
$
for all $t \in \J^\alpha$.
Assume that
\begin{align}
q_0 \le 2 \zz, \  q_\rot \le 0.2 |\sin \theta|,  \zz f \le 0.01 |\sin \theta|, \text{ and } \nn \\
  b_z \le C (q_0 \sqrt{r \lambda^+} + q_\rot  \sqrt{\lcp} )
\label{extra_bnds}
\end{align}

Define the event $\estar:= \{\SE(\Phat_*, \P_*) \le \zz \}$.  The following hold.
\ben
\item Conditioned on $\estar$,  w.p. at least $1 - 12n^{-12}$, \\
$\SE(\Phat,\P) \le \zz + \SE(\Phat,\P_\rot)$ and
\begin{align*}
\SE(\Phat,\P_\rot) &\le (\zz + |\sin\theta|)  \frac{0.39\qa + 0.1 \zz}{|\sin\theta|} \\
&\le 1.01 |\sin\theta|  \frac{0.39\qa + 0.1 \zz}{|\sin\theta|} \\
&<  0.4 q_\rot + 0.11 \zz.
\end{align*}
\item Conditioned on $\estar$,  w.p. at least $1-12n^{-12}$,
\begin{align*}
& \lambda_{\max}(\bm{D}_{obs}) \\
& \ge (0.97 \sin^2 \theta - 0.4 q_\rot |\sin \theta| - 0.15 \zz |\sin \theta|) \lambda_\ch. 
\end{align*}
\een
For large $n,r$, $r\log n > r+\log n$. Thus the following simpler expression for $\alpha_0$ suffices: $\alpha \ge \alpha_0= C \eta^2 f^2 (r \log n)$.
\end{theorem}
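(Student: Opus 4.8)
This is a matrix-perturbation analysis of the top eigenvector of $\bm{D}_{obs}$, along the lines of the ``PCA in data-dependent noise'' results of \cite{corpca_nips,pca_dd} but adapted to partial subspace knowledge. The plan has five steps: (i)~expand $\bm{D}_{obs}$ into a clean signal matrix plus a data--noise cross term, a noise--noise term, and terms carrying $\zt$; (ii)~identify the dominant eigenvalue and the spectral gap of the \emph{expected} clean matrix; (iii)~bound every perturbation by a prescribed small fraction of that gap; (iv)~obtain Conclusion~2 from Weyl's inequality and Conclusion~1 from the Davis--Kahan $\sin\theta$ theorem; and (v)~convert the resulting bound on $\SE(\Phat_\rot,\cdot)$ into the stated bound on $\SE(\Phat,\P)$.

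\emph{Decomposition and the clean spectrum.} Substituting $\yt=\lt+\wt+\zt$ with $\wt=\mtt\mot\lt$ gives $\bm{D}_{obs}=\bm{D}_0+\bm{D}_{\mathrm{cr}}+\bm{D}_{\mathrm{cr}}{}'+\bm{D}_{\mathrm{w}}+\bm{D}_{\mathrm{z}}$, where $\bm{D}_0:=\frac1\alpha\sum_t\bphi\lt\lt{}'\bphi$, $\bm{D}_{\mathrm{cr}}:=\frac1\alpha\sum_t\bphi\lt\lt{}'\mot{}'\mtt{}'\bphi$, $\bm{D}_{\mathrm{w}}:=\frac1\alpha\sum_t\bphi\wt\wt{}'\bphi$, and $\bm{D}_{\mathrm{z}}$ collects the terms with a $\zt$ factor. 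Since $\Phat_\rot$ is an eigenvector of $\bm{D}_{obs}=\bphi(\cdot)\bphi$, it lies in $\range(\bphi)$, so $\Phat_\rot\perp\Phat_*$, $\I-\Phat\Phat{}'=\bphi-\Phat_\rot\Phat_\rot{}'$, and a short computation gives $(\I-\Phat\Phat{}')\P_\rot=(\I-\Phat_\rot\Phat_\rot{}')\bphi\P_\rot$; writing $\hat{\bm{p}}:=\bphi\P_\rot/\|\bphi\P_\rot\|$ this yields $\SE(\Phat,\P_\rot)=\|\bphi\P_\rot\|\,\SE(\Phat_\rot,\hat{\bm{p}})$, and $\Span(\P_\fx)\subseteq\Span(\P_*)$ gives $\SE(\Phat,\P)\le\zz+\SE(\Phat,\P_\rot)$ — the first line of Conclusion~1. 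On $\estar$, $\|\bphi\P_\fx\|,\|\bphi\P_\ch\|\le\zz$ and $\|\Phat_*{}'\P_\new\|\le\zz$, so $\|\bphi\P_\rot\|\in\big[(1-\zz^2)|\sin\theta|-\zz,\ |\sin\theta|+\zz\big]$; and since $\E[\at\at{}']=\Lam$ is diagonal and independent of $\M_t$, $\E[\bm{D}_0]=\bphi\P_\fx\Lam_\fx\P_\fx{}'\bphi+\lambda_\ch\,\bphi\P_\rot\P_\rot{}'\bphi$, in which $\lambda_\ch\,\bphi\P_\rot\P_\rot{}'\bphi$ is rank one with eigenvalue $\|\bphi\P_\rot\|^2\lambda_\ch\approx\sin^2\theta\,\lambda_\ch$ and eigenvector $\hat{\bm{p}}$, while $\|\bphi\P_\fx\Lam_\fx\P_\fx{}'\bphi\|\le\zz^2\lambda^+$; the hypothesis $\zz f\le 0.01|\sin\theta|$ (with $\lambda^+\le f\lambda_\ch$) makes the latter negligible against $\sin^2\theta\,\lambda_\ch$.

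\emph{Perturbation bounds.} Davis--Kahan needs the off-diagonal (relative to $\hat{\bm{p}}$) part of $\bm{D}_{obs}-\lambda_\ch\bphi\P_\rot\P_\rot{}'\bphi$, so I would bound each term either in operator norm or through its $\hat{\bm{p}}$-column. For $\bm{D}_0-\E[\bm{D}_0]$: because $(\I-\hat{\bm{p}}\hat{\bm{p}}{}')\bphi\lt=(\I-\hat{\bm{p}}\hat{\bm{p}}{}')\bphi\P_\fx\atf$ has norm $\le\zz\sqrt{\eta r\lambda^+}$ while $\lt{}'\bphi\hat{\bm{p}}\approx\atr\,\|\bphi\P_\rot\|$ and $\E[\atf\,\atr]=\bm{0}$ by diagonality of $\Lam$, the relevant block is essentially a zero-mean sum, so matrix Bernstein \cite{tail_bound} (with Vershynin's sub-Gaussian-rows bound \cite{vershynin} for the $\hat{\bm{p}}$-eigenvalue itself), using element-wise boundedness of $\at$, makes it a small multiple of $\zz|\sin\theta|\lambda_\ch$ once $\alpha\ge\alpha_0=C\eta\max\big(f(r\log n),\ \eta f^2(r+\log n)\big)$ — the two arguments of the max being the Bernstein variance term and the boundedness/sub-Gaussian term, whose matching is exactly what pins down $\alpha_0$. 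The cross term is the crux: a naive $\max_t\|\mot\lt\|$ bound carries a spurious $\sqrt r$, so I would first replace $\lt\lt{}'$ by $\P\Lam\P{}'$, $\bm{D}_{\mathrm{cr}}=\frac1\alpha\sum_t\bphi\P\Lam\P{}'\mot{}'\mtt{}'\bphi+(\text{fluctuation})$, and bound the main part deterministically, by Cauchy--Schwarz in $t$ and \eqref{M2t_bnd}, by
\[
\sqrt{b_0}\,\big(2\zz^2\lambda^+ + (\zz+|\sin\theta|)\,q_\rot\,\lambda_\ch\big),
\]
using $\|\bphi\P_\fx\|,\|\bphi\P_\ch\|\le\zz$, $\|\mot\P_\fx\|,\|\mot\P_\ch\|\le q_0\le2\zz$, $\|\mot\P_\rot\|\le q_\rot$ — with $b_0=0.01$ this supplies the $0.1$ factor and is a small fraction of $\sin^2\theta\,\lambda_\ch$ by the hypotheses; the fluctuation goes via matrix Bernstein. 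The noise--noise term satisfies $\|\bm{D}_{\mathrm{w}}\|\le b_0(q_0^2\lambda^+ + q_\rot^2\lambda_\ch)+(\text{fluctuation})$ via $\mtt A_t\mtt{}'\preceq\|A_t\|\,\mtt\mtt{}'$, hence is lower order; the $\zt$ terms are handled with $\|\zt\|\le b_z$, $\|\E[\zt\zt{}']\|\le\lambda_z^+$, $\E[\zt\lt{}']=\bm{0}$, $b_z\le C(q_0\sqrt{r\lambda^+}+q_\rot\sqrt{\lambda_\ch})$, again via matrix Bernstein, and are lower order too. Summing all bounds and dividing by the gap $\|\bphi\P_\rot\|^2\lambda_\ch-\lambda_2(\bm{D}_{obs})\gtrsim 0.97\sin^2\theta\,\lambda_\ch$ (the second eigenvalue being dominated by the same perturbations) gives, by Davis--Kahan, $\SE(\Phat_\rot,\hat{\bm{p}})\le(0.39\,q_\rot+0.1\,\zz)/|\sin\theta|$; multiplying by $\|\bphi\P_\rot\|\le|\sin\theta|+\zz$ is the chain stated in Conclusion~1, and Weyl's inequality with $\lambda_{\max}(\E[\bm{D}_0])\ge((1-\zz^2)|\sin\theta|-\zz)^2\lambda_\ch$ and the same perturbation bound gives Conclusion~2 with the stated $0.97,\,0.4,\,0.15$.

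\emph{Main obstacle.} The real work is a tight constant audit: each perturbation must be pushed below a preassigned fraction of $\sin^2\theta\,\lambda_\ch$ so the final constants come out as $0.39,0.11$ (Conclusion~1) and $0.97,0.4,0.15$ (Conclusion~2). This forces careful use of $q_0\le2\zz$, $q_\rot\le0.2|\sin\theta|$, $\zz f\le0.01|\sin\theta|$, $b_z\le C(q_0\sqrt{r\lambda^+}+q_\rot\sqrt{\lambda_\ch})$, of the ``replace $\lt\lt{}'$ by its mean first'' device that removes the spurious $\sqrt r$ from the cross term, and of the diagonality of $\Lam$ that makes the critical concentration block zero-mean; and reading off the exact $\alpha_0$ requires matching the Bernstein variance/boundedness terms and the Vershynin bound against the target accuracies. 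The closing remark — $r\log n\ge r+\log n$ for large $n,r$, so $\alpha_0=C\eta^2 f^2(r\log n)$ suffices — is immediate.
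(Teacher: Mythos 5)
Your proposal is correct in outline and takes essentially the same route as the paper's proof: reduce $\SE(\Phat,\P_\rot)$ to $\|\bphi\P_\rot\|\,\SE(\Phat_\rot,\enew)$ with $\enew=\bphi\P_\rot/\|\bphi\P_\rot\|$, apply the Davis--Kahan $\sin\theta$ theorem about that direction, gain the $\sqrt{b_0}$ factor on the data-dependent noise terms via Cauchy--Schwarz with \eqref{M2t_bnd}, control fluctuations with matrix Bernstein and Vershynin's sub-Gaussian bound (which are indeed the sources of the two pieces of $\alpha_0$), and obtain Conclusion~2 from Weyl's inequality. The only immaterial difference is that you center Davis--Kahan at the expected rank-one signal matrix, whereas the paper centers it at the $\enew$-block-diagonal part of the empirical clean matrix $\frac{1}{\alpha}\sum_t\bphi\lt\lt{}'\bphi$, so the same perturbation terms (your off-diagonal fluctuation block is the paper's $\mathrm{term11}$) are merely regrouped.
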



\begin{remark} \label{remark_ezero}
Theorem \ref{thm_corpca} holds even when $\estar$ is replaced by $\ezero:= \estar \cap \tilde{\mathcal{E}}(Z)$ where $\tilde{\mathcal{E}}(Z)$ is an event that depends on a r.v. $Z$ that is such that the pair $\{\Phat_*,Z\}$ is still independent of the $\lt$'s (and hence of the $\yt$'s) for $t \in \J^\alpha$.
\end{remark}

\begin{proof}
The proof follows using a careful application of the Davis-Kahan $\sin\theta$ theorem \cite{davis_kahan} followed by using matrix Bernstein \cite{tail_bound} to bound the numerator terms in the $\sin\theta$ theorem bound and Vershynin's sub-Gaussian result \cite{vershynin} to bound the extra terms in its denominator. While the overall approach is similar to that used by \cite{pca_dd} for the basic correlated-PCA problem, this proof requires significantly more work. We give the proof in Appendix \ref{proof_thm_corpca}. The most important idea in the proof is the use of Cauchy-Schwarz to show that the time-averaged projected-data - noise correlation and time-averaged noise power are both $\sqrt{b_0}$ times their instantaneous values. We explain this next.
\end{proof}

In the result above, the bounds assumed in \eqref{extra_bnds} are not critical. They only help to get a simple expression for the subspace error bound. As will be evident from the proof, we can also get a (more complicated) guarantee without assuming \eqref{extra_bnds}, and with any value of $\rrow$.
The main assumption needed by Theorem \ref{thm_corpca} is \eqref{M2t_bnd} on the data-dependency matrices $\M_t$. This is required because the noise $\wt$ depends on the true data $\lt$ and hence the instantaneous values of both the noise power and of the signal-noise correlation (even after being projected orthogonal to $\Phat_*$) can be large if $\lambda_\ch$ is large.
However, \eqref{M2t_bnd} helps ensure that the time-averaged noise power and the time-averaged projected-signal-noise correlation are much smaller.
Using the definitions of $q_0$ and $q_\rot$, $\|\E[\wt \wt{}']\| \le q_0^2 \lambda^+  +  q_\rot^2 \lambda_\ch:= c_w$ and $\|\E[\bm\Phi \lt \wt{}']\| \le \zz q_0 \lambda^+ + (\zz + |\sin \theta|) q_\rot \lambda_\ch := c_{wl}$.
By appropriately applying Cauchy-Schwarz (Theorem \ref{CSmat}),
\[
\norm{\frac{1}{\alpha} \sum_{t \in \J^\alpha} \E[\w_t \w_t{}']} \le \sqrt{b_0}  c_w \text{ and }
\]
\[
\norm{\frac{1}{\alpha} \sum_{t \in \J^\alpha} \E[\bm\Phi \lt \wt{}' \bm\Phi]} \le \sqrt{b_0} c_{wl}.
\]
Since $b_0 = 0.01$, both bounds are $0.1$ times their instantaneous values. See proof of Lemma \ref{lem:concm} for their derivation.

For our problem, \eqref{M2t_bnd} holds because we can let $\M_{2,t} = \I_{\T_t}$ and $\M_{1,t}$ as the rest of the matrix multiplying $\lt$ in \eqref{etdef0}. Then, using the bound on outlier fractions per row from Theorem \ref{thm1}, it is not hard to see that $\norm{\frac{1}{\alpha} \sum_t \M_{2,t} \M_{2,t}{}'} \le \rrow$. We state this formally next in Lemma \ref{outlier-lemma}.
\begin{lem}\label{outlier-lemma}
Assume that the $\outfracrow^\alpha$ bound of Theorem \ref{thm1} holds. Then,  for any $\alpha$-length interval $\J^\alpha \subseteq [t_1,\tmax]$,
\begin{align*}
\norm{\frac{1}{\alpha}\sum_{t \in \J^\alpha} \I_{\T_t} \I_{\T_t}{}'} &= \gamma(\J^\alpha) \le \outfracrow^\alpha \\
&\le \rrow =0.01 / f^2 \le  0.01 = b_0.
\end{align*}
\end{lem}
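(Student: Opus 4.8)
The plan is to simply unpack the definitions; there is essentially no analytic content, and the lemma merely repackages the row-sparsity hypothesis of Theorem~\ref{thm1} into the form \eqref{M2t_bnd} needed to invoke Theorem~\ref{thm_corpca} with $\M_{2,t} = \I_{\T_t}$. First I would observe that for each fixed $t$, $\I_{\T_t}\I_{\T_t}{}' = \sum_{i \in \T_t} \e_i \e_i{}'$ is the $n\times n$ diagonal $0/1$ matrix with ones exactly in the positions indexed by $\T_t$ (here $\e_i$ denotes the $i$-th column of the identity). Summing over $t \in \J^\alpha$ and dividing by $\alpha = |\J^\alpha|$ yields a diagonal positive semidefinite matrix whose $(i,i)$ entry equals $\frac{1}{\alpha}\sum_{t\in\J^\alpha}\one_{\{i\in\T_t\}}$.

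Since the operator norm of a diagonal matrix is the largest absolute value among its diagonal entries,
\[
\norm{\frac{1}{\alpha}\sum_{t\in\J^\alpha}\I_{\T_t}\I_{\T_t}{}'} \;=\; \max_{i=1,\dots,n}\frac{1}{\alpha}\sum_{t\in\J^\alpha}\one_{\{i\in\T_t\}} \;=\; \gamma(\J^\alpha),
\]
where the last equality is precisely the definition \eqref{def_gamma_J} of $\gamma(\cdot)$ applied to the interval $\J = \J^\alpha$ (of length $\alpha$). It then remains only to chain the inequalities: since $\J^\alpha \subseteq [t_1,\tmax]$ has length $\alpha$, the definition \eqref{def_outfracrow} of $\outfracrow^\alpha$ as the maximum of $\gamma(\cdot)$ over all such subintervals gives $\gamma(\J^\alpha)\le\outfracrow^\alpha$; assumption~3b of Theorem~\ref{thm1} gives $\outfracrow^\alpha\le\rrow = 0.01/f^2$; and since $f = \lambda^+/\lambda^- \ge 1$ we have $0.01/f^2 \le 0.01 = b_0$.

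I do not expect any real obstacle here. The only bookkeeping point worth stating explicitly is that both \eqref{def_gamma_J} and \eqref{def_outfracrow} are normalized by the interval length $\alpha = |\J^\alpha|$, so that no stray factors of $\alpha$ appear when relating $\norm{\frac{1}{\alpha}\sum_{t}\I_{\T_t}\I_{\T_t}{}'}$ to $\gamma(\J^\alpha)$; and that the identification of the operator norm with the maximum diagonal entry is immediate because the matrix in question is diagonal and positive semidefinite. With this, Lemma~\ref{outlier-lemma} is exactly what licenses the choice $\M_{2,t}=\I_{\T_t}$, $b_0 = 0.01$ when applying Theorem~\ref{thm_corpca} to the subspace-update step.
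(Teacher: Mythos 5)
Your proposal is correct and follows essentially the same route as the paper: both write $\I_{\T_t}\I_{\T_t}{}'$ as a diagonal $0/1$ matrix, note the average is diagonal with $(i,i)$ entry $\frac{1}{\alpha}\sum_{t\in\J^\alpha}\one_{\{i\in\T_t\}}$, identify the operator norm with the maximum diagonal entry to get $\gamma(\J^\alpha)$, and then invoke the definition of $\outfracrow^\alpha$ and the bound assumed in Theorem \ref{thm1}. No gaps; the extra remarks about the normalization by $\alpha$ and $f\ge 1$ are fine.
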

\begin{proof}[Proof of Lemma \ref{outlier-lemma}]
The proof is straightforward. Let $\bm{C}_t := \itt \itt{}'$. Then,
\begin{align*}
\bm{C}_t = \mathrm{diag}\left((c_t)_1,\ (c_t)_2,\ \cdots,\ (c_t)_n \right),
\end{align*}
where
\begin{align*}
(c_t)_i=
\begin{cases}
1, & \ \text{if} \ i \in \Tt, \\
0, & \  \text{if} \ i \notin \Tt
\end{cases}
= \one_{\{i \in \Tt\}}.
\end{align*}
Since each $\bm{C}_t$ is diagonal, so is $\frac{1}{\alpha} \sum_{t \in \J^\alpha} \bm{C}_t$. The latter has diagonal entries given by
\begin{align*}
\left(\frac{1}{\alpha} \sum_t \bm{C}_t \right)_{i,i} = \frac{1}{\alpha} \sum_{t \in \J^\alpha} (c_t)_i = \frac{1}{\alpha} \sum_{t \in \J^\alpha} \one_{ \{i \in \T_t\} } 
\end{align*}
Thus,
\begin{align*}
\norm{\frac{1}{\alpha} \sum_t \bm{C}_t} = \max_{i=1,2,\dots,n} \left| \left(\frac{1}{\alpha} \sum_t \bm{C}_t \right)_{i,i}\right| \\
 = \max_{i=1,2,\dots,n} \frac{1}{\alpha} \sum_{t \in \J^\alpha} \one_{ \{i \in \T_t\} } = \gamma(\J^\alpha)
\end{align*}
where $\gamma(\J^\alpha)$ defined in \eqref{def_gamma_J} is the outlier fraction per row of $\X_{\J^\alpha}$. From Theorem \ref{thm1}, this is bounded by $\rrow$.
\end{proof}

\subsection{Two simple lemmas from \cite{rrpcp_perf}}
The following two lemmas taken from \cite{rrpcp_perf} will be used in proving the main lemmas that together imply Theorem \ref{thm1} with $\that_j=t_j$.
\begin{lem} \cite[Lemma 2.10]{rrpcp_perf} \label{hatswitch}
Suppose that $\bm{P}$, $\Phat$ and $\bm{Q}$ are three basis matrices. Also, $\bm{P}$ and $\Phat$ are of the same size, $\bm{Q}' \bm{P} = \bm{0}$ and $\|(\I-\Phat \Phat{}' ) \bm{P} \| = \zeta_*$. Then,
\ben
  \item $\|(\I-\Phat\Phat{}')\bm{P}\bm{P}'\| =\|( \I - \bm{P}\bm{P}' ) \Phat \Phat{}'\| =  \|( \I - \bm{P} \bm{P}' ) \Phat \| = \| ( \I - \Phat \Phat{}' ) \bm{P}\| =  \zeta_*$
  \item $\|\bm{P} \bm{P}' - \Phat \Phat{}'\| \leq 2 \|(\I-\Phat \Phat{}')\bm{P}\| = 2 \zeta_*$
  \item $\|\Phat{}' \bm{Q}\| \leq \zeta_*$ \label{lem_cross}
  \item $ \sqrt{1-\zeta_*^2} \leq \sigma_i\left((\I-\Phat \Phat{}')\bm{Q}\right)\leq 1 $
\een
\end{lem}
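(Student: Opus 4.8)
The plan is to derive all four items from just two elementary facts about basis matrices, applied repeatedly. Fact (a): right-multiplication by a matrix with orthonormal rows leaves the spectral norm unchanged, so $\|\bm{M}\bm{P}\bm{P}'\| = \|\bm{M}\bm{P}\|$ and $\|\bm{M}\Phat\Phat'\| = \|\bm{M}\Phat\|$ for any conformable $\bm{M}$ (since $\bm{P}'\bm{P}=\I$, the rows of $\bm{P}'$ are orthonormal; likewise for $\Phat'$). Fact (b): for any $\bm{A}$, $\|\bm{A}\|^2 = \sigma_{\max}(\bm{A}'\bm{A})$, and when $\bm{A}$ has at least as many rows as columns, $\sigma_{\min}(\bm{A})^2 = \lambda_{\min}(\bm{A}'\bm{A})$; also $\|\bm{A}\| = \|\bm{A}'\|$. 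I would also use that $\I-\Phat\Phat'$ and $\I-\bm{P}\bm{P}'$ are orthogonal projectors (symmetric idempotent, norm $\le 1$).

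For item 1, the two outer equalities $\|(\I-\Phat\Phat')\bm{P}\bm{P}'\| = \|(\I-\Phat\Phat')\bm{P}\|$ and $\|(\I-\bm{P}\bm{P}')\Phat\Phat'\| = \|(\I-\bm{P}\bm{P}')\Phat\|$ are immediate from Fact (a). For the middle equality I would compute, using Fact (b) and idempotency, $\|(\I-\Phat\Phat')\bm{P}\|^2 = \sigma_{\max}\!\big(\bm{P}'(\I-\Phat\Phat')\bm{P}\big) = \sigma_{\max}\!\big(\I - (\Phat'\bm{P})'(\Phat'\bm{P})\big) = 1 - \sigma_{\min}(\Phat'\bm{P})^2$, where the last step is valid because $\|\Phat'\bm{P}\|\le 1$ forces $\I - (\Phat'\bm{P})'(\Phat'\bm{P}) \succeq \bm{0}$, so its largest eigenvalue equals $1$ minus the smallest eigenvalue of $(\Phat'\bm{P})'(\Phat'\bm{P})$. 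Swapping the roles of $\bm{P}$ and $\Phat$ gives $\|(\I-\bm{P}\bm{P}')\Phat\|^2 = 1 - \sigma_{\min}(\bm{P}'\Phat)^2$; since $\bm{P}'\Phat = (\Phat'\bm{P})'$ and both matrices are square (here is where "same size" is used), their smallest singular values coincide, which proves the middle equality, and the common value is $\zeta_*$ by hypothesis.

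For item 2, I would use the algebraic identity $\bm{P}\bm{P}' - \Phat\Phat' = \bm{P}\bm{P}'(\I-\Phat\Phat') - (\I-\bm{P}\bm{P}')\Phat\Phat'$ and the triangle inequality; by item 1 together with $\|\bm{A}\|=\|\bm{A}'\|$, each of the two terms has norm $\zeta_*$, giving the bound $2\zeta_*$. For item 3, the hypothesis $\bm{Q}'\bm{P}=\bm{0}$ gives $(\I-\bm{P}\bm{P}')\bm{Q}=\bm{Q}$, hence $\|\Phat'\bm{Q}\| = \|\Phat'(\I-\bm{P}\bm{P}')\bm{Q}\| \le \|\Phat'(\I-\bm{P}\bm{P}')\|\,\|\bm{Q}\| = \|(\I-\bm{P}\bm{P}')\Phat\| = \zeta_*$ by item 1. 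For item 4, the upper bound $\sigma_i\big((\I-\Phat\Phat')\bm{Q}\big)\le 1$ holds since $\I-\Phat\Phat'$ is a projector and $\bm{Q}$ has orthonormal columns; for the lower bound I would write $\sigma_{\min}\big((\I-\Phat\Phat')\bm{Q}\big)^2 = \lambda_{\min}\big(\bm{Q}'(\I-\Phat\Phat')\bm{Q}\big) = \lambda_{\min}\big(\I - (\Phat'\bm{Q})'(\Phat'\bm{Q})\big) = 1 - \|\Phat'\bm{Q}\|^2 \ge 1-\zeta_*^2$, invoking item 3, and then $\sigma_i \ge \sigma_{\min}$ gives the claim.

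There is no genuine obstacle here: everything is routine singular-value bookkeeping, and item 1 does essentially all the work (items 2--4 are corollaries of it plus the triangle inequality and $\bm{Q}'\bm{P}=\bm{0}$). The only point needing care is ensuring that every matrix whose $\sigma_{\min}$ or $\lambda_{\min}$ appears is square — this is precisely the content of the hypothesis that $\bm{P}$ and $\Phat$ have the same number of columns, which lets us identify $\sigma_{\min}(\Phat'\bm{P})$ with $\sigma_{\min}(\bm{P}'\Phat)$ in item 1.
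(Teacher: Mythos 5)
Your proof is correct, and all four items check out: Fact (a) follows from $(\bm{M}\bm{P}\bm{P}')(\bm{M}\bm{P}\bm{P}')' = (\bm{M}\bm{P})(\bm{M}\bm{P})'$, the Gram-matrix computation $\|(\I-\Phat\Phat')\bm{P}\|^2 = 1-\sigma_{\min}(\Phat'\bm{P})^2$ is where the same-size hypothesis enters (so that $\Phat'\bm{P}$ is square and $\sigma_{\min}(\Phat'\bm{P})=\sigma_{\min}(\bm{P}'\Phat)$), the splitting $\bm{P}\bm{P}'-\Phat\Phat' = \bm{P}\bm{P}'(\I-\Phat\Phat') - (\I-\bm{P}\bm{P}')\Phat\Phat'$ is a valid identity, and items 3 and 4 follow from $(\I-\bm{P}\bm{P}')\bm{Q}=\bm{Q}$ and $\bm{Q}'(\I-\Phat\Phat')\bm{Q} = \I - (\Phat'\bm{Q})'(\Phat'\bm{Q})$ exactly as you say. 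Note that this paper does not prove the lemma at all: it is quoted verbatim from \cite[Lemma 2.10]{rrpcp_perf}, so there is no in-paper argument to compare against; your derivation is the standard singular-value/principal-angle bookkeeping used in that reference, and it makes the statement self-contained here.
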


\begin{lem}\cite[Lemma 3.7]{rrpcp_perf}\label{kappadelta}
For an $n \times r$ basis matrix $\P$,
\ben
\item $\max_{|\T| \le s} \|\I_\T{}' \P\|^2 \le s \max_{i=1,2,\dots,r} \|\I_i{}' \P\|^2$
\item $\delta_s(\I - \P \P') = \max_{|\T| \le s} \|\I_\T{}' \P\|^2$
\item If $\P = [\P_1, \P_2]$ then $\|\I_\T{}' \P\|^2 \le \|\I_\T{}' \P_1\|^2 + \|\I_\T{}' \P_2\|^2$. Also, $\|\I_\T{}' \P_1\|^2 \le \|\I_\T{}' \P\|^2$.
\een
Recall that $\delta_s(\M)$ is the $s$-restricted isometry constant \cite{candes_rip} of a matrix $\M$, i.e., it is the smallest real number for which the following holds for all $s$-sparse vectors $\bm{z}$:
$(1-\delta_s)\|\bm{z}\|^2 \le \|\M \bm{z}\|^2 \le (1+\delta_s)\|\bm{z}\|^2$.
\end{lem}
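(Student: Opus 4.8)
The plan is to prove the three facts in the stated order; all are deterministic statements in finite-dimensional linear algebra, so no concentration or probability is involved. I will use two elementary identities throughout: for any matrix $\A$, $\|\A\|^2 = \|\A\A'\| = \|\A'\A\|$; and for a block matrix $\P = [\P_1,\P_2]$, $\P\P' = \P_1\P_1' + \P_2\P_2'$. I will also use that for a positive semidefinite pair $\bm{0}\preceq \A \preceq \B$ one has $\|\A\| \le \|\B\|$.

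For claim 1, I would pass from the spectral norm to the Frobenius norm, $\|\I_\T{}'\P\|^2 \le \|\I_\T{}'\P\|_F^2$, note that $\I_\T{}'\P$ is exactly the submatrix of $\P$ consisting of the rows indexed by $\T$, so that $\|\I_\T{}'\P\|_F^2 = \sum_{i\in\T}\|\I_i{}'\P\|^2$ (the operator norm of a single row equals its Euclidean norm), and then bound $\sum_{i\in\T}\|\I_i{}'\P\|^2 \le |\T|\max_{i}\|\I_i{}'\P\|^2 \le s\max_{i}\|\I_i{}'\P\|^2$.

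For claim 2, the key observation is that $\M:= \I - \P\P'$ is an orthogonal projector, so for any vector $\bm{z}$, $\|\M\bm{z}\|^2 = \bm{z}'\M\bm{z} = \|\bm{z}\|^2 - \|\P'\bm{z}\|^2$. Restricting to an $s$-sparse $\bm{z}$ supported on $\T$ and writing $\bm{z} = \I_\T\bm{z}_\T$, I get $\|\P'\bm{z}\|^2 = \|\P'\I_\T\bm{z}_\T\|^2 \le \|\I_\T{}'\P\|^2\|\bm{z}\|^2$, with equality when $\bm{z}_\T$ is aligned with the top right singular vector of $\I_\T{}'\P$. Hence $\bigl(1 - \max_{|\T|\le s}\|\I_\T{}'\P\|^2\bigr)\|\bm{z}\|^2 \le \|\M\bm{z}\|^2 \le \|\bm{z}\|^2$ for every $s$-sparse $\bm{z}$, and both bounds are attained. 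Comparing with the definition of $\delta_s$: the upper restricted isometry inequality is automatic for any $\delta\ge 0$, so only the lower one is binding, and it forces $\delta_s(\M) = \max_{|\T|\le s}\|\I_\T{}'\P\|^2$.

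For claim 3, using $\P\P' = \P_1\P_1' + \P_2\P_2'$ and the triangle inequality for $\|\cdot\|$ gives $\|\I_\T{}'\P\|^2 = \|\I_\T{}'(\P_1\P_1'+\P_2\P_2')\I_\T\| \le \|\I_\T{}'\P_1\|^2 + \|\I_\T{}'\P_2\|^2$; and since $\P_2\P_2'\succeq\bm{0}$ implies $\I_\T{}'\P_1\P_1'\I_\T \preceq \I_\T{}'\P\P'\I_\T$, monotonicity of the spectral norm on positive semidefinite matrices yields $\|\I_\T{}'\P_1\|^2 \le \|\I_\T{}'\P\|^2$. The only point needing any care is the tightness part of claim 2, namely that the lower restricted isometry constant cannot be taken smaller than $\max_{|\T|\le s}\|\I_\T{}'\P\|^2$ — this is precisely the statement that the bound $\|\P'\bm{z}\|^2 \le \|\I_\T{}'\P\|^2\|\bm{z}\|^2$ is achieved with equality for a suitable $\bm{z}$; every other step is a one-line norm manipulation, so I do not anticipate a serious obstacle.
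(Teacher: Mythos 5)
Your proof is correct and follows essentially the same elementary linear-algebra route as the proof of this lemma in the cited source \cite{rrpcp_perf}: Frobenius-norm bounding of the row submatrix for item 1, the projector identity $\|(\I-\P\P')\bm{z}\|^2=\|\bm{z}\|^2-\|\P'\bm{z}\|^2$ together with attainment of the norm $\|\I_\T{}'\P\|$ over vectors supported on $\T$ for item 2, and $\P\P'=\P_1\P_1'+\P_2\P_2'$ plus PSD monotonicity of the spectral norm for item 3. Two cosmetic points only: the maximizing $\bm{z}_\T$ in item 2 is the top \emph{left} singular vector of $\I_\T{}'\P$ (equivalently the top right singular vector of $\P'\I_\T$), and attainment of the \emph{upper} RIP inequality is neither needed nor in general true for sparse vectors, though your argument never uses it.
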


\subsection{Definitions and main claim needed for Theorem \ref{thm1} and Corollary \ref{gen_xmin} with $\that_j=t_j$} 


\begin{definition} \label{def:thm_known} We will use the following definitions in our proof.
\ben
\item Let $\theta: = \theta_j$,
$\P_*:=\P_{j-1}$, $\P_\new:=\P_{j,\new}$, $\P_\ch := P_{j-1, \ch}$, $\P_\rot: = \P_{j,\rot}:=(\P_{\ch} \cos \theta + \P_{\new} \sin \theta)$,  $\P:=\P_j$.
Similarly define $\Phat_*:=\Phat_{j-1}$, $\Phat:=\Phat_j$, and
\\ let $\Phat_{\rot,k}:=\Phat_{j,\rot,k}$ denote the $k$-th estimate of $\P_\rot$ with $\Phat_{\rot,0} = [.]$.

\color{black}
\item The scalars
\\ $\zeta_{0}^+:= \zz + |\sin \theta|$,
\\ $\zeta_{1}^+:= 0.4 \cdot 1.2  ((0.1 + \zz) |\sin \theta| + \zz) + 0.11 \zz$ and
\\ $\zeta_{k}^+:=  0.4 \cdot (1.2 \zeta_{k-1}^+ ) + 0.11 \zz$ for $k=2,3,\dots,K$.
\\ We will show that these are high probability bounds on $\SE([\Phat_*, \Phat_{\rot,k}], \P_\rot)$.

\color{black}
\item The events
\\ $\Gamma_0:=\{\SE(\Phat_*,\P_*) \le \zz\}$: clearly $\Gamma_0$ implies that $\SE(\Phat_*,\P_\rot) \le \zeta_{0}^+:= \zz + |\sin \theta|$,
\\ $\Gamma_k:= \Gamma_{k-1} \cap \{\SE([\Phat_*, \Phat_{\rot,k}],\P_\rot) \le \zeta_{k}^+ \}$ for $k=1,2,\dots, K$,
and
\\ $\Gamma_{K+1}:= \Gamma_K \cap \{ \SE(\Phat,\P) \le \zz \}$.

\item The time intervals:
\\ $\J_k:=[t_j + (k-1)\alpha, t_j + k\alpha)$ for $k=1,2,\dots,K$: the projection-SVD intervals,
\\ $\J_{K+1}:=[t_j +K \alpha, t_j +(K + 1) \alpha)$: the subspace re-estimation interval,
\\ $\J_{K+2}:=[t_j +K \alpha+\alphadel, t_{j+1})$: the interval when the current subspace update is complete and before the next change.
\een
\end{definition}

We first prove the $\SE$ bounds of Theorem \ref{thm1}. With these, the other bounds follow easily. To obtain the $\SE$ bounds, we will be done if we prove the following claim.
\begin{claim}\label{claim:equiv}
Given $\SE(\Phat_*, \P_*) \le \zz$, w.p. at least $1-(K+1)12n^{-12}$, 
\ben
\item $\SE(\Phat_*,\P_\rot) \le \zeta_{0}^+ $,
$\SE([\Phat_*, \Phat_{\rot,1}],\P_\rot) \le \zeta_{1}^+ $,

\item for $k>1$, $\SE([\Phat_*, \Phat_{\rot,k}],\P_\rot) \le \zeta_{k}^+$,

\item and so $\SE([\Phat_*, \Phat_{\rot,K}],\P_\rot) \le \zeta_{K}^+ \le \zz$ (using definition of $K$) and $\SE([\Phat_*, \Phat_{\rot,K}],\P)\le 2\zz$.

\item  Further,  after the deletion step, $\SE(\Phat, \P) \le \zz$.
\een
Proving the above claim is equivalent to showing that $\Pr(\Gamma_{K+1} | \Gamma_0) \ge 1 - (K+1)12n^{-12}$.
\label{main_claim}
\end{claim}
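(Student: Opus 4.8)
The plan is to prove Claim~\ref{claim:equiv}, i.e.\ $\Pr(\Gamma_{K+1}\mid\Gamma_0)\ge 1-(K+1)12n^{-12}$, by induction on the update index $k$: I would show $\Pr(\Gamma_k\mid\Gamma_{k-1})\ge 1-12n^{-12}$ for $k=1,\dots,K$ and $\Pr(\Gamma_{K+1}\mid\Gamma_K)\ge 1-12n^{-12}$, and then union-bound over the $K+1$ steps. Each inductive step is the composition of two sublemmas that I would isolate and prove separately: a \emph{projected-CS lemma} (Lemma~\ref{CSlem}) bounding $\et=\xhatt-\xt$ on a generic $\alpha$-frame interval, and a \emph{projection-SVD lemma} (Lemma~\ref{p_evd_lem}) that feeds the output of the former into Theorem~\ref{thm_corpca}; the last interval $\J_{K+1}$ instead uses a \emph{deletion lemma} (Lemma~\ref{del_evd}). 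The scalars $\zeta_k^+$ of Definition~\ref{def:thm_known} are exactly the bookkeeping (Fact~\ref{algebra_lem}) produced by chaining these two steps.

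\emph{Projected-CS on $\J_k$.} On $\Gamma_{k-1}$ the running estimate for $t\in\J_k$ is $\tPhat_{(t-1)}=[\Phat_*,\Phat_{\rot,k-1}]$ with $\SE(\tPhat_{(t-1)},\P_*)\le\zz$ and $\SE(\tPhat_{(t-1)},\P_\rot)\le\zeta_{k-1}^+$. Using incoherence \eqref{incoh_2}, Lemma~\ref{kappadelta} and $\outfraccol\le\rcol$, the matrix $\bpsi=\I-\tPhat_{(t-1)}\tPhat_{(t-1)}{}'$ has $\delta_{2s}(\bpsi)$ small, so it satisfies RIP; and since $\lt=\P_\fx\a_{t,\fx}+\P_\rot\a_{t,\ch}$ with $\bpsi\P_\fx$ of norm $\le\zz$ and $\bpsi\P_\rot$ of norm $\le\zeta_{k-1}^+$, the ``noise'' $\b_t=\bpsi(\lt+\vt)$ obeys $\|\b_t\|\le C(\zz\sqrt{r\lambda^+}+\zeta_{k-1}^+\sqrt{\lambda_\ch})$ (using element-wise boundedness of $\at$ and the bound on $\vt$). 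By \cite[Theorem 1.2]{candes_rip}, $\|\xhat_{t,cs}-\xt\|$ is of the same order; then $\omega_{supp}=\xmin/2$ together with the lower bound on $\xmin$ forces $\That_t=\T_t$; and the LS step gives $\et$ of the form \eqref{etdef0} with $\|\et\|$ of the same order, except that for $k=1$ the $\sqrt{\lambda_\ch}$-coefficient improves to $\approx 0.11|\sin\theta|$, since $\bpsi\P_\new\approx\P_\new$ is dense (\eqref{dense_bnd}) and hence $\|\I_{\T_t}{}'\bpsi\P_\new\|\le 0.11$.

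\emph{Projection-SVD on $\J_k$ via Theorem~\ref{thm_corpca}.} Write $\lhatt=\lt+(\vt-\et)$ and decompose $-\et$ into a data-dependent part $\wt=\mtt\mot\lt$ with $\mtt=\I_{\T_t}$ (up to sign) and $\mot=(\bpsi_{\T_t}{}'\bpsi_{\T_t})^{-1}\I_{\T_t}{}'\bpsi$, plus a modeling error $\bm{z}_t$ collecting the $\vt$-contribution. Then \eqref{M2t_bnd} holds with $b_0=0.01$ by Lemma~\ref{outlier-lemma}; $q_0:=\max_t\|\mot\P_*\|\le 2\zz$ and $q_\rot:=\max_t\|\mot\P_\rot\|\le 0.2|\sin\theta|$ hold because the matrix factors are $O(1)$ and, for $\P_\rot=\P_\ch\cos\theta+\P_\new\sin\theta$, either the denseness of $\P_\new$ (case $k=1$, which makes $\|\I_{\T_t}{}'\bpsi\P_\rot\|\lesssim 0.11|\sin\theta|$) or the bound $\SE([\Phat_*,\Phat_{\rot,k-1}],\P_\rot)\le\zeta_{k-1}^+$ (case $k>1$) makes this projection small; $\zz f\le 0.01|\sin\theta|$ is precisely the Theorem~\ref{thm1} hypothesis $\zz f\le 0.01\,\SE(\P_{j-1},\P_j)$; and the $b_z$ bound matches $\|\et\|$ from the previous step. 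The independence hypothesis of Theorem~\ref{thm_corpca} (via Remark~\ref{remark_ezero}) holds because $\Gamma_{k-1}$ and the estimates $\Phat_*,\Phat_{\rot,1},\dots,\Phat_{\rot,k-1}$ depend only on $\lhatt$'s from strictly earlier intervals, hence are independent of $\{\yt:t\in\J_k\}$ given the mutual independence of the $\at$'s; and $\That_t=\T_t$ makes the $\et$'s inside $\J_k$ a conditionally mutually independent (sub-Gaussian after multiplication by $\mot$) sequence, so that matrix Bernstein and Vershynin's bound may be applied inside Theorem~\ref{thm_corpca}. The theorem then gives, on $\Gamma_{k-1}$ and with probability $\ge 1-12n^{-12}$, $\SE([\Phat_*,\Phat_{\rot,k}],\P_\rot)\le 0.4q_\rot+0.11\zz$; substituting the $k$-dependent $q_\rot$ and simplifying (Fact~\ref{algebra_lem}) yields $\zeta_k^+$, the choice $K=\lceil C\log(\Delta/\zz)\rceil$ forces $\zeta_K^+\le\zz$, and hence $\SE([\Phat_*,\Phat_{\rot,K}],\P)\le 2\zz$.

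\emph{Deletion on $\J_{K+1}$, union bound, and the main obstacle.} At $t=t_j+K\alpha$ the $(r+1)$-column estimate $[\Phat_*,\Phat_{\rot,K}]$ is $2\zz$-close to $\Span(\P)$, so $\|\lhatt-\lt\|=\|\et\|\le 2.4\zz\|\lt\|$ there; re-estimating $\P$ as the top $r$ left singular vectors of $\Lhat_{t;\alpha}$ over $\J_{K+1}$ is plain PCA in data-dependent noise with noise-to-signal ratio $\approx 2.4\zz$ and target accuracy $\zz$, and since this ratio is $O(1)$ and $\alpha\ge\alpha_*=Cf^2(r\log n)$, Lemma~\ref{del_evd} (which invokes \cite{pca_dd}) gives $\SE(\Phat,\P)\le\zz$ with probability $\ge 1-12n^{-12}$; this is the single step where the tighter bound $\outfracrow^\alpha\le 0.01/f^2$ is needed, again via Lemma~\ref{outlier-lemma}. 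Union-bounding the $K$ projection-SVD events with the deletion event proves $\Pr(\Gamma_{K+1}\mid\Gamma_0)\ge 1-(K+1)12n^{-12}$. I expect the main obstacle to be the independence bookkeeping rather than any individual estimate: one must verify, interval by interval, that the ``known'' estimates ($\Phat_*$ and the earlier $\Phat_{\rot,\cdot}$) are genuinely probabilistically independent of the fresh data driving the current step, and that on $\Gamma_{k-1}$ — which is exactly what supplies $\That_t=\T_t$ — the residuals $\et$ become a conditionally independent, post-$\mot$ sub-Gaussian sequence, so that Theorem~\ref{thm_corpca} truly applies rather than merely resembling the situation. A secondary, purely algebraic nuisance is carrying the constants through the $\zeta_k^+$ recursion and confirming that the $k=1$ improvement to $0.11|\sin\theta|$, combined with $\outfracrow^\alpha\le 0.01/f^2$, suffices for $\zeta_K^+\le\zz$ with the stated $K$.
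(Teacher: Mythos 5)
Your proposal follows essentially the same route as the paper: per-interval conditional bounds $\Pr(\Gamma_k\mid\Gamma_{k-1})\ge 1-12n^{-12}$ via the projected-CS lemma feeding Theorem \ref{thm_corpca} (with the $k=1$ vs.\ $k>1$ split for $q_\rot$), the deletion step via the PCA-in-data-dependent-noise result, and the same independence bookkeeping through Remark \ref{remark_ezero}. The only cosmetic difference is that the paper combines the steps by multiplying conditional probabilities over the nested events $\Gamma_{K+1}\subseteq\cdots\subseteq\Gamma_0$ rather than a union bound, which yields the identical factor $(1-12n^{-12})^{K+1}\ge 1-(K+1)12n^{-12}$.
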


This claim is an easy consequence of the three main lemmas and Fact \ref{algebra_lem} given below. Fact \ref{algebra_lem} provides simple upper bounds on $\zeta_{k}^+$ that will be used at various places. The first lemma, Lemma \ref{CSlem}, shows that, assuming that the ``subspace estimates so far are good enough'', the projected CS step ``works'' for the next $\alpha$ frames, i.e., for all $t \in \J_k$, $\That_t = \T_t$; $\et$ is sparse and supported on $\T_t$ and satisfies \eqref{etdef0}, and $\|\et\|$ is bounded.
The second lemma, Lemma \ref{p_evd_lem}, uses Lemma \ref{CSlem} and Theorem \ref{thm_corpca} to show that, assuming that the ``subspace estimates so far are good enough'', with high probability (whp), the subspace estimate at the next projection-SVD step is even better than the previous ones. Applying Lemma \ref{p_evd_lem} for each $k=1,2,\dots,K$ proves the first two parts of Claim \ref{main_claim}. The third part follows easily from the first two and the definition of $K$. The fourth part follows using Lemma \ref{del_evd}, which shows that, assuming that the $K$-th projection-SVD step produces a subspace estimate that is within $2\zz$ of the true subspace, the subspace re-estimation step returns an estimate that is within $\zz$ of the true subspace.%

\begin{fact}\label{algebra_lem}
Using $\zz \le \zz f \le 0.01 |\sin \theta|$,
\ben
\item $\zeta_{0}^+:= \zz + |\sin \theta| \le 1.01 |\sin \theta|$,
\item $ \zeta_{1}^+:= 0.4 \cdot 1.2 ((0.1 + \zz)|\sin\theta| + \zz) + 0.11 \zz \le 0.06|\sin\theta|$,
\item $ \zeta_{k}^+:= 0.4 \cdot 1.2 \zeta_{k-1}^+ + 0.11 \zz
\le 0.5^{k-1} \zeta_{1}^+ + \frac{0.11}{1-0.5} \zz
\le 0.5^{k-1} (0.06|\sin\theta|) + 0.11\zz
\le 0.03 |\sin \theta|
$
\een
\end{fact}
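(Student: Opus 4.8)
The statement is purely arithmetic, so the plan is to verify the three claimed bounds by direct substitution, using throughout the standing hypothesis $\zz \le \zz f \le 0.01 |\sin\theta|$; since $f = \lambda^+/\lambda^- \ge 1$ this in particular yields $\zz \le 0.01|\sin\theta|$, which is the only consequence we actually need.

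For the first bound I would simply substitute $\zz \le 0.01|\sin\theta|$ into $\zeta_0^+ = \zz + |\sin\theta|$. For the second, I would expand $\zeta_1^+$, replace the constant $0.4\cdot 1.2$ by $0.48$, and bound every remaining occurrence of $\zz$ by $0.01|\sin\theta|$: this turns $(0.1+\zz)|\sin\theta|+\zz$ into at most $0.12|\sin\theta|$, and collecting $0.48\cdot 0.12|\sin\theta| + 0.11\cdot 0.01|\sin\theta| = 0.0587|\sin\theta|$ lands comfortably below $0.06|\sin\theta|$, which I would keep as slack for the recursion.

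For the third bound I would read the defining relation $\zeta_k^+ = 0.4\cdot 1.2\, \zeta_{k-1}^+ + 0.11\zz$ as a linear recursion $\zeta_k^+ = a\,\zeta_{k-1}^+ + b$ with $a := 0.48 \le \tfrac12$ and $b := 0.11\zz$, valid for $k \ge 2$, and unroll it: $\zeta_k^+ \le a^{k-1}\zeta_1^+ + b\sum_{i\ge 0}a^i = a^{k-1}\zeta_1^+ + \frac{b}{1-a} \le 0.5^{k-1}\zeta_1^+ + \frac{0.11}{1-0.5}\zz$, which is the intermediate expression in the statement. Plugging in $\zeta_1^+ \le 0.06|\sin\theta|$ from the previous step and $\zz \le 0.01|\sin\theta|$, and using $0.5^{k-1}\le \tfrac12$ for $k\ge 2$, yields the stated final bound $\zeta_k^+ \le 0.03|\sin\theta|$.

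The one place that wants a moment of care is this last step: the geometric-series tail contributes an extra $O(\zz)$ term, so to land exactly on the clean constant $0.03$ I would either absorb it into a marginally larger (still small) constant, or — cleaner — bound the base case $\zeta_2^+$ directly from the recursion, $\zeta_2^+ = 0.48\,\zeta_1^+ + 0.11\zz \le 0.48\cdot 0.06|\sin\theta| + 0.11\cdot 0.01|\sin\theta| \le 0.03|\sin\theta|$, and observe that $0.5^{k-1}\zeta_1^+ + \mathrm{const}$ only decreases as $k$ grows beyond $2$. There is no genuine obstacle here: the whole fact is elementary bookkeeping that lets Lemmas~\ref{CSlem}, \ref{p_evd_lem} and \ref{del_evd} be stated with clean numerical constants.
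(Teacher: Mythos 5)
Your verification is correct and is exactly the elementary substitution-plus-recursion-unrolling argument that the paper leaves implicit (the Fact is stated without proof). Your closing caveat is also well placed: as displayed, the paper's chain silently replaces $\frac{0.11}{1-0.5}\zz = 0.22\zz$ by $0.11\zz$, which is not a valid step on its own, but all the stated conclusions do hold — e.g., via your direct base-case check $\zeta_2^+ \le 0.48\cdot 0.06|\sin\theta| + 0.11\cdot 0.01|\sin\theta| \le 0.03|\sin\theta|$ combined with induction on the recursion $\zeta_k^+ = 0.48\,\zeta_{k-1}^+ + 0.11\zz$ (or by noting the unrolled bound drops below $0.03|\sin\theta|$ for all $k\ge 3$).
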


This claim essentially implies Theorem \ref{thm1} and Corollary \ref{gen_xmin} with $\that_j=t_j$. We prove these without this assumption in Appendix \ref{proof_auto_thm1}.

\subsection{The three main lemmas needed to prove the main claim and their proofs} \label{proof_thm1}  
\color{black}
\begin{lem}[Projected CS] \label{CSlem}
Recall from Sec. \ref{proof_thm1_section} that $s$ is an upper bound on $|\T_t|$.
Under assumptions of Theorem \ref{thm1} or  or Corollary \ref{gen_xmin}, the following hold for $k=1,2,\dots, K+2$.
Let $\bm\Psi_1:= \I - \Phat_* \Phat_*{}'$,
$\bm\Psi_k:= \I - \Phat_* \Phat_*{}' - \Phat_{\rot,k-1} \Phat_{\rot,k-1}{}'$ for $k=2,3,\dots,K+1$, and
$\bm\Psi_{K+2}:=\I - \Phat \Phat'$. From Algorithm \ref{simp_reprocs_tj},
\[
\bm\Psi = \bm\Psi_k \ \text{ for } \ t \in \J_k, \ k=1,2,\dots,K+2.
\]
Assume that $\Gamma_{k-1}$ holds. Then,
\ben
\item $\max_{|\T| \le 2s} \|\I_\T{}'\Phat_*\| \le 0.3 + \zz \le 0.31$.

\item $\max_{|\T| \le 2s} \|\I_\T{}'\Phat_{\rot,k-1}\| \le 0.1 + \zz + \frac{\zeta_{k-1}^+ + \zz}{|\sin \theta|} \le 0.1 + 0.01 + 0.04 < 0.15$.

\item $\delta_{2s}(\bm\Psi_1) 
\le 0.31^2  < 0.12$,
 $\delta_{2s}(\bm\Psi_k) \le  0.31^2 + 0.15^2 < 0.12$ for $k=2,3,\dots,K+2$,

\item for all $t \in \J_k$,  
$\| (\bm\Psi_{\T_t}{}' \bm\Psi_{\T_t})^{-1} \| \le  1.2$ 

\item for all $t \in \J_k$, $\That_t = \T_t$
\color{black}
\item for all $t \in \J_k$, $\et:=\xhat_t - \xt = \lt - \lhat_t + \vt$ satisfies \eqref{etdef0}
with $\bm\Psi = \bm\Psi_k$ for $t \in \J_k$

\item for $t \in \J_1$, $\|\et\| \le 2.4 \sqrt{\eta} (\zz\sqrt{r \lambda^+} +  0.11 \Delta \sqrt{\lambda_\ch})$;
\\ for $t \in \J_k$, $\|\et\| \le 2.4 \sqrt{\eta} (\zz\sqrt{r \lambda^+} +  \zeta_{k-1}^+ \sqrt{\lambda_\ch})$ for $k=2,3,\dots K$;
\\ for $t \in \J_{K+1}$, $\|\et\| \le 4.8 \sqrt{\eta} \zz\sqrt{r \lambda^+}$.
\\ for $t \in \J_{K+2}$, $\|\et\| \le 2.4 \sqrt{\eta} \zz\sqrt{r \lambda^+}$.
\color{black}
\een
\end{lem}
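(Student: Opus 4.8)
The plan is to establish items 1--7 in order, since each relies on the previous ones. Throughout I work with the equivalent model \eqref{ss_ch}, abbreviate $\bm\Phi := \I - \Phat_*\Phat_*{}'$, and use repeatedly that $\Phat_*,\Phat_{\rot,k-1}$ (and $\Phat_*,\Phat$) have mutually orthonormal columns, so each $\bm\Psi_k$ is an orthogonal projector with $\bm\Psi_k \preceq \bm\Phi$ and, for $k\ge2$, $\bm\Psi_k = \I - [\Phat_*,\Phat_{\rot,k-1}][\Phat_*,\Phat_{\rot,k-1}]'$. The denseness input is the consequence of \eqref{incoh_2} and the $\outfraccol$ bound noted right after it: $\Q := \basis([\P_{j-1},\P_j]) = [\P_*,\P_\new]$ satisfies $\max_{|\T|\le 2s}\|\I_\T{}'\Q\| \le 0.1$, hence the same holds for $\P_*$, $\P_\new$ and $\P$, all of which lie in $\Span(\Q)$. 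Item 1 then follows from $\Phat_* = \Q\Q'\Phat_* + (\I-\Q\Q')\Phat_*$ together with $\|(\I-\Q\Q')\Phat_*\| \le \|(\I-\P_*\P_*{}')\Phat_*\| = \SE(\Phat_*,\P_*) \le \zz$.

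Item 2 is vacuous for $k=1$; for $k\ge2$ it is the one step that needs real work, and I expect it to be the main obstacle. Since $\Phat_{\rot,k-1}$ is a left singular vector of $\bm\Phi\,\Lhat_{t;\alpha}$, it lies in $\range(\bm\Phi)$, so $\bm\Phi\Phat_{\rot,k-1} = \Phat_{\rot,k-1}$ and $\Phat_*{}'\Phat_{\rot,k-1} = \bm{0}$. From $\Gamma_{k-1}$ we have $\SE([\Phat_*,\Phat_{\rot,k-1}],\P_\rot) \le \zeta_{k-1}^+$, so $\P_\rot = \Phat_*(\Phat_*{}'\P_\rot) + c\,\Phat_{\rot,k-1} + \bm{g}$ with the scalar $c := \Phat_{\rot,k-1}{}'\P_\rot$ and $\|\bm{g}\| \le \zeta_{k-1}^+$; applying $\bm\Phi$ annihilates the first term and gives $\Phat_{\rot,k-1} = c^{-1}(\bm\Phi\P_\rot - \bm\Phi\bm{g})$. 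Using $\bm\Phi\P_\rot = \cos\theta\,\bm\Phi\P_\ch + \sin\theta\,\bm\Phi\P_\new$, $\|\bm\Phi\P_\ch\| \le \zz$ (as $\Span(\P_\ch)\subseteq\Span(\P_*)$), $\|\bm\Phi\P_\new\| \ge 1 - \|\Phat_*{}'\P_\new\| \ge 1-\zz$ (part 3 of Lemma \ref{hatswitch}, since $\P_\new \perp \Span(\P_*)$), and the hypotheses $\zz f \le 0.01|\sin\theta|$ and Fact \ref{algebra_lem}, I lower-bound $|c| = \|\bm\Phi\P_\rot - \bm\Phi\bm{g}\| \ge \|\bm\Phi\P_\rot\| - \zeta_{k-1}^+ \ge 0.9|\sin\theta|$. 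Then $\|\I_\T{}'\Phat_{\rot,k-1}\| \le |c|^{-1}(\|\I_\T{}'\bm\Phi\P_\rot\| + \|\bm\Phi\bm{g}\|)$ with $\|\I_\T{}'\bm\Phi\P_\rot\| \le \zz + |\sin\theta|\big(\|\I_\T{}'\P_\new\| + \|\I_\T{}'\Phat_*\|\,\|\Phat_*{}'\P_\new\|\big) \le \zz + 0.11|\sin\theta|$ and $\|\bm\Phi\bm{g}\| \le \zeta_{k-1}^+$; dividing gives item 2. The difficulty here is that one must convert the ``forward'' accuracy statement $\SE([\Phat_*,\Phat_{\rot,k-1}],\P_\rot) \le \zeta_{k-1}^+$ into an explicit representation of $\Phat_{\rot,k-1}$ in terms of the provably dense directions $\P_*,\P_\new$, which hinges on the lower bound $|\Phat_{\rot,k-1}{}'\P_\rot| \gtrsim |\sin\theta|$ --- i.e.\ on the changed direction genuinely protruding from $\Span(\Phat_*)$ by about $|\sin\theta|$ --- and this is exactly where the $\zz f \le 0.01|\sin\theta|$ assumption is used.

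Items 3--6 are then routine. By part 2 of Lemma \ref{kappadelta}, $\delta_{2s}(\bm\Psi_k) = \max_{|\T|\le 2s}\|\I_\T{}'[\Phat_*,\Phat_{\rot,k-1}]\|^2$ (for $k=1$ just $\|\I_\T{}'\Phat_*\|^2$; for $k=K{+}2$ replace $[\Phat_*,\Phat_{\rot,k-1}]$ by $\Phat$ and use $\SE(\Phat,\P)\le\zz$ plus denseness of $\P$), which items 1--2 bound below $0.12<\sqrt2-1$; since $\bm\Psi_{\T_t}{}'\bm\Psi_{\T_t} = \I_{\T_t}{}'\bm\Psi_k\I_{\T_t}$ and $|\T_t|\le s$, this gives $\|(\bm\Psi_{\T_t}{}'\bm\Psi_{\T_t})^{-1}\| \le 1/(1-\delta_{2s}(\bm\Psi_k)) \le 1.2$. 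For item 5, $\tty_t = \bm\Psi_k\xt + \b_t$ with $\b_t := \bm\Psi_k(\lt+\vt)$; the noisy-CS guarantee \cite[Theorem~1.2]{candes_rip} applies because $\delta_{2s}(\bm\Psi_k)<\sqrt2-1$ and, by the item-7 estimate, $\|\b_t\| \le \|\bm\Psi_k\lt\| + \|\vt\| \le \xi_t$ (the $\xmin$/$\xmint$ lower bounds are calibrated to the per-interval bound on $\|\bm\Psi_k\lt\|+\|\vt\|$); it yields $\|\xhat_{t,cs}-\xt\|_\infty \le \|\xhat_{t,cs}-\xt\| \le C_1\xi_t$ for a small absolute constant $C_1$, and since $\xi_t = \xmint/15$ while $\omega_{supp,t}=\xmint/2$, every index of $\T_t$ lands above the threshold and every other index below, i.e.\ $\That_t = \T_t$. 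Item 6 is then algebra: substituting $\That_t = \T_t$ and $\tty_t = \bm\Psi_k\I_{\T_t}(\xt)_{\T_t} + \bm\Psi_k(\lt+\vt)$ into the least-squares formula for $\xhat_t$ and using $\bm\Psi_{\T_t}{}'\bm\Psi_k = \I_{\T_t}{}'\bm\Psi_k$ ($\bm\Psi_k$ a projector) gives $\xhat_t = \xt + \et$ with $\et$ of the form \eqref{etdef0}, whence $\et = \lt - \lhat_t + \vt$.

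Item 7 (and the feasibility bound $\|\b_t\|\le\xi_t$ used above) follows from $\|\et\| \le \|(\bm\Psi_{\T_t}{}'\bm\Psi_{\T_t})^{-1}\|\,\|\I_{\T_t}{}'\bm\Psi_k(\lt+\vt)\| \le 1.2\big(\|\I_{\T_t}{}'\bm\Psi_k\lt\| + \|\vt\|\big)$: decompose $\lt = \P_{j-1}(\P_{j-1}{}'\lt) + \P_\new(\P_\new{}'\lt)$ (valid since $\P_\new\perp\Span(\P_{j-1})$), with $\|\P_{j-1}{}'\lt\| \le \|\at\| \le \sqrt{\eta r\lambda^+}$ and $|\P_\new{}'\lt| = |\sin\theta|\,|\a_{t,\ch}| \le |\sin\theta|\sqrt{\eta\lambda_\ch}$ by element-wise boundedness of $\at$; then use $\|\bm\Psi_k\P_{j-1}\| \le \zz$ (from $\bm\Psi_k\preceq\bm\Phi$) and, for $k\ge2$, $|\sin\theta|\,\|\bm\Psi_k\P_\new\| = \|\bm\Psi_k\P_\rot - \cos\theta\,\bm\Psi_k\P_\ch\| \le \zeta_{k-1}^+ + \zz$, while for $k=1$ one instead uses $\|\I_{\T_t}{}'\bm\Psi_1\P_\new\| \le 0.11$ (denseness of $\P_\new$), which produces the $0.11\Delta$ factor. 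For $\J_{K+1}$ one additionally uses $\zeta_K^+\le\zz$ (definition of $K$), and for $\J_{K+2}$ one uses $\SE(\Phat,\P)\le\zz$ directly in $\|(\I-\Phat\Phat')\lt\| \le \zz\|\at\|$. Plugging in the $\vt$- and $\xmint$-bounds of Theorem \ref{thm1} (where $\vt$ is negligible) or of Corollary \ref{gen_xmin} (where $\|\vt\|$ is comparable to $\|\et\|$) and collecting constants finishes item 7; apart from item 2, this last step --- checking that the numerical constants close simultaneously in both regimes --- is the only nontrivial bookkeeping.
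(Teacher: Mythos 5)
Your proposal is correct in substance and, for most items, follows the same skeleton as the paper: item 1 via denseness of a subspace containing $\P_*$ plus $\SE(\Phat_*,\P_*)\le\zz$, item 3 via Lemma \ref{kappadelta}, item 4 via $1/(1-\delta_{2s})$, item 5 via the noisy-CS bound of \cite{candes_rip} with $\xi_t=\xmint/15$ and thresholding at $\omega_{supp,t}=\xmint/2$, item 6 via the LS algebra with $\bm\Psi_{\T_t}{}'\bm\Psi=\I_{\T_t}{}'\bm\Psi$, and item 7 via $\|\et\|\le 1.2(\|\I_{\T_t}{}'\bm\Psi\lt\|+\|\vt\|)$ and a decomposition of $\lt$ along $\P_*$ and $\P_\new$ (or $\P_\rot$). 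Where you genuinely diverge is item 2, the one step you correctly identify as the crux. The paper argues ``forward'': it writes $\|\I_\T{}'\Phat_{\rot,k-1}\|\le\|(\I-\P_\new\P_\new{}')\Phat_{\rot,k-1}\|+\|\I_\T{}'\P_\new\|$, switches the first term to $\|(\I-\Phat_{\rot,k-1}\Phat_{\rot,k-1}{}')\P_\new\|$ by Lemma \ref{hatswitch} (same dimension), and then bounds it through $\P_\new=(\P_\rot-\cos\theta\,\P_\ch)/\sin\theta$, yielding exactly $0.1+\zz+(\zeta_{k-1}^++\zz)/|\sin\theta|$. You instead ``invert'' the relation: using $\Phat_{\rot,k-1}\in\range(\bm\Phi)$ (hence $\Phat_*{}'\Phat_{\rot,k-1}=0$, which is indeed guaranteed by the projection-SVD construction), you solve $\Phat_{\rot,k-1}=c^{-1}(\bm\Phi\P_\rot-\bm\Phi\bm{g})$ with $|c|=|\Phat_{\rot,k-1}{}'\P_\rot|\gtrsim|\sin\theta|$, and transfer denseness from $\P_*,\P_\new$ to $\Phat_{\rot,k-1}$ through this explicit representation. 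Both routes rest on $\Gamma_{k-1}$ and $\zz f\le 0.01|\sin\theta|$; yours makes the geometric picture (the changed direction protrudes from $\Span(\Phat_*)$ by about $|\sin\theta|$) explicit, while the paper's avoids the division by $|c|$ and therefore gets cleaner constants.

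One caveat on constants: with your stated bounds ($|c|\ge 0.9|\sin\theta|$, numerator $\le(0.01+0.11+0.03)|\sin\theta|$) item 2 comes out as $\approx 0.167$ rather than the claimed $0.15$, and item 3 as $\approx 0.124$ rather than $0.12$. This does not break anything downstream---$\delta_{2s}\le 0.124$ still gives $\|(\bm\Psi_{\T_t}{}'\bm\Psi_{\T_t})^{-1}\|\le 1.2$ and keeps the CS constant near $7<7.5$, so items 4--7 close---but to literally recover the stated numbers you would need to sharpen the lower bound to $|c|\ge\|\bm\Phi\P_\rot\|-\zeta_{k-1}^+\ge(\sqrt{1-\zz^2}\,|\sin\theta|-\zz)-0.03|\sin\theta|\approx 0.96|\sin\theta|$ and keep the numerator at $\zz+(0.1+0.31\zz)|\sin\theta|+\zeta_{k-1}^+$. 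Similarly, in item 7 for $k\ge 2$ your $\P_\new$-based decomposition gives $(\zeta_{k-1}^++\zz)\sqrt{\lambda_\ch}$ where the paper bounds $\|\bm\Psi\P_\rot\|\le\zeta_{k-1}^+$ directly from $\Gamma_{k-1}$; the extra $\zz\sqrt{\lambda_\ch}$ is harmlessly absorbed, but the direct route is tighter and matches the stated $2.4$ constant without further bookkeeping.
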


\begin{proof}
Since the noise bound of Theorem \ref{thm1} is much smaller or equal to those assumed by Corollary \ref{gen_xmin}, if we can prove the latter, we would have also proved the former.
Using the first claim of Lemma \ref{kappadelta}, the $\outfraccol$ bound of Theorem \ref{thm1} and the incoherence assumption \eqref{incoh_2} imply that, for any set $\T$ with $|\T|\le 2s$, 
\bea
\|\I_\T{}' \P_{*}\| \le 0.1 < 0.3  \text{ and } \|\I_\T{}' \P_{\new}\|  \le 0.1
\label{dense_bnd}
\eea
(In order to simplify our assumptions, we have simplified the incoherence/denseness assumption from what it was in the original version of this work; as a result, even the first term above is bounded by 0.1 (not 0.3 as before). To not have to change the rest of the proof given below, we still use the 0.3 bound in the writing below).
%
Using \eqref{dense_bnd},
for any set $\T$ with $|\T| \le 2s$,
\begin{align}
\|\I_\T{}' \Phat_* \| &\le  \|\I_\T{}' (\I - \P_* \P_*{}') \Phat_* \| + \|\I_\T{}' \P_* \P_*{}' \Phat_*\| \nn \\
&\le  \| (\I - \P_* \P_*{}') \Phat_* \| + \|\I_\T{}' \P_*\| \nn \\
&=  \| (\I - \Phat_* \Phat_*{}') \P_* \| + 0.3 \nn \\
&\le  \zz + 0.3 \le 0.31.
\end{align}
The second row used \eqref{dense_bnd}  and the following: since $\Phat_*$ and $\P_*$ have the same dimension, $\SE(\P_*, \Phat_*) = \SE(\Phat_*, \P_*)$ (follows by Lemma \ref{hatswitch}, item 1). The third row follows using the definition of event $\Gamma_{k-1}$.
Proceeding similarly for $\Phat_{\rot,k-1}$ and $\P_\new$ (both have the same dimension),
\begin{align*}
\|\I_\T{}'\Phat_{\rot,k-1}\| &\le  \|(\I - \P_\new \P_\new{}')\Phat_{\rot,k-1}  \| + \|\I_\T{}' \P_\new\| \nn \\
&=  \|(\I - \Phat_{\rot,k-1} \Phat_{\rot,k-1}{}') \P_\new \| + 0.1 \nn \\
&\le  \|(\I - \Phat_*\Phat_*{}' - \Phat_{\rot,k-1} \Phat_{\rot,k-1}{}') \P_\new \|  \nn \\
&+ \|\Phat_*{}' \P_\new\| + 0.1 \nn \\
&\le  \frac{\zeta_{k-1}^+ + \zz}{|\sin \theta|} + \zz + 0.1 \le  0.04  + 0.01 + 0.1 \\
&\le 0.15.
\nn
\end{align*}
The second row used item 1 of Lemma \ref{hatswitch} and \eqref{dense_bnd}. The third row used triangle inequality.
The last row follows using $\P_\new = \frac{\P_\rot - \P_{\ch} \cos \theta}{\sin \theta}$ and the definition of event $\Gamma_{k-1}$. Using this, triangle inequality and $|\cos\theta|\le 1$, we bound the first term. Using item 3 of Lemma \ref{hatswitch}, we bound the second term. The final bounds use Fact \ref{algebra_lem}.

To get the above bound, we use $\P_\new$ (and not $\P_\rot$) because 
 $\|\Phat_*{}' \P_\new\| \le \zz$ since $\P_*$ is orthogonal to $\P_\new$. But we do not have a small upper bound on $\|\Phat_*{}' \P_\rot\| $.

The third claim follows using the first two claims and Lemma \ref{kappadelta}.
The fourth claim follows from the third claim as follows:
\begin{align*}
\norm{\left(\bpsi_{\Tt}{}'\bpsi_{\Tt}\right)^{-1}} \leq \frac{1}{1 - \delta_s(\bpsi)} &\leq \frac{1}{1 - \delta_{2s}(\bpsi)} \\
& \leq \frac{1}{1- 0.12} < 1.2.
\end{align*}
The last three claims follow the approach of the proof of \cite[Lemma 6.4]{rrpcp_perf}. There are minor differences because we set $\xi$ a little differently now and because we assume $\vt \neq 0$. We provide the proof in Appendix \ref{proof_CSlem}.
\end{proof}

\begin{lem}[Projection-\SVD]\label{p_evd_lem}
Under the assumptions of Theorem \ref{thm1} or Corollary \ref{gen_xmin}, the following holds for $k=1,2,\dots K$.
Conditioned on $\Gamma_{k-1}$, w.p. at least $1 -  12 n^{-12}$,
$
\SE([\Phat_*, \Phat_{\rot,k}],\P_\rot) \le \zeta_{k}^+
$, i.e., $\Gamma_k$ holds.
\end{lem}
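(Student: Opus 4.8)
The plan is to recognize the $k$-th projection-SVD step as an instance of the ``PCA in data-dependent noise with partial subspace knowledge'' problem and invoke Theorem~\ref{thm_corpca}. Condition on $\Gamma_{k-1}$. By Lemma~\ref{CSlem}, for every $t \in \J_k$ we have $\That_t = \T_t$, the error $\et := \xhat_t - \xt = \lt - \lhat_t + \vt$ satisfies \eqref{etdef0} with $\bm\Psi = \bm\Psi_k$, and $\|\et\|$ is bounded as stated there. Hence the data fed to the $k$-th subspace update, $\lhat_t$ for $t\in\J_k$, can be written $\lhat_t = \lt + \w_t + \bm{z}_t$, where $\w_t := \M_t \lt$ is the $\lt$-dependent part of $-\et$, with $\M_t := -\,\I_{\T_t}(\bm\Psi_{\T_t}{}'\bm\Psi_{\T_t})^{-1}\I_{\T_t}{}'\bm\Psi$ (and $\bm\Psi = \bm\Psi_k$), and $\bm{z}_t := \vt - \I_{\T_t}(\bm\Psi_{\T_t}{}'\bm\Psi_{\T_t})^{-1}\I_{\T_t}{}'\bm\Psi\vt$ is the modeling error. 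Since the algorithm computes $\Phat_{\rot,k}$ as the top left singular vector of $(\I - \Phat_*\Phat_*{}')\Lhat_{t;\alpha}$, which has the same leading eigenvector as $\bm{D}_{obs} = \frac1\alpha\sum_{t\in\J_k}\bm\Phi\lhat_t\lhat_t{}'\bm\Phi$ with $\bm\Phi = \I - \Phat_*\Phat_*{}'$, the output $[\Phat_*,\Phat_{\rot,k}]$ is exactly the estimate $\Phat$ analysed in Theorem~\ref{thm_corpca}.

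Next I would verify the hypotheses of Theorem~\ref{thm_corpca}. Take $\M_t = \M_{2,t}\M_{1,t}$ with $\M_{2,t} = \I_{\T_t}$ and $\M_{1,t} = -(\bm\Psi_{\T_t}{}'\bm\Psi_{\T_t})^{-1}\I_{\T_t}{}'\bm\Psi$; then $\|\M_{2,t}\| \le 1$ and, by Lemma~\ref{outlier-lemma}, $\|\frac1\alpha\sum_{t\in\J_k}\M_{2,t}\M_{2,t}{}'\| \le \rrow \le 0.01 = b_0$, so \eqref{M2t_bnd} holds. Using $\|(\bm\Psi_{\T_t}{}'\bm\Psi_{\T_t})^{-1}\| \le 1.2$ (Lemma~\ref{CSlem}, item~4) and that $\bm\Psi_k\P_*$ is $(\I-\Phat_*\Phat_*{}')\P_*$ projected onto a yet-larger orthogonal complement, one gets $\|\M_{1,t}\P_*\| \le 1.2\zz$, so $q_0 = 2\zz$ works. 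For $q_\rot = \max_t\|\M_{1,t}\P_\rot\|$ the two cases differ: for $k=1$, $\bm\Psi_1 = \I - \Phat_*\Phat_*{}'$, and writing $\P_\rot = \cos\theta\,\P_\ch + \sin\theta\,\P_\new$, the $\P_\ch$ part lies in $\Span(\P_*)$ so $\|\I_{\T_t}{}'\bm\Psi_1\P_\ch\| \le \|\bm\Psi_1\P_\ch\| \le \zz$, whereas $\|\I_{\T_t}{}'\bm\Psi_1\P_\new\| \le \|\I_{\T_t}{}'\P_\new\| + \|\Phat_*{}'\P_\new\| \le 0.1 + \zz$ by the denseness bound \eqref{dense_bnd} and $\P_*\perp\P_\new$; hence $\|\M_{1,t}\P_\rot\| \le 1.2((0.1+\zz)|\sin\theta| + \zz) \le 0.2|\sin\theta|$, so take $q_\rot = 1.2((0.1+\zz)|\sin\theta| + \zz)$. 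For $k>1$, $\bm\Psi_k = \I - [\Phat_*,\Phat_{\rot,k-1}][\Phat_*,\Phat_{\rot,k-1}]{}'$ and $\Gamma_{k-1}$ gives $\|\I_{\T_t}{}'\bm\Psi_k\P_\rot\| \le \|\bm\Psi_k\P_\rot\| = \SE([\Phat_*,\Phat_{\rot,k-1}],\P_\rot) \le \zeta_{k-1}^+$, so take $q_\rot = 1.2\zeta_{k-1}^+ \le 0.2|\sin\theta|$ by Fact~\ref{algebra_lem}. The condition $\zz f \le 0.01|\sin\theta|$ is the standing hypothesis of Theorem~\ref{thm1}, and $\|\bm{z}_t\| \le 2.2\|\vt\|$ together with $\|\E[\bm{z}_t\bm{z}_t{}']\| \le C\|\E[\vt\vt{}']\|$ (again from $\|(\bm\Psi_{\T_t}{}'\bm\Psi_{\T_t})^{-1}\I_{\T_t}{}'\bm\Psi\|\le 1.2$) match the $b_z$ requirements in \eqref{extra_bnds} by the choice of $b_{v,t}$ in Corollary~\ref{gen_xmin} (under Theorem~\ref{thm1} itself, $\|\vt\|$ is even smaller, $\lesssim \zz\sqrt{r\lambda^+}$, so this is immediate); finally $\alpha \ge \alpha_* = C f^2 (r\log n) \ge \alpha_0$ for $C$ large enough (absorbing the numerical constant $\eta$).

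For the probabilistic/independence part I would invoke Remark~\ref{remark_ezero}: although $\bm\Psi_k$, hence $\M_t$, involves the earlier iterates $\Phat_{\rot,1},\dots,\Phat_{\rot,k-1}$, these are functions only of data from the strictly earlier intervals $\J_1,\dots,\J_{k-1}$ and of $\Phat_*$, so the pair $\{\Phat_*,(\Phat_{\rot,1},\dots,\Phat_{\rot,k-1})\}$ is independent of $\{\lt : t\in\J_k\}$ by mutual independence of the $\at$'s, and $\M_t$ itself is independent of $\{\at : t\in\J_k\}$ because it depends only on those iterates and on $\T_t$, which is independent of $\at$. Taking $Z$ to be the tuple of earlier iterates and $\tilde{\mathcal{E}}(Z)$ to be the event $\{\SE([\Phat_*,\Phat_{\rot,m}],\P_\rot)\le\zeta_m^+,\ m=1,\dots,k-1\}$, so that $\Gamma_{k-1} = \estar \cap \tilde{\mathcal{E}}(Z)$, Theorem~\ref{thm_corpca} applies with $\estar$ replaced by $\Gamma_{k-1}$. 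Its item~1 then yields, conditioned on $\Gamma_{k-1}$ and with probability at least $1 - 12n^{-12}$, $\SE([\Phat_*,\Phat_{\rot,k}],\P_\rot) < 0.4 q_\rot + 0.11\zz$, which equals $\zeta_1^+$ when $k=1$ and is at most $0.4\cdot1.2\,\zeta_{k-1}^+ + 0.11\zz = \zeta_k^+$ when $k>1$. Together with $\Gamma_{k-1}$ this is precisely $\Gamma_k$, and the claimed probability follows.

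I expect the main obstacle to be the $k=1$ bound $q_\rot \le 0.2|\sin\theta|$: a crude estimate gives only $\|\M_{1,t}\P_\rot\| \lesssim \|\bm\Psi_1\P_\rot\| = \SE(\Phat_*,\P_\rot) \approx |\sin\theta|$, with no small constant, which is too weak to feed into \eqref{extra_bnds}. The remedy — splitting $\P_\rot$ along $\P_\ch$ and $\P_\new$, using that $\et$ is sparse so that only $\I_{\T_t}{}'\bm\Psi_1\P_\new$ enters, and that this is small by denseness of $\P_\new$ — is exactly where the extra structural assumptions are consumed; carefully propagating the numerical constants through Lemma~\ref{CSlem}, the decomposition of $\bm{z}_t$, and Theorem~\ref{thm_corpca} so that the resulting bound lands precisely on the recursively defined $\zeta_k^+$ is the bulk of the remaining work.
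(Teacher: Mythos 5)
Your proposal is correct and follows essentially the same route as the paper's own proof: condition on $\Gamma_{k-1}$, use Lemma \ref{CSlem} to write $\lhat_t = \lt + \M_t\lt + \bm{z}_t$, verify \eqref{M2t_bnd} via Lemma \ref{outlier-lemma} with $\M_{2,t}=\pm\I_{\T_t}$, compute $q_0 = 2\zz$ and $q_\rot$ separately for $k=1$ (splitting $\P_\rot$ along $\P_\ch$ and $\P_\new$ and invoking \eqref{dense_bnd}) and for $k>1$ (using $\SE([\Phat_*,\Phat_{\rot,k-1}],\P_\rot)\le\zeta_{k-1}^+$), handle the data-dependence via Remark \ref{remark_ezero}, and apply Theorem \ref{thm_corpca} to land on $0.4q_\rot + 0.11\zz = \zeta_k^+$. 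The only differences (taking $Z$ to be the earlier subspace iterates rather than the past $\lhat_t$'s, and absorbing the sign into $\M_{1,t}$ instead of $\M_{2,t}$) are cosmetic.
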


\color{black}
\begin{proof}
Since the noise bound of Theorem \ref{thm1} is much smaller or equal to those assumed by Corollary \ref{gen_xmin}, if we can prove the latter, we would have also proved the former.

Assume that $\Gamma_{k-1}$ holds.
The proof first uses Lemma \ref{CSlem} to get an expression for $\et = \lt- \lhat_t + \vt$ and then applies Theorem \ref{thm_corpca} with the modification given in Remark \ref{remark_ezero}.
%
Using Lemma \ref{CSlem}, for all $t \in \J_k$,
\begin{align*}
\lhat_t = \lt - \et + \vt  &= \lt - \I_{\T_t} (\bm\Psi_{\T_t}{}' \bm\Psi_{\T_t})^{-1} \I_{\T_t}{}' \bm\Psi (\lt + \vt) + \vt \\
&:= \lt - \e_{l,t} - \e_{v,t} + \vt
\end{align*}
where $\bm\Psi =  \I - \Phat_* \Phat_*{}' - \Phat_{\rot,k-1} \Phat_{\rot,k-1}{}'$ with $\Phat_{\rot,0} = [.]$.

In the $k$-th projection-SVD step, we use these $\lhat_t$'s and $\Phat_*$ to get a new estimate of $\P_\rot$ using projection-SVD.
To bound $\SE([\Phat_*, \Phat_{\rot,k}],\P_\rot)$, we apply Theorem \ref{thm_corpca} (Remark \ref{remark_ezero})\footnote{We use Remark \ref{remark_ezero} with $\estar \equiv \Gamma_0$, $Z \equiv \{\lhat_1,\lhat_2, \dots, \lhat_{t_j+(k-1)\alpha-1}\}$, and $\tilde{\mathcal{E}}(Z) = \Gamma_{k-1} \setminus \Gamma_0$.}
with
$\ezero \equiv \Gamma_{k-1}$, $\yt \equiv \lhat_t$, $\wt \equiv -\e_{l,t}$, $\zt \equiv -\e_{v, t} + \vt$, $\alpha \ge \alpha_0 \equiv \alpha_*$, and $\J^\alpha \equiv \J_k$. We can let $\M_{2,t} = -\I_{\T_t}$ which implies $b_0 \equiv \outfracrow^\alpha$ and $\M_{1,t} = (\bm\Psi_{\T_t}{}' \bm\Psi_{\T_t})^{-1} \I_{\T_t}{}' \bm\Psi$. Using the $\outfracrow^\alpha$ bound of Theorem \ref{thm1} and Lemma \ref{outlier-lemma}, the main assumption needed by Theorem \ref{thm_corpca}, \eqref{M2t_bnd}, holds.  
With $\P = \P_j$ satisfying \eqref{ss_ch}, and $\alpha_*$ defined in \eqref{def_alphas}, all the key assumptions of Theorem \ref{thm_corpca} hold. The simpler expression of $\alpha_*$ suffices because we treat $\eta$ as a numerical constant and so $f^2 (r \log n ) > f^2 (r + \log n)$ for large $n,r$.

We now just need to compute $q_0$ and $q_\rot$ for each $k$, ensure that they satisfy \eqref{extra_bnds}, and apply the result.
The computation for $k=1$ is different from the rest. When $k=1$, $\bm\Psi = \I - \Phat_* \Phat_*{}'$. Thus, using item 4 of Lemma \ref{CSlem} and the definition of event $\Gamma_{k-1}$, $\|\M_{1,t}\P_*\| \le 1.2 \zz = q_0$, $q_0 < 2 \zz$, and
\bea
\|\M_{1,t}\P_\rot\| & \le & 1.2  (\| \I_{\T_t}{}' (\I - \Phat_* \Phat_*{}')\P_\new\| |\sin \theta| + \zz) \nn \\ 
& \le & 1.2  (\| \I_{\T_t}{}'\P_\new\| +  \|\Phat_*{}'\P_\new\|) |\sin \theta| + 1.2 \zz \nn \\
& \le & 1.2  ( (0.1 + \zz) |\sin \theta| + \zz  )  = q_\rot. \nn
\eea
The third row follows using \eqref{dense_bnd} and $\|\Phat_*{}'\P_\new\| \le \zz$ (folows by item 3 of Lemma \ref{hatswitch}).
Using $\zz \le \zz f \le 0.01 |\sin \theta|$, clearly $q_\rot < 0.2 |\sin \theta|$. Finally, in this interval, the bound on $b_z$ is satisfied since $b_z = b_{v,t}$ and the expression for $b_{v,t}$ in $\J_1$ is equal to the required upper bound on $b_z$.
Applying Theorem \ref{thm_corpca}, $\SE([\Phat_*, \Phat_{\rot,1}],\P_\rot) \le 0.4 q_\rot + 0.11 \zz = 0.4 \cdot 1.2  ((0.1 + \zz) |\sin \theta| +  \zz) + 0.11 \zz = \zeta_{1}^+$.%

Consider $k>1$. Now $\bm\Psi = \I - \Phat_* \Phat_*{}' - \Phat_{\rot,k-1} \Phat_{\rot,k-1}{}'$. With this, we still have $\|\M_{1,t}\P_*\| \le 1.2 \zz = q_0$ and $q_0 < 2 \zz$. But, to bound $\|\M_{1,t}\P_\rot\|$ we cannot use the approach that worked for $k=1$. The reason is that $\|[\Phat_*, \Phat_{\rot,k-1}]'\P_\new\|$ is not small. 
However, instead, we can now use the fact that $[\Phat_*, \Phat_{\rot,k-1}]$ is a good estimate of $\P_\rot$, with $\SE([\Phat_*,\Phat_{\rot,k-1}], \P_\rot) \le \zeta_{k-1}^+$ (from definition of event $\Gamma_{k-1}$). Thus,
\bea
&\|\M_{1,t}\P_\rot\|  \le  1.2 \SE([\Phat_*,\Phat_{\rot,k-1}], \P_\rot) 
\\
&\le  1.2 \zeta_{k-1}^+ = q_\rot. \nn
\eea
By Fact \ref{algebra_lem}, $q_\rot < 0.2 |\sin \theta|$. Even in this interval, the required bound on $b_z$ holds. Thus, applying Theorem \ref{thm_corpca},
$\SE([\Phat_*, \Phat_{\rot,k}],\P_\rot)
\le 0.4 q_\rot + 0.11 \zz = 0.4 \cdot 1.2 \zeta_{k-1}^+ + 0.11 \zz = \zeta_{k}^+$.
\end{proof}
\color{black}

\begin{lem}[Simple SVD based subspace re-estimation]\label{del_evd}
Under the assumptions of Theorem \ref{thm1} or Corollary \ref{gen_xmin}, the following holds.
Conditioned on $\Gamma_{K}$, w.p. at least $1 -  12 n^{-12}$,
$\SE(\Phat,\P) \le \zz$, i.e., $\Gamma_{K+1}$ holds.
\end{lem}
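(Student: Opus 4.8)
The plan is to recognize the simple-SVD re-estimation step as an instance of plain PCA in data-dependent noise (with \emph{no} partial subspace knowledge available) and to invoke the corresponding guarantee of \cite{pca_dd} directly, after verifying its hypotheses using Lemma \ref{CSlem} and Lemma \ref{outlier-lemma}. Throughout I would condition on $\Gamma_K$. Recall that $\Phat = \SVD_r[\Lhat_{t;\alpha}]$ is computed over $\J_{K+1} = [t_j + K\alpha, t_j + (K+1)\alpha)$; since $t_{j+1} - t_j > (K+3)\alpha$, this interval lies inside $[t_j, t_{j+1})$, so every $\lt$ here equals $\P\at$ with $\P := \P_j$ of rank $r$. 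By items 5--7 of Lemma \ref{CSlem}, applied with $\bm\Psi = \bm\Psi_{K+1} = \I - \Phat_*\Phat_*{}' - \Phat_{\rot,K}\Phat_{\rot,K}{}'$, on $\Gamma_K$ we have $\That_t = \T_t$ and $\lhat_t = \lt + \vt - \et$ with $\et$ given by \eqref{etdef0} and $\|\et\| \le 4.8\sqrt\eta\,\zz\sqrt{r\lambda^+}$ for all $t \in \J_{K+1}$. Splitting off the $\lt$-dependent part of $\et$, I would write $\lhat_t = \lt + \wt + \zt$ with data-dependent noise $\wt := -\I_{\T_t}(\bm\Psi_{\T_t}{}'\bm\Psi_{\T_t})^{-1}\I_{\T_t}{}'\bm\Psi\lt =: \M_t\lt$ and modeling error $\zt := \vt - \I_{\T_t}(\bm\Psi_{\T_t}{}'\bm\Psi_{\T_t})^{-1}\I_{\T_t}{}'\bm\Psi\vt$; from the $\vt$-bounds of Theorem \ref{thm1} (or Corollary \ref{gen_xmin}) and item 4 of Lemma \ref{CSlem}, $\|\zt\| \le C\zz\sqrt{r\lambda^+}$ and $\|\E[\zt\zt{}']\| \le C\zz^2\lambda^+$, of the order needed by \cite{pca_dd}.

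Next I would check the data-dependency structure. Factor $\M_t = \M_{2,t}\M_{1,t}$ with $\M_{2,t} := -\I_{\T_t}$ and $\M_{1,t} := (\bm\Psi_{\T_t}{}'\bm\Psi_{\T_t})^{-1}\I_{\T_t}{}'\bm\Psi$. Then $\|\M_{2,t}\| \le 1$ and, by Lemma \ref{outlier-lemma} and the $\outfracrow^\alpha$ bound of Theorem \ref{thm1}, $\|\tfrac1\alpha\sum_{t\in\J_{K+1}}\M_{2,t}\M_{2,t}{}'\| = \gamma(\J_{K+1}) \le \rrow = 0.01/f^2$, which verifies \eqref{M2t_bnd} with $b_0 = 0.01/f^2$. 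This is precisely where the tighter $0.01/f^2$ bound on $\outfracrow^\alpha$ (rather than merely $0.01$) is needed: unlike in the projection-SVD steps, here the signal eigengap entering Davis--Kahan is only $\lambda^-$, so the resulting bound carries an extra factor $f$ that must be absorbed by $\sqrt{b_0}$. For the error-to-signal ratio, item 4 of Lemma \ref{CSlem} gives $\|(\bm\Psi_{\T_t}{}'\bm\Psi_{\T_t})^{-1}\| \le 1.2$, while $\|\bm\Psi\P\| = \SE([\Phat_*,\Phat_{\rot,K}],\P) \le 2\zz$ (part 3 of Claim \ref{main_claim}, from $\zeta_K^+ \le \zz$ and $\SE(\Phat_*,\P_*)\le\zz$); hence $\max_{t\in\J_{K+1}}\|\M_{1,t}\P\| \le 2.4\zz =: q$, so the error-to-signal ratio $q/\zz = 2.4$ is a numerical constant.

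Finally I would apply the correlated-PCA guarantee of \cite{pca_dd} to $\{\lhat_t\}_{t\in\J_{K+1}}$ with target accuracy $\epsilon := \zz$. Since $q/\epsilon = 2.4$ and $\alpha \ge \alpha_* = C f^2(r\log n)$, the sample-complexity requirement $\alpha \ge (q/\epsilon)^2 f^2(r\log n)$ of \cite{pca_dd} is met. The remaining hypotheses hold because: (i) on $\Gamma_K$, exact support recovery $\That_t = \T_t$ together with the mutual independence of the $\at$'s makes $\{\et\}_{t\in\J_{K+1}}$, hence $\{\lhat_t\}_{t\in\J_{K+1}}$, mutually independent conditioned on $Z := \{\Phat_*,\Phat_{\rot,K},\lhat_1,\dots,\lhat_{t_j + K\alpha - 1}\}$, and $Z$ is a function of data strictly before $t_j + K\alpha$ and hence independent of $\{\lt,\vt,\T_t : t\in\J_{K+1}\}$, so conditioning on $\Gamma_K$ is admissible exactly as in Remark \ref{remark_ezero}; (ii) the $\at$'s are element-wise bounded with identical diagonal covariance of condition number $f$; and (iii) \eqref{M2t_bnd} holds as shown above. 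The guarantee then yields $\SE(\Phat,\P) \le \zz$ with probability at least $1 - 12 n^{-12}$ conditioned on $\Gamma_K$, i.e., $\Gamma_{K+1}$ holds.

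I expect the main obstacle to be matching the external correlated-PCA theorem of \cite{pca_dd} to the present setting cleanly --- confirming that its statement covers the conditioning/independence structure above (the conditional independence of the $\lhat_t$'s over $\J_{K+1}$ rests entirely on $\Gamma_K$, via exact support recovery in Lemma \ref{CSlem}), and checking that the modeling-error term $\zt$ built from $\vt$ together with $\M_t\vt$ meets the noise bounds of \cite{pca_dd}. As elsewhere in the proof, allowing $\vt\ne 0$ and passing from Theorem \ref{thm1} to Corollary \ref{gen_xmin} changes only numerical constants.
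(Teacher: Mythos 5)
Your proposal is correct and follows essentially the same route as the paper's proof: condition on $\Gamma_K$, use Lemma \ref{CSlem} to write $\lhat_t = \lt + \wt + \zt$ with $\wt = \M_{2,t}\M_{1,t}\lt$, $\M_{2,t} = -\I_{\T_t}$, verify \eqref{M2t_bnd}-type bound via Lemma \ref{outlier-lemma}, take $q = 2.4\zz$ from $\SE([\Phat_*,\Phat_{\rot,K}],\P) \le 2\zz$, and apply the correlated-PCA result of \cite{pca_dd} with $\varepsilon_\SE = \zz$. Your observation that the tighter $0.01/f^2$ row-fraction bound is needed precisely here (because the eigengap is only $\lambda^-$, forcing the $3\sqrt{b}\,q f$ condition) matches the paper's reasoning exactly.
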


\color{black}
\begin{proof}
Assume that $\Gamma_K$ holds.
Using Lemma \ref{CSlem}, for all $t \in \J_{K+1}$,
\begin{align*}
\lhat_t = \lt - \et + \vt &= \lt - \I_{\T_t} (\bm\Psi_{\T_t}{}' \bm\Psi_{\T_t})^{-1} \I_{\T_t}{}' \bm\Psi (\lt + \vt) + \vt  \\
&:= \lt -\e_{l,t} - \e_{v,t} + \vt
\end{align*}
where $\bm\Psi = \I - \Phat_* \Phat_*{}' - \Phat_{\rot,K}\Phat_{\rot,K}{}'$. Re-estimating the entire subspace using simple SVD applied to these $\lhat_t$'s
is an instance of correlated-PCA with $\yt \equiv \lhat_t$, $\wt \equiv -\e_{l,t}$ and $\zt \equiv -\e_{v,t} + \vt$. We can apply the following result for correlated-PCA \cite[Theorem 2.13]{pca_dd} to bound $\SE(\Phat,\P)$. Recall $\Phat$ contains the top $r$ eigenvectors of $\sum_{t \in \J_{K+1}} \lhat_t \lhat_t'$. The following is a simplified version of  \cite[Theorem 2.13]{pca_dd}. It follows by upper bounding $\lambda_{z,\P,\P_\perp}$ and $\lambda_{z,rest}^+$ by $\lambda_z^+$ and lower bound $\lambda_{z,\P}^-$ by zero in \cite[Theorem 2.13]{pca_dd}.

\begin{theorem}
For $t \in \J^\alpha$, we are given data vectors $\yt := \lt + \wt + \zt$ where $\wt= \M_t \lt$, $\lt = \P \at$ and $\zt$ is small unstructured noise. Let $\Phat$ be the matrix of top $r$ eigenvectors of $\frac{1}{\alpha} \sum_{t \in \J^\alpha} \yt \yt{}'$.
Assume that $\M_t$ can be decomposed as $\M_t = \M_{2,t} \M_{1,t}$ so that $\|\M_{2,t}\| \le 1$ but $\norm{\frac{1}{\alpha} \sum_t \M_{2,t} \M_{2,t}{}'} \leq b$ for a $b < 1$.
Let $q$ be an upper bound on $\max_{t \in \J^\alpha} \|\M_{1,t} \P\|$. We assume that $\|\zt\| \leq b_z$ and define $\| \ep{[\zt\zt{}']} \| \leq \lambda_z^+ := b_z^2/r$.
For an $\varepsilon_\SE > 0$, define
\begin{align*}
\alpha_0 := C \eta  \max\bigg(& f^2 ( r \log n)    \frac{q^2}{\varepsilon_\SE^2}, \\
&\frac{\lambda_z^+ q^2}{\lambda^- \varepsilon_\SE^2} f r (\log n),
\eta f^2 (r\log9 + 10 \log n) \bigg).
\end{align*}
If $\alpha \ge \alpha_0$, and $3 \sqrt{b} qf + \frac{\lambda_z^+}{\lambda^-} \le 0.46 \varepsilon_\SE,$
then, w.p. at least $1-12n^{-12}$,
$
\SE(\Phat,\P) \le \varepsilon_\SE.
$
\label{thm_pca_dd_Thm_2_13}.
\end{theorem}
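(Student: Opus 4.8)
The plan is to obtain Theorem~\ref{thm_pca_dd_Thm_2_13} as an immediate corollary of \cite[Theorem~2.13]{pca_dd}, which analyzes exactly this PCA-in-data-dependent-noise model but states a sharper (and more cluttered) bound in terms of three refined noise quantities: $\lambda_{z,\P,\P_\perp}$ (the norm of the $\P$--$\P_\perp$ block of $\E[\zt\zt{}']$), $\lambda_{z,rest}^+$ (the norm of its $\P_\perp$-restricted block), and $\lambda_{z,\P}^-$ (the smallest eigenvalue of its $\P$-restricted block). First I would upper-bound both $\lambda_{z,\P,\P_\perp}$ and $\lambda_{z,rest}^+$ by the single crude quantity $\lambda_z^+:=b_z^2/r$ (legitimate since each is the norm of a sub-block or projection of $\E[\zt\zt{}']$, which has norm at most $\lambda_z^+$) and lower-bound $\lambda_{z,\P}^-$ by $0$. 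Substituting these into the $\alpha_0$ formula and the scalar condition of \cite[Theorem~2.13]{pca_dd} yields verbatim the displayed $\alpha_0$ and the requirement $3\sqrt{b}\,qf+\lambda_z^+/\lambda^-\le 0.46\,\varepsilon_\SE$. That specialization \emph{is} the whole proof; the content lives in \cite{pca_dd}.

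For a self-contained argument I would instead reuse the template already behind Theorem~\ref{thm_corpca}, specialized to the case of no partial subspace knowledge (so there is no projection $\bm\Phi$ and one works directly with $\frac1\alpha\sum_{t\in\J^\alpha}\yt\yt{}'$). The steps, in order: (i) expand $\frac1\alpha\sum_t\yt\yt{}' = \frac1\alpha\sum_t\lt\lt{}' + \frac1\alpha\sum_t(\lt\wt{}'+\wt\lt{}') + \frac1\alpha\sum_t\wt\wt{}' + (\text{terms in }\zt)$; (ii) apply the Davis--Kahan $\sin\theta$ theorem \cite{davis_kahan}, which bounds $\SE(\Phat,\P)$ by the norm of the $\P$-vs-$\P_\perp$ block of the perturbation divided by the gap between $\lambda_r(\frac1\alpha\sum_t\lt\lt{}')$ and the norm of the $\P_\perp$-restricted perturbation; (iii) — the crux — bound the time-averaged signal--noise correlation and noise power by splitting $\wt=\mtt\mot\lt$, using $\|\mtt\|\le 1$ and $\|\frac1\alpha\sum_t\mtt\mtt{}'\|\le b$, and invoking the matrix Cauchy--Schwarz inequality (Theorem~\ref{CSmat}) to conclude $\|\frac1\alpha\sum_t\E[\lt\wt{}']\|\lesssim\sqrt b\,q\lambda^+$ and $\|\frac1\alpha\sum_t\E[\wt\wt{}']\|\lesssim\sqrt b\,q^2\lambda^+$, i.e.\ a factor $\sqrt b$ below their worst-case instantaneous values; (iv) control the residual fluctuations of $\frac1\alpha\sum_t\lt\lt{}'$ about $\Lam$ by Vershynin's sub-Gaussian bound \cite{vershynin} (using element-wise boundedness of the $\at$'s) and of $\frac1\alpha\sum_t\lt\wt{}'$, $\frac1\alpha\sum_t\wt\wt{}'$ about their means by matrix Bernstein \cite{tail_bound} (using mutual independence of the $\at$'s over $\J^\alpha$ and their independence from the $\M_t$'s) — these deviations are what force the three terms in $\alpha_0$, the first two scaling like $q^2/\varepsilon_\SE^2$ and the third being the baseline sub-Gaussian covariance sample complexity $\eta f^2(r\log 9+10\log n)$; (v) collect the pieces and impose $3\sqrt b\,qf+\lambda_z^+/\lambda^-\le 0.46\,\varepsilon_\SE$ so the Davis--Kahan denominator stays above a constant multiple of $\lambda^-$, giving $\SE(\Phat,\P)\le\varepsilon_\SE$ with the stated probability.

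The main obstacle is step (iii), the data-dependence $\wt=\M_t\lt$: since $\lt$ and $\wt$ are correlated, the naive bounds $\|\frac1\alpha\sum_t\lt\wt{}'\|,\,\|\frac1\alpha\sum_t\wt\wt{}'\|=\Theta(\lambda_\ch)$ hold no matter how large $\alpha$ is, which renders the Davis--Kahan denominator useless whenever $\lambda_\ch$ is not small; the assumption $\|\frac1\alpha\sum_t\mtt\mtt{}'\|\le b$ together with the Cauchy--Schwarz trick is exactly what supplies the $\sqrt b$ gain and rescues the bound. In the present application $\M_{2,t}=\I_{\T_t}$ and $b=\outfracrow^\alpha$, so this is precisely the ``changing outlier support'' requirement. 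Since Theorem~\ref{thm_pca_dd_Thm_2_13} is only a streamlined restatement of a published result, in the paper I would simply cite \cite{pca_dd} and note the two weakenings, as is done here.
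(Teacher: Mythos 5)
Your first paragraph is exactly the paper's own proof: the paper likewise obtains this statement by citing \cite[Theorem 2.13]{pca_dd} and specializing it, upper bounding $\lambda_{z,\P,\P_\perp}$ and $\lambda_{z,rest}^+$ by $\lambda_z^+$ and lower bounding $\lambda_{z,\P}^-$ by zero, so your proposal is correct and takes essentially the same route. Your optional self-contained sketch is a reasonable outline of the machinery behind the cited result (mirroring the Davis--Kahan plus Cauchy--Schwarz plus Bernstein/Vershynin template of Theorem \ref{thm_corpca}), but the paper does not carry it out and neither need you.
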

%
Apply the above result with $\yt \equiv \lhat_t$, $\wt \equiv -\e_{l,t}$, $\zt \equiv - \e_{v,t} +  \vt$, $\alpha \ge \alpha_*$, and $\J^\alpha \equiv \J_{K+1}$.
From the expression for $\et$, we can let $\M_{2,t} \equiv - \I_{\T_t}$, $\M_{1,t}  \equiv (\bm\Psi_{\T_t}{}' \bm\Psi_{\T_t})^{-1} \I_{\T_t}{}' \bm\Psi$.
Next we compute $q$.
Since $\Gamma_K$ holds, $\SE([\Phat_*,\Phat_{\rot,K}], \P) \le \zz + \zeta_{K}^+ \le 2 \zz$. Thus, $\|\M_{1,t} \P\| \le 1.2 \SE([\Phat_*,\Phat_{\rot,K}], \P) \le 1.2 \cdot  2 \zz = q$. The final desired error is $\varepsilon_\SE = \zz$.
Using Lemma \ref{outlier-lemma} and the $\outfracrow^\alpha$ bound from Theorem \ref{thm1}, the bound on the time-average of $\M_{2,t} \M_{2,t}{}'$ holds with $b \equiv \rrow = \frac{0.01}{f^2} < \frac{0.5^2}{(3 \cdot 2.4 f)^2}$. Also, in this interval $b_z^2 = C r \lambda^+$ and so $\lambda_z = C \lambda^+$ and thus the second term in the $\alpha_0$ expression equals the first term. The third term can of course be ignored in the large $n,r$ regime.
Applying the above result with $\varepsilon_\SE = \zz$, and $q=2.4\zz$, we conclude the following:
for $\alpha \ge \alpha_*$, w.p. at least $1-12n^{-12}$, $\SE(\Phat,\P) \le \zz$.
The simpler expression of $\alpha_*$ suffices because $\eta$ is treated as a numerical constant and so $f^2 (r \log n ) > f^2 (r + \log n)$ for large $n,r$. Also under the assumption of Corollary \ref{gen_xmin}, the second term is dominated by the first term. 
\end{proof}

\color{black}
\begin{proof}[Proof of Claim \ref{main_claim}]
Lemma \ref{p_evd_lem} tells us that $\Pr(\Gamma_k | \Gamma_{k-1}) \ge 1 -  12 n^{-12}$.
Lemma \ref{del_evd} tells us that $\Pr(\Gamma_{K+1} | \Gamma_K) \ge  1-12n^{-12}$.
Thus,
$
\Pr(\Gamma_{K+1} | \Gamma_0) = \Pr(\Gamma_{K+1}, \Gamma_{K}, \dots \Gamma_1 | \Gamma_0)
= \Pr(\Gamma_1 | \Gamma_0) \Pr(\Gamma_2 |\Gamma_1)  \dots \Pr(\Gamma_{K+1}|\Gamma_{K}) \ge (1 - 12 n^{-12})^K (1-12n^{-12}).  
$
since $\Gamma_{K+1} \subseteq \Gamma_K \subseteq \Gamma_{K-1} \dots \subseteq \Gamma_0$. The result follows since $(1 - 12 n^{-12})^K (1-12n^{-12}) \ge 1 - (K+1) 12 n^{-12}$.
\end{proof}

\begin{proof}[Proof of Theorem \ref{thm1} with $\that_j=t_j$]
Define the events $\Gamma_{1,0}:= \{\SE(\Phat_0, \P_0) \le \zz \}$,  $\Gamma_{j,k}:=\Gamma_{j,k-1} \cap \{\SE([\Phat_{j-1},\Phat_{j,\rot,k}]) \le \zeta_{k}^+\}$, for $k=1,2,\dots,K$, $\Gamma_{j,K+1}:=\Gamma_{j,K} \cap \{\SE(\Phat_{j}, \P_{j}) \le \zz \}$ and $\Gamma_{j+1,0}:=\Gamma_{j,K+1} $.
We can state and prove Lemmas \ref{p_evd_lem} and \ref{del_evd} with $\Gamma_{k}$ replaced by $\Gamma_{j,k}$. Then Claim \ref{main_claim} implies that $\Pr(\Gamma_{j,K+1} | \Gamma_{j,0}) \ge 1 - 12n^{-12}$.
Using $\Gamma_{J,K+1} \subseteq \Gamma_{J-1,K+1} \dots \subseteq \Gamma_{1,K+1} \subseteq \Gamma_{1,0}$ and $\Gamma_{j+1,0}:=\Gamma_{j,K+1} $,
$\Pr(\Gamma_{J,K+1} | \Gamma_{1,0}) =\Pr(\Gamma_{J,K+1},\Gamma_{J-1,K+1}, \dots \Gamma_{1,K+1} | \Gamma_{1,0}) = \Pr(\Gamma_{1,K+1} | \Gamma_{1,0}) \Pr(\Gamma_{2,K+1} | \Gamma_{2,0}) \dots\Pr(\Gamma_{J,K+1} | \Gamma_{J,0}) \ge (1 - (K+1) 12 n^{-12})^J \ge 1 - J (K+1) 12 n^{-12} \ge 1 - d n^{-12}$.

Event $\Gamma_{J,K+1}$ implies that $\Gamma_{j,k}$ holds for all $j$ and for all $k$. Thus, all the $\SE$ bounds given in Theorem \ref{thm1} hold. Using Lemma \ref{CSlem}, $\That_t = \T_t$ for all the time intervals of interest, and the bounds on $\|\et\|$ hold. 
\end{proof}

\section{Empirical Evaluation} \label{sims_detail}
In this section we illustrate the superiority of s-ReProCS over existing state of the art methods on synthetic and real data. In particular, we consider the task of background subtraction. All time comparisons are performed on a Desktop Computer with Intel$^{\textsuperscript{\textregistered}}$ Xeon E$3$-$1240$ $8$-core CPU @ $3.50$GHz and $32$GB RAM. And all experiments with synthetic data are averaged over $100$ independent trials. All codes are available at \url{https://github.com/praneethmurthy/ReProCS}.

Similar experiments have been shown in the earlier ReProCS works (original-ReProCS) \cite{rrpcp_perf,rrpcp_tsp,rrpcp_isit15,rrpcp_aistats}.  The purpose of this section is to illustrate that, even though s-ReProCS is much simpler, is provably faster and memory efficient, and provably works under much simpler assumptions, its experimental performance is still similar to that of original-ReProCS. It outperforms existing works for the same classes of videos and simulated data for which original-reprocs outperforms them. 

\subsection{Synthetic Data}
Our first simulation experiment is done to illustrate the advantage of s-ReProCS over existing batch and online RPCA techniques. As explained earlier, because s-ReProCS exploits dynamics (slow subspace change), it is provably able to tolerate a much larger fraction of outliers per row than all the existing techniques without needing uniformly randomly generated support sets. When the number of subspace changes, $J$, is large, it also tolerates a significantly larger fraction of outliers per column. The latter is hard to demonstrate via simulations (making $J$ large will require a very long sequence). Thus we demonstrate only the former. Our second experiment shows results with using an i.i.d. Bernoulli model on support change (which is the model assumed in the other works).

One practical instance where outlier fractions per row can be larger than those per column is in the case of video moving objects that are either occasionally static or slow moving \cite{rrpcp_isit15,rrpcp_aistats}. The outlier support model for our first and second experiments is inspired by this example and the model used in \cite{rrpcp_isit15,rrpcp_aistats}. It models a 1D video consisting of a person/object of length $s$ pacing up and down in a room with frequent stops. The object is static for $\bbeta$ frames at a time and then moves down. It keeps moving down for a period of $\tau$ frames, after which it turns back up and does the same thing in the other direction. We let $\bbeta =  \lceil c_0 \tau \rceil$ for a $c_0<1$. With this model, for any interval of the form $[(k_1-1)\tau+1, k_2 \tau]$ for $k_1,k_2$ integers, the outlier fraction per row is bounded by $c_0$. For any general interval of length $\alpha \ge \tau$, this $\outfracrow^{\alpha}$ is still bounded by $2c_0$ while $\outfraccol$ is bounded by $s/n$.

\begin{sigmodel}\label{mod:moving_object}
Let $\bbeta = \lceil c_0 \tau \rceil$. Assume that the $\Tt$ satisfies the following. For the first $\tau$ frames (downward motion),
\begin{align*}
\mathcal{T}_t = \begin{cases}
[1,\ s], &\quad t \in [1, \beta] \\
[s+1,\ 2s], &\quad t \in [\beta + 1, 2 \beta] \\
\vdots \\
[(1/c_0 - 1)s + 1,\ s/c_0], &\quad t \in [\tau - \beta + 1, \tau] \\
\end{cases}
\end{align*}
for the next $\tau$ frames (upward motion), $\mathcal{T}_t = $
\begin{align*}
\begin{cases}
[(1/c_0 - 1)s + 1,\ s/c_0], &\ \ t \in [\tau + 1,\ \tau + \beta] \\
[(1/c_0 - 2)s + 1,\ (1/c_0 - 1) s], & \ \ t \in [\tau + \beta + 1, \tau + 2\beta] \\
\vdots \\
[1,\ s], & \ \ t \in [2\tau - \beta + 1, 2\tau]. \\
\end{cases}
\end{align*}
Starting at $t=2\tau+1$, the above pattern is repeated every $2\tau$ frames until the end, $t=\tmax$.
\end{sigmodel}
This model is motivated by the model assumed for the guarantees in older works \cite{rrpcp_isit15,rrpcp_aistats}. 
The above model is one practically motivated way to simulate data that is not not generated uniformly at random (or as i.i.d. Bernoulli, which is approximately the same as the uniform model for large $n$). It also provides a way to generate data with a different bounds on outlier fractions per row and per column. The maximum outlier fraction per column is $s/n$. For any time interval of length $\alpha \ge \tau$, the outlier fraction per row is bounded by $2c_0$. Thus, for Theorem \ref{thm1}, with this model, $\rrow = 2c_0/f^2$. By picking $2c_0$ larger than $s/n$ we can ensure larger outlier fractions per row than per column.

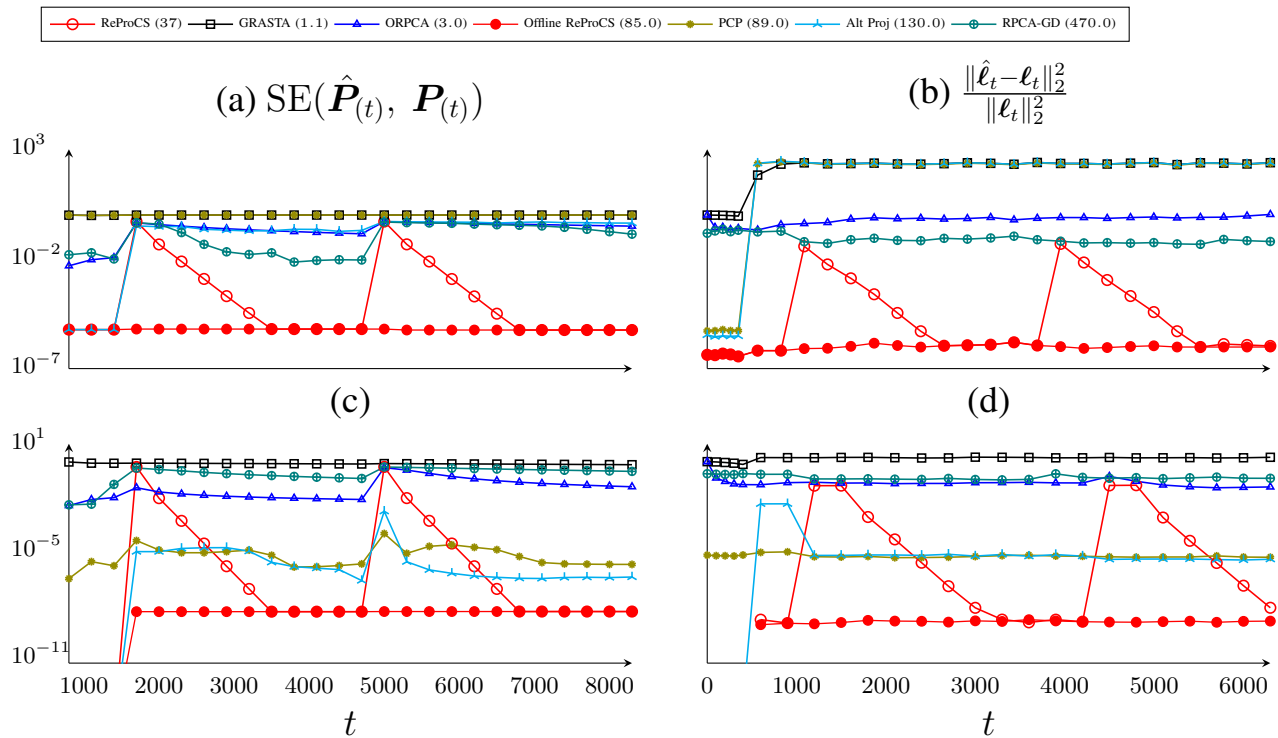
\begin{figure*}[t!]
\begin{center}
\begin{tikzpicture}
    \begin{groupplot}[
        group style={
            group size=2 by 2,
            x descriptions at=edge bottom,
            y descriptions at=edge left,
        },
        my stylecompare,
        %
        ymin=1e-7, ymax=1e3,
        enlargelimits=false,
    ]
       \nextgroupplot[
            my legend style compare,
            ymode=log,
            title={(a) $\SE(\tPhat_{(t)},\ \tP_{(t)})$}
        ]
	        \addplot table[x index = {0}, y index = {1}]{\senfivethousand};
	        \addplot table[x index = {0}, y index = {2}]{\senfivethousand};
	        \addplot table[x index = {0}, y index = {3}]{\senfivethousand};
	        \addplot table[x index = {0}, y index = {4}]{\senfivethousand};
	        \addplot table[x index = {0}, y index = {5}]{\senfivethousand};
	        \addplot table[x index = {0}, y index = {6}]{\senfivethousand};
	        \addplot table[x index = {0}, y index = {7}]{\senfivethousand};

        \nextgroupplot[
            ymode=log,
            title={(b) $\frac{\|\lhatt - \lt \|_2^2}{\|\lt\|_2^2}$}
        ]
	        \addplot table[x index = {0}, y index = {1}]{\ltnfivethousand};
	        \addplot table[x index = {0}, y index = {2}]{\ltnfivethousand};
	        \addplot table[x index = {0}, y index = {3}]{\ltnfivethousand};
	        \addplot table[x index = {0}, y index = {4}]{\ltnfivethousand};
	        \addplot table[x index = {0}, y index = {5}]{\ltnfivethousand};
	        \addplot table[x index = {0}, y index = {6}]{\ltnfivethousand};
	        \addplot table[x index = {0}, y index = {7}]{\ltnfivethousand};
	
       \nextgroupplot[
            ymode=log,
            ymin=5e-12, ymax=1e1,
            xlabel=$t$,
			title={(c)}
        ]
	
	        \addplot table[x index = {0}, y index = {1}]{\sebern};
	        \addplot table[x index = {0}, y index = {2}]{\sebern};
	        \addplot table[x index = {0}, y index = {3}]{\sebern};
	        \addplot table[x index = {0}, y index = {4}]{\sebern};
	        \addplot table[x index = {0}, y index = {5}]{\sebern};
	        \addplot table[x index = {0}, y index = {6}]{\sebern};
	        \addplot table[x index = {0}, y index = {7}]{\sebern};

        \nextgroupplot[
            ymode=log,
            ymin=5e-12, ymax=1e1,
            xlabel=$t$,
            title={(d)}
        ]
	        \addplot table[x index = {0}, y index = {1}]{\ltbern};
	        \addplot table[x index = {0}, y index = {2}]{\ltbern};
	        \addplot table[x index = {0}, y index = {3}]{\ltbern};
	        \addplot table[x index = {0}, y index = {4}]{\ltbern};
	        \addplot table[x index = {0}, y index = {5}]{\ltbern};
	        \addplot table[x index = {0}, y index = {6}]{\ltbern};
	        \addplot table[x index = {0}, y index = {7}]{\ltbern};
    \end{groupplot}
\end{tikzpicture}
\end{center}
\caption{First row ((a), (b)): Illustrate the subspace error and the normalized $\lt$ error for $n=5000$ and outlier supports generated using Model \ref{mod:moving_object}. Both the metrics are plotted every $k\alpha - 1$ time-frames. The results are averaged over $100$ iterations. Second row ((c), (d)) illustrate the subspace error and the normalized $\lt$ error for $n=500$ and Bernoulli outlier support model. They are plotted every $k\alpha - 1$ time-frames. The plots clearly corroborates the nearly-exponential decay of the subspace error as well as the error in $\lt$.}
\label{fig:Comparisonn5000}
\end{figure*}

We compare Algorithm \ref{auto_reprocs} and its offline counterpart with three of the batch methods with provably guarantees discussed in Sec. \ref{discuss} -  PCP \cite{rpca}, AltProj \cite{robpca_nonconvex} and RPCA-GD \cite{rpca_gd} - and with two recently proposed online algorithms known to have good experimental performance and for which code was available - ORPCA \cite{xu_nips2013_1} and GRASTA \cite{grass_undersampled}. The code for all these techniques are cloned from the Low-Rank and Sparse library (\url{https://github.com/andrewssobral/lrslibrary}).

For generating data we used $\tmax = 8000$, $t_\train = 500$, $J=2$, $r = 5$, $f = 16$ with $t_1 = 1000$, $\theta_1 = 30^{\circ}$, $t_2 = 4300$, $\theta_2 = 1.01\theta_1$ and varying $n$. The $\at$'s are zero mean i.i.d uniform random variables generated exactly as described before and so is $\tP_{(t)}$.
We generated a basis matrix $\bm{Q}$ by ortho-normalizing the first $r + 2$ columns of a $n \times n$ i.i.d. standard Gaussian matrix. For $t \in [1,\ t_1)$, we set $\tP_{(t)} = \P_0$ with $\P_0$ being the first $r$ columns of $\bm{Q}$. We let $\P_{1,\new}$ be $(r+1)$-th column of $\bm{Q}$, and rotated it in using \eqref{ss_ch} with $\U_{1} = \I$ and with angle $\theta_1$ to get $\P_1$. We set $\tP_{(t)} = \P_1$ for $t \in [t_1, t_2)$. We set $\P_{2, \new}$ to be the last column of $\bm{Q}$, $\U_2 = \I$, and rotate using angle $\theta_2$ just as done in the first subspace change and finally for $t\in [t_2, \tmax]$ we set $\P_{(t)} = \P_2$.
At all times $t$, we  let $\lt = \tP_{(t)} \at$ with $\at$ being zero mean, i.i.d uniform random variables such that $(\at)_i \sim unif\left(-\sqrt{f},\ \sqrt{f}\right)$ for $i = 1,\ \cdots,\ r - 2$ and $(\at)_{[r-1, r]} \sim unif(-1, 1)$. With this the condition number is $f$, the covariance matrix, $\Lambda = \mathrm{diag}(f,f, \cdots, f, 1, 1)/3$, $\lambda^+ = f/3$, $\lambda_\ch=\lambda^- = 1/3$, and $\eta=3$.
We generate $\T_t$ using Model \ref{mod:moving_object} as follows. For $t \in [t_\train, \tmax]$, we used $s = 0.1n$,  $c_0 = 0.2$ and $\tau = 100$. Thus $\rrow = 0.4 / f^2$.
For $t \in [1,t_\train]$, we used $s = 0.05n$ and $c_0 = 0.02$. This was done to ensure that AltProj (or any other batch technique works well for this period and provides a good initialization).
The magnitudes of the nonzero entries of $\xt$ (outliers) were generated s i.i.d uniform r.v.'s between $x_{\min} = 10$ and $x_{\max} = 25$.  


We implemented Algorithm \ref{auto_reprocs} for s-ReProCS (with initialization using AltProj) with $\alpha = C f^2 r \log n = 500$, $K = \lceil -0.8\log(0.9 \zz) \rceil = 5$, $\omega_{supp} =x_{\min}/2$, and $\omega_{evals} = 0.0025 \lambda^-$. We initialized using AltProj applied to $\Y_{[1,t_\train]}$. For the batch methods used in the comparisons -- PCP, AltProj and  RPCA-GD, we implement the algorithms on $\Y_{[1, t]}$ every $t= t_\train + k\alpha - 1$ frames. Further, we set the regularization parameter for PCP $\lambda = 1/\sqrt{n}$ in accordance with \cite{rpca}. The other known parameters, $r$ for Alt-Proj, outlier-fraction for RPCA-GD, are set using the true values. For online methods we implement the algorithms without modifications. The regularization parameter for ORPCA was set as with $\lambda_1 = 1 / \sqrt{n}$ and $\lambda_2 = 1 / \sqrt{d}$ according to \cite{xu_nips2013_1}.
We plot the subspace error and the normalized error of $\lt$ over time in Fig. \ref{fig:Comparisonn5000}(a) and \ref{fig:Comparisonn5000}(b) for $n=5000$. We display the time-averaged error for other values of $n$ in Table \ref{tab:results1}. This table also contains the time comparisons.

\begin{table*}[t!]
\centering
\caption{\small{Average subspace error $\SE(\tPhat_{(t)},\ \tP_{(t)})$ and time comparison for different values of signal size $n$. The values in brackets denote average time taken per frame (-- indicates that the algorithm does not work).}}
\resizebox{\linewidth}{!}{
\begin{tabular}{@{}llllllll@{}} \toprule
& \textbf{ReProCS} & GRASTA & ORPCA & \textbf{Offline ReProCS} & PCP & AltProj & RPCA-GD \\ 
\midrule
$n = 500$ (in $10^{-4}s$) & $\mathbf{0.066}$ ($\mathbf{3.1}$) & $0.996$ ($2.8$) & $0.320$ ($10$) & $\mathbf{8.25 \times 10^{-5}}$ ($\mathbf{6.3}$) & $1.00$ ($51$) & $0.176$ ($104$) & $0.215$ ($454$) \\ 
$n = 500$, Bern. (in $10^{-4}s$) & $\mathbf{0.044}$ ($\mathbf{4.8}$) & $0.747$ ($1.9$) & $0.078$ ($1.8$) & $\mathbf{3.9 \times 10^{-7}}$ ($\mathbf{9.2}$) & $1.2 \times 10^{-4}$ ($395$) & $0.0001$ ($32$) &$0.303$ ($329$) \\
$n = 5000$ (in $10^{-2}s$) & $\mathbf{0.048}$ ($\mathbf{3.7}$) & $0.999$ ($0.11$) & $0.322$ ($0.30$) & $\mathbf{6.05 \times 10^{-5}}$ ($\mathbf{8.5}$) & $0.999$ ($8.9$) & $0.354$ ($13.0$) & $0.223$ ($47.0$) \\
$n = 10,000$ (in $10^{-2}s$)  & $\mathbf{0.090}$ ($\mathbf{15.6}$) & $0.999$ ($0.25$) &  $0.3235$ ($0.68$) & $\mathbf{0.0006}$ ($\mathbf{36.8}$) & -- & -- & -- \\
\hline
\end{tabular}
}
\label{tab:results1}
\end{table*}

As can be seen, s-ReProCS outperforms all the other methods and offline s-ReProCS significantly outperforms all the other methods for this experiment. The reason is that the outlier fraction per row are quite large, but s-ReProCS exploits slow subspace change. In principle, even GRASTA exploits slow subspace change, however, it uses approximate methods for computing the SVD and does not use projection-SVD and hence it fails. s-ReProCS and offline s-ReProCS are faster than all the batch methods especially for large $n$. In fact when $n=10000$, the batch methods are out of memory and cannot work, while s-ReProCS still can. But s-ReProCS is slower than GRASTA and ORPCA.

{\em Comparison with other algorithms - random outlier support using the i.i.d. Bernoulli model. }
We generated data exactly as described above with the following change: $\T_t$ was now generated as i.i.d. Bernoulli with probability of any index $i$ being in $\cup_{t\in[1,n]} \T_t$ being $\rho_s = 0.02$ for the first $t_\train$ frames and $\rho_s = 0.2$ for the subsequent data. Notice that under the Bernoulli model, $\rrow = \rcol = \rho_s$. We used $n=500$. We show the results in Fig. \ref{fig:Comparisonn5000}(c) and \ref{fig:Comparisonn5000}(d).
For this experiment, the batch methods PCP and AltProj have good performance, that is better than s-ReProCS at most time instants. Offline s-ReProCS still outperforms all the other methods.

\begin{figure*}[t!]
\centering
\resizebox{.9\linewidth}{!}{
\begin{tabular}{@{}c@{}c@{}c@{}c@{}c@{}c@{}}
\tiny{Original} & \tiny{{\bf s-ReProCS}($\bm{16.5}${\bf ms})} & \tiny{AltProj ($26.0$ms)} & \tiny{RPCA-GD($29.5$ms)} & \tiny{GRASTA ($2.5$ms)} & \tiny{PCP ($44.6$ms)} \\
	\includegraphics[width=0.11\linewidth, height=1.4cm, trim={2cm, 0cm, 2cm, 0cm}, clip]{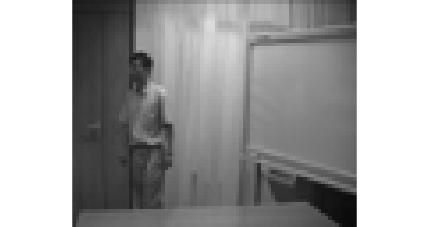}
&
	\includegraphics[width=0.11\linewidth, height=1.4cm, trim={1.7cm, 0cm, 1.7cm, 0cm}, clip]{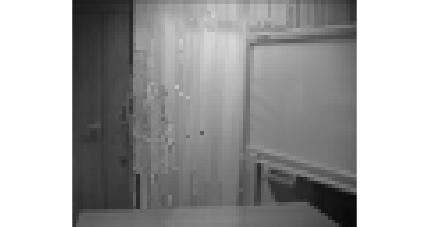}
&
	\includegraphics[width=0.11\linewidth, height=1.4cm, trim={1.8cm, 0cm, 1.8cm, 0cm}, clip]{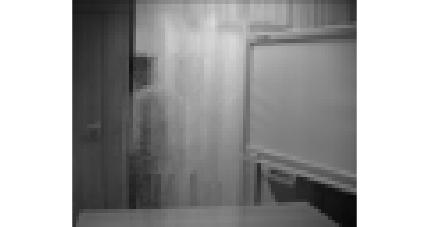}
&
	\includegraphics[width=0.11\linewidth, height=1.4cm, trim={1.8cm, 0cm, 1.8cm, 0cm}, clip]{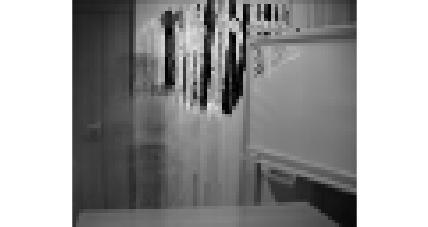}
&
	\includegraphics[width=0.11\linewidth, height=1.4cm, trim={1.2cm, 0cm, 1.25cm, 0cm}, clip]{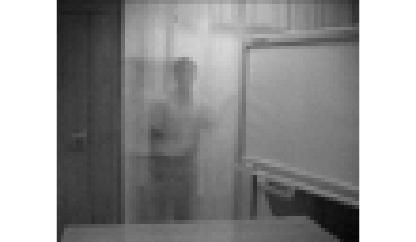}
&
	\includegraphics[width=0.11\linewidth, height=1.4cm, trim={1.4cm, 0cm, 1.4cm, 0cm}, clip]{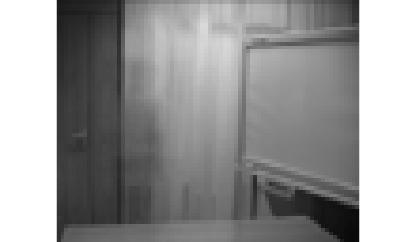}
\\    \newline
	\includegraphics[width=0.11\linewidth, height=1.4cm, trim={2cm, 0cm, 2cm, 0cm}, clip]{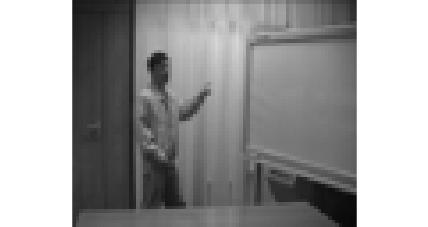}
&
	\includegraphics[width=0.11\linewidth, height=1.4cm, trim={1.7cm, 0cm, 1.7cm, 0cm}, clip]{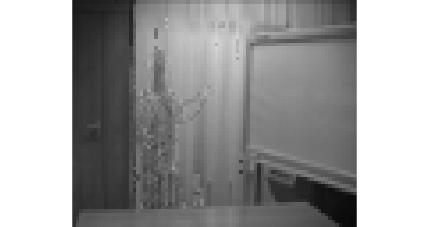}
&
	\includegraphics[width=0.11\linewidth, height=1.4cm, trim={1.2cm, 0cm, 1.2cm, 0cm}, clip]{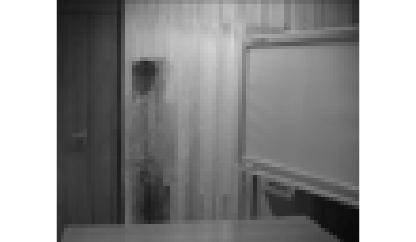}
&
	\includegraphics[width=0.11\linewidth, height=1.4cm, trim={1.8cm, 0cm, 1.8cm, 0cm}, clip]{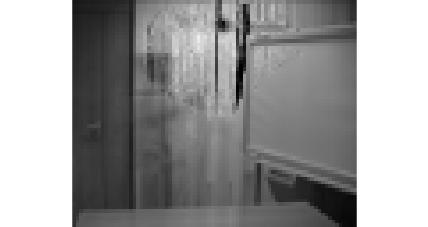}
&
	\includegraphics[width=0.11\linewidth, height=1.4cm, trim={1.2cm, 0cm, 1.25cm, 0cm}, clip]{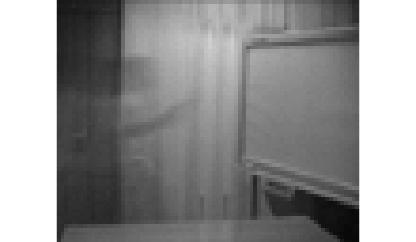}
&
	\includegraphics[width=0.11\linewidth, height=1.4cm, trim={1.4cm, 0cm, 1.4cm, 0cm}, clip]{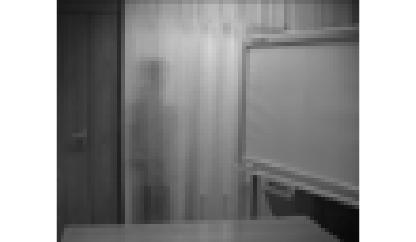}
\\    \newline
\tiny{Original} & \tiny{{\bf s-ReProCS}($\bm{85.4}${\bf ms})} & \tiny{AltProj($95.7$ms)} & \tiny{RPCA-GD($122.5$ms)} & \tiny{GRASTA ($22.6$ms)}  & \tiny{PCP ($318.3$ms)} \\
	\includegraphics[width=0.11\linewidth, height=1.4cm, trim={.7cm, 0cm, .7cm, 0cm}, clip]{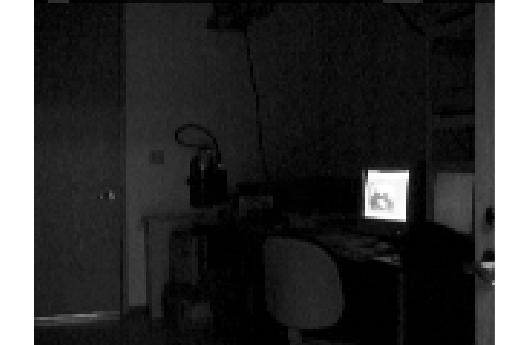}
&
	\includegraphics[width=0.11\linewidth, height=1.4cm, trim={.7cm, 0cm, .7cm, 0cm}, clip]{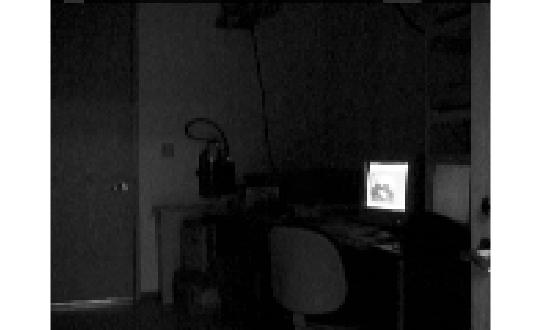}
&
	\includegraphics[width=0.11\linewidth, height=1.4cm, trim={1.5cm, 0cm, 1.5cm, 0cm}, clip]{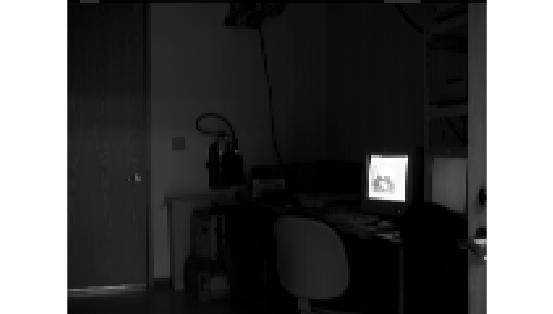}
&
	\includegraphics[width=0.11\linewidth, height=1.4cm, trim={1.4cm, 0cm, 1.4cm, 0cm}, clip]{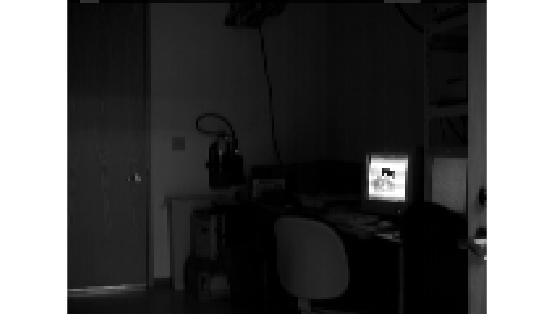}
&
	\includegraphics[width=0.11\linewidth, height=1.4cm, trim={.5cm, 0cm, .5cm, 0cm}, clip]{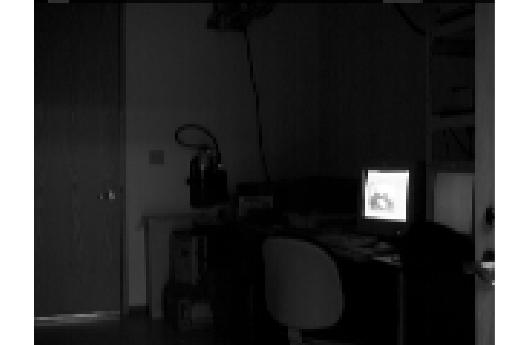}
&
	\includegraphics[width=0.11\linewidth, height=1.4cm, trim={.8cm, 0cm, .8cm, 0cm}, clip]{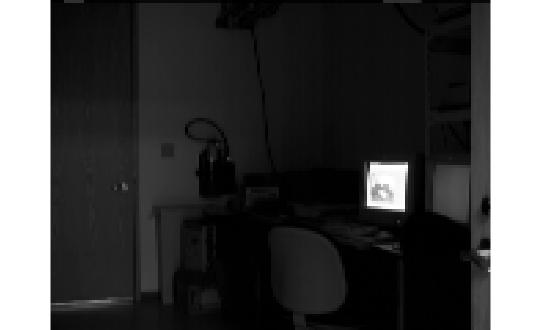}
\\    \newline
		\includegraphics[width=0.11\linewidth, height=1.4cm, trim={1.3cm, 0cm, 1.3cm, 0cm}, clip]{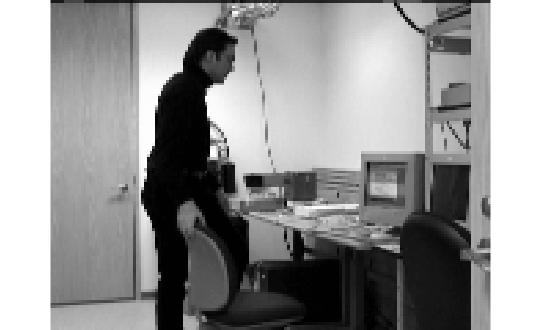}
&
	\includegraphics[width=0.11\linewidth, height=1.4cm, trim={1.3cm, 0cm, 1.3cm, 0cm}, clip]{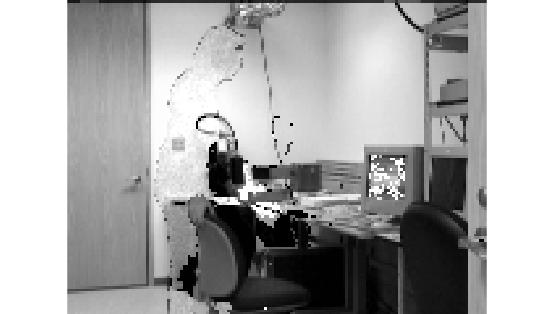}
&
	\includegraphics[width=0.11\linewidth, height=1.4cm, trim={1.5cm, 0cm, 1.5cm, 0cm}, clip]{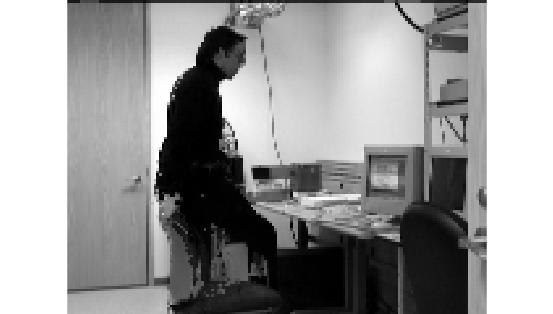}
&
	\includegraphics[width=0.11\linewidth, height=1.4cm, trim={.6cm, 0cm, .6cm, 0cm}, clip]{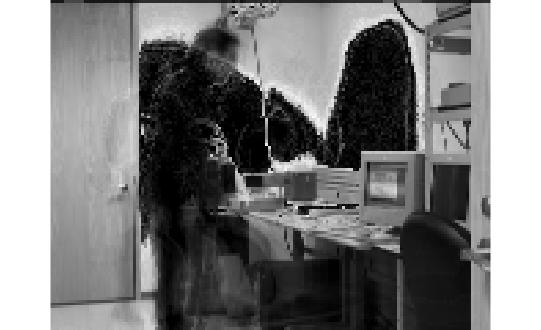}
&
	\includegraphics[width=0.11\linewidth, height=1.4cm, trim={.5cm, 0cm, .5cm, 0cm}, clip]{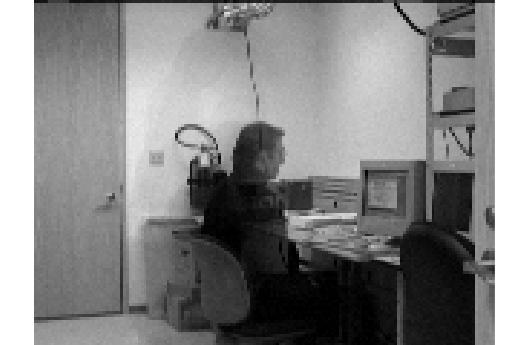}
&
	\includegraphics[width=0.11\linewidth, height=1.4cm, trim={.8cm, 0cm, .8cm, 0cm}, clip]{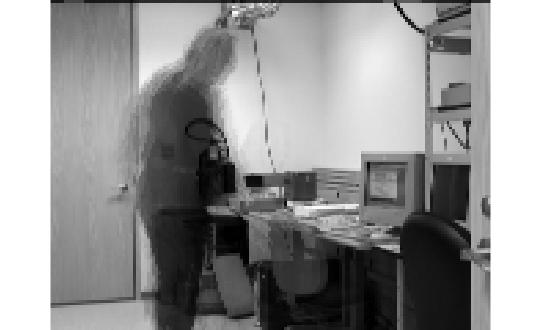}
\\    \newline
\tiny{Original} & \tiny{{\bf s-ReProCS}($\bm{72.5}${\bf ms})} & \tiny{AltProj ($133.1$ms)} & \tiny{RPCA-GD($113.6$ms)} & \tiny{GRASTA ($18.9$ms)}  & \tiny{PCP ($240.7$ms)} \\
	\includegraphics[width=0.11\linewidth, height=1.4cm, trim={1cm, 0cm, 1cm, 0cm}, clip]{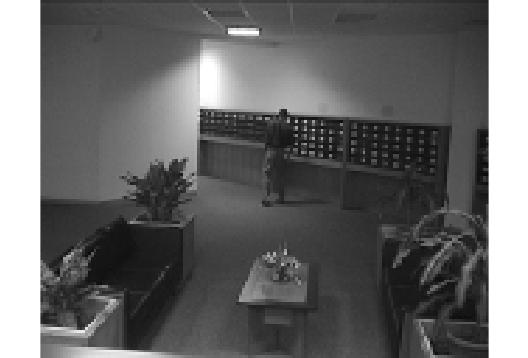}
&
	\includegraphics[width=0.11\linewidth, height=1.4cm, trim={1cm, 0cm, 1cm, 0cm}, clip]{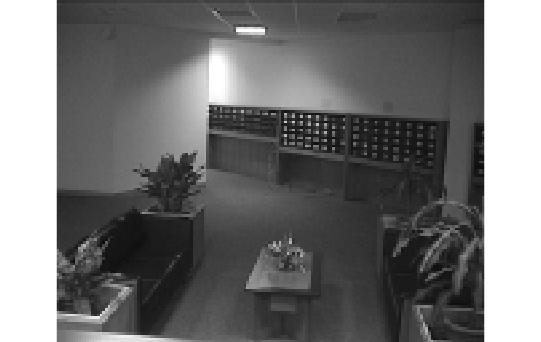}
&
	\includegraphics[width=0.11\linewidth, height=1.4cm, trim={1cm, 0cm, 1cm, 0cm}, clip]{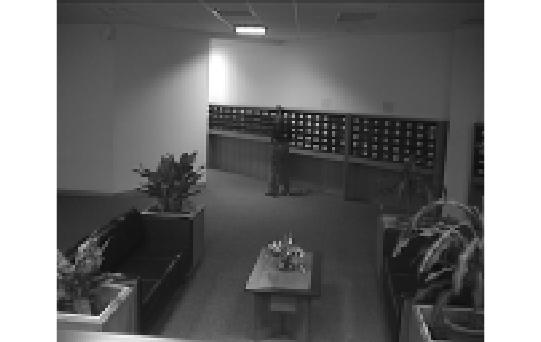}
&
	\includegraphics[width=0.11\linewidth, height=1.4cm, trim={1cm, 0cm, 1cm, 0cm}, clip]{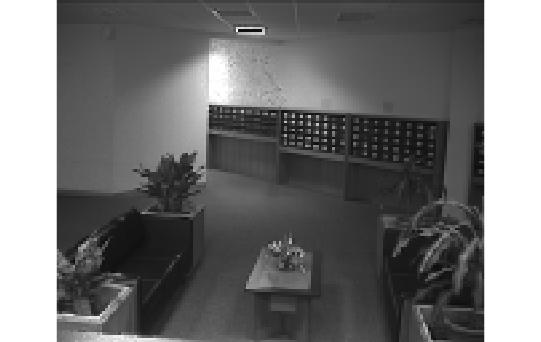}
&
	\includegraphics[width=0.11\linewidth, height=1.4cm, trim={1cm, 0cm, 1cm, 0cm}, clip]{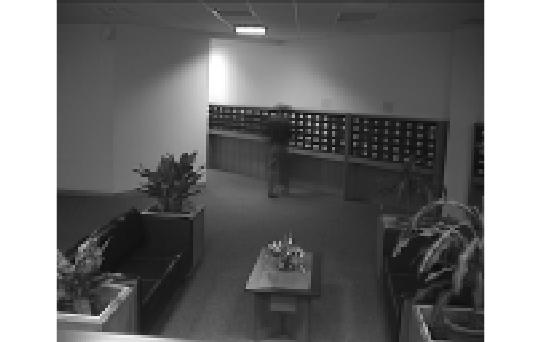}
&
	\includegraphics[width=0.11\linewidth, height=1.4cm, trim={1cm, 0cm, 1cm, 0cm}, clip]{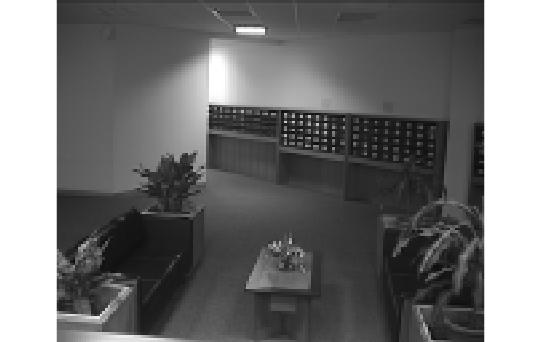}
\\    \newline

	\includegraphics[width=0.11\linewidth, height=1.4cm, trim={1.5cm, 0cm, 1.5cm, 0cm}, clip]{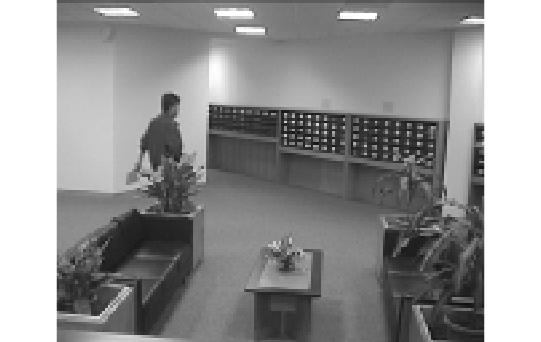}
&
	\includegraphics[width=0.11\linewidth, height=1.4cm, trim={1cm, 0cm, 1cm, 0cm}, clip]{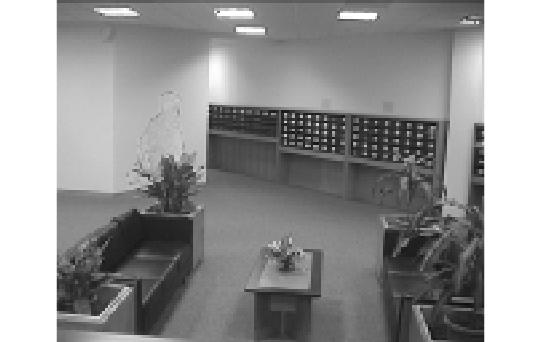}
&
	\includegraphics[width=0.11\linewidth, height=1.4cm, trim={1.45cm, 0cm, 1.5cm, 0cm}, clip]{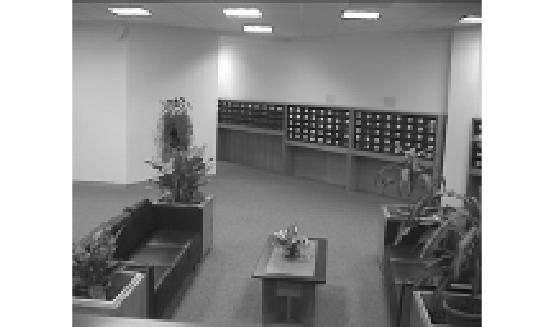}
&
	\includegraphics[width=0.11\linewidth, height=1.4cm, trim={1.4cm, 0cm, 1.4cm, 0cm}, clip]{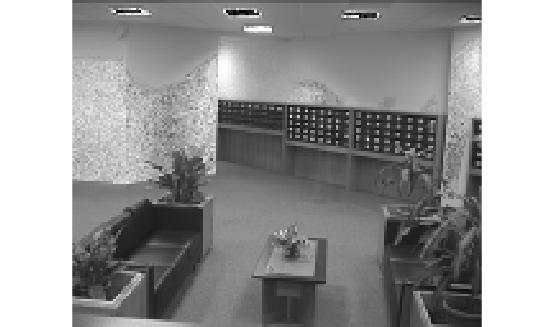}
&
	\includegraphics[width=0.11\linewidth, height=1.4cm, trim={1.3cm, 0cm, 1.5cm, 0cm}, clip]{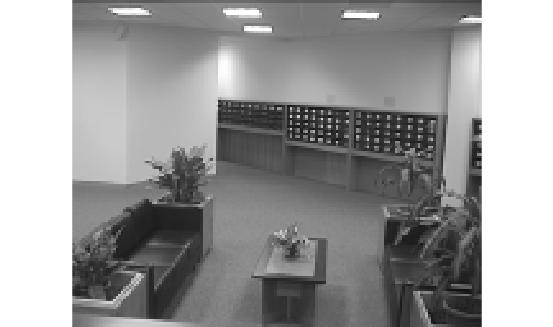}
&
	\includegraphics[width=0.11\linewidth, height=1.4cm, trim={.3cm, 0cm, .3cm, 0cm}, clip]{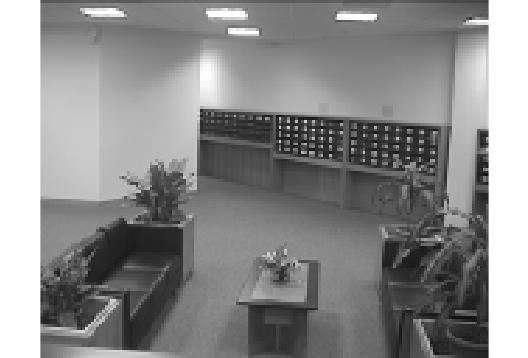}
\end{tabular}
}
\caption{{Comparison of background recovery performance is Foreground-Background Separation tasks for \texttt{MR} (first two rows), \texttt{SL} (middle two rows) and \texttt{LB} (last two rows) sequences (first two rows). The recovered background images are shown at $t = t_\train + 140, 630$ for MR, $t = t_\train + 200, 999$ for SL, and $t = t_\train + 260, 610$ for LB. Notice that for the LB sequence, all algorithms work fairly well. In the MR sequence, since the s-ReProCS is able to tolerate larger $\outfracrow$, it is able to completely remove the person. Further, only s-ReProCS background does not contain the person or even his shadow. All others do. Finally, in the SL sequence, it is demonstrated that the changing subspace model is much more appropriate for long sequences since only s-ReProCS and GRASTA are able to recognize that the background has changed. GRASTA contains some artifacts, but s-ReProCS is able to clearly isolate the person. The time taken per frame (in milliseconds) is shown in parentheses above the respective video sequence. In all the videos, notice that s-ReProCS is also faster than all algorithms with the exception of GRASTA which only works for the lobby sequence that involves very little background changes.}}
\label{fig:video_res}
\end{figure*}

\subsection{Real Data: Background Subtraction}
In this section we provide simulation results for on real videos on three benchmark datasets. For all the sequences, to implement s-ReProCS, we obtained an estimate using the AltProj algorithm. For the initialization we set $r = 40$ and the other parameters for the proposed algorithm, were set as follows. We used $\alpha = 60$, $K = 3$, $\xi_t = \|\bm{\Psi} \hat{\bm{\ell}}_{t-1}\|_2$ and $\lthres = 0.0011\lambda^-$. We found that these parameters work for most videos that we verified our algorithm on. For a more detailed empirical evaluation on real world data-sets, please see \cite{rrpcp_review}. The other state-of-the-art algorithms were implemented using the default setting. In algorithms where we are required to provide an esimate of the rank, we used $r = 40$ consistently. Additionally, for RPCA-GD we set the corruption fraction, $\alpha = 0.2$ as described in the paper. We must also mention that the performance of s-ReProCS w.r.t. original-ReProCS is very similar, but is provably fast, and needs fewer assumptions. Again, a more detailed comparison is presented in \cite{rrpcp_review}.

\texttt{Meeting Room (MR) dataset}: The meeting room sequence is a set of $1964$ images of resolution $64 \times 80$. The first $1755$ frames consists of outlier-free data and so we only consider the last $1209$ frames. Here and below, we use the first $400$ ``noisy'' frames as the training data and the algorithm parameters are set as mentioned before. This is a challenging video sequence because the color of the person and the color of the curtain are hard to distinguish. s-ReProCS algorithm is able to perform the separation at around $43$ frames per second. The recovered background images are shown in the first two rows of Fig. \ref{fig:video_res}.

\texttt{Switch Light (SL) dataset}: This dataset contains $2100$ images of resolution $120 \times 160$. The first $770$ frames are outlier free. This is a challenging sequence because there are drastic changes in the subspace as indicated in the last two rows of Fig. \ref{fig:video_res}. This causes all the batch techniques to fail. For this sequence, s-ReProCS achieves a ``test'' processing rate of $16$ frames-per-second. The recovered background images are shown in the middle two rows of Fig. \ref{fig:video_res}.

\texttt{Lobby (LB) dataset}: This dataset contains $1555$ images of resolution $128 \times 160$. The first $341$ frames are outlier free. This is a challenging sequence, as the background changes often due to illumination changes, and there are multiple objects in the foreground to detect and subtract. For this sequence, s-ReProCS achieves a ``test'' processing rate of $12$ frames-per-second. The images are shown in the last two rows of Fig. \ref{fig:video_res}.

\section{Conclusions and Future Work} \label{conclude}
We obtained the first complete guarantee for any online, streaming or dynamic RPCA algorithm that holds under weakened versions of standard RPCA assumptions, slow subspace change, and outlier magnitudes are either large or very small. Our guarantee implies that, by exploiting these extra assumptions, one can significantly weaken the required bound on outlier fractions per row. This has many important practical implications especially for video analytics. 
We analyzed a simple algorithm based on the Recursive Projected Compressive Sensing (ReProCS) framework introduced in \cite{rrpcp_perf}. The algorithm itself is simpler than other previously studied ReProCS-based methods, it is provably faster, and has near-optimal memory complexity. Moreover, our guarantee removes all the strong assumptions made by the previous two guarantees for ReProCS-based methods.

As described earlier, our current result still has limitations, some of which can be removed with a little more work. For example, it assumes a very simple model on subspace change in which only one direction can change at any given change time. Of course the changing direction could be different at different change times, and hence over a long period, the entire subspace could change. In follow-up work \cite{rrpcp_icml_trans_it}, we have studied another algorithm that removes this limitation. 
Another issue that we would like to study is whether the lower bound on outlier magnitudes can be relaxed further if we use the stronger assumption on outlier fractions per row (assume they are of order $1/r$). It may be possible to do this by borrowing the AltProj \cite{robpca_nonconvex} proof idea.

A question of practical and theoretical interest is to develop a streaming version of simple-ReProCS for dynamic RPCA. A preprint that studies a streaming algorithm for standard RPCA but only for the restrictive $\rmat=r=1$ setting is \cite{streaming_rpca}.
By streaming we mean that the algorithm makes only one pass through the data and needs storage of order exactly $nr$. Simple-ReProCS needs only a little more storage than this, however, it makes multiple passes through the data in the SVD steps. Algorithmically, streaming ReProCS is easy to develop: one can replace the projection SVD and SVD steps in the subspace update by their streaming versions, e.g., block stochastic power method. However, in order to prove that this still works (with maybe an extra factor of $\log n$ in the delay), one would need to analyze the block stochastic power method for the problems of PCA in data-dependent noise, and for its extension that assumes availability of partial subspace knowledge. %
%
Finally, as explained in \cite{rrpcp_isit15}, any guarantee for dynamic RPCA also provides a guarantee for dynamic Matrix Completion (MC) as an almost direct corollary. The reason is that MC can be interpreted as RPCA with outlier supports $\T_t$ being known. 

\appendices

\renewcommand{\thetheorem}{\thesection.\arabic{theorem}}

\section{Proof of Theorem \ref{thm1} or Corollary \ref{gen_xmin} without assuming $t_j$ known} \label{proof_auto_thm1}
The key results needed for this and later proofs -- Cauchy-Schwarz for sums of matrices, matrix Bernstein, and Vershynin's sub-Gaussian result -- are summarized in the last appendix, Appendix \ref{sec:LAandProb}. For a summary of notation used for this and later proofs, please see Table \ref{tab:not2}.

Here we prove Theorem \ref{thm1} in the general case. The main idea is explained in Sec. \ref{det_works}. Define
\begin{align*}
\that_{j-1,fin}: =  \that_{j-1} + K \alpha + \alphadel-1, \\ 
 t_{j,*}=  \that_{j-1,fin} + \left\lceil \frac{t_j - \that_{j-1,fin}}{\alpha} \right\rceil \alpha
\end{align*}
Thus, $\that_{j-1,fin}$ is the time at which the $(j-1)$-th subspace update is complete; whp, this occurs before $t_j$. Under this assumption, $t_{j,*}$ is such that $t_j$ lies in the interval $[t_{j,*}-\alpha+1,t_{j,*}]$.
Recall from the algorithm that we increment $j$ to $j+1$ at $t= \that_j+K\alpha+\alphadel:= \that_{j,fin}$. Thus, for $t \in  [t_j, \that_{j,fin})$, $\bphi =  \I -  \Phat_{\ast} \Phat_{\ast}{}'$, while for $t \in  [\that_{j,fin}, t_{j+1})$, $\bphi =  \I -  \Phat \Phat{}'$.

\begin{definition} \label{def:thm_unknown} Define the events
\ben
\item $\mathrm{Det0}:= \{\that_j = t_{j,*} \} = \{\lambda_{\max}(\frac{1}{\alpha} \sum_{t= t_{j,*}-\alpha+1}^{t_{j,*}} (\I - \Phat_{\ast} \Phat_{\ast}{}') \lhat_t \lhat_t'(\I - \Phat_{\ast} \Phat_{\ast}{}')) > \lthres\}$ and
\\ $\mathrm{Det1}:= \{\that_j = t_{j,*} + \alpha\} = \{\lambda_{\max}(\frac{1}{\alpha} \sum_{t= t_{j,*} +1}^{t_{j,*}+\alpha} (\I - \Phat_{\ast} \Phat_{\ast}{}') \lhat_t \lhat_t'(\I - \Phat_{\ast} \Phat_{\ast}{}')) > \lthres\} $,
\item $\mathrm{Proj\SVD}:=\cap_{k=1}^K  \mathrm{Proj\SVD}_k$ where $\mathrm{Proj\SVD}_k:= \{\SE([\Phat_{\ast},\Phat_{\rot,k}]) \le \zeta_{k}^+\}$,

\item $\mathrm{Del}:=\{\SE(\Phat, \P) \le \zz \}$,
\item $\mathrm{NoFalseDets}:= \{\text{for all $\J^\alpha \subseteq [\that_{j,fin}, t_{j+1})$, } \\ \lambda_{\max}(\frac{1}{\alpha} \sum_{t \in \J^\alpha} (\I - \Phat \Phat{}') \lhat_t \lhat_t'(\I - \Phat \Phat{}')) \le \lthres\}$
\item $\Gamma_{0,\ed}:= \{\SE(\Phat_*, \P_*) \le \zz \}$,
\item  $\Gamma_{j,\ed}:= \Gamma_{j-1,\ed} \cap
\big( (\mathrm{Det0} \cap \mathrm{Proj\SVD} \cap \mathrm{Del} \cap \mathrm{NoFalseDets}) \cup
(\overline{\mathrm{Det0}} \cap \mathrm{Det1} \cap \mathrm{Proj\SVD} \cap \mathrm{Del} \cap \mathrm{NoFalseDets}) \big)$.
\een
\end{definition}
Let $p_0$ denote the probability that, conditioned on $\Gamma_{j-1,\ed}$, the change got detected at $t=t_{j,*}$, i.e., let
\[
p_0:= \Pr(\mathrm{Det0}|\Gamma_{j-1,\ed}).
\]
Thus, $\Pr(\overline{\mathrm{Det0}}|\Gamma_{j-1,\ed}) = 1- p_0$. It is not easy to bound $p_0$. However, as we will see, this will not be needed.

Assume that $\Gamma_{j-1,\ed} \cap \overline{\mathrm{Det0}}$ holds. Consider the interval $\J^\alpha: = [t_{j,*}, t_{j,*}+\alpha)$. This interval starts at or after $t_j$, so, for all $t$ in this interval, the subspace has changed. For this interval, $\bpsi = \bphi  = \I - \Phat_{\ast} \Phat_{\ast}{}'$. 
Applying the last item of Theorem \ref{thm_corpca}, w.p. at least $1-12n^{-12}$,
\begin{align*}
&\lambda_{\max} \left(\frac{1}{\alpha} \sum_{t \in \J^\alpha} \bphi \lhat_t \lhat_t'\bphi \right)  \\ 
&\ge (0.97 \sin^2 \theta - 0.4 q_\rot |\sin \theta| - 0.15 \zz |\sin \theta|) \lambda_\ch
\end{align*}
where $q_\rot$ is the bound $\|(\bpsi_{\That_t}{}'\bpsi_{\That_t})^{-1} \I_{\That_t}{}' \bpsi \P_\rot\|$. Theorem \ref{thm_corpca} is applicable for the reasons given in the proof of Lemma \ref{p_evd_lem}.
Proceeding as in the proof of Lemma \ref{p_evd_lem} for $k=1$, we get that $q_\rot = 1.2( (0.1 + \zz) |\sin \theta| + \zz)$. Thus, using the bound on $\zz$, we can conclude that, w.p. at least $1-12n^{-12}$,
\begin{align*}
\lambda_{\max} \left(\frac{1}{\alpha} \sum_{t \in \J^\alpha} \bphi \lhat_t \lhat_t'\bphi \right) &\ge 0.91 \sin^2 \theta \lambda_\ch \\ 
&\ge 0.9 \sin^2 \theta \lambda^- >  \lthres
\end{align*}
and thus $\that_j = t_{j,*} + \alpha$. This follows since $\lthres = 5 \zz^2 \lambda^+ = 5 \zz^2 f \lambda^- \le  5 \zz^2 f^2 \lambda^- \le  5 (0.01 \min_j \SE(\P_{j-1},\P_j))^2 \lambda^-$ and $\sin \theta = \SE(\P_{j-1},\P_j)$.
In other words,
\[
\Pr(\mathrm{Det1} | \Gamma_{j-1,\ed} \cap \overline{\mathrm{Det0}}) \ge 1 - 12n^{-12}.
\]

Conditioned on $\Gamma_{j-1,\ed} \cap \overline{\mathrm{Det0}} \cap \mathrm{Det1}$, the first projection-SVD step is done at $t= \that_j + \alpha = t_{j,*} + 2\alpha$ and so on. We can state and prove Lemma \ref{p_evd_lem} with $\Gamma_{k}$ replaced by $\Gamma_{j,\ed} \cap \overline{\mathrm{Det0}} \cap \mathrm{Det1} \cap \mathrm{Proj\SVD}_1 \cap \mathrm{Proj\SVD}_2 \dots \mathrm{Proj\SVD}_k$ and with the $k$-th projection-SVD interval being $\J_k:=[\that_j+(k-1)\alpha, \that_j + k \alpha)$. We can state and prove a similarly changed version of Lemma \ref{del_evd} for the simple SVD based deletion step.
Applying Lemma \ref{p_evd_lem} for each $k$, and then apply Lemma \ref{del_evd},
\[
\Pr(\mathrm{Proj\SVD} \cap \mathrm{Del} |\Gamma_{j-1,\ed} \cap \overline{\mathrm{Det0}} \cap \mathrm{Det1}) \ge ( 1 - 12n^{-12})^{K+1}.
\]
We can also do a similar thing for the case when the  change is detected at $t_{j,*}$, i.e. when $\mathrm{Det0}$ holds. In this case, we replace  $\Gamma_{k}$ by $\Gamma_{j,\ed} \cap \mathrm{Det0} \cap \mathrm{Proj\SVD}_1 \cap \mathrm{Proj\SVD}_2 \dots \mathrm{Proj\SVD}_k$ and conclude that
\[
\Pr(\mathrm{Proj\SVD} \cap \mathrm{Del}|\Gamma_{j-1,\ed} \cap \mathrm{Det0}) \ge ( 1 - 12n^{-12})^{K+1}.
\]

Finally consider the $\mathrm{NoFalseDets}$ event. First, assume that $\Gamma_{j-1,\ed} \cap \mathrm{Det0} \cap \mathrm{Proj\SVD} \cap \mathrm{Del}$ holds.  Consider any interval $\J^\alpha \subseteq [\that_{j,fin}, t_{j+1})$. In this interval, $\tPhat_{(t)} = \Phat$, $\bpsi = \bphi = \I -  \Phat \Phat{}'$ and $\SE(\Phat,\P) \le \zz$.
Also, using Lemma \ref{CSlem}, $\et$ satisfies \eqref{etdef0} for $t$ in this interval. Thus, defining $\e_{l,t} = \bphi \I_{\T_t} (\bpsi_{\T_t}{}'\bpsi_{\T_t})^{-1} \I_{\T_t}{}' \bpsi \lt$, $\e_{v,t} = \bphi \I_{\T_t} (\bpsi_{\T_t}{}'\bpsi_{\T_t})^{-1} \I_{\T_t}{}' \bpsi \vt$ and $\zt = \e_{v,t} + \bphi \vt$
\begin{align*}
&\frac{1}{\alpha} \sum_{t \in \J^\alpha} \bphi \lhat_t \lhat_t{}'\bphi =  \frac{1}{\alpha} \bphi \P \left(\sum_{t \in \J^\alpha} \at \at{}' \right) \P'\bphi \\ 
&+ \frac{1}{\alpha} \sum_{t \in \J^\alpha} \bphi \lt \e_{l,t}{}' + (.)' + \frac{1}{\alpha} \sum_{t \in \J^\alpha} \bphi \lt \zt{}' + (.)' \\
& + \frac{1}{\alpha} \sum_{t \in \J^\alpha} \e_{l,t} \e_{l,t}{}' + \frac{1}{\alpha} \sum_{t \in \J^\alpha} \zt \zt{}'
\end{align*}
We can bound the first term using Vershynin's sub-Gaussian result (Theorem \ref{versh}) and the other terms using matrix Bernstein (Theorem \ref{matrix_bern}). The approach is similar to that of the proof of Lemma \ref{lem:concm}. The derivation is more straightforward in this case, since for the above interval $\|\bpsi \P\|=\|\bphi \P\| \le \zz$. The required bounds on $\alpha$ are also the same as those needed for Lemma \ref{lem:concm} to hold.
We conclude that, w.p. at least $1- 12n^{-12}$,
\begin{align*}
\lambda_{\max} &\left(\frac{1}{\alpha} \sum_{t \in \J^\alpha} \bphi \lhat_t \lhat_t'\bphi \right)  \\
& \leq \zz^2 (\lambda^+ + 0.01\lambda^+)[1 + 6 \sqrt{\rrow} f (1.2)^2 + 6 f \sqrt{\rrow}  1.2] \\
& \le 2.6 \zz^2 f \lambda^-
 < \lthres
\end{align*}
This follows since $\lthres = 5 \zz^2 \lambda^+ =  5 \zz^2 f \lambda^-$.
Since $\mathrm{Det0}$ holds, $\that_j = t_{j,*}$.
Thus, we have a total of $\lfloor \frac{t_{j+1} - t_{j,*} - K \alpha - \alphadel}{\alpha} \rfloor$ intervals $\J^\alpha$ that are subsets of $[\that_{j,fin}, t_{j+1})$. Moreover, $\lfloor \frac{t_{j+1} - t_{j,*} - K \alpha - \alphadel}{\alpha} \rfloor \le \lfloor \frac{t_{j+1} - t_j - K \alpha - \alphadel}{\alpha} \rfloor \le \lfloor \frac{t_{j+1} - t_j}{\alpha} \rfloor - (K+1)$ since $\alpha \le \alphadel$.
Thus,
\begin{align*}
\Pr(\mathrm{NoFalseDets} | \Gamma_{j-1,\ed} \cap \mathrm{Det0} \cap \mathrm{Proj\SVD} \cap \mathrm{Del}) \\ 
\ge (1 - 12n^{-12})^{\lfloor \frac{t_{j+1} - t_j}{\alpha} \rfloor - (K+1)}
\end{align*}
On the other hand, if we condition on $\Gamma_{j-1,\ed} \cap \overline{\mathrm{Det0}} \cap \mathrm{Det1} \cap \mathrm{Proj\SVD} \cap \mathrm{Del}$, then $\that_j = t_{j,*} + \alpha$. Thus,
\begin{align*}
\Pr(\mathrm{NoFalseDets} | \Gamma_{j-1,\ed} \cap \overline{\mathrm{Det0}} \cap \mathrm{Det1} \cap \mathrm{Proj\SVD} \cap \mathrm{Del}) \\ 
\ge (1 - 12n^{-12})^{\lfloor \frac{t_{j+1} - t_j}{\alpha} \rfloor - (K+2)}
\end{align*}
We can now combine the above facts to bound $\Pr(\Gamma_{j,\ed}|\Gamma_{j-1,\ed})$. Recall that $p_0:= \Pr(\mathrm{Det0}|\Gamma_{j-1,\ed})$.
Clearly, the events $(\mathrm{Det0} \cap \mathrm{Proj\SVD} \cap \mathrm{Del} \cap \mathrm{NoFalseDets})$ and $(\overline{\mathrm{Det0}} \cap \mathrm{Det1} \cap \mathrm{Proj\SVD} \cap \mathrm{Del} \cap \mathrm{NoFalseDets})$ are disjoint. Thus,
\begin{align*}
& \Pr(\Gamma_{j,\ed}|\Gamma_{j-1,\ed}) \\
& = p_0 \Pr(\mathrm{Proj\SVD} \cap \mathrm{Del} \cap \mathrm{NoFalseDets} |\Gamma_{j-1,\ed} \cap \mathrm{Det0})  \\
& + (1-p_0) \Pr(\mathrm{Det1}|\Gamma_{j-1,\ed} \cap \overline{\mathrm{Det0}}) \times \\
 &\Pr(\mathrm{Proj\SVD} \cap \mathrm{Del} \cap \mathrm{NoFalseDets} |\Gamma_{j-1,\ed}\cap \overline{\mathrm{Det0}} \cap \mathrm{Det1}) \\
& \ge p_0 ( 1 - 12n^{-12})^{K+1} (1 - 12n^{-12})^{\lfloor \frac{t_{j+1} - t_j}{\alpha} \rfloor - (K+1)} \\
& + (1-p_0) ( 1 - 12n^{-12}) ( 1 - 12n^{-12})^{K+1} \times  \\
&(1 - 12n^{-12})^{\lfloor \frac{t_{j+1} - t_j }{\alpha} \rfloor - (K+2)}  \\
& =  ( 1 - 12n^{-12})^{\lfloor \frac{t_{j+1} - t_j}{\alpha} \rfloor}
\ge ( 1 - 12n^{-12})^{t_{j+1}-t_j}.
\end{align*}
Since the events $\Gamma_{j,\ed}$ are nested, the above implies that
\begin{align*}
\Pr(\Gamma_{J,\ed}|\Gamma_{0,\ed}) &= \prod_j \Pr(\Gamma_{j,\ed}|\Gamma_{j-1,\ed}) \\
&\ge \prod_j ( 1 - 12n^{-12})^{t_{j+1}-t_j} \\
& = ( 1 - 12n^{-12})^d \ge  1 - 12d n^{-12}.
\end{align*}




\section{Proof of Theorem \ref{thm_corpca}: PCA in data-dependent noise with partial subspace knowledge}\label{proof_thm_corpca}

We prove Theorem \ref{thm_corpca} with the modification given in Remark \ref{remark_ezero}. Thus we condition on $\ezero$ defined in the remark. Recall that $\bphi:= \I - \Phat_* \Phat_*{}'$.
Let
\bea
\bphi \pa \qreq \enew \Rnew
\label{def_enew}
\eea
denote the reduced QR decomposition of $(\bphi \pa)$. Here, and in the rest of this proof, we write things in a general fashion to allow $\P_\rot$ to contain {\em more} than one direction. This makes it easier to understand how our guarantees extend to the more general case ($\P_\rot$ being an $n \times r_\ch$ basis matrix with $r_\ch > 1$) easier. 
The proof uses the following simple lemma at various places.
\begin{lem} \label{lem:simp} Assume that $\ezero$ holds. Then,
\begin{enumerate}
\item $\norm{\mot\pfix} \leq \qfix$, $\norm{\mot\pch} \leq \qfix$ and $\norm{\mot \pa} \leq \qa$
\item $\norm{\bphi \pfix} \leq \zz$, $\norm{\bphi \pch} \leq \zz$, $\norm{\bphi \pnew} \leq 1$, 
\item $\norm{\Rnew} = \norm{\bphi \pa} \leq \zz |\cos \theta| + |\sin \theta| \leq \zz  + |\sin \theta|$
\item $\sigma_{\min}(\Rnew) = \sigma_{\min}(\bphi \pa) \geq \sqrt{\sin^2\theta(1 - \zz^2) -2\zz|\sin \theta|}$
\item $\norm{\bphi \lt} \leq  2\zz \sqrt{\eta \rfix \lfp} + |\sin \theta|\sqrt{\eta \lcp} $.
\end{enumerate}
\end{lem}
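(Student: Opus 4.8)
The plan is to verify the five items in order; each is a short deterministic computation once we use that $\ezero$ — hence $\estar$, i.e.\ $\SE(\Phat_*,\P_*)=\|\bphi\P_*\|\le\zz$ — holds, together with the structural facts $\P_\fx=(\P_*\U_0)\I_{[1,r-1]}$, $\P_\ch=(\P_*\U_0)\I_r$, $\P_\new{}'\P_*=\bm{0}$, the assumed bounds $\|\M_{1,t}\P_*\|\le q_0$ and $\|\M_{1,t}\P_\rot\|\le q_\rot$, the element-wise boundedness of $\at$, and Lemma \ref{hatswitch}. For item 1, since $\U_0$ is orthogonal and $\I_{[1,r-1]},\I_r$ are column sub-matrices of the identity, $\|\M_{1,t}\P_\fx\|=\|\M_{1,t}\P_*\U_0\I_{[1,r-1]}\|\le\|\M_{1,t}\P_*\|\le q_0$ and likewise $\|\M_{1,t}\P_\ch\|\le q_0$, while $\|\M_{1,t}\P_\rot\|\le q_\rot$ is the hypothesis. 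For item 2 the same manipulation with $\bphi$ in place of $\M_{1,t}$ gives $\|\bphi\P_\fx\|,\|\bphi\P_\ch\|\le\|\bphi\P_*\|\le\zz$, and $\|\bphi\P_\new\|\le\|\bphi\|\,\|\P_\new\|\le1$ because $\bphi$ is an orthogonal projector. For item 3, $\enew$ has orthonormal columns so $\|\Rnew\|=\|\enew\Rnew\|=\|\bphi\P_\rot\|$, and applying the triangle inequality to $\P_\rot=\P_\ch\cos\theta+\P_\new\sin\theta$ together with item 2 yields $\|\bphi\P_\rot\|\le\zz|\cos\theta|+|\sin\theta|\le\zz+|\sin\theta|$.

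Item 4 is the only one requiring a little care, because a naive triangle inequality on $\sigma_{\min}$ would leave a spurious sign. For a unit vector $z$ (I will keep the argument valid for more than one changing direction so it transfers to the $r_\ch>1$ case), I expand
\[
\|\bphi\P_\rot z\|^2=\cos^2\theta\,\|\bphi\P_\ch z\|^2+\sin^2\theta\,\|\bphi\P_\new z\|^2+2\sin\theta\cos\theta\,\langle\bphi\P_\ch z,\bphi\P_\new z\rangle,
\]
then bound the cross term in absolute value by $\|\bphi\P_\ch z\|\,\|\bphi\P_\new z\|\le\zz$ using $\|\bphi\P_\ch\|\le\zz$, lower-bound $\|\bphi\P_\new z\|^2\ge1-\zz^2$ via item 4 of Lemma \ref{hatswitch} with $\bm{Q}\equiv\P_\new$ (legitimate since $\P_\new{}'\P_*=\bm{0}$ and $\|\bphi\P_*\|\le\zz$), and drop the nonnegative term $\cos^2\theta\,\|\bphi\P_\ch z\|^2$. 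This gives $\|\bphi\P_\rot z\|^2\ge\sin^2\theta(1-\zz^2)-2\zz|\sin\theta|$, hence the claimed bound on $\sigma_{\min}(\bphi\P_\rot)=\sigma_{\min}(\Rnew)$.

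For item 5, split $\lt=\P_\fx\a_{t,\fx}+\P_\rot\a_{t,\ch}$, apply the triangle inequality and items 2--3 to obtain $\|\bphi\lt\|\le\zz\|\a_{t,\fx}\|+(\zz+|\sin\theta|)\,|\a_{t,\ch}|$, and use element-wise boundedness, which gives $\|\a_{t,\fx}\|^2=\sum_{j\le r-1}(\at)_j^2\le\eta\,\rfix\,\lfp$ and $|\a_{t,\ch}|^2\le\eta\,\lcp$. Since $\lcp\le\lfp$ and $\rfix\ge1$, the term $\zz|\a_{t,\ch}|$ is absorbed into $\zz\sqrt{\eta\,\rfix\,\lfp}$, producing $\|\bphi\lt\|\le2\zz\sqrt{\eta\,\rfix\,\lfp}+|\sin\theta|\sqrt{\eta\,\lcp}$. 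Overall I do not anticipate any real obstacle: the whole lemma is a bookkeeping step that isolates the deterministic consequences of $\estar$ and the subspace-change geometry, and the only point that needs a nonroutine argument is the expansion of the square in item 4 together with the invocation of Lemma \ref{hatswitch}(4).
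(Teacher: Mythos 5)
Your proof is correct and follows essentially the same route as the paper: items 1--3 and 5 are the same norm manipulations (including the same absorption of the $\zz\sqrt{\eta\lambda_\ch}$ term using $\lambda_\ch\le\lambda^+$), and your item 4, expanding $\|\bphi\P_\rot z\|^2$ over unit vectors and invoking Lemma \ref{hatswitch}, is just the quadratic-form restatement of the paper's bound $\lambda_{\min}(\P_\rot{}'\bphi\P_\rot)\ge\sin^2\theta(1-\zz^2)-2\zz|\sin\theta|$, with the cross term controlled by the same $\zz$ bound. No gaps.
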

\begin{proof}[Proof of  Lemma \ref{lem:simp}]
Item 1 follows because $\norm{\mot\pt_*} \le q_0 \le 2 \zz$ and $\norm{\mot\pt_*}  = \norm{\mot [\pfix,  \pch]} \geq \norm{\mot \pfix}$. Similarly,  $\norm{\mot\pt_*}  \geq \norm{\mot \pch}$. The first two claims of item 2 follow because $\norm{\bphi\pt_*} \le \zz$ and the bound on item 1. Third claim uses $\|\bphi \P_\new\| \le \|\bphi\| \|\P_\new\|=1$. The fourth claim uses triangle inequality and definition of $\P_\rot$. For Item 3, recall that $\bphi \pa \qreq \enew \Rnew$. Thus, $\sigma_i(\Rnew) = \sigma_i(\bphi \pa)$. Thus $\|\Rnew\| = \norm{\bphi \pa} \le \zz + |\sin \theta|$

{\em Item 4}: From above, $\sigma_{\min}(\Rnew) = \sigma_{\min}(\bphi \pa)$. Moreover, $\sigma_{\min}(\bphi \pa) = \sqrt{\lambda_{\min}(\pa{}' \bphi'\bphi \pa)} = \sqrt{\lambda_{\min}(\pa{}' \bphi\pa)} $. We bound this as follows. Recall that $\bphi = \I - \Phat_* \Phat_*{}'$.
\begin{align*}
&\lambda_{\min}(\pa{}' \bphi\pa)
 \ge \lambda_{\min}(\cos^2\theta \pch{}'\bphi \pch) \\ 
&+  \lambda_{\min}(\sin^2\theta \pnew{}'\bphi \pnew) \\
&- 2|\sin \theta| |\cos \theta| \ \|\pch{}'\bphi \pnew\| \\
& \ge 0 + \lambda_{\min}(\sin^2\theta \pnew{}'\bphi \pnew) - 2\zz|\sin \theta| \\
& = \sin^2\theta  \lambda_{\min}( \I - \pnew{}'\Phat_* \Phat_*{}' \pnew) - 2\zz|\sin \theta| \\
& = \sin^2\theta (1 - \|\pnew{}'\Phat_*\|^2) - 2\zz|\sin \theta| \\
&\ge \sin^2\theta (1 - \zz^2) - 2\zz|\sin \theta|
\end{align*}
The last inequality used Lemma \ref{hatswitch}.

{\em Item 5}: Using the previous items and the definition of $\eta$,
\begin{align*}
\norm{\bphi \lt} &:= \norm{ \bphi (\pfix \atf + \pa \atr)} \\
&\leq \norm{\bphi \pfix \atf} + \norm{\bphi \pa \atr} \\
&\leq \left( \zz \sqrt{\eta \rfix \lfp} + (\zz |\cos \theta | + |\sin \theta|)\sqrt{\eta \rch \lcp} \right)
\end{align*}
\end{proof}

\subsection{Proof of Theorem \ref{thm_corpca}}
\begin{proof}[Proof of Theorem \ref{thm_corpca}]
We have
\begin{align*}
\SE(\phatt, \pt) &= \SE([\phatz,  \phata], [\pfix,  \pa] )  \\
& \leq  \SE([\phatz,  \phata],  \pfix ) + \SE([\phatz,  \phata], \pa ) \\
  &  \le \zz + \SE([\phatz,  \phata],  \pa )
\end{align*}
where the last inequality used Lemma \ref{lem:simp}. Consider $ \SE([\phatz,  \phata], \pa )$.
\begin{align}
 \SE([\phatz , \phata], \pa )
 &\le  \norm{ ( \bm{I} - \phata \phata{}'  ) \enew}\norm{\Rnew} \nn \\ 
 &\le  \SE(\phata, \enew) (\zz + |\sin \theta|)
\label{SE_Prot_bnd}
\end{align}
The last inequality used Lemma \ref{lem:simp}.
To bound $\SE(\phata, \enew)$, we use the Davis-Kahan $\sin \theta$ theorem \cite{davis_kahan} given below.
\begin{theorem}[Davis-Kahan $\sin \theta$ theorem]
Consider $n \times n$ Hermitian matrices, $\bm{D}$ and $\hat{\bm{D}}$ such that
\begin{align*}
\bm{D} &= \begin{bmatrix}
\bm{E} & \bm{E}_{\perp}
\end{bmatrix} \begin{bmatrix}
\bm{A} & \bm{0} \\ \bm{0} & \A_\rest
\end{bmatrix} \begin{bmatrix}
\bm{E}{}' \\ \bm{E}_{\perp}{}'
\end{bmatrix} \nonumber \\
\hat{\bm{D}} &= \begin{bmatrix}
\bm{F} & \bm{F}_{\perp}
\end{bmatrix} \begin{bmatrix}
\bm{\Lambda} & \bm{0} \\ \bm{0} & \bm{\Lambda}_\rest
\end{bmatrix} \begin{bmatrix}
\bm{F}{}' \\ \bm{F}_{\perp}{}'
\end{bmatrix} \nonumber
\end{align*}
where $[\bm{E}, \bm{E}_{\perp}]$ and $[\bm{F}, \bm{F}_{\perp}]$ are orthogonal matrices and $\mathrm{rank}(\bm{F}) = \mathrm{rank}(\bm{E})$. Let
\[
\bm{H} = \hat{\bm{D}} - \bm{D}.
\]
If $\lambda_{\min}(\bm{A}) - \lambda_{\max}(\A_\rest) - \|\bm{H}\| > 0$ and $\mathrm{rank}(\bm{E}) = \mathrm{rank}(\bm{F})$, then,
\begin{align}\label{eq:sincor}
\norm{\left(\bm{I} - \bm{F} \bm{F}{}'\right)\bm{E}} \leq \frac{\|\bm{H}\|}{\lambda_{\min}(\bm{A}) - \lambda_{\max}(\A_\rest) - \|\bm{H}\|}.
\end{align}
\end{theorem}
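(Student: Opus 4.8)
The plan is to reduce the stated bound to a separation estimate for a Sylvester equation. First I would record the two identities that come straight from the hypothesized decompositions: $\bm{D}\bm{E} = \bm{E}\bm{A}$ (so that $\bm{E}$ is invariant under $\bm{D}$ with eigenvalues $\bm{A}$), and $\bm{F}_{\perp}{}'\hat{\bm{D}} = \bm{\Lambda}_\rest \bm{F}_{\perp}{}'$, the latter because $\bm{F}_{\perp}{}'\bm{F} = \bm{0}$ and $\bm{F}_{\perp}{}'\bm{F}_{\perp} = \bm{I}$. Writing $\hat{\bm{D}} = \bm{D} + \bm{H}$ in $\hat{\bm{D}}\bm{E} = \bm{E}\bm{A} + \bm{H}\bm{E}$ and left-multiplying by $\bm{F}_{\perp}{}'$ yields
\[
\bm{\Lambda}_\rest \bm{G} - \bm{G}\bm{A} = \bm{F}_{\perp}{}'\bm{H}\bm{E}, \qquad \bm{G} := \bm{F}_{\perp}{}'\bm{E}.
\]
Since $[\bm{F},\bm{F}_{\perp}]$ is orthogonal, $\bm{I} - \bm{F}\bm{F}{}' = \bm{F}_{\perp}\bm{F}_{\perp}{}'$, so $\|(\bm{I} - \bm{F}\bm{F}{}')\bm{E}\| = \|\bm{F}_{\perp}\bm{F}_{\perp}{}'\bm{E}\| = \|\bm{G}\|$ (as $\bm{F}_{\perp}$ has orthonormal columns), and likewise $\|\bm{F}_{\perp}{}'\bm{H}\bm{E}\| \le \|\bm{H}\|$. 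Thus the whole statement reduces to showing the lower bound $\|\bm{\Lambda}_\rest\bm{G} - \bm{G}\bm{A}\| \ge \big(\lambda_{\min}(\bm{A}) - \lambda_{\max}(\bm{A}_\rest) - \|\bm{H}\|\big)\,\|\bm{G}\|$.

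The key step is a separation lemma for the Sylvester operator: for Hermitian $\bm{P},\bm{Q}$ with $\lambda_{\max}(\bm{P}) \le \mu$, $\lambda_{\min}(\bm{Q}) \ge \nu$, and any conformable $\bm{G}$, one has $\|\bm{P}\bm{G} - \bm{G}\bm{Q}\| \ge (\nu - \mu)\|\bm{G}\|$. I would prove this by taking a top singular triple $(\sigma,\bm{u},\bm{v})$ of $\bm{G}$ (so $\sigma = \|\bm{G}\|$, $\bm{G}\bm{v} = \sigma\bm{u}$, $\bm{u}{}'\bm{G} = \sigma\bm{v}{}'$; the case $\bm{G}=\bm{0}$ being trivial) and computing $\bm{u}{}'(\bm{P}\bm{G} - \bm{G}\bm{Q})\bm{v} = \sigma(\bm{u}{}'\bm{P}\bm{u} - \bm{v}{}'\bm{Q}\bm{v})$, which is $\le \sigma(\mu - \nu)$ because $\bm{u},\bm{v}$ are unit vectors; since $|\bm{u}{}'(\cdot)\bm{v}| \le \|\bm{P}\bm{G}-\bm{G}\bm{Q}\|$, the lemma follows. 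Applying it with $\bm{P} = \bm{\Lambda}_\rest$, $\bm{Q} = \bm{A}$ gives $\|\bm{\Lambda}_\rest\bm{G} - \bm{G}\bm{A}\| \ge \big(\lambda_{\min}(\bm{A}) - \lambda_{\max}(\bm{\Lambda}_\rest)\big)\|\bm{G}\|$.

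It then remains to replace $\lambda_{\max}(\bm{\Lambda}_\rest)$ by $\lambda_{\max}(\bm{A}_\rest) + \|\bm{H}\|$, for which I would invoke Weyl's perturbation inequality: since $\bm{\Lambda}$ and $\bm{A}$ are the blocks of equal rank $r = \mathrm{rank}(\bm{F}) = \mathrm{rank}(\bm{E})$ corresponding to the dominant part of the spectra of $\hat{\bm{D}}$ and $\bm{D}$ (as is the case in every application in this paper, and as the separation hypothesis presumes), $\lambda_{\max}(\bm{\Lambda}_\rest)$ equals the $(r+1)$-st largest eigenvalue of $\hat{\bm{D}} = \bm{D}+\bm{H}$, hence is at most the $(r+1)$-st largest eigenvalue of $\bm{D}$ plus $\|\bm{H}\|$, i.e. $\lambda_{\max}(\bm{\Lambda}_\rest) \le \lambda_{\max}(\bm{A}_\rest) + \|\bm{H}\|$. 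Plugging this in, the hypothesis $\lambda_{\min}(\bm{A}) - \lambda_{\max}(\bm{A}_\rest) - \|\bm{H}\| > 0$ makes the coefficient of $\|\bm{G}\|$ positive, and combining with $\|\bm{\Lambda}_\rest\bm{G} - \bm{G}\bm{A}\| = \|\bm{F}_{\perp}{}'\bm{H}\bm{E}\| \le \|\bm{H}\|$ gives $\|\bm{G}\| \le \|\bm{H}\|/(\lambda_{\min}(\bm{A}) - \lambda_{\max}(\bm{A}_\rest) - \|\bm{H}\|)$, which is exactly \eqref{eq:sincor}. The main obstacle I expect is the Sylvester separation lemma — getting the top-singular-vector argument precise and correctly reading off $\bm{u}{}'\bm{P}\bm{u} \le \lambda_{\max}(\bm{P})$, $\bm{v}{}'\bm{Q}\bm{v} \ge \lambda_{\min}(\bm{Q})$ — together with the small amount of bookkeeping needed to identify $\lambda_{\max}(\bm{\Lambda}_\rest)$ as the right eigenvalue of $\hat{\bm{D}}$ so that Weyl's inequality applies cleanly; the rest is routine manipulation with orthogonal matrices.
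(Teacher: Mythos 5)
Your proof is correct, but be aware that the paper does not prove this statement at all: it quotes it as the classical Davis--Kahan $\sin\theta$ theorem and simply cites \cite{davis_kahan}. What you have written is essentially the standard proof of that theorem in the Hermitian case, and every step checks out: the invariance relations $\bm{D}\bm{E}=\bm{E}\bm{A}$ and $\bm{F}_{\perp}{}'\hat{\bm{D}}=\bm{\Lambda}_{\mathrm{rest}}\bm{F}_{\perp}{}'$ give the Sylvester identity $\bm{\Lambda}_{\mathrm{rest}}\bm{G}-\bm{G}\bm{A}=\bm{F}_{\perp}{}'\bm{H}\bm{E}$ with $\bm{G}=\bm{F}_{\perp}{}'\bm{E}$; the top-singular-triple argument correctly yields $\|\bm{P}\bm{G}-\bm{G}\bm{Q}\|\ge\left(\lambda_{\min}(\bm{Q})-\lambda_{\max}(\bm{P})\right)\|\bm{G}\|$ for Hermitian $\bm{P},\bm{Q}$; and Weyl's inequality converts $\lambda_{\max}(\bm{\Lambda}_{\mathrm{rest}})$ into $\lambda_{\max}(\bm{A}_{\mathrm{rest}})+\|\bm{H}\|$. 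Compared with the paper, your route is a self-contained elementary derivation (Sylvester separation plus Weyl) of a statement the paper treats as a black box, which is a useful addition.

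One caveat deserves emphasis. Your Weyl step needs the extra assumption --- which you do flag --- that $\bm{\Lambda}$ is the block of the $r$ largest eigenvalues of $\hat{\bm{D}}$, i.e. $\lambda_{\min}(\bm{\Lambda})\ge\lambda_{\max}(\bm{\Lambda}_{\mathrm{rest}})$. This is \emph{not} implied by the stated hypotheses (the separation condition constrains only $\bm{A}$, $\A_\rest$ and $\|\bm{H}\|$), so your phrase ``as the separation hypothesis presumes'' is inaccurate; without this convention the literal statement is false. For instance, take $\bm{H}=\bm{0}$, $\bm{D}=\hat{\bm{D}}=\mathrm{diag}(10,0)$, $\bm{E}$ the first coordinate vector, $\bm{A}=[10]$, and choose $\bm{F}$ to be the second coordinate vector: all stated hypotheses hold, yet $\|(\I-\bm{F}\bm{F}{}')\bm{E}\|=1$ while the right-hand side is $0$. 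The correct reading is that the identification of $\bm{\Lambda}$ with the dominant part of the spectrum of $\hat{\bm{D}}$ is an implicit convention in how the theorem is stated and used --- in the paper's application $\bm{F}=\Phat_\rot$ is by construction the top eigenvector of $\bm{D}_{obs}$, and the gap condition makes $\enew$ the top eigenvector of $\bm{D}$ --- rather than a consequence of the listed assumptions. With that assumption made explicit, your argument is a complete and valid proof of the quoted result.
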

To use this result to bound $\SE(\Phat_\rot, \bE_\rot)$,  let $\hat{\bm{D}} := \bm{D}_{obs} =  \frac{1}{\alpha}\sum_t \bphi \yt \yt{}' \bphi$. Its top eigenvector is $\Phat_\rot$.
We need to define a matrix $\bm{D}$ that is such that its top eigenvector is $\enew$ and the gap between its first and second eigenvalues is more than $\|\bm{H}\|$.
Consider the matrix
\begin{align*}
& \bm{D} := \enew \anew \enew{}' + \enperp \anperp \enperp{}' \text{  where}   \\
& \anew := \enew{}' \left(\frac{1}{\alpha} \sum \bphi \lt \lt{}' \bphi \right) \enew,  \\
&\anperp := \enperp{}' \left(\frac{1}{\alpha} \sum \bphi \lt \lt{}' \bphi \right) \enperp.
\end{align*}
If $\lambda_{\max}(\A_\rest) < \lambda_{\min}(\A)$, then $\enew$ is the top eigenvector of $\bm{D}$.
Moreover, if $\lambda_{\max}(\A_\rest) < \lambda_{\min}(\A) - \|\bm{H}\|$, then the gap requirement holds too. Thus, by the $\sin \theta$ theorem,
\begin{align}
\SE(\phata, \enew) &= \norm{\left(\bm{I} - \phata \phata{}'\right) \enew} \nn \\ 
&\leq \frac{\norm{\bm{H}}}{\lambda_{\min}(\bm{A}) - \lambda_{\max}(\A_\rest) - \|\bm{H}\|}
\label{SE_bnd}
\end{align}
Here again, we should point out that, in the simple case that we consider where $\P_\rot$ is a vector (only one direction changes), $\A$ is a non-negative scalar and $\lambda_{\min}(\A)=\A$. However the above discussion applies even in the general case when $r_\ch>1$.
The rest of the proof obtains high probability bounds on the terms in the above expression. $\norm{\bm{H}}=\norm{\D - \hat{\D}}$ can be bounded as follows.
\color{black}
\begin{lem} \label{lem:sin} 
Let \\  $\termoneone = \frac{1}{\alpha} \sum_t \enew \enew{}' \bphi \lt \lt{}' \bphi \enperp \enperp{}'$. Then,
\begin{align*}
\norm{\bm{H}} &\leq 2\norm{\frac{1}{\alpha}\sum\bphi  \lt \wt{}' \bphi} + 2\norm{\termoneone}  \\
& + \norm{\frac{1}{\alpha}\sum\bphi \wt \wt{}' \bphi} + 2 \norm{\frac{1}{\alpha}\sum\bphi  \lt \zt{}' \bphi} \\ 
& + \norm{\frac{1}{\alpha}\sum\bphi  \zt \zt{}' \bphi}
\end{align*}
\end{lem}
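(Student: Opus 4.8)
The plan is to prove Lemma~\ref{lem:sin} by a direct algebraic expansion of $\bm{H}=\bm{D}_{obs}-\D$; no probabilistic argument is needed at this stage. First I would substitute $\yt=\lt+\wt+\zt$ into $\bm{D}_{obs}=\frac{1}{\alpha}\sum_t\bphi\yt\yt{}'\bphi$ and expand the outer product, separating off the pure-signal matrix $\bm{G}:=\frac{1}{\alpha}\sum_t\bphi\lt\lt{}'\bphi$. This writes $\bm{H}$ as $(\bm{G}-\D)$ plus the cross/noise remainder $\bm{D}_{obs}-\bm{G}=\frac{1}{\alpha}\sum_t\bphi\big(\lt\wt{}'+\wt\lt{}'+\lt\zt{}'+\zt\lt{}'+\wt\zt{}'+\zt\wt{}'+\wt\wt{}'+\zt\zt{}'\big)\bphi$, after which everything is a matter of bounding each group by the triangle inequality.

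For the first group, using $\I=\enew\enew{}'+\enperp\enperp{}'$ and the definitions $\anew=\enew{}'\bm{G}\enew$ and $\anperp=\enperp{}'\bm{G}\enperp$, one has $\bm{G}=\enew\enew{}'\bm{G}\enew\enew{}'+\enperp\enperp{}'\bm{G}\enperp\enperp{}'+\enew\enew{}'\bm{G}\enperp\enperp{}'+\enperp\enperp{}'\bm{G}\enew\enew{}'=\D+\termoneone+\termoneone{}'$, so $\norm{\bm{G}-\D}\le 2\norm{\termoneone}$. The two symmetric cross pairs involving $\lt$ each equal a matrix plus its transpose, so the triangle inequality yields the factors of $2$ in front of $\norm{\frac{1}{\alpha}\sum\bphi\lt\wt{}'\bphi}$ and $\norm{\frac{1}{\alpha}\sum\bphi\lt\zt{}'\bphi}$; the pure-noise blocks $\frac{1}{\alpha}\sum\bphi\wt\wt{}'\bphi$ and $\frac{1}{\alpha}\sum\bphi\zt\zt{}'\bphi$ are retained directly; and the leftover $\wt$--$\zt$ cross term is folded into these last two via the elementary PSD bound $-\,\bphi(\wt\wt{}'+\zt\zt{}')\bphi\preceq\bphi(\wt\zt{}'+\zt\wt{}')\bphi\preceq\bphi(\wt\wt{}'+\zt\zt{}')\bphi$ (a consequence of $(\wt\pm\zt)(\wt\pm\zt){}'\succeq\bm{0}$), or equivalently by Cauchy--Schwarz for sums of matrices (Theorem~\ref{CSmat}) together with AM--GM. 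Summing the resulting bounds gives the asserted inequality.

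Since each step is either an identity or a one-line inequality, there is no real obstacle here; the only point that needs a little care is the $\wt$--$\zt$ cross term, because a naive triangle inequality there produces a term ($\norm{\frac{1}{\alpha}\sum\bphi\wt\zt{}'\bphi}$) that does not appear in the statement, so the PSD-ordering (or Cauchy--Schwarz) trick above is exactly what lets the bound close up in the stated form, together with the routine bookkeeping of which symmetric pair supplies which factor of $2$.
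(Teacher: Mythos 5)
Your decomposition is the same one the paper uses: split $\bm{H}$ as $(\bm{D}_{obs}-\bm{G})+(\bm{G}-\bm{D})$ with $\bm{G}:=\frac{1}{\alpha}\sum_t\bphi\lt\lt{}'\bphi$, use $\I=\enew\enew{}'+\enperp\enperp{}'$ to get $\bm{G}-\bm{D}=\termoneone+\termoneone{}'$ and hence the $2\norm{\termoneone}$ term, and treat the remaining cross and quadratic noise terms with the triangle inequality. Up to that point your argument and the paper's proof coincide.

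The step that does not close is exactly the $\wt$--$\zt$ cross term. Your PSD sandwich (equivalently, Theorem~\ref{CSmat} plus AM--GM) gives $\norm{\frac{1}{\alpha}\sum_t\bphi(\wt\zt{}'+\zt\wt{}')\bphi}\le\norm{\frac{1}{\alpha}\sum_t\bphi\wt\wt{}'\bphi}+\norm{\frac{1}{\alpha}\sum_t\bphi\zt\zt{}'\bphi}$, but this bound must be \emph{added} to the two pure-noise norms you already retained, so what you actually prove has coefficient $2$ on both $\norm{\frac{1}{\alpha}\sum_t\bphi\wt\wt{}'\bphi}$ and $\norm{\frac{1}{\alpha}\sum_t\bphi\zt\zt{}'\bphi}$, not the stated coefficient $1$; any way of accounting for the cross term using only those two norms costs an extra $2\sqrt{ab}\le a+b$, and a toy case with $\zt$ parallel to $\wt$ and $\lt$ tiny shows the unit-coefficient bound with no cross term cannot hold in general. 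In fact you have put your finger on a slip in the paper itself: its displayed expansion of $\bm{H}$ silently omits $\frac{1}{\alpha}\sum_t\bphi(\wt\zt{}'+\zt\wt{}')\bphi$, which is the only reason the stated bound carries unit coefficients and no cross term. The correct versions of the lemma are either your doubled-coefficient bound or the stated bound augmented by $2\norm{\frac{1}{\alpha}\sum_t\bphi\wt\zt{}'\bphi}$ (which can then be bounded via Theorem~\ref{CSmat} by the terms already controlled in Lemma~\ref{lem:concm}); either fix only changes numerical constants downstream in the proof of Theorem~\ref{thm_corpca}. So your expansion is right and more careful than the paper's, but the final claim that the folding ``gives the asserted inequality'' is not correct as stated.
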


\begin{proof} [Proof of Lemma \ref{lem:sin}]
Recall that $\bm{H}=\hat{\bm{D}} - \bm{D}$. Thus
\begin{align*}
\bm{H} &= \left(\hat{\bm{D}} - \frac{1}{\alpha}\sum\bphi \lt \lt{}' \bphi \right) + \left( \frac{1}{\alpha}\sum\bphi \lt \lt{}' \bphi - \bm{D} \right) \nonumber \\
&= \left(\frac{1}{\alpha}\sum\bphi \yt \yt{}' \bphi - \frac{1}{\alpha}\sum\bphi \lt \lt{}' \bphi \right) \nonumber \\
&+ \bigg( \left( \enew\enew{}' + \enperp \enperp{}'\right) \frac{1}{\alpha}\sum\bphi \lt \lt{}' \bphi \times \\ 
&\left( \enew\enew{}' + \enperp \enperp{}'\right) - \bm{D} \bigg) \nonumber \\
&= \left(\frac{1}{\alpha}\sum\bphi \wt \lt{}' \bphi + \frac{1}{\alpha}\sum\bphi \lt \wt{}'\bphi \right) \\
&+ \left(\frac{1}{\alpha}\sum\bphi \zt \lt{}' \bphi + \frac{1}{\alpha}\sum\bphi \lt \zt{}'\bphi \right) \nonumber \\
&+ \left(\frac{1}{\alpha}\sum\bphi \wt \wt{}' \bphi + \frac{1}{\alpha}\sum\bphi \zt \zt{}' \bphi \right) \nonumber \\
&+ \left( \frac{1}{\alpha} \sum \enew \enew{}' \bphi \lt \lt{}' \bphi \enperp \enperp{}' \right) \\
&+ \left( \frac{1}{\alpha} \sum \enperp \enperp{}' \bphi \lt \lt{}' \bphi \enew \enew{}' \right) \nonumber
\end{align*}
Using triangle inequality the bound follows.
\end{proof}

The next lemma obtains high probability bounds on the above terms and the two other terms from \eqref{SE_bnd}.
\color{black}
\begin{lem}\label{lem:concm}
Assume that the assumptions of Theorem \ref{thm_corpca} with the modification given in Remark \ref{remark_ezero} hold.
Let $\epsilon_0 = 0.01 |\sin \theta|(\zz + \qa)$, $\epsilon_1 = 0.01(\qa^2 + \zz^2)$, and $\epsilon_2 = 0.01$.
For an $\alpha \geq \alpha_0 := C\eta\max \left\{f r \log n,\ \eta f^2(r + \log n) \right\}$, conditioned on $\ezero$, all the following hold w.p. at least $1-12n^{-12}$:
\ben
\item $\norm{\frac{1}{\alpha} \sum_t \bphi \lt \wt{}' \bphi} \leq \left[ \sqrt{b_0} \left(2\zz^2 f  + (\zz + |\sin\theta|)\qa \right) + \epsilon_0 \right]\lcm$,
\item  $\norm{\frac{1}{\alpha} \sum_t \bphi \wt \wt{}' \bphi} \leq \left[ \sqrt{b_0} \left(4\zz^2 f + \qa^2 \right) + \epsilon_1\right]\lcm$,
\item  $\lambda_{\min}(\bm A) \geq (\sin^2\theta(1 - \zz^2) - 2 \zz |\sin \theta|)(1 - \epsilon_2)\lcm - 2 \zz (\zz + |\sin\theta|)\epsilon_2 \lcm$,
\item  $\lambda_{\max}(\A_\rest) \leq \zz^2\lfp + \zz^2\epsilon_2\lcm$,
\item  $\norm{\termoneone} \leq \left[\zz^2 f + 2\zz^2\epsilon_2 + \zz|\sin \theta|\epsilon_2|\right] \lcm.$
\item $\norm{\frac{1}{\alpha} \sum_t \bphi \lt \zt{}' \bphi} \leq \epsilon_0 \lcp$
\item $\norm{\frac{1}{\alpha} \sum_t \bphi \zt \zt{}' \bphi} \leq  \left[ \left(4\zz^2 f + \qa^2 \right) + \epsilon_1\right]\lcm$  
\een
\end{lem}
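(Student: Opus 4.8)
\textbf{Proof strategy for Lemma \ref{lem:concm}.}
The plan is to prove all seven bounds by the same two–part recipe: condition on the $\sigma$-algebra generated by $\{\M_t : t \in \J^\alpha\}$ together with the event $\ezero$, so that (by Remark \ref{remark_ezero} and the assumption that the $\at$'s are independent of the $\M_t$'s) the only remaining randomness is in the mutually independent, zero-mean, element-wise bounded vectors $\at$ (and in $\zt$, which is independent of the $\lt$'s), while all the deterministic operator-norm facts of Lemma \ref{lem:simp} hold on the conditioning event — in particular $\|\M_{1,t}\P_\fx\|,\|\M_{1,t}\P_\ch\|\le\qfix$, $\|\M_{1,t}\P_\rot\|\le\qa$, $\|\bphi\P_\fx\|\le\zz$, $\|\bphi\P_\rot\|\le\zz+|\sin\theta|$, $\|\bphi\lt\|\le 2\zz\sqrt{\eta r\lfp}+|\sin\theta|\sqrt{\eta\lcm}$, $\|\Rnew\|\le\zz+|\sin\theta|$, and $\sigma_{\min}(\Rnew)^2\ge\sin^2\theta(1-\zz^2)-2\zz|\sin\theta|$. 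Writing $\lt=\P\at$ with $\P=[\P_\fx,\P_\rot]$, $\wt=\M_{2,t}\M_{1,t}\P\at$, each summand of items 1, 2, 5 is a quadratic form in $\at$, so its expectation is obtained by replacing $\at\at{}'$ with $\Lam=\diag(\Lam_\fx,\lcm)$. I split each average into this expectation plus a centered deviation.

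For the expectation parts I would use Cauchy--Schwarz for sums of matrices (Theorem \ref{CSmat}) — the device the authors flag as the crucial idea. For item 1, $\frac{1}{\alpha}\sum_t\E[\bphi\lt\wt{}'\bphi]=\frac{1}{\alpha}\sum_t(\bphi\P\Lam^{1/2})(\Lam^{1/2}\P'\M_{1,t}{}'\M_{2,t}{}'\bphi)$, so its norm is at most $\sqrt{\|\bphi\P\Lam\P'\bphi\|}\cdot\sqrt{\|\frac{1}{\alpha}\sum_t\bphi\M_{2,t}\M_{1,t}\P\Lam\P'\M_{1,t}{}'\M_{2,t}{}'\bphi\|}$. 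Using $\M_{1,t}\P\Lam\P'\M_{1,t}{}'\preceq(\qfix^2\lfp+\qa^2\lcm)\I$ pulls the scalar out of the second factor, leaving $\|\frac{1}{\alpha}\sum_t\bphi\M_{2,t}\M_{2,t}{}'\bphi\|\le\|\frac{1}{\alpha}\sum_t\M_{2,t}\M_{2,t}{}'\|\le b_0$ by \eqref{M2t_bnd}; the first factor is bounded by $\zz^2\lfp+(\zz+|\sin\theta|)^2\lcm$ after splitting $\P=[\P_\fx,\P_\rot]$. Taking square roots, using $\lfp\le f\lcm$, $\qfix\le2\zz$ and $\zz f\le0.01|\sin\theta|$ to absorb lower-order pieces into $\epsilon_0\lcm$, gives the claimed bound; items 2 and 5 follow by the same peeling, item 5 also using $\enperp{}'\bphi\P_\rot=\bm0$ so that only the $\P_\fx$ contribution (a $\zz^2$ factor) survives. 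For items 6 and 7, independence of $\zt$ from $\lt$ together with $\E[\at]=\bm0$ annihilates the mean of item 6 entirely, while the mean of item 7 has norm $\le\|\frac{1}{\alpha}\sum_t\bphi\E[\zt\zt{}']\bphi\|\le\lambda_z^+=b_z^2/r$, which by \eqref{extra_bnds} and $\qfix\le2\zz$ is at most $C(4\zz^2f+\qa^2)\lcm$.

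For the deviation parts I would apply matrix Bernstein (Theorem \ref{matrix_bern}) to the centered sums in items 1, 2, 5, 6, 7 conditioned on the $\M_t$'s: element-wise boundedness gives $\|\at\|^2\le\eta\sum_i\lambda_i\le\eta r\lfp$, which together with the norm bounds above controls both the per-summand norm and the conditional variance, and choosing $\alpha\ge\alpha_0=C\eta\max(fr\log n,\ \eta f^2(r+\log n))$ drives each deviation below $\epsilon_0\lcm$, $\epsilon_1\lcm$, resp.\ the analogue, with failure probability $O(n^{-12})$ each. Items 3 and 4 are handled separately: since $\enew{}'\bphi\P_\rot=\Rnew$ and $\enperp{}'\bphi\P_\rot=\bm0$, one has $\bm A=\frac{1}{\alpha}\sum_t(\Rnew\atr+\enew{}'\bphi\P_\fx\atf)(\cdot){}'$ and $\A_\rest=\enperp{}'\bphi\P_\fx\big(\frac{1}{\alpha}\sum_t\atf\atf{}'\big)\P_\fx{}'\bphi\enperp$; I would invoke Vershynin's sub-Gaussian result (Theorem \ref{versh}) for $\frac{1}{\alpha}\sum_t\at\at{}'$ — bounded vectors are sub-Gaussian, which is what forces the $\eta f^2(r+\log n)$ term in $\alpha_0$ — to replace the empirical covariance by $\Lam$ up to a $(1\pm\epsilon_2)$ factor, then substitute the $\sigma_{\min}(\Rnew)$ and $\|\Rnew\|$ bounds of Lemma \ref{lem:simp} and absorb the $\P_\fx$ cross terms (each carrying a $\zz$) into the stated error terms. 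A union bound over the at most twelve such events gives the overall $1-12n^{-12}$, feeding directly into Lemma \ref{lem:sin}.

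The main obstacle I anticipate is bookkeeping rather than anything conceptually new: obtaining precisely the factor $\sqrt{b_0}$ (rather than $b_0$ or $1$) in items 1, 2, 5 requires splitting each product as $A_tB_t{}'$ in the unique way that places a single $\M_{2,t}$ on each side, and then carrying the constants ($2\zz^2f$, $(\zz+|\sin\theta|)\qa$, $4\zz^2f$, $\qa^2$, \dots) through the square roots while repeatedly using $\zz f\le0.01|\sin\theta|$, $\qfix\le2\zz$ and $\qa\le0.2|\sin\theta|$ from \eqref{extra_bnds} to collapse everything into the compact $\epsilon_0,\epsilon_1,\epsilon_2$ error terms. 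The secondary care point is the conditioning itself: one must check that conditioning on $\{\M_t\}_{t\in\J^\alpha}$ and $\ezero$ leaves the $\at$'s mutually independent so that the Bernstein/Vershynin bounds are legitimate, yet still preserves every deterministic estimate used above.
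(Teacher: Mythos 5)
Your overall architecture is the same as the paper's: condition so that only the $\at$'s (and $\zt$'s) remain random, bound the means of items 1, 2 via Cauchy--Schwarz for matrix sums to extract the $\sqrt{b_0}$ factor, control deviations with matrix Bernstein, and handle items 3--5 with Vershynin's result applied to $\frac{1}{\alpha}\sum_t \at\at{}'$ plus the deterministic facts of Lemma \ref{lem:simp} (the paper conditions on $\{\Phat_*,Z\}$ rather than on the $\M_t$'s, but that difference is immaterial). However, your concrete Cauchy--Schwarz factorization for item 1 is the wrong one, and this is a genuine gap, not mere bookkeeping. Writing $\frac{1}{\alpha}\sum_t\E[\bphi\lt\wt{}'\bphi]$ as $\frac{1}{\alpha}\sum_t(\bphi\P\Lam^{1/2})(\Lam^{1/2}\P{}'\mot{}'\mtt{}'\bphi)$ yields the bound $\sqrt{b_0}\sqrt{(\zz^2\lfp+(\zz+|\sin\theta|)^2\lcm)(\qfix^2\lfp+\qa^2\lcm)}$, whose expansion contains the geometric-mean cross term $(\zz+|\sin\theta|)\qfix\sqrt{\lfp\lcm}\approx 2\zz|\sin\theta|\sqrt{f}\,\lcm$ (and $\zz\qa\sqrt{f}\,\lcm$). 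The assumptions only give $\zz\sqrt{f}\le 0.01|\sin\theta|/\sqrt{f}$ and impose no lower bound on $\qa$, so this term is of size $\approx 0.02\sin^2\theta\,\lcm/\sqrt{f}$ after multiplying by $\sqrt{b_0}$, which is \emph{not} absorbable into $\epsilon_0\lcm=0.01|\sin\theta|(\zz+\qa)\lcm$ nor into $\sqrt{b_0}\,2\zz^2 f\lcm$ when $\qa\ll|\sin\theta|/\sqrt{f}$ (exactly the regime of the later projection-SVD steps, where $\qa=1.2\zeta_{k-1}^+ \approx \zz$). In that regime your bound exceeds the claimed one by a factor of order $\sqrt{f}$; carried through the $\sin\theta$-theorem it turns the $0.11\zz$ term of Theorem \ref{thm_corpca} into roughly $\zz\sqrt{f}$, which breaks the recursion $\zeta_k^+=0.48\zeta_{k-1}^+ + 0.11\zz$ and hence the conclusion $\zeta_K^+\le\zz$ for large $f$. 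The fix is the paper's grouping: take $\bm{X}_t=\bphi\P\Lam\P{}'\mot{}'$ and $\bm{Y}_t=\mtt$, keeping $\Lam$ intact in the middle, so the fixed and rotated blocks pair as $\zz\qfix\lfp$ and $(\zz+|\sin\theta|)\qa\lcm$ with no cross terms; only then does the stated item-1 bound follow. (Item 2 is immune because both $\M_{2,t}$'s and the whole $\Lam$ sit inside a single positive-semidefinite summand, so any reasonable grouping works.)

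A secondary caution on the deviation step: saying that $\|\at\|^2\le \eta r\lfp$ ``controls the conditional variance'' is too coarse. If you bound $\sigma^2$ by $R^2$ with $R\approx \zz|\sin\theta|\eta r\lfp$, the Bernstein requirement becomes $\alpha\gtrsim \eta^2 f^2 r^2\log n$, exceeding the claimed $\alpha_0$. You need the paper's asymmetric variance computation, $\|\frac{1}{\alpha}\sum_t\E[\bm{Z}_t\bm{Z}_t{}']\|\le(\max_t\|\wt\|^2)\cdot\|\frac{1}{\alpha}\sum_t\E[\bphi\lt\lt{}'\bphi]\|$, so that only one factor carries the dimension $r$ and $\alpha\ge C\eta f r\log n$ suffices for items 1, 2, 6, 7.
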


Using Lemma \ref{lem:concm} and substituting for $\epsilon_0, \epsilon_1, \epsilon_2$, we conclude the following. Conditioned on $\ezero$, with probability at least $1 - 12n^{-12}$,
\begin{align*}
&\SE(\phata, \enew) \leq \\
& \frac{\splitfrac{2 \sqrt{b_0} \left[2\zz^2 f  + (\zz + |\sin\theta|) \qa \right]+ \sqrt{b_0} \left[ 4\zz^2 f + \qa^2\right]}{ + 2 \left[\zz^2 f + 2\zz^2\epsilon_2  + \zz|\sin \theta|\epsilon_2\right] + 4\epsilon_0 + 2\epsilon_1}}{\splitfrac{(\sin^2\theta(1 - \zz^2) - 2 \zz |\sin \theta|)(1 - \epsilon_2)}{- 2 \zz (\zz + |\sin\theta|)\epsilon_2   - (\zz^2f + \zz^2\epsilon_2) - \mathrm{numer}}},
\end{align*}
where  $\mathrm{numer}$ denotes the numerator expression.
The numerator, $\mathrm{numer}$, expression can be simplified to
\begin{align*}
\mathrm{numer} &\leq \qa \left[ 2\sqrt{\rrow}f(\zz + |\sin\theta|) + 0.04|\sin\theta| \right] \\
&+ \qa^2(\sqrt{b_0} + 0.02) +  \zz\big[ (8\sqrt{\rrow}f + 2 ) \zz f \\
&+ (2\zz + |\sin\theta|) 0.01 + 0.04|\sin\theta| + 0.02\zz  \big].
\end{align*}
Further, using $\zz f \leq 0.01 |\sin \theta|$, $\sqrt{b_0} \le 0.1$ and $\qa \leq 0.2|\sin\theta|$,
\begin{align*}
\mathrm{numer} &\leq |\sin\theta|( 0.242\qa + 0.07\zz)  + 0.12\qa^2 \\
&\leq  |\sin\theta|(0.27\qa + 0.07\zz)
\end{align*}
This can be loosely upper bounded by $0.26\sin^2\theta$. We use this loose upper bound when this term appears in the denominator. Following a similar approach for the denominator, denoted $\mathrm{denom}$,
\begin{align*}
& \mathrm{denom} \\
& \geq  \sin^2\theta \left[ 1 - \zz^2 -\frac{2\zz}{|\sin \theta|} - \frac{3\zz^2\epsilon_2}{\sin^2\theta} - \frac{\zz^2 f}{\sin^2\theta}  - \frac{\mathrm{numer}}{\sin^2\theta}\right] \nn \\
& \geq  \sin^2\theta \left[0.95   - \frac{\mathrm{numer}}{\sin^2\theta}\right] \ge 0.69\sin^2\theta
\end{align*}
Thus,
\begin{align*}
\SE(\phata, \enew) &\leq \frac{(0.27\qa + 0.07 \zz)|\sin \theta|}{0.69\sin^2\theta} \\
&\leq \frac{0.39\qa + 0.1 \zz}{|\sin\theta|},
\end{align*}
Using \eqref{SE_Prot_bnd} and  $\zz \le \zz f \le 0.01 |\sin \theta|$,
\begin{align*}
\SE(\Phat,\P_\rot) &\le (\zz + |\sin\theta|)  \frac{0.39\qa + 0.1 \zz}{|\sin\theta|} \\ 
&\le 1.01 |\sin \theta|\frac{0.39\qa + 0.1 \zz}{|\sin\theta|} \\
&\le  0.40 q_\rot + 0.11 \zz.
\end{align*}

{\em Proof of the last claim: lower bound on $\lambda_{\max}(\bm{D}_{obs})$. }
Using Weyl's inequality,
\begin{align*}
\lambda_{\max}(\bm{D}_{obs}) \ge \lambda_{\max}(\bm{D}) - \|\bm{H}\| &\ge  \lambda_{\max}(\bm{A}) - \|\bm{H}\| \\
&\ge  \lambda_{\min}(\bm{A}) - \|\bm{H}\|.
\end{align*}
Using the bounds from Lemmas \ref{lem:sin} and \ref{lem:concm} and \eqref{extra_bnds}, we get the lower bound. 
\end{proof}
\color{black}

\subsection{Proof of Lemma \ref{lem:concm}: high probability bounds on the $\sin \theta$ theorem bound terms}\label{app:lemmaproof}
\begin{proof}[Proof of Lemma \ref{lem:concm}]
Recall the definition of the event $\ezero$ from Remark \ref{remark_ezero}.
To prove this lemma, we first bound the probabilities of all the events conditioned on $\{\Phat_*,Z\}$, for values of $\{\Phat_*,Z\} \in \ezero$.
Then we use the following simple fact.
\begin{fact}
\label{simple_fact}
If
$
\Pr(\mathrm{Event}  | \{\Phat_*,Z\}) \ge p_0
$
for all $\{\Phat_*,Z\} \in \ezero$, then, 
\[
\Pr ( \mathrm{Event}  | \ezero) \ge p_0.
\]
\end{fact}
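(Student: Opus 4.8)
The plan is to prove Fact~\ref{simple_fact} by a one-line application of the tower property of conditional expectation, exploiting the structural fact (highlighted in Remark~\ref{remark_ezero}) that the event $\ezero$ is a deterministic function of the pair $\{\Phat_*,Z\}$ alone. Concretely, $\ezero = \estar \cap \tilde{\mathcal{E}}(Z)$, where $\estar = \{\SE(\Phat_*,\P_*)\le\zz\}$ depends only on $\Phat_*$ and $\tilde{\mathcal{E}}(Z)$ depends only on $Z$; hence $\ezero \in \sigma(\Phat_*,Z)$.

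First I would fix a regular version $g(\Phat_*,Z) := \Pr(\mathrm{Event}\mid \Phat_*,Z)$ of the conditional probability of $\mathrm{Event}$ given $\{\Phat_*,Z\}$. The hypothesis of the Fact is precisely the statement that $g(\Phat_*,Z)\ge p_0$ holds on the event $\ezero$ (i.e.\ for every realized value of $\{\Phat_*,Z\}$ lying in $\ezero$). Since $\ezero\in\sigma(\Phat_*,Z)$, the tower property gives
\[
\Pr(\mathrm{Event}\mid \ezero) \;=\; \E\bigl[\,\Pr(\mathrm{Event}\mid \Phat_*,Z)\,\bigm|\,\ezero\,\bigr] \;=\; \E\bigl[\,g(\Phat_*,Z)\,\bigm|\,\ezero\,\bigr].
\]
Then, using that $g(\Phat_*,Z)\ge p_0$ on $\ezero$ together with monotonicity of conditional expectation,
\[
\Pr(\mathrm{Event}\mid \ezero) \;=\; \E\bigl[\,g(\Phat_*,Z)\,\bigm|\,\ezero\,\bigr] \;\ge\; \E\bigl[\,p_0\,\bigm|\,\ezero\,\bigr] \;=\; p_0,
\]
which is the desired conclusion.

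The only point requiring any care — and it is a routine measure-theoretic matter, not a genuine obstacle — is the existence of the regular conditional probability $g$ and the $\sigma(\Phat_*,Z)$-measurability of $\ezero$. Both hold here because $\{\Phat_*,Z\}$ takes values in a Polish space and $\ezero$ factors through it as displayed above, so no justification beyond the standard existence result for regular conditional probabilities is needed. Intuitively, the Fact merely records that averaging, over (a sub-event of) its domain, a quantity that is pointwise at least $p_0$ cannot produce a value below $p_0$; all the probabilistic work of Lemma~\ref{lem:concm} has already been done at the level of fixed $\{\Phat_*,Z\}$, and this Fact is the bookkeeping step that transfers those bounds to conditioning on $\ezero$.
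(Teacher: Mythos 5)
Your proof is correct. The paper states Fact \ref{simple_fact} without proof, treating it as immediate, and your argument is precisely the standard justification it implicitly relies on: since $\ezero$ is determined by $\{\Phat_*,Z\}$, one has $\Pr(\mathrm{Event} \cap \ezero) = \E\bigl[\one_{\ezero}\,\Pr(\mathrm{Event}\mid \Phat_*,Z)\bigr] \ge p_0 \Pr(\ezero)$, which is the same averaging/tower-property step you wrote, so there is no substantive difference in approach.
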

In the discussion below, we condition on $\{\Phat_*,Z\}$, for values of $\{\Phat_*,Z\}$ in $\ezero$. Conditioned on $\{\Phat_*,Z\}$, the matrices $\bE_\rot$, $\bE_{\rot,\perp}$, $\bm\Phi$, etc, are constants (not random). All the terms that we bound in this lemma are either of the form $\sum_{t \in \J^\alpha} g_1(\Phat_*,Z)  \lt \lt{}' g_2(\Phat_*,Z) $, for some functions $g_1(.),g_2(.)$, or are sub-matrices of such a term. 

Since the pair $\{\Phat_*,Z\}$ is independent of the $\lt$'s for $t \in \J^\alpha$, and these $\lt$'s are mutually independent, hence, even conditioned on $\{\Phat_*,Z\}$, the same holds: the $\lt$'s for $t \in \J^\alpha$ are mutually independent. Thus, once we condition on $\{\Phat_*,Z\}$, the summands in the terms we need to bound are mutually independent.
As a result, matrix Bernstein (Theorem \ref{matrix_bern}) or Vershynin's sub-Gaussian result (Theorem \ref{versh}) are applicable. 

\renewcommand{\zt}{\bm{Z}_t}
\emph{{Item 1}}: {\em In the proof of this and later items, we condition on $\{\Phat_*,Z\}$, for values of $\{\Phat_*,Z\}$ in $\ezero$.}

Since $\norm{\bphi} = 1$,
\begin{align*}
\norm{\frac{1}{\alpha} \sum_t \bphi \lt \wt{}'\bphi} \leq \norm{\frac{1}{\alpha} \sum_t \bphi \lt \wt{}'}.
\end{align*}
To bound the RHS above, we will apply \emph{matrix Bernstein} (Theorem \ref{matrix_bern}) with $\bm{Z}_t =  \bphi \lt \wt{}'$. As explained above, conditioned on $\{\Phat_*,Z\}$, the $\bm{Z}_t$'s are mutually independent.
We first obtain a bound on the expected value of the time average of the $\bm{Z}_t$'s and then compute $R$ and $\sigma^2$ needed by Theorem \ref{matrix_bern}.
By Cauchy-Schwarz,
\begin{align}
&\norm{\ep\left[\frac{1}{\alpha} \sum_t \bphi \lt \wt{}'  \right]}^2 = \norm{\frac{1}{\alpha} \sum_t \bphi \pt \bm{\Lambda} \pt{}' \mot{}' \mtt{}'}^2 \nonumber \\
&\overset{(a)}{\leq} \norm{\frac{1}{\alpha} \sum_t \left(\bphi \pt \bm{\Lambda} \pt{}' \mot{}'\right)\left( \mot \pt\bm{\Lambda}\pt{}'\bphi\right)} \times  \nonumber \\
 &\norm{\frac{1}{\alpha} \sum_t \mtt \mtt{}'} \nonumber \\
&\overset{(b)}{\leq}  b_0 \left[ \max_t \norm{\bphi \pt \bm{\Lambda} \pt{}' \mot{}'}^2 \right] \nonumber \\
&\leq b_0 \bigg[ \max_t \big(\norm{\bphi \pfix \lfix \pfix{}' \mot{}'} \nonumber \\ &+ \norm{\bphi \pa \lch \pa{}'\mot{}'} \big)^2\bigg] \nonumber \\
&\leq b_0 \left[\zz \qfix \lfp  + \left( \zz + |\sin\theta| \right) \qa \lcp \right]^2
\label{lt_wt_bnd}
\end{align}
where (a) follows by Cauchy-Schwarz (Theorem \ref{CSmat}) with $\bm{X}_t = \bphi \pt \bm{\Lambda} \pt{}' \mot{}'$ and $\bm{Y}_t =\mtt$, (b) follows from the assumption on $\M_{2,t}$, and the last inequality follows from Lemma \ref{lem:simp}. Using $\qch \leq 2 \zz$, 
\begin{align*}
\norm{\ep\left[\frac{1}{\alpha} \sum \bphi \lt \wt{}' \right]} &\leq \sqrt{b_0} \left[2\zz^2 \lfp  + \left( \zz + |\sin\theta|  \right) \qa\lcp \right].
\end{align*}

To compute $R$, using Lemma \ref{lem:simp} and using $q_0 \le 2 \zz$ and $q_\rot < |\sin \theta|$,
\begin{align*}
\norm{\zt} \leq \norm{\bphi \lt} \norm{\wt} &\leq \left( \zz \sqrt{\eta \rfix \lfp} + (\zz + |\sin \theta|)\sqrt{\eta \lcp} \right) \\
& \left( \qfix \sqrt{\eta r \lfp} + \qa \sqrt{\eta \lcp} \right) \\
&\leq 4\zz^2\eta r \lfp + |\sin\theta|\qa \eta \lcp \\ 
&+ 2 \zz \eta \sqrt{r \lfp \lcp}(\qa + |\sin\theta|) \\
&\leq c_1\zz|\sin\theta|\eta r \lfp + c_2|\sin\theta|\qa \eta \lcp \\
&:= R
\end{align*}
for numerical constants $c_1, c_2$.
Next we compute $\sigma^2$. Since $\wt$'s are bounded r.v.'s, we have
\begin{align*}
\norm{\frac{1}{\alpha}\sum_t\ep{ [\zt \zt{}' ]}} &= \norm{\frac{1}{\alpha} \sum_t \ep{\left[\bphi \lt \wt{}' \wt \lt{}' \bphi  \right]}} \\
& = \norm{ \frac{1}{\alpha}\ep{[ \norm{\wt}^2  \bphi \lt \lt{}' \bphi ]}} \\
&\leq \left( \max_{\wt} \norm{\wt}^2 \right)\norm{\frac{1}{\alpha}\sum_t\ep{\left[\bphi \lt \lt{}' \bphi \right]}}\\
&\leq \left(8\zz^2 \eta \rfix \lfp + 2\qa^2\eta \lcp \right) \\
& \left(2\zz^2 \lfp + \sin^2\theta\lcp \right) \\
&\leq c_1\qa^2\sin^2\theta\eta (\lcp)^2 + c_2 \zz^2 \eta \rfix \sin^2\theta \lfp \lcp \\
&:= \sigma_1^2
\end{align*}
for numerical constants $c_1$ and $c_2$. The above bounds again used $q_0 \le 2 \zz$ and $q_\rot < |\sin \theta|$.
For bounding $\norm{\frac{1}{\alpha}\sum_t \ep{\left[\zt{}' \zt \right]}}$ we get the same expression except for the values of $c_1$, $c_2$. Thus, applying matrix Bernstein (Theorem \ref{matrix_bern}) followed by Fact \ref{simple_fact},
\begin{align*}
&\Pr \bigg( \norm{\frac{1}{\alpha}\sum_t \bphi \lt \wt{}'} \\ 
&\leq \sqrt{\rrow}f \left[2\zz^2 \lfp  + (\zz + |\sin\theta|)\qa \lcp \right] + \epsilon \bigg| \ezero \bigg) \\
&\geq 1- 2n \exp \left( \frac{-\alpha}{4\max\left\{\frac{\sigma_1^2}{\epsilon^2},\ \frac{R}{\epsilon}\right\} }\right).
\end{align*}
Let $\epsilon = \epsilon_0 \lcm$ where $\epsilon_0 = 0.01 \sin\theta(\qa + \zz)$. Then, clearly,
\begin{align*}
\frac{\sigma^2}{\epsilon^2} \leq c \eta\max \{1,\ f r \} = c \eta f r , \ \text{ and } \\
\frac{R}{\epsilon} \leq c \eta\max\{ 1,\ f r \} = c \eta f  r .
\end{align*}
Hence, for the probability to be of the form $1 - 2 n^{-12}$ we require that $\alpha \geq \alpha_{(1)}$ where
\begin{align*}
\alpha_{(1)} := C \cdot  \eta  f  (r \log n)
\end{align*}
Thus, if $\alpha \ge \alpha_{(1)}$, conditioned on $\ezero$, the bound on $ \norm{\frac{1}{\alpha}\sum_t \bphi \lt \wt{}' \bphi}$ given in Lemma \ref{lem:concm} holds w.p. at least $1 - 2 n^{-12}$.

\emph{{Item 2}}:
We use Theorem \ref{matrix_bern} (matrix Bernstein) with $\bm{Z}_t:= \bphi \wt \wt{}' \bphi$. The proof approach is similar to that of the proof of item 1.
First we bound the norm of the expectation of the time average of $\zt$:
\begin{align*}
&\norm{\ep\left[ \frac{1}{\alpha} \sum \bphi \wt \wt{}' \bphi  \right]} \\
&= \norm{\frac{1}{\alpha} \sum  \bphi \mtt \mot \pt \bm{\Lambda} \pt{}' \mot{}' \mtt{}' \bphi} \\
&\leq \norm{\frac{1}{\alpha} \sum  \mtt \mot \pt \bm{\Lambda} \pt{}' \mot{}' \mtt{}'} \\
&\overset{(a)}{\leq} \bigg(\norm{\frac{1}{\alpha} \sum_t \mtt \mtt{}'} \\
& \left[ \max_t \norm{\mtt \mot \pt \bm{\Lambda} \pt{}' \mot{} (\cdot){}'}^2\right]\bigg)^{1/2} \\
&\overset{(b)}{\leq} \sqrt{b_0} \left[\max_t \norm{\mot \pt \bm{\Lambda} \pt{}' \mot{}'\mtt{}'}\right] \\
&\overset{(c)}{\leq} \sqrt{b_0} \left[ \qfix^2 \lfp + \qa^2 \lcp \right] \leq \sqrt{b_0} \left[4\zz^2 \lfp + \qa^2 \lcp \right]. \nonumber
\end{align*}
(a) follows from Cauchy-Schwarz (Theorem \ref{CSmat}) with $\bm{X}_t = \M_{2,t}$ and $\bm{Y}_t = \mot \pt \bm{\Lambda} \pt{}' \mot{}'\mtt{}'$, (b) follows from the assumption on $\M_{2,t}$, and (c) follows from Lemma \ref{lem:simp}. The last inequality used $q_0 \le 2\zz$.
To obtain $R$,
\begin{align*}
\norm{\zt} &= \norm{\bphi \wt \wt{}' \bphi} \\ 
&\leq 2\left(\norm{\bphi \Mt \pfix \atf}^2 + \norm{\bphi \Mt \pa \atr}^2\right) \\
&\leq 2\left( \qfix^2 \eta  r \lfp + \qa^2\eta \lcp \right) \\ 
&\leq 8 \zz^2 r \eta \lfp + 2 \qa^2 \eta \lcp  := R
\end{align*}
To obtain $\sigma^2$,
\begin{align*}
&\norm{\frac{1}{\alpha}\sum_t\ep{\left[\bphi\wt (\bphi\wt){}'(\bphi \wt) \wt{}' \bphi \right]}} \\
&= \norm{\frac{1}{\alpha} \sum_t \ep{\left[\bphi\wt \wt{}' \bphi \norm{\bphi\wt}^2 \right]}} \\
&\leq \left(\max_{\wt} \norm{\bphi \wt}^2 \right) \norm{\bphi \Mt \pt \bm{\Lambda} \pt{}' \Mt{}' \bphi}  \\
&\leq 2\left( \qfix^2 r \eta \lfp + \qa^2\eta \lcp \right) \left(\qfix^2 \lfp + \qa^2 \lcp\right) \\
&\leq c_1\qa^4 \eta (\lcp)^2 + c_2 \qa^2\zz^2\eta r \lfp \lcp := \sigma^2
\end{align*}
Applying matrix Bernstein (Theorem  \ref{matrix_bern}) followed by Fact \ref{simple_fact}, we have
\begin{align*}
\Pr\left( \norm{\frac{1}{\alpha} \sum_t \bphi \wt \wt{}' \bphi} \leq \sqrt{b_0} \left[4\zz^2 \lfp + \qa^2 \lcp \right] + \epsilon \bigg| \ezero  \right) \\
\geq 1- n \exp \left( \frac{- \alpha \epsilon^2}{2(\sigma^2 + R\epsilon)} \right)
\end{align*}
Let $\epsilon = \epsilon_1 \lcm$, $\epsilon_1 = 0.01 (\qa^2 + \zz^2)$. Then we get
\begin{align*}
\frac{R}{\epsilon} \leq c\eta\max\{1,\ rf\}, \ \text{ and } \  \frac{\sigma^2}{\epsilon^2} \leq c\eta\max\{1,\ rf\}.
\end{align*}
For the success probability to be of the form $1 - 2 n^{-12}$ we require $\alpha \geq \alpha_{(2)}$ where
\begin{align*}
\alpha_{(2)} := C \eta \cdot 13 f (r \log n)
\end{align*}
Thus, if $ \alpha \ge \alpha_{(2)}$, \\
$
\Pr\left(\norm{\frac{1}{\alpha} \sum_t \bphi \wt \wt{}' \bphi} \leq \left[\sqrt{b_0} \left(4\zz^2 f + \qa^2 \right) + \epsilon_1\right]\lcm | \ezero\right) \geq 1 - n^{-12}.
$

\emph{{Item 3}}:
Expanding the expression for $\anew$,
\begin{align*}
\anew &= \enew{}'\bphi \pfix \left(\frac{1}{\alpha}\sum_t \atf \atf{}'\right) \pfix{}' \bphi \enew \\
&+ \enew{}'\bphi \pa \left(\frac{1}{\alpha}\sum_t \atr \atr{}'\right) \pa{}' \bphi \enew  \\
& + \mathrm{term1} + \mathrm{term1}{}'
\end{align*}
where $\mathrm{term1} := \enew{}'\bphi \pfix \left(\frac{1}{\alpha}\sum_t \atf \atr{}'\right) \pa{}' \bphi \enew$. Since the first term on the RHS is positive semi-definite,
\begin{align}
&\lambda_{\min}(\bm A) \nn \\
&\geq \lambda_{\min}\left(\enew{}'\bphi \pa \left(\frac{1}{\alpha}\sum_t \atr \atr{}'\right) \pa{}' \bphi \enew\right) \nn \\
&+ \lambda_{\min} (\mathrm{term1} + \mathrm{term1}{}') \nn \\
&\geq \lambda_{\min}\left(\enew{}'\bphi \pa \left(\frac{1}{\alpha}\sum_t \atr \atr{}'\right) \pa{}' \bphi \enew\right) \nn  \\
& - 2 \norm{\enew{}'\bphi \pa \left(\frac{1}{\alpha}\sum_t \atr \atf{}'\right) \pfix{}' \bphi \enew}
\label{A_bnd}
\end{align}
Under our current assumptions, the $\atr$'s are scalars, so $\anew$ and $\frac{1}{\alpha}\sum_t \atr \atr{}'$ are actually scalars. However, we write things in a general fashion (allowing $\atr$'s to be $r_\ch$ length vectors), so as to make our later discussion of the $r_\ch>1$ case easier.
Using \eqref{A_bnd},
\begin{align}\label{eq:lminexp}
&\lambda_{\min}(\bm A) \nn \\ 
& \geq \lambda_{\min}\left(\enew{}'\bphi \pa \pa{}' \bphi \enew\right) \lambda_{\min}\left(\frac{1}{\alpha}\sum_t \atr \atr{}'\right)  \nn \\
& - 2 \norm{\enew{}'\bphi \pa} \norm{\pfix{}' \bphi \enew} \norm{\left(\frac{1}{\alpha}\sum_t \atr \atf{}'\right)} \nonumber \\
&\geq (\sin^2\theta(1 - \zz^2) - 2 \zz |\sin \theta| )\lambda_{\min}\left(\frac{1}{\alpha}\sum_t \atr \atr{}'\right) \nn \\
&- 2 \zz (\zz + |\sin \theta|) \norm{\left(\frac{1}{\alpha}\sum_t \atr \atf{}'\right)}.
\end{align}
The second inequality follows using $\enew{}'\bphi \pa = \enew{}' \enew \R_\rot = \R_\rot$ and Lemma \ref{lem:simp}. The first inequality is straightforward if $\atr$'s are scalars (current setting); it follows using Ostrowski's theorem \cite{hornjohnson} in the general case.%

To bound the remaining terms in the above expression, we use Vershynin's sub-Gaussian result \cite[Theorem 5.39]{vershynin} summarized in Theorem \ref{versh}. To apply this, recall that $(\at)_i$ are bounded random variables satisfying $|(\at)_i|\le \sqrt{\eta\lambda_i}$. Hence they are sub-Gaussian with sub-Gaussian norm $\sqrt{\eta\lambda_i}$ \cite{vershynin}.
Using \cite[Lemma 5.24]{vershynin}, the vectors $\at$ are also sub-Gaussian with sub-Gaussian norm bounded by $\max_i \sqrt{\eta \lambda_i} = \sqrt{\eta\lambda^+}$. Thus, applying Theorem \ref{versh} with $K \equiv \sqrt{\eta \lambda^+}$, $\epsilon \equiv \epsilon_2 \lambda_\ch$, $N \equiv \alpha$, $n_w \equiv r$, followed by using Fact \ref{simple_fact},
if
$
\alpha \geq \alpha_{(3)} := \frac{C(r\log 9 + 10 \log n)  f^2}{\epsilon_2^2},
$
then,
\begin{align}\label{eq:simpversh}
\Pr\left( \norm{ \frac{1}{\alpha} \sum_t \at \at{}' - \bm{\Lambda} }  \leq \epsilon_2\lcm \bigg| \ezero \right) \geq 1- 2n^{-12}.
%
\end{align}
We could also have used matrix Bernstein to bound $\norm{\sum_t \at \at{}'}$. However, since the $\at$'s are $r$-length vectors and $r \ll n$, the Vershynin result requires a smaller lower bound on $\alpha$.

If $\bm{B}_1$ is a sub-matrix of a matrix $\bm{B}$, then $\norm{\bm{B}_1} \le \norm{\bm{B}}$. Thus, we can also use \eqref{eq:simpversh} for bounding the norm of various sub-matrices of $\left( \frac{1}{\alpha} \sum_t \at \at{}' - \bm{\Lambda} \right)$.
 Doing this, we get 
\begin{align}
\label{eq:lmaxversh}
& \Pr\left(\lambda_{\max}\left(\frac{1}{\alpha}\sum_t \atf  \atf{}' \right) \le \lambda^+ + \epsilon_2\lcm  \bigg| \ezero \right) \nn \\
 &\geq 1 - 2n^{-12}, \\
\label{eq:lminversh_2}
& \Pr\left(\lambda_{\max}\left(\frac{1}{\alpha}\sum_t \atr \atr{}' \right) \le \lcm + \epsilon_2\lcm  \bigg| \ezero \right) \nn \\ 
&\geq 1 - 2n^{-12}, \\
\label{eq:lminversh}
& \Pr\left(\lambda_{\min}\left(\frac{1}{\alpha}\sum_t \atr \atr{}'  \right) \geq \lcm - \epsilon_2\lcm  \bigg| \ezero \right) \nn \\
 &\geq 1 - 2n^{-12}, \text{ and} \\
\label{eq:lminbern}
& \Pr\left(\norm{\frac{1}{\alpha}\sum_t \atr \atf{}' } \leq \epsilon_2\lcm   \bigg| \ezero  \right) \nn \\
 &\geq 1 - 2n^{-12}.
\end{align}
Combining \eqref{eq:lminexp}, \eqref{eq:lminversh} and \eqref{eq:lminbern}, if $\alpha \geq \alpha_{(3)}$, 
\begin{align}\label{eq:bndlmina}
\Pr\bigg( \lambda_{\min}(\bm A) \geq (\sin^2\theta(1 - \zz^2) - 2 \zz |\sin \theta|)(1 - \epsilon_2)\lcm \nn \\
- 2 \zz (\zz + |\sin\theta|)\epsilon_2 \lcm \bigg| \ezero  \bigg) \geq 1 - 4n^{-12}
\end{align}

\emph{{Item 4}}:
Recall that $\enperp{}'\bphi \pa = 0$. Thus,
\begin{align}
\lambda_{\max}(\A_\rest) &\leq \lambda_{\max}\left(\enperp{}'\bphi \pfix \pfix{}' \bphi \enperp\right) \times \nn \\ &\lambda_{\max}\left(\frac{1}{\alpha}\sum_t \atf \atf{}'\right)
\end{align}
where the last inequality follows from Ostrowski's theorem \cite{hornjohnson}. Using this and \eqref{eq:lmaxversh},  if $\alpha \geq \alpha_{(3)}$,  
\begin{align*}
\Pr\left( \lambda_{\max}(\A_\rest) \leq \zz^2\lfp + \zz^2\epsilon_2\lcm \bigg| \ezero  \right) &\geq 1 - 2n^{-12}
\end{align*}

\emph{{Item 5}}:
Recall that \\ $\termoneone~=~\frac{1}{\alpha} \sum \enew \enew{}' \bphi \lt \lt{}' \bphi \enperp \enperp{}'$. As in earlier items, we can expand this into a sum of four terms using $\lt = \pfix \atf + \pa \atr$.
Then
using $\enperp{}'\bphi \pa = 0$ and $\norm{\enew} = \norm{\enperp} = 1$, we get
\begin{align}
\norm{\termoneone} &\leq \norm{\bphi\pfix}\norm{ \pfix{}'\bphi}\lambda_{\max}\left(\frac{1}{\alpha}\sum_t \atf \atf{}'\right) \nonumber \\
&+ \norm{\bphi\pa}\norm{ \pfix{}'\bphi}\norm{\frac{1}{\alpha}\sum_t \atf \atr{}'} 
\end{align}
Using \eqref{eq:lmaxversh} and \eqref{eq:lminbern}, if $\alpha \geq \alpha_{(3)}$, w.p. at least $1 - 4n^{-12}$, conditioned on $\ezero$, 
$
\norm{\termoneone} \leq \zz^2 (\lfp + \epsilon_2\lcm) + (\zz (\zz + |\sin\theta|))\epsilon_2 \lcm \nonumber.
$
%

\renewcommand{\zt}{\bm{z}_t}

\emph{Item 6}:
Consider {$\norm{\frac{1}{\alpha} \sum_t \bphi \lt \zt{}'}$}. We will apply matrix Bernstein (Theorem \ref{matrix_bern}).
We have $\norm{\ep{\left[\frac{1}{\alpha} \sum_t \bphi \lt \zt{}'\right]}} = 0$ since $\lt$'s are independent of $\zt$'s and both are zero mean. We obtain $R$ as follows
\begin{align*}
&\norm{\bphi \lt \zt{}'} = \norm{\bphi \lt}\norm{\zt} \\
&\leq \left(\zz \sqrt{\eta r \lambda^+} + (\zz + |\sin\theta|) \sqrt{\eta \lcp} \right) b_z \\
&\leq \left(2\zz \sqrt{\eta r \lambda^+} + |\sin\theta| \sqrt{\eta \lcp} \right) \left( q_0 \sqrt{r \lambda^+} + q_\rot \sqrt{\lcp} \right) \\
&\leq 4\zz^2 \sqrt{\eta} r \lfp + |\sin\theta|\qa \sqrt{\eta} \lcp \\
&+ 2 \zz \sqrt{\eta} \sqrt{r \lfp \lcp}(\qa + |\sin\theta|) \\
&\leq c_1\zz|\sin\theta|\sqrt{\eta} r \lfp + c_2|\sin\theta|\qa \sqrt{\eta} \lcp := R
\end{align*}
for numerical constants $c_1, c_2$.
Next we compute $\sigma^2$ as follows. First consider
\begin{align*}
&\norm{\frac{1}{\alpha} \sum_t \ep\left[ \bphi \lt \zt{}' \zt \lt{}' \bphi \right]} = \norm{\frac{1}{\alpha}\ep[\|\zt\|^2 \bphi \lt \lt{}' \bphi]} \\
&\leq \left( \max_{\zt} \|\zt\|^2 \right) \norm{\frac{1}{\alpha} \sum_t \ep[ \bphi \lt \lt{}'\bphi]} \\
&\leq (8 \zz^2 r \lambda ^+ + 2 q_\rot^2 \lcp) (2\zz^2 \lambda^+ +  \sin^2\theta \lcp) \\
&\leq c_1\qa^2\sin^2\theta\eta (\lcp)^2 + c_2 \zz^2 \eta \rfix \sin^2\theta \lfp \lcp := \sigma_1^2
\end{align*}
we note here that since $b_z^2 = r \lambda_z^+$, the other term in the expression for $\sigma^2$ is the same (modulo constants) as $\sigma_1^2$. Furthermore, notice that the expressions for both $R$ and $\sigma^2$ are the same as the ones obtained in \emph{Item 1}. Thus, we use the same deviation, $\epsilon_0$ here, and hence also obtain the same sample complexity, $\alpha$; i.e., we let $\epsilon = \epsilon_0 \lcm$ where $\epsilon_0 = 0.01 \sin\theta(\qa + \zz)$, and obtain $\alpha \geq \alpha_{(1)}$ derived in item 1. 

\emph{Item 7}: This term follows in a similar fashion as Item 2, 6 and $\alpha \geq \alpha_{(2)}$ suffices.
\end{proof}
\color{black}

\renewcommand{\zt}{\bm{z}_t}
\section{Proof of Projected CS Lemma} \label{proof_CSlem}

\color{black}

\renewcommand{\znkop}{\zeta_{k-1}^+}
\color{black}
\begin{proof}[Proof of Lemma \ref{CSlem}]
The first four claims were already proved below the lemma statement.
Consider the fifth claim (exact support recovery). Recall that for any $t \in \J_k$, $\vt$ satisfies $\norm{\vt} \leq C( 2\zz \sqrt{r \lambda^+} +  \znkop \sqrt{\lcp} ) := b_{v,t}$ (for $t\in \J_1$ and $t \in \J_0$ the bounds are the same) with $C = \sqrt{\eta}$, and thus $\bt := \bpsi (\lt + \vt)$ satisfies
\begin{align*}
\norm{\bt} &= \norm{\bpsi (\lt + \vt)} \leq \norm{\bpsi\lt} + \norm{\bpsi}\norm{\vt} \\
&\leq \left(\zz \sqrt{\rfix \eta \lfp} + \zeta_{k-1}^+ \sqrt{\eta \lcp}\right) \\
&+  \sqrt{\eta} \left( 2\zz \sqrt{r \lambda^+} + \znkop \sqrt{\lcp} \right)  \\
&\leq 2\sqrt{\eta} \left(2\zz \sqrt{\rfix \lfp} + 0.5^{k-1} \cdot 0.06 |\sin\theta| \sqrt{\lcp}\right) \\
&:= b_{b,t} = 2 b_{v,t}
\end{align*}
From the lower bound on $\xmint$ in Theorem \ref{thm1} or that in Corollary \ref{gen_xmin}, $b_{b,t} < \xmint/15$. Also, we set $\xi_t = \xmint/15$. Using these facts, and $\delta_{2s} (\bpsi) \leq 0.12 < 0.15$ (third claim of this lemma), \cite[Theorem 1.2]{candes_rip} implies that
\begin{align*}
\norm{\xhat_{t,cs} - \xt} &\leq  7 \xi_t = 7\xmint/15
\end{align*}
Thus,
\begin{align*}
| (\shatcs - \st)_i | \leq \norm{\shatcs - \st} \leq 7\xmint/15 < \xmint/2
\end{align*}
We have $\omega_{supp,t} = \xmint/2$. Consider an index $i \in \Tt$. Since $|(\st)_i| \geq \xmint$,
\begin{align*}
\xmint - |(\shatcs)_i| &\le  |(\st)_i| - |(\shatcs )_i| \\ 
&\le | (\st - \shatcs )_i | < \frac{\xmint}{2}
\end{align*}
Thus, $|(\shatcs)_i| > \frac{\xmint}{2} = \omega_{supp,t}$ which means $i \in \Thatt$. Hence $\Tt \subseteq \Thatt$. Next, consider any $j \notin \Tt$. Then, $(\st)_j = 0$ and so
\begin{align*}
|(\shatcs)_j| &= |(\shatcs)_j)| - |(\st)_j| \\ 
&\leq |(\shatcs)_j -(\st)_j| \leq b_{b,t} < \frac{\xmint}{2}
\end{align*}
which implies $j \notin \Thatt$ and so $\Thatt \subseteq \Tt$. Thus $\Thatt = \Tt$.

With  $\That_t = \Tt$, the sixth claim follows easily. Since $\Tt$ is the support of $\xt$, $\xt = \I_{\Tt} \I_{\Tt}{}' \xt$, and so
\begin{align*}
\shatt &= \bm{I}_{\Tt}\left(\bpsi_{\Tt}{}'\bpsi_{\Tt}\right)^{-1}\bpsi_{\Tt}{}'(\bpsi \lt + \bpsi \st) \\
 &= \bm{I}_{\Tt}\left(\bpsi_{\Tt}{}'\bpsi_{\Tt}\right)^{-1} \I_{\Tt}{}' \bpsi(\lt + \vt) + \st
\end{align*}
since $\bpsi_{\Tt}{}' \bpsi = \I_{\Tt}' \bpsi' \bpsi = \I_{\Tt}{}' \bpsi$. Thus $\et = \shatt-\st$ satisfies \eqref{etdef0}.
Using \eqref{etdef0} and the earlier claims,
\begin{align*}
\norm{\et} \leq \norm{\left(\bpsi_{\Tt}{}'\bpsi_{\Tt}\right)^{-1}} \norm{\itt{}'\bpsi (\lt + \vt)} \\ 
\le 1.2  \left[\norm{\itt{}'\bpsi \lt} + \norm{\vt}\right]
\end{align*}
When $k=1$, $\bpsi = \I - \Phat_* \Phat_*{}'$. Thus, using \eqref{dense_bnd} and $\|\Phat_*{}' \P_\new\| \le \zz$ (follows from Lemma \ref{hatswitch}),
\begin{align*}
\norm{\itt{}'\bpsi \lt}
& \le \norm{\bpsi \P_{*,\fx}} \norm{\atf}  \\ 
&+(\norm{\bpsi \P_{*,\ch} \cos \theta} + \norm{\itt{}'\bpsi \P_{\new} \sin \theta})\norm{\atr}  \\
& \le \zz \sqrt{\eta r \lambda^+} \\ &+  \zz |\cos \theta| \sqrt{\eta \lambda_\ch} +  (0.1 + \zz) |\sin \theta| \sqrt{\eta \lambda_\ch} \\
&\le 2\zz \sqrt{\eta r \lambda^+} + 0.11 |\sin \theta| \sqrt{\eta \lambda_\ch}
\end{align*}
also, in this interval, $b_{v,t} \leq 2\zz \sqrt{\eta r \lambda^+} + 0.11 |\sin \theta| \sqrt{\eta \lambda_\ch}$  so that
$
\norm{\et} \leq 2.4 \cdot \left( 2\zz \sqrt{\eta r \lambda^+} + 0.11 |\sin \theta| \sqrt{\eta \lambda_\ch}\right)
$
When $k>1$,
$
\norm{\itt{}'\bpsi \lt} \le \norm{\bpsi \lt} \le  \zz \sqrt{\rfix \eta \lfp} + \znkop \sqrt{\eta \lcp}.
$
and the same bound holds on $b_{v, t}$ so that
$
\norm{\et} \leq 2.4 \cdot \left(\zz \sqrt{\rfix \eta \lfp} + \znkop \sqrt{\eta \lcp}\right)
$
\end{proof}

\section{Time complexity of s-ReProCS} \label{time_comp_reprocs}

The time-consuming steps of s-ReProCS are either $l_1$ minimization or the subspace update steps. Support estimation and LS steps are much faster and hence can be ignored for this discussion.
The computational complexity of $l_1$ minimization (if the best solver were used) \cite{l1_best} is the cost of multiplying the CS matrix or its transpose with a vector times $\log(1/\epsilon)$ if $\epsilon$ is the bound on the error w.r.t. the true minimizer of the program. In ReProCS, the CS matrix is of the form $\I - \Phat \Phat'$ where $\Phat$ is of size $n \times r$ or $n \times (r+1)$, thus multiplying a vector with it takes time $O(n r)$. Thus, the $l_1$ minimization complexity per frame is $O(nr \log (1/\epsilon))$, and thus the total cost for $\tmax-t_\train$ frames is $O(nr\log (1/\epsilon) (\tmax-t_\train))$.
%
%
The subspace update step consists of $(\tmax - t_\train - J \alphadel)/\alpha$ rank one SVD's on an $n \times \alpha$ matrix (for either detecting subspace change or for projection-SVD) and $J$ rank $r$ SVD's on an $n \times \alpha$ matrix (for subspace re-estimation). Thus the subspace update complexity is at most $O( n  (\tmax - t_\train) r \log(1/\epsilon))$ and the total ReProCS complexity (without the initialization step) is $O(n (\tmax-t_\train) r \log (1/\epsilon))$.

If we assume that the initialization uses AltProj, AltProj is applied to a matrix of size $n \times t_\train$ with rank $r$. Thus the initialization complexity is $O(n t_\train r^2 \log (1/\epsilon))$. If instead GD \cite{rpca_gd} is used, then the time complexity is reduced to $O(n t_\train r f \log (1/\epsilon))$. 
Treating $f$ as a constant (our discussion treats condition numbers as constants), the final complexity of s-ReProCS is $O(n \tmax r \log (1/\epsilon))$.

If s-ReProCS is used to only solve the RPCA problem (compute column span of the entire matrix $\L$), then the SVD based subspace re-estimation step can be removed. With this change, the complexity of s-ReProCS (without the initialization step) reduces to just $O(n \tmax  \log (1/\epsilon))$ since only 1-SVDs are needed. Of course this would mean a slightly tighter bound on $\outfraccol$ is required -- it will need to be less than $c/(r+J)$.

\section{Preliminaries: Cauchy-Schwarz, matrix Bernstein and Vershynin's sub-Gaussian result}\label{sec:LAandProb}

Cauchy-Schwarz for sums of matrices says the following \cite{rrpcp_perf}.
\begin{theorem} \label{CSmat}
For matrices $\bm{X}$ and $\bm{Y}$ we have
\begin{eqnarray}\label{eq:csmat}
\norm{\frac{1}{\alpha} \sum_t \bm{X}_t \bm{Y}_{t}{}'}^2 \leq \norm{\frac{1}{\alpha} \sum_t \bm{X}_t \bm{X}_t{}'} \norm{\frac{1}{\alpha} \sum_t \bm{Y}_t \bm{Y}_t{}'}
\end{eqnarray}
\end{theorem}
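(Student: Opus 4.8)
\textbf{Proof proposal for Theorem \ref{CSmat}.}
The plan is to reduce the matrix inequality to the ordinary (scalar) Cauchy--Schwarz inequality applied twice: once to a sum of inner products and once to pass from a quadratic form to an operator norm. First I would recall that for any (possibly rectangular) matrix $\bm{M}$, $\norm{\bm{M}} = \sup_{\norm{\bm{u}}=\norm{\bm{v}}=1} |\bm{u}' \bm{M} \bm{v}|$, and pick unit vectors $\bm{u}, \bm{v}$ that attain (or nearly attain) this supremum for $\bm{M} = \frac{1}{\alpha}\sum_t \bm{X}_t \bm{Y}_t{}'$. Writing out the bilinear form,
\[
\bm{u}'\Big(\tfrac{1}{\alpha}\sum_t \bm{X}_t \bm{Y}_t{}'\Big)\bm{v}
= \tfrac{1}{\alpha}\sum_t (\bm{X}_t{}'\bm{u})'(\bm{Y}_t{}'\bm{v})
= \tfrac{1}{\alpha}\sum_t \langle \bm{X}_t{}'\bm{u},\ \bm{Y}_t{}'\bm{v}\rangle.
\]

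Next I would apply Cauchy--Schwarz to this sum over $t$, viewing $(\norm{\bm{X}_t{}'\bm{u}})_t$ and $(\norm{\bm{Y}_t{}'\bm{v}})_t$ as real sequences:
\[
\Big|\tfrac{1}{\alpha}\sum_t \langle \bm{X}_t{}'\bm{u},\ \bm{Y}_t{}'\bm{v}\rangle\Big|
\le \tfrac{1}{\alpha}\sum_t \norm{\bm{X}_t{}'\bm{u}}\,\norm{\bm{Y}_t{}'\bm{v}}
\le \Big(\tfrac{1}{\alpha}\sum_t \norm{\bm{X}_t{}'\bm{u}}^2\Big)^{1/2}\Big(\tfrac{1}{\alpha}\sum_t \norm{\bm{Y}_t{}'\bm{v}}^2\Big)^{1/2}.
\]
Then I would identify each factor as a quadratic form: $\frac{1}{\alpha}\sum_t \norm{\bm{X}_t{}'\bm{u}}^2 = \bm{u}'\big(\frac{1}{\alpha}\sum_t \bm{X}_t\bm{X}_t{}'\big)\bm{u} \le \norm{\frac{1}{\alpha}\sum_t \bm{X}_t\bm{X}_t{}'}$ since $\norm{\bm{u}}=1$ and the matrix is positive semidefinite; likewise for the $\bm{Y}$ term with $\bm{v}$. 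Combining the displays and squaring gives exactly \eqref{eq:csmat}.

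There is essentially no hard step here — the only points requiring a little care are (i) using the rectangular-matrix characterization $\norm{\bm{M}} = \sup |\bm{u}'\bm{M}\bm{v}|$ rather than a symmetric/square assumption, since $\sum_t \bm{X}_t\bm{Y}_t{}'$ need not be square or symmetric, and (ii) noting that $\sum_t \bm{X}_t\bm{X}_t{}'$ and $\sum_t \bm{Y}_t\bm{Y}_t{}'$ are positive semidefinite so that the quadratic-form-to-operator-norm bound is tight in the right direction. If one prefers to avoid attaining the supremum exactly, the whole argument goes through with an arbitrary $\varepsilon>0$ slack and then letting $\varepsilon \to 0$.
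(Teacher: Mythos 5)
Your proof is correct. The paper itself does not prove this statement -- it imports it from the cited reference -- so there is no in-paper argument to match against, but your argument is sound: the variational characterization $\norm{\bm{M}} = \sup_{\norm{\bm{u}}=\norm{\bm{v}}=1}|\bm{u}'\bm{M}\bm{v}|$ is valid for rectangular matrices, the two applications of scalar Cauchy--Schwarz are used correctly, and the identification $\frac{1}{\alpha}\sum_t \norm{\bm{X}_t{}'\bm{u}}^2 = \bm{u}'\left(\frac{1}{\alpha}\sum_t \bm{X}_t\bm{X}_t{}'\right)\bm{u} \le \norm{\frac{1}{\alpha}\sum_t \bm{X}_t\bm{X}_t{}'}$ is exactly right because the Gram sums are symmetric positive semidefinite. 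For comparison, the standard proof in the cited source is even shorter: stack the blocks as $\mathcal{X}:=[\bm{X}_1,\dots,\bm{X}_\alpha]$ and $\mathcal{Y}:=[\bm{Y}_1,\dots,\bm{Y}_\alpha]$, note $\sum_t \bm{X}_t\bm{Y}_t{}' = \mathcal{X}\mathcal{Y}'$, and use submultiplicativity together with $\norm{\mathcal{X}}^2 = \norm{\mathcal{X}\mathcal{X}'} = \norm{\sum_t \bm{X}_t\bm{X}_t{}'}$; the two routes are essentially the same inequality seen from two sides, with your version avoiding any block-matrix bookkeeping and the concatenation version avoiding the explicit supremum argument. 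The only implicit hypothesis worth stating in either proof is that the $\bm{X}_t$ and $\bm{Y}_t$ have the same number of columns so that $\bm{X}_t\bm{Y}_t{}'$ is defined.
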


\renewcommand{\zt}{\bm{Z}_t}
Matrix Bernstein \cite{tail_bound}, conditioned on another r.v. $X$, says the following.
\begin{theorem}\label{matrix_bern}
Given an $\alpha$-length sequence of $n_1 \times n_2$ dimensional random matrices and a r.v. $X$
Assume the following. For all $X \in \mathcal{C}$, (i) conditioned on $X$, the matrices $\zt$ are mutually independent, (i) $\mathbb{P}(\norm{\zt} \leq R | X)  = 1$,  and (iii) $\max\left\{\norm{\frac{1}{\alpha}\sum_t \ep{\left[\zt{}'\zt | X\right]}},\ \norm{\frac{1}{\alpha}\sum_t \ep{\left[\zt\zt{}' | X\right]}}\right\} \le \sigma^2$. Then, for an $\epsilon > 0$,
\begin{align*}
&\mathbb{P}\left(\norm{\frac{1}{\alpha} \sum_t \zt} \leq \norm{\frac{1}{\alpha} \sum_t \ep{\left[\zt|X\right]}} + \epsilon\bigg|X\right) \\
&\geq 1 - (n_1 + n_2) \exp\left(\frac{-\alpha\epsilon^2}{2\left(\sigma^2 + R \epsilon\right)} \right) \ \text{for all $X \in \mathcal{C}$}.
\end{align*}

\end{theorem}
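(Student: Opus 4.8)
\textbf{Proof proposal for Theorem \ref{matrix_bern} (conditional matrix Bernstein).}

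The plan is to reduce the conditional statement to the standard (unconditional) matrix Bernstein inequality by working with the regular conditional probability measure. First I would fix an arbitrary $X = x \in \mathcal{C}$ and consider the conditional distribution of the family $\{\zt\}_{t=1}^\alpha$ given $X = x$. By hypothesis (i), under this conditional law the $\zt$ are mutually independent $n_1 \times n_2$ random matrices; by (ii) each satisfies $\|\zt\| \le R$ almost surely (conditionally); and by (iii) the conditional matrix variance parameter
\[
\max\!\left\{\Bigl\|\tfrac1\alpha \sum_t \ep[\zt{}'\zt \mid X=x]\Bigr\|,\ \Bigl\|\tfrac1\alpha \sum_t \ep[\zt\zt{}' \mid X=x]\Bigr\|\right\} \le \sigma^2 .
\]
So, conditionally on $X=x$, the centered matrices $\widetilde{\bm Z}_t := \zt - \ep[\zt\mid X=x]$ satisfy exactly the hypotheses of the classical Bernstein inequality for independent bounded random matrices (note $\|\widetilde{\bm Z}_t\| \le 2R$ and the centered variance is no larger than the raw second-moment bound, or one simply re-derives with the stated $R$ and $\sigma^2$ by the usual argument). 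Applying the standard result \cite{tail_bound} to the $\widetilde{\bm Z}_t$ under the conditional law gives, for every $\epsilon>0$,
\[
\mathbb{P}\!\left(\Bigl\|\tfrac1\alpha \sum_t \widetilde{\bm Z}_t\Bigr\| \le \epsilon \;\Big|\; X=x\right) \ge 1 - (n_1+n_2)\exp\!\left(\frac{-\alpha\epsilon^2}{2(\sigma^2 + R\epsilon)}\right).
\]
Since $\tfrac1\alpha\sum_t \widetilde{\bm Z}_t = \tfrac1\alpha\sum_t \zt - \tfrac1\alpha\sum_t \ep[\zt\mid X=x]$, the triangle inequality turns the event above into the event $\bigl\|\tfrac1\alpha\sum_t \zt\bigr\| \le \bigl\|\tfrac1\alpha\sum_t \ep[\zt\mid X=x]\bigr\| + \epsilon$, which is precisely the claimed event conditioned on $X=x$. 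As $x \in \mathcal{C}$ was arbitrary, this establishes the bound for all $X \in \mathcal{C}$.

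The only genuinely delicate point — and the one I would be most careful about — is the measure-theoretic legitimacy of "conditioning on $X$ and then applying the unconditional theorem": one needs that a regular conditional distribution of $(\zt)_t$ given $X$ exists (true here since we are in a standard Borel setting of finite-dimensional real matrices) and that the conditional independence and conditional moment bounds in hypotheses (i)–(iii) are exactly the statements that feed into Bernstein under that conditional law. Everything else is bookkeeping: absorbing the centering into the stated constants, and observing that the right-hand probability bound does not depend on $x$, so no further integration over $X$ is required (indeed the statement is "for all $X \in \mathcal C$", not an averaged bound). In the paper's applications $X$ will be the pair $\{\Phat_*, Z\}$ and $\mathcal{C}$ the event $\ezero$, and Fact \ref{simple_fact} is the device that later converts this per-value bound into a bound conditioned on the event $\ezero$ itself; but that conversion is outside the scope of this theorem. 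No serious obstacle is expected — this is essentially a packaging of a known result — so the write-up is short and the emphasis should simply be on stating the conditional hypotheses precisely enough that the reduction is transparent.
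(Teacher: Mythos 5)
Your proposal is correct, and it matches what the paper intends: Theorem \ref{matrix_bern} is stated as a preliminary cited from \cite{tail_bound} with no proof given, the implicit justification being precisely your reduction --- fix a value of $X$ in $\mathcal{C}$, apply the standard matrix Bernstein inequality to the centered matrices under the regular conditional law, and note that the resulting bound (with $\|\zt - \ep[\zt|X]\| \le 2R$ giving denominator $\sigma^2 + 2R\epsilon/3 \le \sigma^2 + R\epsilon$) is uniform in the conditioning value. Your bookkeeping on the centering constants and your remark that the per-value bound is what Fact \ref{simple_fact} later converts into a bound conditioned on the event $\ezero$ are both accurate, so nothing is missing.
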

Vershynin's result for matrices with independent sub-Gaussian rows \cite[Theorem 5.39]{vershynin}, conditioned on another r.v. $X$, says the following.
\begin{theorem}\label{versh}
Given an $N$-length sequence of sub-Gaussian random vectors $\bm{w}_i$ in $\mathbb{R}^{n_w}$, an r.v $X$, and a set $\mathcal{C}$. Assume that for all $X \in \mathcal{C}$, (i) $\bm{w}_i$ are conditionally independent given $X$; (ii) the sub-Gaussian norm of $\bm{w}_i$ is bounded by $K$  
for all $i$. Let $\bm{W}:=[\bm{w}_1, \bm{w}_2, \dots, \bm{w}_N]{}'$.
Then for an $ 0 < \epsilon < 1$ we have
\begin{align}
&\mathbb{P}\left(\norm{\frac{1}{N}\bm{W}{}'\bm{W} - \frac{1}{N}\ep{\left[\bm{W}{}'\bm{W} | X \right]}} \leq \epsilon \bigg| X\right) \nn \\ 
&\geq 1 - 2\exp\left({n_w} \log 9 - \frac{c \epsilon^2 N}{4K^4}\right)  \ \text{for all $X \in \mathcal{C}$}.
\end{align}
\end{theorem}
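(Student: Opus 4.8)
The plan is to deduce this conditional statement directly from the standard (unconditional) form of Vershynin's result, \cite[Theorem 5.39, Remark 5.40]{vershynin}, by fixing a value of the conditioning variable and passing to the conditional law. So first I would fix an arbitrary $X \in \mathcal{C}$ and work throughout under the conditional probability $\Pr(\cdot \mid X)$. By hypothesis (i), under this measure the rows $\bm{w}_1,\dots,\bm{w}_N$ of $\bm{W}$ are mutually independent, and by hypothesis (ii) each has conditional sub-Gaussian norm at most $K$. Write $\bm{\Sigma}_X := \tfrac1N \ep{[\bm{W}{}'\bm{W} \mid X]} = \tfrac1N \sum_i \ep{[\bm{w}_i\bm{w}_i{}' \mid X]}$, and note $\|\bm{\Sigma}_X\| = O(K^2)$ since each summand has operator norm $O(K^2)$.

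Next I would run the usual net-plus-Bernstein argument, all of it under $\Pr(\cdot\mid X)$. Take a $\tfrac14$-net $\mathcal{N}$ of the unit sphere $S^{n_w-1}$ with $|\mathcal{N}| \le 9^{n_w}$. For a fixed $\bm{u}\in S^{n_w-1}$,
\[
\frac1N \bm{u}{}'\bm{W}{}'\bm{W}\bm{u} - \bm{u}{}'\bm{\Sigma}_X\bm{u} \;=\; \frac1N \sum_{i=1}^N \big( \langle \bm{w}_i,\bm{u}\rangle^2 - \ep{[\langle \bm{w}_i,\bm{u}\rangle^2 \mid X]} \big)
\]
is a sum of $N$ conditionally independent, conditionally centered, sub-exponential scalars with sub-exponential norm $O(K^2)$, so a conditional Bernstein inequality gives $\Pr\big( |\tfrac1N \bm{u}{}'\bm{W}{}'\bm{W}\bm{u} - \bm{u}{}'\bm{\Sigma}_X\bm{u}| > \tfrac{\epsilon}{2} \mid X \big) \le 2\exp(-cN\epsilon^2/K^4)$ for $0<\epsilon<1$. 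A union bound over $\bm{u}\in\mathcal{N}$, together with the standard fact that the operator norm of a symmetric matrix is at most twice its largest quadratic form over a $\tfrac14$-net of the sphere, yields
\[
\Pr\Big( \big\| \tfrac1N \bm{W}{}'\bm{W} - \bm{\Sigma}_X \big\| > \epsilon \,\Big|\, X \Big) \;\le\; 2\cdot 9^{n_w}\exp\!\big( -cN\epsilon^2/(4K^4) \big) \;=\; 2\exp\!\big( n_w\log 9 - cN\epsilon^2/(4K^4) \big),
\]
which is the asserted bound. Since this derivation used only hypotheses (i)--(ii), which hold for every $X \in \mathcal{C}$ by assumption, the inequality holds for all $X \in \mathcal{C}$ with one absolute constant $c$; that is exactly the ``for all $X\in\mathcal{C}$'' form of the statement.

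The only point needing (a little) care is bookkeeping: every probabilistic step, namely the scalar Bernstein estimate and the union bound over the net, must be invoked under the conditional measure $\Pr(\cdot\mid X)$, and one must carry the constant $K^4$ and the net size $9^{n_w}$ through so that the final exponent reads $n_w\log 9 - c\epsilon^2 N/(4K^4)$. There is no genuine obstacle: one may instead cite \cite[Theorem 5.39]{vershynin} verbatim for each fixed $X$ (that theorem concerns a fixed family of independent sub-Gaussian rows) and then simply quantify over $X\in\mathcal{C}$, which makes the conditional wrapper purely cosmetic. The one thing to be sure of is that conditioning on $X$ genuinely preserves both the independence and the sub-Gaussian-norm bound of the rows, which is precisely what hypotheses (i) and (ii) assert.
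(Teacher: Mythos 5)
Your proposal is correct and matches the paper's treatment: the paper states this result as \cite[Theorem 5.39]{vershynin} applied under the conditional law given $X$, with hypotheses (i)--(ii) guaranteeing that the rows remain independent and sub-Gaussian with norm at most $K$ under $\Pr(\cdot\mid X)$ for each $X\in\mathcal{C}$, which is exactly your observation that the conditional wrapper is cosmetic. Your net-plus-Bernstein reconstruction is just the standard proof of that cited theorem run under the conditional measure, so there is nothing genuinely different or missing.
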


\bibliographystyle{IEEEbib} 
\bibliography{tipnewpfmt_kfcsfullpap}

\end{document}